\newtheorem{theorem}{Theorem}[section]
\newtheorem{lemma}[theorem]{Lemma}
\newtheorem{proposition}[theorem]{Proposition}
\newtheorem{corollary}[theorem]{Corollary}
\newtheorem{definition}[theorem]{Definition}
\newtheorem{invariant}[theorem]{Invariant}
\newtheorem{remark}[theorem]{Remark}
\newtheorem{observation}[theorem]{Observation}
\def\G{{Gr}}
\title{Practical and Efficient Split Decomposition \\ via Graph-Labelled Trees}
\author{
 
Emeric Gioan\footnotemark[1], Christophe Paul\footnotemark[1], Marc Tedder\footnotemark[2],  Derek Corneil\footnotemark[2]
}
\date{}                                           
\begin{document}
\maketitle

\footnotetext[1]{CNRS - LIRMM, Univ. Montpellier II France; \{emeric.gioan,christophe.paul\}@lirmm.fr; financial support was received from the French ANR project ANR-O6-BLAN-0148-01: \emph{Graph Decomposition and Algorithms} (GRAAL).}
\footnotetext[2]{Department of Computer Science, University of Toronto; \{mtedder,dgc\}@cs.toronto.edu; financial support was received from Canada's Natural Sciences and Engineering Research Council (NSERC).}
\addtocounter{footnote}{2}

\begin{abstract}
Split decomposition of graphs was introduced by Cunningham (under the name join decomposition) as a generalization of the modular decomposition.   This paper undertakes an investigation into the algorithmic properties of split decomposition.  We do so in the context of graph-labelled trees (GLTs), a new combinatorial object designed to simplify its consideration.  GLTs are used to derive an incremental characterization of split decomposition, with a simple combinatorial description, and to explore its properties with respect to Lexicographic Breadth-First Search (LBFS).  Applying the incremental characterization to an LBFS ordering results in a split decomposition algorithm that runs 
in time $O(n+m)\alpha(n+m)$, where $\alpha$ is the
inverse Ackermann function,
whose value is smaller than 4 for any 
practical graph.  
Compared to Dahlhaus' linear time split decomposition algorithm \cite{Dah00}, which does not rely on an incremental construction, our algorithm is just as fast in all but the asymptotic sense and full implementation details are given in this paper.  Also, our algorithm extends to circle graph recognition, whereas no such extension is known for Dahlhaus' algorithm. 
The companion paper~\cite{GPTC11b} uses our algorithm to derive the first sub-quadratic circle graph recognition algorithm.  
\end{abstract}

\section{Introduction}

\emph{Split decomposition} ranks among the classical hierarchical graph decomposition techniques, and can be seen as a generalization of modular decomposition~\cite{Gal67,MR84,HP10} and the decomposition of a graph into $3$-connected components~\cite{Tut66}. It was introduced by Cunningham and Edmonds~\cite{Cun82,CE80} as a special case of the more general framework of bipartitive families.
Since then, a number of  extensions and applications have been developed. For example, the decomposition scheme used in the proof of the Strong Perfect Graph Theorem~\cite{CRST05} and in the recognition of Berge graphs~\cite{CCL05} is based in part on the 2-join decomposition, which generalizes split decomposition.  Also, clique-width theory~\cite{CER93} and rank-width theory~\cite{Oum05} can be considered generalizations of split decomposition theory. Indeed, split decomposition is one of the important subroutines in the polynomial-time recognition of clique-width $3$ graphs~\cite{CHL00}. Moreover, the graphs of rank-width one are precisely the graphs that are totally decomposable by split decomposition (i.e. the distance-hereditary graphs~\cite{How77} or completely-separable graphs~\cite{HM90}).

As with distance hereditary graphs~\cite{HM90}, parity graphs can be characterized by their split 
decomposition~\cite{BU84,CS99}.  In~\cite{CS99b}, split decomposition is used to define a hierarchy of graph families  between distance hereditary and parity graphs.
Split decomposition also appears in the recognition of circular arc graphs~\cite{Hsu95} and in structure theorems of various graph classes (see e.g.~\cite{TV10}).
One of the more important applications of split decomposition is with respect to circle graphs; these are the intersection graphs of chords inscribing a circle.  Prime circle graphs -- those indecomposable by split decomposition -- have unique chord representations (up to reflection)~\cite{Bou87} (see also~\cite{Cou06}).  All of the fastest circle graph recognition algorithms are based on this fact~\cite{Bou87, GHS89, Spi94}.  Recent work has focused on their connection to rank-width and vertex-minors~\cite{Bou94,Oum05}.  
For a brief introduction to split decomposition, the reader may refer to~\cite{Spi03}.


The first polynomial-time algorithm for split decomposition appeared in~\cite{Cun82}, and ran in time $O(nm)$.  Ma and Spinrad later developed an $O(n^2)$ algorithm~\cite{MS94}, which yields an $O(n^2)$ circle graph recognition algorithm when combined with their prime testing procedure in~\cite{Spi94}.  The only linear time algorithms for split decomposition are due to Dahlhaus~\cite{Dah00} and, more recently, Montgolfier et al.~\cite{CMR09}.  
However, so far neither of these linear time algorithms seems to extend to circle graph recognition.  This paper develops a split decomposition algorithm that runs in time $O(n+m) \alpha(n+m)$, where $\alpha$ is the inverse Ackermann function \cite{CLR01,Tar75} (we point out that this function is so slowly growing that it is bounded by $4$ for all practical purposes.%
\footnote{
\font\sixrm=cmr10 scaled 600
 Let us  mention that several definitions exist for this function, either with two variables, including some variants, or with one variable. For simplicity, we choose to use the version with one variable. This makes no practical difference since all of them could be used in our complexity bound, and they are all essentially constant. As an example, the two variable function considered in \cite{CLR01} satisifies $\alpha(k,n)\leq 4$ for all integer $k$ and for all $n\leq \underbrace{2^{.^{.^{.^{2}}}}}_{17 \hbox{\sixrm { times}}}$.
})
Hence, there is essentially no running time tradeoff in using our algorithm.  Moreover, the algorithm presented here is used by the companion paper~\cite{GPTC11b} to derive the first sub-quadratic circle graph recognition algorithm.

Our algorithm benefits from the recent reformulation of split decomposition in terms of graph-labelled trees (GLTs), introduced in~\cite{GP07, GP08} (see Section 2).  That paper enabled the authors to derive fully-dynamic recognition algorithms for distance-hereditary graphs and various subfamilies. 
GLTs are a combinatorial structure
designed to capture precisely the underlying structure of split decomposition \cite{Cun82} and in other similar reformulations that have been considered in the literature, for instance
in a logical context~\cite{Cou06} or in a distance-hereditary graph drawing context~\cite{EGY05}.
GLTs can also be understood as a special case of a term in a graph grammar~\cite{EO97}.
They are valuable here for greatly simplifying the consideration of split decomposition and providing the insight for the results in this paper.

 The overview of our algorithm appears as Algorithm \ref{alg:over}, where $G_0$ refers to the empty graph, $G_i$ denotes the subgraph of $G$ induced on $\{x_1, \cdots, x_i\}$, and $ST(G_i)$ denotes the GLT (called the split-tree) that captures the split decomposition of $G_i$.

\begin{algorithm} 
\KwIn{A connected graph $G$ with $n$ vertices.}
\KwOut{$ST(G)$, the split-tree of $G$.}

\BlankLine
$ST(G_0)  \gets$ null\;
Using Algorithm \ref{alg:LBFS}, do an LBFS on $G$ to produce ordering $x_1, x_2 , \cdots, x_n$\;
\For{$i = 1$ \KwTo $n$} {
          $ST(G_i) \gets ST(G_{i-1}) + x_i$, using Algorithm \ref{alg:vertexinsertion}\;
	}
\KwRet{$ST(G_n)$}\;
\caption{The Split Decomposition Algorithm} \label{alg:over}
\end{algorithm}

We use GLTs to derive a combinatorial incremental characterization of split decomposition, generalizing that given for distance-hereditary graphs in~\cite{GP07, GP08} (see Section 4).  Note that in Theorem \ref{th:cases} and its subsequent propositions,
we characterize all possible
ways in which $ST(G_{i-1})$ is modified to produce $ST(G_i)$.  GLTs are also used to demonstrate properties of split decomposition with respect to Lexicographic Breadth-First Search (LBFS)~\cite{RTL76} (see Section 3).  
Sections 3 and 4 are independent, and their content provides general results and constructions that may be useful on their own.
Notably, the results of Section 4 easily yield an efficient split decomposition dynamic algorithm supporting vertex insertion and deletion.

By applying the incremental characterization to an LBFS ordering we achieve a split decomposition algorithm that is conceptually straightforward, but requires a careful and detailed explanation of the implementation in order to achieve
the stated running time (see Section 5).  
%
%
We develop a charging argument based on the structure of GLTs  that allows us to evaluate the amortized cost of inserting each vertex, according to an LBFS ordering.  
%
We use it to prove the $O(n+m)\alpha(n+m)$ running time (see Section 6). Furthermore, our algorithm extends to circle graph recognition; the companion paper~\cite{GPTC11b} uses it to develop the first sub-quadratic circle graph recognition algorithm, 
which also runs in $O(n+m)\alpha(n+m)$ time.    
Note that different
versions of both the split decomposition algorithm and the circle graph recognition algorithm appear in \cite{Ted11}.

\section{Preliminaries}

All graphs in this document are simple, undirected, and connected.  The set of vertices in the graph $G$ is denoted $V(G)$ and the set of edges by $E(G)$.  The graph \emph{induced} on the set of vertices $S$ is signified by $G[S]$. We let $N_G(x)$, or simply $N(x)$, denote the neighbours of vertex $x$, and for $S$ a set of vertices $N(S)=(\cup_{x\in S} N(x))\setminus S$.
A vertex is \emph{universal to} a set of vertices $S$ if $S\subseteq N(x)$; it is \emph{isolated} from $S$ if $N(x)\cap S=\emptyset$.  
A vertex is \emph{universal in} a graph if it is adjacent to every other vertex in the graph.  We use $N[x] = N(x) \cup \{x\}$ to denote the \emph{closed neighbourhood of a vertex}.  Two vertices $x$ and $y$ are \emph{twins} if $N(x) \setminus \{y\} = N(y) \setminus  \{x\}$.  A \emph{pendant} is a vertex of degree one.  A \emph{clique} is a graph in which every pair of vertices is adjacent.  A \emph{star} is a graph with at least three vertices in which one vertex, called its \emph{centre}, is universal, and no other edges exist; the vertices other than the centre are called its {\emph{degree-1 vertices}}.  The clique on $n$ vertices is denoted $K_n$; the star on $n$ vertices is denoted $S_n$.

The graph $G+(x,N(x))$ is formed by adding the vertex $x$ to the graph $G$ adjacent to the subset $N(x)$ of vertices, its neighbourhood; when $N(x)$ is clear from the context, we simply write $G+x$.  The graph $G-x$ is formed from $G$ by removing $x$ and all its incident edges.  

The non-leaf vertices of a tree $T$ are called its \emph{nodes}.  The edges in a tree not incident to leaves are its \emph{internal} edges.  If $S$ is a set of leaves of $T$, then $T(S)$ denotes the smallest connected subtree spanning $S$.  If $T$ is a tree, then $|T|$ represents its number of nodes and leaves. In a rooted tree $T$, every node or leaf $x$ (except the root) has a unique \emph{parent}, namely its neighbour on the path to the root. A \emph{child} of a node $x$ is a neighbour of $x$ distinct from its parent.

\subsection{Split decomposition}

This subsection recalls original definitions from \cite{Cun82}.
%

\begin{definition}
A \emph{split} of a connected graph $G=(V,E)$ is a bipartition $(A,B)$ of $V$, where $|A|,|B| > 1$ such that every vertex in $A' = N(B)$ is universal to $B' = N(A)$.  The sets $A'$ and $B'$ are called the \emph{frontiers} of the split.  


\end{definition}

A graph not containing a split is called \emph{prime}. 
A bipartition is \emph{trivial} if one of its parts is the empty set or a singleton. Cliques and stars are called \emph{degenerate} since every non-trivial bipartition of their vertices is a split:

\begin{remark} \label{degenerate}
Let $(A,B)$ be a bipartition of the vertices in a clique or a star such that $|A|,|B|>1$.  Then $(A,B)$ is a split.
\end{remark}

Degenerate graphs and prime graphs represent the base cases in the process defining split decomposition:

\begin{definition} \label{def:split-dec-tree}
\emph{Split Decomposition} is a recursive process decomposing a given graph $G$ into a set of disjoint graphs $\{G_1,\dots G_k\}$, called \emph{split components}, each of which is either prime or degenerate. There are two cases:

\begin{enumerate}
\item if $G$ is prime or degenerate, then return the set $\{G\}$;
\item if $G$ is neither prime nor degenerate, it contains a split $(A,B)$, with frontiers $A'$ and $B'$.  The
split decomposition of $G$ is then the union of the split decompositions of the graphs $G[A] + a$ and $G[B] + b$, where $a$ and $b$ are new vertices, called \emph{markers}, such that $N_{G[A] + a}(a) = A'$ and $N_{G[B] + b}(b) = B'$.
\end{enumerate}
\end{definition}

Notice that during the split decomposition process, the marker vertices can be matched by so called \emph{split edges}. Then given a split decomposition, provided the marker vertices and their matchings are specified, the input graph $G$ can be reconstructed without ambiguity. The set of split edges merely defines the \emph{split decomposition tree} whose nodes are the components of the split decomposition. 

Cunningham showed that every graph has a canonical split decomposition tree~\cite{Cun82}. As Cunningham's original work was on the decomposition of a graph by a family of bipartitions of the vertex set, his paper focuses on the tree representation of the family of splits to obtain a canonical tree rather than on how the graph's adjacencies can be retrieved from its split decomposition tree. At first sight, it is not immediately clear how the graph's adjacencies are encoded by the split decomposition tree, and what role the marker vertices play in determining them.  Tellingly, the base case treats prime and degenerate graphs the same; looking at the tree, the viewer is left to guess which one applied. In recent papers~\cite{GP03,Cou06}, split decomposition is represented by the \emph{skeleton graph} which is the union of the split components connected by the split edges. The fact that $G$'s vertices and the marker vertices are mixed is a drawback of this representation.

%
A recent reformulation of split decomposition in terms of graph-labelled trees (GLTs) aims to clarify this~\cite{GP07, GP08}.  Our investigation of split decomposition takes place entirely in this new GLT setting, which is described below.

\subsection{Graph-labelled trees} \label{sec:GLT}

This subsection recalls definitions from \cite{GP07, GP08} and adds useful terminology.

\begin{definition} [\cite{GP07, GP08}]
A \emph{graph-labelled tree} (GLT) is a pair $(T,\mathcal{F})$, where $T$ is a tree and $\mathcal{F}$ a set of graphs, such that each node $u$ of $T$ is \emph{labelled} by the graph $G(u) \in \mathcal{F}$, and there exists a bijection $\rho_u$ between the edges of $T$ incident to $u$ and the vertices of $G(u)$.  (See Figure~\ref{fig:GLTexample}.)
\end{definition}

%
When we refer to a node $u$ in a GLT, we usually mean the node itself in $T$ (non-leaf vertex). We may sometimes use the notation $u$ as a shortcut for its label $G(u) \in \mathcal{F}$; the meaning will be clear from the context.  
For instance, notation will be simplified by saying $V(u) = V(G(u))$.  
The vertices in $V(u)$ are called \emph{marker vertices}, 
and the edges between them in $G(u)$ are called \emph{label-edges}. 
For a label-edge $e=uv$ we may say that $u$ and $v$ are \emph{the vertices of} $e$.
The edges of $T$ are \emph{tree-edges}.  The marker vertices $\rho_u(e)$ and $\rho_v(e)$
of the internal tree-edge $e = uv$ are called the \emph{extremities} of $e$ .  Furthermore, $\rho_v(e)$ is the \emph{opposite} of $\rho_u(e)$ (and vice versa).  A leaf is also considered an \emph{extremity} of its incident edge, and its opposite is the other extremity of the edge (marker vertex or leaf). For convenience, we will use the term \emph{\it adjacent} between: a tree-edge and one of its extremities; a label-edge and one of its vertices; two extremities of a tree-edge, etc., as long as the context is clear. The most important notion for GLTs with respect to split decomposition is that of \emph{accessibility}:


\begin{definition} [\cite{GP07, GP08}]
Let $(T,\mathcal{F})$ be a GLT.  The marker vertices $q$ and $q'$ 
are \emph{accessible} from one another if there is a sequence $\Pi$ of marker vertices $q,\ldots,q'$ such that:

\begin{enumerate}
\item every two consecutive elements of $\Pi$ are either the vertices of a label-edge or the extremities of a tree-edge;
\item the edges thus defined alternate between tree-edges and label-edges.
\end{enumerate}

Two leaves are accessible from one another if their opposite marker vertices are accessible; similarly for a leaf and marker vertex being accessible from one another; see Figure~\ref{fig:GLTexample} where the leaves accessible from
$q$ include both 3 and 15 but neither 2 nor 11.
By convention, a leaf or marker vertex is accessible from itself.
\end{definition}

Note that, obviously, if two leaves or marker vertices are accessible from one another, then the sequence $\Pi$ with the required properties is unique, and the set of tree-edges in $\Pi$ forms a path in the tree $T$.


\begin{definition}[\cite{GP07, GP08}]
Let $(T,\mathcal{F})$ be a GLT.  Then its \emph{accessibility graph}, denoted $\G(T,\mathcal{F})$, is the graph whose vertices are the leaves of $T$, with an edge between two distinct leaves $\ell$ and $\ell'$ 
if and only if they are accessible from one another. Conversely, we may say that $(T,\mathcal{F})$ \emph{is a GLT of} $\G(T,\mathcal{F})$.
\end{definition}


Accessibility allows us to view GLTs as encoding graphs; an example appears in Figure~\ref{fig:GLTexample}.  
\bigskip

\begin{figure}[htbh]
\begin{center}
\includegraphics[scale=0.75]{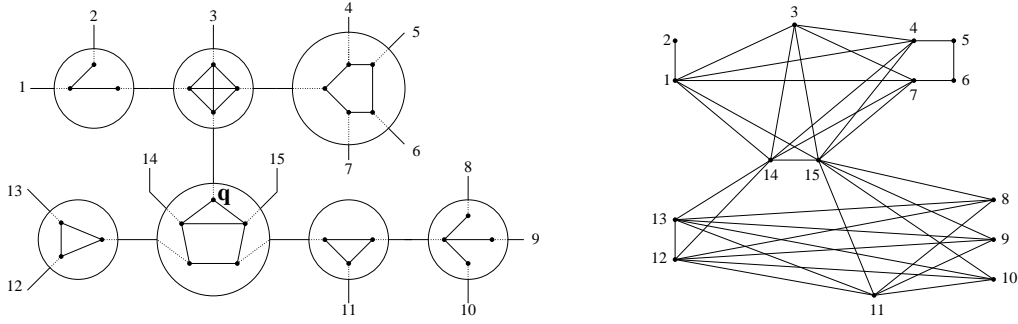}

\end{center}
\caption{A graph-labelled tree $(T,\mathcal{F})$ and its accessibility graph $\G(T,\mathcal{F})$.}
\label{fig:GLTexample}
\end{figure}

%


Let $(T,\mathcal{F})$ be a GLT, and let $q$ be a marker vertex belonging to the node $u$ of $T$ and corresponding to the tree-edge $e$ of $T$.  
Then we denote:

- $L(q)$ the set of leaves of $T$ from which there is a path to $u$ using $e$;

- $A(q)$ the subset of leaves of $L(q)$ that are accessible from $q$; 

- $T(q)=T(L(q))$ the smallest subtree of $T$ that spans the leaves $L(q)$; note that $q \notin T(q)$.  

%
To unify our notation, for a leaf $\ell$ of $T$, the sets $L(\ell)$, $A(\ell)$, $T(\ell)$ can be similarly defined, 
so that $A(\ell)=N_G(\ell)$, $\ell$'s neighbourhood in $G=\G(T,\mathcal{F})$, and  $L(\ell)=V(G)\setminus \{\ell\}$.

\begin{definition} \label{def:descendant}
Let $(T,\mathcal{F})$ be a GLT and let $q$ and $p$ be distinct marker vertices. Then $p$ is a \emph{descendant} of $q$ if $L(p)\subset L(q)$, that is if $T(p)$ is a subtree of $T(q)$.
\end{definition}


The above notation and definitions are illustrated in Figure~\ref{fig:treeExample}.  Also note that a leaf is never a 
descendant of a leaf or a marker vertex.
\bigskip

\begin{figure}[htbh]
\begin{center}
\includegraphics[scale=0.75]{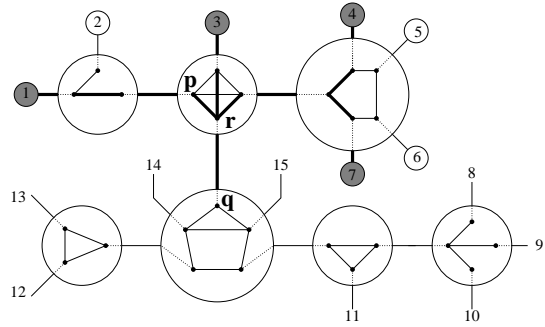}
\end{center}
\caption{A marker vertex $q$, with $L(q)=\{1,2,3,4,5,6,7\}$ and $A(q)=\{1,3,4,7\}$. Bold edges are those used in sequences certifying the accessibility between $q$ and elements of $A(q)$. The vertices of the subtree $T(q)$ are the elements of $L(q)$ together with the three upper nodes. The marker vertex $p$ with $L(p)=\{1,2\}$ is a descendant of $q$.
The marker vertex $r$ is the opposite of $q$, and is not a descendant of $q$.}
\label{fig:treeExample}
\end{figure}


\medskip
We conclude this subsection by a series of remarks following directly from the definitions.


\begin{remark} \label{connectedLabels}
If a graph $G$ is connected, then every label in a GLT of $G$ is connected.
\end{remark}  

\begin{remark} 
For any marker vertex $q$ in a GLT of a connected graph, $A(q) \ne \emptyset$.
\end{remark}  

As a consequence, by choosing one element of $A(q)$ for every marker vertex $q$ in the label we see that every label in a GLT of a connected graph $G$ is an induced subgraph of $G$.

\begin{remark} \label{rem:hered}
Let  $p$ and $q$ be two marker vertices of a GLT such that $p$ is a decendent of $q$. If $p$ and $q$ are accessible from one another, then $A(q)\cap L(p)=A(p)$. If $p$ and $q$ are non-accessible from one another, then $A(q)\cap L(p)=\emptyset$.
\end{remark}



\subsection{The split-tree}
\label{subsec:split-tree}

This subsection reformulates split decomposition \cite{Cun82} 
in the GLT setting, as done in \cite{GP07, GP08}.


\begin{definition} 
Let $e$ be a tree-edge incident to nodes $u$ and $u'$ in a GLT, and let $q \in V(u)$ and $q' \in V(u')$ be the extremities of $e$.  The \emph{node-join} of $u,u'$
replaces $u$ and $u'$ with a new node $v$ labelled by the graph formed from $G(u)$ and $G(u')$ as follows: all possible label-edges are added between $N(q)$ and $N(q')$, and then $q$ and $q'$ are deleted.  
See Figure~\ref{fig:nJoinNsplitExample}.
\end{definition}

\begin{definition} \label{def:node-split}
The \emph{node-split} 
is the inverse of the node-join.  More precisely, let $v$ be a node such that $G(v)$ contains the split $(A,B)$ with frontiers $A'$ and $B'$.  The node-split with respect to $(A,B)$ replaces $v$ with two new adjacent nodes $u$ and $u'$ labelled by $G[A] + q$ and $G[B] + q'$, respectively, where $q$ and $q'$ are the extremities of the new tree-edge thus created, $q$ being universal to $A'$, and $q'$ being universal to $B'$. The extremities of the tree-edges incident to $v$ remain unchanged. 
See Figure~\ref{fig:nJoinNsplitExample}.
\end{definition}


When a node-split or a node-join operation is performed, 
a marker vertex of the initial GLT is \emph{inherited by} the resulting GLT
through the operation if its corresponding tree-edge has not been affected by the operation, i.e. if its corresponding tree-edge is not 
created or deleted in one of the above definitions. 
%

\medskip

The key property to observe is: 

\begin{observation}

The node-join operation and the node-split operation preserve the accessibility graph of the GLT.

\end{observation}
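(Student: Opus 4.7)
The plan is as follows. Since Definition~\ref{def:node-split} declares node-split to be the inverse of node-join, it suffices to establish the claim for a single node-join; the node-split case then follows by reversing the argument. Let $e$ be the tree-edge contracted by the join, with extremities $q\in V(u)$ and $q'\in V(u')$, and let $v$ be the merged node in the resulting GLT $(T^*,\mathcal{F}^*)$. The leaves of $T$ and $T^*$ coincide, so the aim is to exhibit, for each pair of leaves, a bijection between accessibility sequences between their opposite marker vertices in the two GLTs.

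The key preliminary observation concerns how $q$ and $q'$ may appear inside an accessibility sequence $\Pi$ between two marker vertices $q_1,q_2$ that are opposite to leaves. Neither $q$ nor $q'$ is itself opposite to any leaf, so neither can be an endpoint of $\Pi$. Since the only tree-edge incident to $q$ is $e$ (similarly for $q'$), the alternation requirement forces any appearance of $q$ in $\Pi$ to be sandwiched between a label-edge and the tree-edge $e$. Hence every use of $e$ manifests as a length-four subpattern $\ldots,a,q,q',b,\ldots$ with $a\in N_{G(u)}(q)$ and $b\in N_{G(u')}(q')$, and conversely $q,q'$ cannot appear in $\Pi$ unless $e$ is used. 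By the remark that the tree-edges of any accessibility sequence form a simple path in $T$, this subpattern occurs at most once.

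Given this observation, the bijection is essentially forced. A sequence $\Pi$ in $(T,\mathcal{F})$ avoiding $e$ transcribes verbatim to $(T^*,\mathcal{F}^*)$, since all its label-edges survive in $G(v)$ and all its tree-edges other than $e$ survive in $T^*$. A sequence that uses $e$ is translated by collapsing its unique $a,q,q',b$ subpattern into $a,b$ via the label-edge $ab$ that the node-join explicitly adds to $G(v)$ because $a\in N_{G(u)}(q)$ and $b\in N_{G(u')}(q')$; alternation is preserved since a label-tree-label block is replaced by a single label-edge. The inverse map expands any label-edge of $G(v)$ with one endpoint in $V(u)\setminus\{q\}$ and the other in $V(u')\setminus\{q'\}$ back into a four-term segment through $e$; such edges of $G(v)$ exist precisely when they were created by the node-join, so the expansion is always legal.

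The main obstacle is the preliminary structural observation pinning down where $q$ and $q'$ may occur in an accessibility sequence; once this is in hand, the translation in either direction is mechanical and the preservation of alternation reduces to counting the edges in the local surgery.
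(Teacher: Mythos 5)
The paper labels this a bare \emph{Observation} and gives no proof at all, so there is no argument in the paper to compare against; your proposal supplies a correct and essentially complete justification. The structural step that does the work --- namely that $q$ and $q'$ can appear in an accessibility sequence between leaf-opposites only as an internal block $a,q,q',b$ across the tree-edge $e$ (because $e$ is the unique tree-edge associated with each of them, alternation forces the label-tree-label pattern, and neither is opposite a leaf so neither can be an endpoint), together with the fact that this block occurs at most once since the tree-edges of the sequence form a simple path --- is exactly what makes the two-way translation between $(T,\mathcal{F})$ and $(T^*,\mathcal{F}^*)$ work. The translation itself (collapse the block to the newly created label-edge $ab$ of $G(v)$, or expand a new label-edge of $G(v)$ back into such a block) is then mechanical, and since the new label-edges of $G(v)$ are precisely those joining $N_{G(u)}(q)$ to $N_{G(u')}(q')$, the expansion is always available. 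One small imprecision: ``the node-split case follows by reversing the argument'' is better phrased as observing that equality of accessibility graphs is a symmetric relation, so once a node-join is shown to preserve it, the inverse node-split preserves it automatically; this is cosmetic and does not affect correctness.
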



\begin{figure}[htbh]
\begin{center}
\includegraphics[scale=0.70]{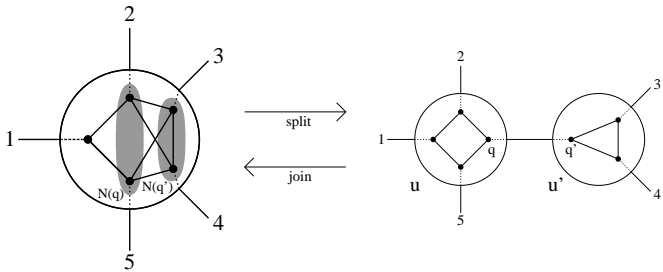}
\end{center}
\caption{Example of the node-join and node-split.}
\label{fig:nJoinNsplitExample}
\end{figure}

Hence, GLTs do not uniquely encode graphs. 
In particular, recursive application of the node-join on every edge of a GLT of $G$ leads to the GLT with a unique node labelled by the accessibility graph $G$. And conversely, any GLT of a graph $G$ can be obtained by recursive application of the node-split from the GLT consisting of a unique node labelled by $G$.

Also, observe that, as a consequence, the accessibility graph $G$ of a GLT and the tree structure of the GLT (with leaves labelled by $V(G)$) completely determine the node labels of the GLT.
Therefore, transforming a GLT into another GLT using node-splits and node-joins can be done using any ordering for such operations.  In particular, performing a set of node-joins can be done in any order without changing the result (the final tree structure is obtained by contracting edges from the initial tree).
And concerning node-splits, creating two tree-edges using these operations can be done equally by creating first one tree-edge or the other.
We emphasize these two remarks, as they will guarantee the consistency of further constructive statements.



\begin{remark} \label{rem:node-join-commutative}
Applying a sequence of node-joins on a GLT yields the same GLT, regardless of the order of the node-joins.
\end{remark}


\begin{remark} \label{rem:node-split-commutative}
Recalling the notation of Definition \ref{def:node-split}, let $v$ be a node of a GLT and let $(A,B)$ and $(C,D)$ be two splits of $G(v)$ such that $A\subset C$. Then applying the \emph{node-split} on $v$ with respect to $(A,B)$ and then on node $u'$ with respect to $((C\setminus A)\cup\{q'\},D)$ or applying the \emph{node-split} on $v$ with respect to $(C,D)$ and then on node $u$ with respect to $(A,(B\setminus D)\cup\{q\})$ yields the same GLT. 
\end{remark}


\medskip
Of special interest are those node-joins/splits involving degenerate nodes.  The \emph{clique-join} is a node-join involving adjacent cliques: its result is a clique node; the \emph{clique-split} is its inverse operation.  The \emph{star-join} is  a node-join involving adjacent stars whose common incident 
tree-edge has exactly one extremity that is the centre of its star: its result is a star node; the \emph{star-split} is its inverse operation.  Figure~\ref{fig:cliqueStarExample} provides examples. 

\begin{figure}[htbh]
\begin{center}
\includegraphics[scale=1]{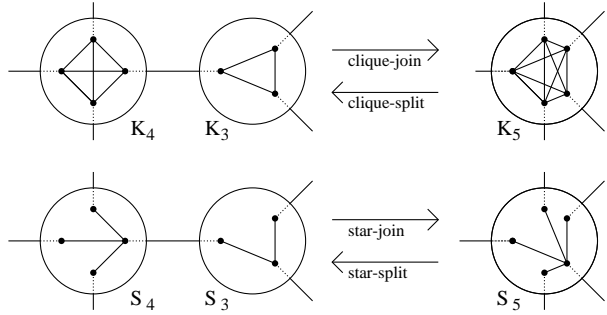}
\end{center}
\caption{Examples of the clique-join/split and star-join/split.}
\label{fig:cliqueStarExample}
\end{figure}

\begin{definition}
A GLT is \emph{reduced} if all its labels are either prime or degenerate, and no clique-join or star-join is possible.
\end{definition} 


We can now state the main result of \cite{Cun82}, as reformulated in \cite{GP07, GP08}.

\begin{theorem} 
[\cite{Cun82, GP07, GP08}] \label{theo:reduced-GLT}
For any connected graph $G$, there exists a unique, reduced graph-labelled tree $(T,\mathcal{F})$ such that  $G=\G(T,\mathcal{F})$.
\end{theorem}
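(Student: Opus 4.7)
The plan is to argue existence by a constructive algorithm and uniqueness by reduction to Cunningham's classical canonical decomposition theorem, reformulated in the GLT language.

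For existence, I would start from the trivial GLT $(T_0,\mathcal{F}_0)$ consisting of a single node labelled by $G$ itself; its accessibility graph is trivially $G$. As long as some label $G(v)$ is neither prime nor degenerate, it admits a non-trivial split $(A,B)$, and I apply the node-split of Definition~\ref{def:node-split} to $v$. By the Observation following that definition, each node-split preserves the accessibility graph, so the running GLT continues to satisfy $\G(T,\mathcal{F})=G$. Termination follows because every node-split strictly reduces the size of at least one label. Once every label is prime or degenerate, I exhaustively apply clique-joins and star-joins; each of these is a node-join, so again accessibility is preserved, and termination is clear because the number of nodes strictly decreases. The resulting GLT is reduced by construction.

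For uniqueness, the main obstacle, the plan is to show that the internal tree-edges of a reduced GLT $(T,\mathcal{F})$ of $G$ correspond precisely to a canonical family of splits of $G$ — Cunningham's \emph{strong} splits — and that the labels are forced by the equivalence classes of crossing splits. Concretely, each internal tree-edge $e=uu'$ of a GLT of $G$ with extremities $q,q'$ induces a split $(L(q),L(q'))$ of $G$ (both sides have at least two leaves because $(T,\mathcal{F})$ is reduced, using that pendant-type configurations would enable a clique- or star-join). I would verify that in any reduced GLT (i) the splits induced by internal tree-edges are exactly the strong splits of $G$, and (ii) the sets of splits absorbed into a single degenerate node correspond exactly to a maximal family of pairwise crossing splits, which Remark~\ref{degenerate} allows for clique and star labels. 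Prime labels contribute no further splits beyond the tree-edges incident to them. Since both the strong splits and the equivalence classes of crossing splits are invariants of $G$ itself (Cunningham's theorem on bipartitive families), the tree skeleton is determined; by Remarks~\ref{rem:node-join-commutative} and~\ref{rem:node-split-commutative}, the labels are then forced by the accessibility graph $G$ together with this tree skeleton, as noted right after the Observation above.

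The hard part is step (i)–(ii), i.e.\ the structural characterization of crossing splits. To handle it, I would rely on the classical combinatorial lemma that if $(A_1,B_1)$ and $(A_2,B_2)$ are two crossing splits of $G$, then the four corner intersections enforce enough universality in $G$ to produce a local clique or star structure, which in a reduced GLT must lie inside a single degenerate label. Equivalently, one can bypass this analysis entirely by translating each reduced GLT into a Cunningham split-decomposition tree (marker vertices become marker labels on each half of a split edge, degenerate nodes remain degenerate components, prime nodes remain prime components), observing that this translation is a bijection between reduced GLTs of $G$ and Cunningham canonical split-decomposition trees of $G$, and then invoking the uniqueness proved in~\cite{Cun82} directly.
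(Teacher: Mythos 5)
The paper does not prove this theorem itself: it states it as a GLT reformulation of Cunningham's canonical decomposition theorem, citing \cite{Cun82,GP07,GP08}. So there is no paper-internal proof to compare against, and the right standard is whether your sketch plausibly reconstructs the cited argument. In that light, your existence construction (start from the single-node GLT labelled by $G$, node-split any label that is neither prime nor degenerate, then exhaustively clique-join and star-join) is exactly the standard argument. Your termination claim ``strictly reduces the size of at least one label'' is true but a cleaner measure is that each node-split adds one internal tree-edge, and a tree with $n$ leaves and all internal degrees $\ge 3$ has at most $n-3$ internal edges. For uniqueness, your ``bypass''---translating reduced GLTs into Cunningham split-decomposition trees and invoking the uniqueness in \cite{Cun82}---is the cleanest route and is essentially what \cite{GP07,GP08} do; the paper's own Theorem~\ref{prop:split_list} is precisely the statement of that correspondence at the level of splits.

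Two points deserve more care before the argument is airtight. First, steps (i)--(ii) of your detailed plan (strong splits = internal tree-edges, maximal crossing families = degenerate labels) are correct but are genuinely the content of Cunningham's theory of bipartitive families; you cannot prove them from Remark~\ref{degenerate} alone, so either carry out that combinatorial lemma or commit fully to the bijection-with-Cunningham route. Second, and more substantively, your uniqueness argument quietly assumes that every node of a reduced GLT has degree at least~$3$. If the definition allowed a degree-$2$ node (label $K_2$), one could insert such a node on any tree-edge incident to a prime node without violating ``no clique-join or star-join possible'' or changing the accessibility graph, and uniqueness would fail. Your existence construction never produces such nodes (node-splits require $|A|,|B|\ge 2$, so new labels have $\ge 3$ vertices, and joins only increase degree), but a uniqueness proof for \emph{arbitrary} reduced GLTs must either build ``degree $\ge 3$'' into the definition of GLT or separately rule degree-$2$ nodes out. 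This is implicit in the paper and in \cite{GP07,GP08}, but it is a genuine loose end in the proposal as written.
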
  

The unique GLT guaranteed by the previous theorem is the \emph{split-tree}, and is denoted $ST(G)$.  
%
The GLT in Figure~\ref{fig:GLTexample} is the split-tree for the accessibility graph pictured there.  The split-tree is the intended replacement for Cunningham's split decomposition tree. The following theorem first appeared in Cunningham's seminal paper~\cite{Cun82} in an equivalent form. We phrase it in terms of GLTs and the split-tree:


\begin{theorem}[\cite{Cun82}] \label{prop:split_list}
Let $G$ be a connected graph. A bipartition $(A,B)$ is a split of $G$ if and only if either there exists an internal tree-edge of $ST(G)$ with extremities $p$ and $q$ such that $A=L(p)$ and $B=L(q)$, or there exists a degenerate node $u$ and a split $(A_u,B_u)$ of $G(u)$ such that $A=\cup_{p\in A_u} L(p)$ and $B=\cup_{p\in B_u} L(p)$.
\end{theorem}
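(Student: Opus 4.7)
The statement is an equivalence, so I would handle the two directions separately.

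For the easier direction $(\Leftarrow)$, suppose first that $e=pq$ is an internal tree-edge of $ST(G)$ with $A=L(p)$ and $B=L(q)$. Both sides contain at least two leaves because each of the two nodes incident to $e$ has at least three markers. I would then show that the neighbourhood of $B$ inside $A$ is exactly $A(p)$, the neighbourhood of $A$ inside $B$ is exactly $A(q)$, and that $A(p)$ and $A(q)$ are completely joined. The first two statements follow from the fact that any accessibility sequence between a leaf of $L(p)$ and a leaf of $L(q)$ must cross the tree-edge $e$ and, by the alternation condition, must use the consecutive pair $p,q$ as the crossing markers, forcing the endpoints to be accessible to $p$ and to $q$ respectively. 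The last statement follows from concatenating any accessibility sequence from $\ell\in A(p)$ to $p$ (whose last edge is a label-edge) with the tree-edge $pq$ and any accessibility sequence from $q$ to $\ell'\in A(q)$ (whose first edge is a label-edge), preserving the alternation. For the second case, where a degenerate node $u$ carries a split $(A_u,B_u)$ of $G(u)$, I would apply the node-split on $u$ with respect to $(A_u,B_u)$: by Definition \ref{def:node-split} this introduces a new internal tree-edge whose two $L$-sets are exactly $A$ and $B$, and since node-splits preserve the accessibility graph, the previous argument applied in the new GLT yields the split $(A,B)$ of $G$.

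For the harder direction $(\Rightarrow)$, let $(A,B)$ be a split of $G$ with frontiers $A'$ and $B'$. The plan is to build an auxiliary GLT of $G$ in which $(A,B)$ is realised as the bipartition of a single internal tree-edge, then to transfer this information into $ST(G)$ using the uniqueness theorem. Concretely, form the two-node GLT $(T_0,\mathcal{F}_0)$ with labels $G[A]+a$ (where $N(a)=A'$) and $G[B]+b$ (where $N(b)=B'$) joined by a single tree-edge $ab$; the split property ensures $\G(T_0,\mathcal{F}_0)=G$.

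By Theorem \ref{theo:reduced-GLT} the reduced GLT of $G$ equals $ST(G)$, so a sequence of node-splits and node-joins transforms $(T_0,\mathcal{F}_0)$ into $ST(G)$. Using Remarks \ref{rem:node-join-commutative} and \ref{rem:node-split-commutative}, I would reorder the sequence so that all node-splits are performed first (decomposing each of the two labels into prime and degenerate components), and only clique- and star-joins are applied afterwards until the GLT is reduced. Because node-splits act within a label, they do not touch the tree-edge $ab$, so after this first phase $ab$ is still a tree-edge with $L(a)=A$ and $L(b)=B$, sitting between prime-or-degenerate labels $U_a$ and $U_b$ of size at least three. If $ab$ is never consumed by a subsequent join, it survives as an internal tree-edge of $ST(G)$, giving Case 1. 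Otherwise $ab$ is consumed by a clique- or star-join merging the adjacent degenerate labels $U_a$ and $U_b$ into a single degenerate label, and any later join involving the growing node only merges it with a degenerate neighbour reached through a marker inherited from $U_a$ or from $U_b$. An induction on the number of joins then shows that every marker of the growing node has all its subtree-leaves in $A$ (if it descends from the $U_a$ side) or in $B$ (if it descends from the $U_b$ side), so the final merged node $u$ in $ST(G)$ is degenerate and its marker set inherits a bipartition $(A_u,B_u)$ with $A=\bigcup_{p\in A_u}L(p)$ and $B=\bigcup_{p\in B_u}L(p)$; the inequalities $|A_u|\ge |V(U_a)|-1\ge 2$ and $|B_u|\ge 2$ make this a non-trivial bipartition of $V(u)$, and Remark \ref{degenerate} upgrades it to a split of $G(u)$, giving Case 2.

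The main technical hurdle is the bookkeeping in this last subcase: verifying that once $ab$ has been absorbed, every tree-edge still incident to the growing degenerate node leads to a monochromatic subtree, so that each subsequent clique- or star-join propagates the $A/B$ colouring of the markers rather than mixing it, and that each side of the resulting partition remains of size at least two throughout.
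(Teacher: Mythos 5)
The paper does not actually prove this theorem: it is presented as a restatement of Cunningham's result from~\cite{Cun82} in GLT language (``The following theorem first appeared in Cunningham's seminal paper \ldots We phrase it in terms of GLTs''), and the paper moves immediately to Corollary~\ref{cor:split_list}. So there is no ``paper proof'' to compare against; your attempt has to stand on its own.

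Your $(\Leftarrow)$ direction is clean and correct. The $(\Rightarrow)$ direction has the right skeleton --- build the two-node GLT $(T_0,\mathcal{F}_0)$ realising the split, decompose each label, reduce, and track the fate of the tree-edge $ab$ --- but two points need repair. First, the claim that an arbitrary transforming sequence of node-splits and node-joins can be ``reordered so that all node-splits are performed first'' is not supported by Remarks~\ref{rem:node-join-commutative} and~\ref{rem:node-split-commutative}: those only allow permuting node-joins among themselves and commuting two node-splits applied to the \emph{same} node; they say nothing about exchanging a split with a join, and in general such an exchange is not possible (a join merges labels that the preceding splits may not exist on). What you actually need, and what is true, is that one \emph{specific} sequence of the desired shape exists: run the recursive split decomposition of Definition~\ref{def:split-dec-tree} separately on $G[A]+a$ and $G[B]+b$ (all node-splits, never touching $ab$), then perform clique- and star-joins until no more are possible; this terminates in a reduced GLT of $G$, which by the uniqueness of Theorem~\ref{theo:reduced-GLT} is $ST(G)$. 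No reordering lemma is required, just a direct construction plus uniqueness. Second, the induction showing the $A/B$-colouring of markers is preserved under successive joins is only sketched and is, as you acknowledge, the crux. The key observation to make it go through is that a node-join preserves $L(p)$ for every inherited marker $p$, so when the growing degenerate node absorbs a neighbour $W'$ through a marker $p$ on (say) the $A$-side, every inherited marker $q$ of $W'$ satisfies $L(q)\subset L(p)\subset A$, and the colouring extends. That, together with $|V(U_a)|\ge 3$ and $|V(U_b)|\ge 3$, gives $|A_u|,|B_u|\ge 2$, and Remark~\ref{degenerate} finishes. You identified the hurdle correctly but did not clear it; filling in this induction and replacing the reordering appeal with the direct-construction-plus-uniqueness argument would make the proof complete.
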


In order words, the split-tree can be understood as a compact representation of the family of splits of a connected graph. Indeed it is easy to show that the size of the split-tree $ST(G)$ of a graph is linear in the size of $G$ (the sum of the sizes of label graphs of $ST(G)$ is linear in the size of $G$), 
whereas a graph can have exponentially many splits (it is the case for the clique and the star). The following corollary is simply a rephrasing of Theorem~\ref{prop:split_list} based on the node-split operation.


\begin{corollary} [\cite{Cun82, GP07, GP08}]\label{cor:split_list}
Let $ST(G) = (T,\mathcal{F})$.  Any split of $G$ is the bipartition (of leaves) induced by removing an internal tree-edge from $\tilde T$, where $\tilde T = T$, or $\tilde T$ is obtained from $T$ by exactly one 
node-split of a degenerate node.
\end{corollary}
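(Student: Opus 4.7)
The plan is to derive the corollary as a direct translation of Theorem~\ref{prop:split_list} into the language of node-splits, case by case. Theorem~\ref{prop:split_list} already gives that every split of $G$ is either realized by an internal tree-edge of $ST(G)$ or by a split of a degenerate node's label. The task is to show both of these can be uniformly described as ``remove one internal tree-edge from some $\tilde T$ obtained from $T$ by at most one node-split of a degenerate node.''

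First I handle the internal tree-edge case. If $(A,B)$ is a split realized by an internal tree-edge of $ST(G)$ with extremities $p,q$, then I set $\tilde T = T$. By the very definition of $L(p)$ and $L(q)$, deleting this tree-edge disconnects $T$ into two subtrees whose leaf sets are exactly $L(p) = A$ and $L(q) = B$, so there is nothing more to show in this case.

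Next I handle the degenerate-node case. Suppose $(A,B)$ is realized by a degenerate node $u$ via a split $(A_u, B_u)$ of $G(u)$ with $A=\bigcup_{p\in A_u} L(p)$ and $B=\bigcup_{p\in B_u} L(p)$. I obtain $\tilde T$ by performing the node-split on $u$ with respect to $(A_u,B_u)$, producing two new adjacent nodes $u_A, u_B$ connected by a new tree-edge $e$, where the tree-edges previously incident to $u$ through marker vertices in $A_u$ now attach at $u_A$, and symmetrically for $B_u$ (using Definition~\ref{def:node-split}). Removing $e$ from $\tilde T$ thus splits the leaves into $\bigcup_{p\in A_u} L(p)$ and $\bigcup_{p\in B_u} L(p)$, which is exactly the bipartition $(A,B)$.

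The only point needing verification, and the mildest obstacle, is to confirm that $e$ is an \emph{internal} tree-edge of $\tilde T$, so that the statement is meaningful. Since $(A_u,B_u)$ is a split of $G(u)$ we have $|A_u|,|B_u|\ge 2$, so the labels $G(u)[A_u]+q_A$ and $G(u)[B_u]+q_B$ of the new nodes each have at least three vertices. Therefore $u_A$ and $u_B$ each have degree at least three in $\tilde T$, so neither is a leaf and $e$ is indeed internal. Combined with the two cases above, this completes the reformulation. No deeper argument is needed; the corollary is essentially a repackaging of Theorem~\ref{prop:split_list} through the node-split operation.
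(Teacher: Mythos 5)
Your proof is correct and mirrors the paper's own treatment, which simply states that the corollary is a rephrasing of Theorem~\ref{prop:split_list} in terms of the node-split operation; you have spelled out that rephrasing explicitly and verified the minor point that the new tree-edge is internal.
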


Compared to Cunnigham's split decomposition tree or the skeleton graph representation (see the remarks following Definition~\ref{def:split-dec-tree}), the advantage of the split-tree is manifest.
The adjacency relation in the underlying graph is now explicitly represented by the accessibility relation, and the role played by the marker vertices (and their own adjacencies) is established.  
All this added information comes with no space-tradeoff:

\begin{lemma} [\cite{GP07, GP08}] \label{treeSize}
Let $ST(G) = (T,\mathcal{F})$.  If $x \in V(G)$, then $|T(N(x))| \le 2 \cdot |N(x)|$.
\end{lemma}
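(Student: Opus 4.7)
I would start by noting that the leaves of the subtree $T(N(x))$ coincide exactly with $N(x)$: by minimality, any degree-$1$ vertex of $T(N(x))$ must belong to $N(x)$, since otherwise it could be pruned while $T(N(x))$ still spans $N(x)$. Writing $n_x$ for the number of internal nodes of $T(N(x))$, we have $|T(N(x))|=n_x+|N(x)|$, so the lemma is equivalent to the bound $n_x\le|N(x)|$.

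Next I would combine a standard tree-identity with the structural properties of a reduced split-tree. A degree sum in $T(N(x))$ gives $\sum_{u\ \text{internal}}(\deg_{T(N(x))}(u)-2)=|N(x)|-2$; in particular, at most $|N(x)|-2$ internal nodes have degree $\ge 3$ in $T(N(x))$. The crux of the argument is therefore to control the internal nodes of degree exactly $2$.

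For such a node $u$, let $q_1,q_2\in V(u)$ be the extremities at $u$ of the two tree-edges of $u$ that lie in $T(N(x))$. Depending on where $x$ sits relative to $u$, accessibility of the paths from $x$ to its $N(x)$-neighbours that pass through $u$ forces specific label-edges in $G(u)$: either $q_1q_2\in E(G(u))$, or $q_1$ and $q_2$ are both label-adjacent to the marker of $u$ corresponding to the tree-edge toward $x$. Combined with the reduced property of $ST(G)$ (no clique-join or star-join is possible between adjacent degenerate nodes), I would use these constraints to associate each degree-$2$ internal node of $T(N(x))$ with a distinct leaf of $N(x)$, completing an injective charging and hence the bound $n_x\le|N(x)|$.

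The main obstacle I anticipate is precisely this control of degree-$2$ internal nodes: the local label-edge constraints have to be turned into a global bound that rules out long paths of consecutive degree-$2$ nodes in $T(N(x))$. A purely generic tree-degree argument yields only a bound like $|T(N(x))|\le 2|N(x)|+O(1)$, so the substantive use of both accessibility and the reduced structure of $ST(G)$ is essential to obtain the tight constant $2$.
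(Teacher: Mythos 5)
The paper cites this lemma from [GP07, GP08] without giving an in-paper proof, so I can only assess your sketch on its own terms. Your opening reduction is correct: the leaves of $T(N(x))$ are exactly $N(x)$, so $|T(N(x))| = n_x + |N(x)|$, and the degree-sum identity gives that the number $n_{\ge 3}$ of internal nodes of degree $\ge 3$ is at most $|N(x)| - 2$. The gap is in the final step. Writing $n_2$ for the number of degree-$2$ internal nodes, an injective charging of these to $N(x)$ gives only $n_2 \le |N(x)|$, and combined with $n_{\ge 3} \le |N(x)| - 2$ this yields $n_x \le 2|N(x)| - 2$, i.e.\ $|T(N(x))| \le 3|N(x)| - 2$, not the claimed $2|N(x)|$. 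To conclude $n_x \le |N(x)|$ by this accounting you would need $n_2 \le 2$, a far stronger claim than the injection you announce. (Your remark that a ``generic tree-degree argument gives $2|N(x)|+O(1)$'' is also off: with no control on $n_2$ the generic argument gives no bound at all.)

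Worse, $n_2 \le 2$ actually fails, so the plan cannot be repaired merely by tightening the charging. Let $G = P_5$ with vertices $v_1,\dots,v_5$ in path order. Its split-tree is a path of three ternary star nodes $u_1 - u_2 - u_3$ with leaves $v_1,v_2$ on $u_1$, $v_3$ on $u_2$, $v_4,v_5$ on $u_3$; both internal tree-edges join two non-center extremities, so no star-join or clique-join applies and the GLT is reduced. With $x = v_3$ and $N(x)=\{v_2,v_4\}$, the Steiner tree $T(N(x))$ is the path $v_2 - u_1 - u_2 - u_3 - v_4$, which has $n_2 = 3 > 2 = |N(x)|$; no injection of degree-$2$ nodes into $N(x)$ is possible. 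Your local dichotomy on label-edges at a degree-$2$ node holds here but does not forbid such chains: a run of stars whose internal tree-edges each have both or neither extremity a star-center is already reduced. (Note in passing that this example gives $|T(N(x))| = 5 > 4 = 2|N(x)|$, so the literal constant in the stated lemma is itself delicate; what the paper's complexity analysis actually needs is only $|T(N(x))| = O(|N(x)|)$.) A correct argument has to track the alternating accessibility paths from $x$'s opposite marker through the Steiner tree in a global way, not just the per-node label-edge and reducedness constraints your sketch invokes.
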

   

\section{Lexicographic breadth-first search} \label{sec:LBFS}

As mentioned in the introduction, our algorithm incrementally builds the split-tree by adding vertices one at a time from the input graph. 
Now, adding a single vertex 
to the split-tree of a graph with $n$ vertices can require 
$\Theta(n)$ 
changes, as demonstrated in Figure~\ref{fig:badExample}. 
However, later in the paper we prove that if vertices are added according to a Lexicographic Breadth-First Search (LBFS) ordering, then the total cost of inserting all vertices of $G$ can be amortized to linear time up to inverse Ackermann function.  

\begin{figure}[htbh]
\begin{center}
\includegraphics[scale=0.75]{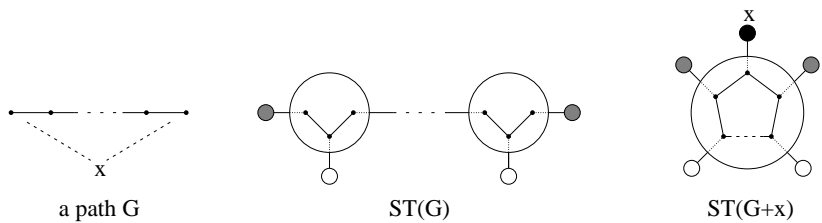}
\end{center}
\caption{Adding a single vertex adjacent to the ends of a path requires 
$\Theta(n)$ 
changes to the split-tree (the neighbours of $x$ appear as the grey leaves).}
\label{fig:badExample}
\end{figure}

\bigskip

This section presents new LBFS results on the split decomposition and more generally on GLTs. We first present the LBFS algorithm and some known results.

\subsection{LBFS orderings}

Lexicographic Breadth-First Search (LBFS) was developed by Rose, Tarjan, and Lueker for the recognition of chordal graphs~\cite{RTL76} and has since become a standard tool in algorithmic graph theory~\cite{Cor04}.  


%
An \emph{ordering} $\sigma$ of a graph $G$ is a linear ordering of its set of vertices $V(G)$.
Formally, we can define it either as an injective mapping from $V(G)$ to the integers, or as an ordering binary relation. We slightly abuse notation by allowing $\sigma$ to represent such a mapping as well as the ordering, and we let $<_\sigma$ denote the binary relation: $x<_{\sigma} y$ is equivalent to $\sigma(x)<\sigma(y)$.
%
%
In such a case, we say that ``$x$ appears before $y$'', or ``earlier than $y$'', in $\sigma$. Similarly,  by ``first'', ``last'' and ``penultimate'', we denote respectively,  the smallest element of $<_\sigma$, the greatest element and the element appearing immediately before the last one. 
%


 By an LBFS ordering of the graph $G$, we mean any ordering produced by Algorithm~\ref{alg:LBFS} on input  graph $G$. 
Notice that such an ordering can be built in linear time (see e.g.~\cite{Gol80, HMP00}).  

\begin{algorithm} 
\KwIn{A graph $G$ with $n$ vertices.}
\KwOut{An ordering $\sigma$ defined by a mapping $\sigma : V(G) \rightarrow \{1,\ldots,n\}$.}

\BlankLine
\lForEach{$x \in V(G)$} {label($x$) $\gets$ null\;}
\SetKw{DownTo}{downto}
\For{$i = 1$ \KwTo $n$} {
	pick an unnumbered vertex $x$ with lexicographically largest label\;
	$\sigma(x) \gets i$ \tcp*[l]{assign $x$ the number $i$} 
	\lForEach{unnumbered vertex $y \in N(x)$} {append $n - i + 1$ to label($y$)\;}
	}

\caption{Lexicographic Breadth-First Search} \label{alg:LBFS}
\end{algorithm}

\medskip
  The next result characterizes LBFS orderings:

\begin{lemma} [\cite{DNB96, Gol80}] \label{4vertex}
An ordering $\sigma$ of a graph $G$ is an LBFS ordering if and only if for any triple of vertices $a <_{\sigma} b <_{\sigma} c$ with $ac \in E(G)$, $ab \notin E(G)$, there is a vertex $d <_{\sigma} a$ such that $db \in E(G)$, $dc \notin E(G)$.
\end{lemma}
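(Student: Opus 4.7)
The plan is to prove both implications by tracking how the labels of Algorithm~\ref{alg:LBFS} evolve and comparing them at strategic moments. Write $\ell_v$ for the label of an unnumbered vertex $v$ at a given iteration. By construction $\ell_v$ is a strictly decreasing sequence whose $i$-th entry is $n-j+1$, where $j$ is the iteration at which $v$'s $i$-th earliest-numbered neighbor was picked. Two trivial facts are the engine of the argument: (1) if $\ell_b$ and $\ell_c$ share a prefix $\ell_b[1..k{-}1] = \ell_c[1..k{-}1]$, then the $k-1$ earliest-numbered neighbors of $b$ and of $c$ are literally the same vertices (because each iteration numbers a unique vertex); and (2) a numbered vertex $u$ is an entry of $\ell_v$ if and only if $uv \in E(G)$.

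For the forward direction, assume $\sigma$ is an LBFS ordering and fix $a <_\sigma b <_\sigma c$ with $ac \in E$ and $ab \notin E$. Consider the iteration in which $b$ is chosen: both $b$ and $c$ are still unnumbered and $b$ is picked, so $\ell_b \geq_{\text{lex}} \ell_c$. Fact (2) shows $n-\sigma(a)+1$ occurs in $\ell_c$ but not in $\ell_b$, so the inequality is strict. Let $k$ be the first position where the labels differ, and let $d$ be the $k$-th earliest-numbered neighbor of $b$. Applying (1) to the common prefix and the choice of $k$ yields $d \notin N(c)$. To finish, I would locate the position $m$ at which the entry $n-\sigma(a)+1$ sits in $\ell_c$: if $m \leq k-1$ then by (1) the vertex numbered at iteration $\sigma(a)$, namely $a$, would be a neighbor of $b$, contradicting $ab \notin E$; so $m \geq k$, which forces the iteration of $\ell_c[k]$ to be at most $\sigma(a)$, and hence the iteration $\sigma(d)$ corresponding to the strictly larger value $\ell_b[k]$ to be strictly less than $\sigma(a)$.

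For the reverse direction, suppose $\sigma$ satisfies the four-point condition but is not an LBFS ordering, and take the smallest iteration $i$ at which $b = \sigma^{-1}(i)$ fails to have a lex-maximal label; pick an unnumbered $c$ with $\ell_c >_{\text{lex}} \ell_b$. Let $k$ be the first differing position and $d$ the $k$-th earliest-numbered neighbor of $c$. The same prefix analysis as above shows $d <_\sigma b <_\sigma c$ with $dc \in E$ and $db \notin E$. Apply the four-point hypothesis to the triple $(d,b,c)$ to obtain $d' <_\sigma d$ with $d'b \in E$ and $d'c \notin E$. Since $\sigma(d') < \sigma(d)$ falls within the iterations encoded by the common prefix $\ell_b[1..k-1] = \ell_c[1..k-1]$, fact (1) either forces $d'$ into that common prefix of $\ell_c$ (contradicting $d'c \notin E$) or forces $\ell_b$ to already exceed $\ell_c$ strictly before position $k$ (contradicting the choice of $k$).

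The main obstacle I expect is the positional bookkeeping in the forward direction, in particular arguing that the witness $d$ lies strictly before $a$ in $\sigma$ rather than merely before $b$. One must remember that smaller iteration indices correspond to larger appended values, and that ``occupying position $p$ of $\ell_v$'' is the same as ``being the $p$-th earliest-numbered neighbor of $v$''. Edge cases in which one label is a proper prefix of the other, or in which a purported entry $\ell_v[k]$ does not yet exist, can be handled uniformly by treating a missing entry as strictly smaller than any present entry.
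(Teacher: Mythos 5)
The paper does not supply a proof of this lemma; it is cited from \cite{DNB96, Gol80} as a known characterization of LBFS orderings, so there is no in-paper argument to compare against. Your proof is a correct rendering of the standard label-tracking argument and it handles the two directions soundly: in the forward direction you correctly pin down the position $k$ of the first label discrepancy when $b$ is numbered, observe that $\ell_b[k]$ and $\ell_c[k]$ must both exist (since $a$'s contribution sits at some position $m\geq k$ in $\ell_c$), and the chain $\ell_b[k]>\ell_c[k]\geq \ell_c[m]=n-\sigma(a)+1$ does force $\sigma(d)<\sigma(a)$ strictly; in the reverse direction, taking the first violation $i$, the derived triple $(d,b,c)$ satisfies the hypotheses, and the produced witness $d'$ lands inside the common prefix of $\ell_b$ and $\ell_c$, contradicting $d'c\notin E$. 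The one slight hedge is your ``either \dots or \dots'' at the end of the reverse direction; in fact the second branch is vacuous, since $n-\sigma(d')+1>\ell_c[k]>\ell_b[k]$ (or $\ell_b[k]$ missing) pushes $d'$ unambiguously into the common prefix, so you can drop the alternative. With that cosmetic fix, the argument is complete and matches the textbook proof of this characterization.
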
  


For a subset $S$ of $V(G)$, $\sigma[S]$ denotes the restriction of $\sigma$ to $S$.
A {\it prefix} of an ordering $\sigma$ is a set $S$ such that $x<_{\sigma} y$ and $y\in S$ implies $x\in S$.
One obvious result is the following:

\begin{remark} \label{prefix}
Let $S$ be a prefix of any LBFS ordering $\sigma$ of connected graph $G$.  Then $\sigma[S]$ is an LBFS ordering of $G[S]$, and $G[S]$ is connected.
\end{remark}

\subsection{LBFS and split decomposition}
\label{sub:LBFS}

We now introduce a general lemma about split decomposition, followed by 
lemmas relating LBFS orderings 
and split decomposition.  

\begin{lemma} \label{newSplit}
Let $G$ and $G+x$ be two connected graphs such that $G$ is prime but $G+x$ is not. Then either $x$ is a pendant vertex or $x$ has a twin.
\end{lemma}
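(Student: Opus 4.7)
\medskip

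\noindent\textbf{Proof plan.} Since $G+x$ is not prime, it admits a non-trivial split $(A,B)$ with $|A|,|B|>1$; let $A'$ and $B'$ be its frontiers. Without loss of generality assume $x\in A$. The strategy is to use the split $(A,B)$ of $G+x$ to manufacture a candidate split of $G$, namely $(A\setminus\{x\},\,B)$, and then exploit the primality of $G$ to force $A\setminus\{x\}$ to be a singleton. Once $A=\{x,y\}$ has been pinned down, a short case analysis on whether $x$ has any neighbour in $B$ yields either a pendant or a twin.

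The first step is therefore to show that $|A\setminus\{x\}|=1$. Assume toward contradiction that $|A\setminus\{x\}|\ge 2$. I will verify that, under this assumption, $(A\setminus\{x\},B)$ is in fact a split of $G$, contradicting primality. The frontier of $A\setminus\{x\}$ in $G$ is $A'\setminus\{x\}$, since removing $x$ does not alter which other vertices of $A$ have neighbours in $B$. The frontier of $B$ in $G$ is $B'$ provided every $b\in B'$ retains a neighbour in $A\setminus\{x\}$; this is true as long as $A'\setminus\{x\}\ne\emptyset$, because every vertex of $B'$ is universal to $A'$ in $G+x$. And the universality $A'\setminus\{x\}\subseteq N(B')$ in $G$ is inherited from the universality in $G+x$. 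So the only way to escape the contradiction is to have $A'\setminus\{x\}=\emptyset$. But this means that in $G$ there are no edges between the nonempty sets $A\setminus\{x\}$ and $B$, contradicting the connectedness of $G$. Hence $|A\setminus\{x\}|=1$, and we may write $A=\{x,y\}$.

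The second step is the dichotomy on $x$'s adjacency with $B$. If $x\notin A'$, i.e.\ $x$ has no neighbour in $B$, then its only possible neighbour in $G+x$ is $y$; connectedness of $G+x$ forces $N(x)=\{y\}$, so $x$ is a pendant. If $x\in A'$, then since the partial argument above already forces $A'\setminus\{x\}\ne\emptyset$ (otherwise $G$ is disconnected, as noted), and $A'\subseteq A=\{x,y\}$, we get $y\in A'$ as well. Both $x$ and $y$ are then universal to $B'$ in $G+x$, and neither has any neighbour in $A\setminus\{x,y\}=\emptyset$; consequently $N_{G+x}(x)\setminus\{y\}=B'=N_{G+x}(y)\setminus\{x\}$, which is exactly the twin relation.

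The main obstacle is the careful verification in the first step that $(A\setminus\{x\},B)$ really is a split of $G$ -- one must track precisely how the two frontiers and the universality condition behave when $x$ is deleted, and in particular observe that $A'\setminus\{x\}\ne\emptyset$ is exactly what keeps the frontier of $B$ from shrinking strictly inside $B'$. Once this bookkeeping is done, the rest is a clean dichotomy.
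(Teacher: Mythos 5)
Your proof is correct and follows essentially the same approach as the paper's: take a split $(A,B)$ of $G+x$ with $x\in A$, show $|A|=2$ by using primality of $G$ and connectedness, and then perform the case analysis on the frontier $A'$. The only difference is one of detail: the paper asserts tersely that ``$(A\setminus\{x\},B)$ is not a split in $G$, so $|A|=2$'', whereas you spell out precisely why this bipartition of $V(G)$ would satisfy the split definition whenever $|A\setminus\{x\}|\geq 2$ and $A'\setminus\{x\}\neq\emptyset$ --- a welcome expansion, since it makes explicit the bookkeeping of how frontiers behave when $x$ is removed.
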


\begin{proof}
Since $G + x$ is not prime, it has a split $(A,B)$.  Let $A'$ and $B'$ be the frontiers of the split.  Without loss of generality, assume that $x \in A$.  Since $(A \setminus \{x\},B)$ is not a split in $G$, we know that $|A| = 2$.  If $A' = \{x\}$, then $G$ is disconnected.
If $A' = \{x,y\}$, then $y$ is a twin of $x$.  If $A' = \{y\}, y \ne x$, then $N(x) = \{y\}$, since $G+x$ is connected.  Therefore $x$ is a pendant.
\end{proof}

\begin{lemma} \label{LBFStwin}
Let $G$ and $G+x$ be two connected graphs and let $\sigma$ be an LBFS ordering of $G+x$ in which $x$ appears last.  If $G$ is prime and $x$ has a twin $y$, then $y$ is either universal in $G$ or is the penultimate vertex in $\sigma$.
\end{lemma}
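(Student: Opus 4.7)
I will proceed by contradiction: assume $y$ is neither universal in $G$ nor penultimate in $\sigma$, and let $k = \sigma(y)$, so $1 \leq k \leq n-2$. Set $P = \{v_1, \ldots, v_{k-1}\}$ and $N_P = N_{G+x}(y) \cap P$. The key initial observation is that, because $x$ and $y$ are twins and $x$ has not yet been numbered at step $k$, the labels of $x$ and $y$ at step $k$ coincide -- both are determined by their common $P$-neighborhood $N_P$. Thus $x$ is tied with $y$ at step $k$, and the selection of $y$ over $x$ is merely a tiebreak.

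The core of the argument is the following claim: for every $j$ with $1 \leq j \leq n-k-1$, the vertex $v_{k+j}$ chosen at step $k+j$ also satisfies $N_{G+x}(v_{k+j}) \cap P = N_P$. I will prove this by a direct lexicographic comparison. If the equality failed, the label of $v_{k+j}$ at step $k$ would be strictly lex-smaller than the common label $L$ of $x$ and $y$; and because every entry appended to a label after step $k$ has value at most $n-k+1$, which is strictly less than any entry coming from a position in $P$, this strict inequality is preserved when the two labels are extended to step $k+j$. More precisely, $\mathrm{label}(x)$ at step $k+j$ begins with $L$ followed by contributions from the neighbors of $x$ among $y, v_{k+1}, \ldots, v_{k+j-1}$, while $\mathrm{label}(v_{k+j})$ begins with a strictly lex-smaller prefix that cannot be compensated for by any later extension. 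Hence $v_{k+j}$'s label is strictly less than $x$'s at step $k+j$, contradicting its selection. The claim gives that the entire suffix $C := V(G) \setminus P = \{y, v_{k+1}, \ldots, v_{n-1}\}$ has the common $P$-neighborhood $N_P$ in $G$.

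For the generic case $k \geq 3$ the contradiction is immediate: $|P|, |C| \geq 2$; connectedness of $G$ forces $N_P \neq \emptyset$ (else $C$ and $P$ would be disjoint components of $G$); and every vertex of $C$ is universal to $N_P$ by the claim, so $(P, C)$ is a nontrivial split of $G$, contradicting its primality. The main obstacle I anticipate is closing the boundary cases $k = 1, 2$, where $|P| \leq 1$ rules out the split $(P, C)$. For $k = 2$, the lex-largest requirement at step $2$ combined with the connectedness of $G+x$ forces $v_1 y \in E$, so the claim yields $v_1$ universal in $G$; a single application of Lemma~\ref{4vertex} to the triple $(y, w, x)$, where $w$ is any non-neighbor of $y$ in $V(G) \setminus \{v_1\}$, then produces the contradiction, since the only candidate witness strictly before $y$ is $v_1$, which is adjacent to $x$ by the twin relation. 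For $k = 1$ the true-twin sub-case fails Lemma~\ref{4vertex} immediately (no vertex precedes $y$), while the false-twin sub-case is handled by iterating the $4$-vertex characterization to show that every non-neighbor of $y$ in $V(G)$ must be adjacent to all of $N_G(y)$ and to every earlier non-neighbor, ultimately yielding the split $(\{y\} \cup W, N_G(y))$ of $G$, where $W$ denotes the set of non-neighbors of $y$ in $V(G)$, again contradicting primality.
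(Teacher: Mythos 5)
Your core argument is sound and mirrors the paper's: the claim that every vertex of $C = V(G)\setminus P$ has $P$-neighbourhood equal to $N_P$ is correct (and in fact makes explicit a step the paper glosses over, namely that the set $S$ of vertices tied with $y$ exhausts $V(G)\setminus B$); the generic case $k\geq 3$ is then exactly the paper's split. However, the boundary case $k=2$ has a genuine gap. You apply Lemma~\ref{4vertex} to the triple $(y,w,x)$, which requires $yx\in E(G+x)$; but twins need not be adjacent, and in the false-twin situation ($N(x)=N(y)$, $xy\notin E$) that triple does not satisfy the hypotheses of Lemma~\ref{4vertex}. Your argument therefore only derives $xy\notin E$, not a contradiction. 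Notice that you yourself separate the true-twin and false-twin sub-cases for $k=1$; the same separation is needed for $k=2$, and the false-twin branch is where the real work lies. The paper handles it by partitioning $V(G)\setminus\{v_1,y\}$ into $Z_1\subseteq N(y)$ and $Z_2$ disjoint from $N(y)$, applying Lemma~\ref{4vertex} to triples $(z_1,z_2,x)$ to get a join between $Z_1$ and $Z_2$, and exhibiting the split $(\{v_1\}\cup Z_1,\ \{y\}\cup Z_2)$ after checking $|Z_1|,|Z_2|>0$.

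There is a second, smaller gap in the $k=1$ false-twin branch: you declare $(\{y\}\cup W,\ N_G(y))$ to be a split, but this requires $|N_G(y)|\geq 2$, which you do not verify and which can fail. The paper branches on $|N(y)|=1$: if then $|Z|>1$ it uses the split $(N(y)\cup\{y\}, Z)$ instead, and the residual case $|Z|=1$ (where $G=S_3$, $G+x=S_4$) is handled separately by noting the penultimate vertex is itself a twin of $x$. Without these branches your argument is incomplete on the small-neighbourhood boundary.
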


\begin{proof}
Observe that if $|V(G)| > 3$, then $y$ is unique since $G$ is prime. 
Consider an execution of Algorithm~\ref{alg:LBFS} that produced the ordering $\sigma$.  Let $S$ be the set of vertices with the same label as $y$ at the time $y$ is numbered by Algorithm~\ref{alg:LBFS} (of course $y\in S$). As $x$ and $y$ are twins, we must have $x \in S$. We can assume that $S\setminus\{y,x\}\neq\emptyset$ as otherwise $y$ would be the penultimate vertex of $\sigma$. Let $B$ be the set of vertices numbered before $y$ by Algorithm~\ref{alg:LBFS}. Observe that $|B|\leqslant 1$ as otherwise $G$ would contain the split $(B,S\setminus\{x\})$.

Consider the case where $B = \emptyset$.  Then $y$ is the first vertex in $\sigma$ and immediately following $y$ are the vertices in $N(y)$.  If $y$ is not universal in $G$, then the set $Z = V(G) - N(y)$ is non-empty and in $\sigma$, all of its vertices appear after those in $N(y)$.  We claim that there is a join between $N(y)$ and $Z$.  Suppose for contradiction that there is no such join.  Then there is a vertex $w \in N(y)$ that is not universal to $Z$.  Consider some vertex $z \in Z$ such that $wz \notin E(G)$.  With $x$ and $y$ twins, it follows that $wx \in E(G)$.  Hence, $w <_\sigma z <_\sigma x$, and $wx \in E(G)$ but $wz \notin E(G)$.  Therefore, by Lemma~\ref{4vertex}, there is a vertex $d <_\sigma w$ such that $dz \in E(G)$ but $dx \notin E(G)$.  But $x$ is universal to $N(y)$ since it is $y$'s twin, and thus $d \notin N(y)$.  Given the restrictions on $\sigma$ noted above, it follows that $d = y$.  But then $dz \notin E(G)$, since $z \in Z$, providing the desired contradiction.  

That means there is a join between $N(y)$ and $Z$.  
But now, unless $|N(y)| = 1$, $(N(y), Z \cup \{y\})$ is a split in $G$ contradicting $G$ being prime.  When $|N(y)| = 1$, if $|Z| = 1$, then $G$ is a star on three vertices and $G + x$ is a star on four vertices, where the penultimate vertex is a twin of $x$ (note that by Lemma~\ref{4vertex}, $xy \notin E$); if $|Z| > 1$, $(N(y) \cup \{y\}, Z)$ is a split in $G$.  Thus $B$ is not empty.

So let $s$ be the unique vertex of $B$. 
Now $S$, as ordered by $\sigma$, consists of $y$ followed by $Z_1 = N(y) \cap S$, followed by  $Z_2$ (vertices of $S \cap V(G)$ not adjacent to $y$); $x$ is the last element of $S$
As $S\setminus\{y,x\}\neq\emptyset$, we have that $Z_1\cup Z_2\neq\emptyset$. Since $x$ and $y$ are twins, $x$ is universal to $Z_1$ and not adjacent to any vertices in $Z_2$.  Since $y$ is not universal in $G$, $|Z_2| >0$ and thus by Lemma~\ref{4vertex}, $x$ is not adjacent to $y$.  If $|Z_1| =0$, then unless $|Z_2| = 1$, $G$ has the split $(\{y,s\}, Z_2)$.  If $|Z_2| = 1$, then, as in the $B = \emptyset$ case, $G$ is a star on three vertices and $G+x$ is a star on four vertices, where the penultimate vertex is a twin of $x$.  Thus $|Z_1|> 0$ and $|Z_2| >0$.  By the application of Lemma~\ref{4vertex} to $(z_1, z_2, x)$, where $z_1 \in Z_1, z_2 \in Z_2$, we see that $z_1z_2 \in E(G)$ and thus $(\{s\} \cup Z_1, \{y\} \cup Z_2)$ is a split of $G$, contradicting $G$ being prime.
\end{proof}


Let $u$ be a node in a GLT $(T,\mathcal{F})$.  Notice that the sets $L(q), q \in V(u)$ partition the leaves of $T$.  In other words, each marker vertex can be associated with a distinct leaf in $T$.  This allows us to define a type of induced LBFS ordering on $G(u)$ as demonstrated below.



\begin{definition}
Let $u$ be a node of a GLT $(T,\mathcal{F})$ and let $\sigma$ be an ordering of $G=\G(T,\mathcal{F})$. 
For any marker vertex $p$, let $x_p$ be the earliest vertex of $A(p)$ in $\sigma$.
Define $\sigma[G(u)]$ to be the ordering of $G(u)$ such that for $q,r \in V(u)$, $q<_{\sigma[G(u)]} r$ if $x_q<_{\sigma} x_r$.
\end{definition}

\begin{lemma} \label{inducedLBFS}
Let $\sigma$ be an LBFS ordering of a connected graph $G=\G(T,\mathcal{F})$, and let $u$ be  a node in $(T,\mathcal{F})$. Then $\sigma[G(u)]$ is an LBFS ordering of $G(u)$.
\end{lemma}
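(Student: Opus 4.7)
To show that $\sigma[G(u)]$ is an LBFS ordering of $G(u)$, my plan is to verify the 4-vertex characterization from Lemma~\ref{4vertex}. Take three marker vertices $a <_{\sigma[G(u)]} b <_{\sigma[G(u)]} c$ of $V(u)$ satisfying $ac \in E(G(u))$ and $ab \notin E(G(u))$; the goal is to produce $d \in V(u)$ with $d <_{\sigma[G(u)]} a$, $db \in E(G(u))$, and $dc \notin E(G(u))$. By the definition of $\sigma[G(u)]$, the representatives satisfy $x_a <_\sigma x_b <_\sigma x_c$, and because every label of a GLT of a connected graph is realised as the induced subgraph on any choice of representatives from the $A$-sets (the remark following Remark~\ref{rem:hered}), we also have $x_a x_c \in E(G)$ and $x_a x_b \notin E(G)$.

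Applying Lemma~\ref{4vertex} to the LBFS ordering $\sigma$ of $G$, let $y$ be the $\sigma$-earliest leaf with $y <_\sigma x_a$, $yx_b \in E(G)$, and $yx_c \notin E(G)$, and let $d \in V(u)$ be the unique marker vertex with $y \in L(d)$. The crux of the proof is to show that $y \in A(d)$: granting this, $x_d \leq_\sigma y <_\sigma x_a$ yields $d <_{\sigma[G(u)]} a$, while translating $yx_b \in E(G)$ and $yx_c \notin E(G)$ through the induced subgraph property gives $db \in E(G(u))$ and $dc \notin E(G(u))$. The cases $d \in \{a, b, c\}$ are immediately excluded by the minimality of $x_a$, $x_b$, $x_c$ in their respective $A$-sets; for instance, $d = b$ would force $y \in A(b)$ with $y <_\sigma x_b$, contradicting the definition of $x_b$.

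The main obstacle is ruling out $y \in L(d) \setminus A(d)$. Since any alternating accessibility path from $y$ to a leaf outside $L(d)$ must travel through the marker vertex $d$, such a $y$ has all its $G$-neighbours inside $L(d)$; the edge $yx_b$ then forces $x_b \in L(d)$, so that $d = b$ and $y \in L(b) \setminus A(b)$. I plan to handle this residual case using the minimality of $y$: because $x_a \notin L(b)$ we have $yx_a \notin E(G)$, so Lemma~\ref{4vertex} applied to the triple $(y, x_a, x_b)$ produces a leaf $z <_\sigma y$ with $zx_a \in E(G)$ and $zx_b \notin E(G)$; if in addition $zx_c \in E(G)$, one more application of Lemma~\ref{4vertex} to $(z, x_b, x_c)$ yields a witness strictly $\sigma$-earlier than $y$ of the original form, contradicting the choice of $y$. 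The subcase $zx_c \notin E(G)$ — the technical heart of the argument — should be resolved by a structural analysis of the marker vertex containing $z$, which by the same reasoning as above cannot be $a$, $b$, or $c$, so that iterating the pattern or invoking a direct label-edge argument eventually contradicts the minimality of $y$.
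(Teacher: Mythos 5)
Your approach is genuinely different from the paper's: you verify the four-vertex characterization of Lemma~\ref{4vertex} directly on $\sigma[G(u)]$, whereas the paper induces on the number of nodes of $(T,\mathcal{F})$, peeling off a pendant subtree $T(q)$ at each step and showing the restriction of $\sigma$ to the reduced vertex set is still LBFS. The reduction you carry out is correct as far as it goes: taking $y$ to be the $\sigma$-earliest witness for the triple $(x_a,x_b,x_c)$, observing that if $y\notin A(d)$ then every $G$-neighbour of $y$ lies in $L(d)$, and concluding via the edge $yx_b$ that necessarily $d=b$ and $y\in L(b)\setminus A(b)$ --- all of this is sound, and the translation of adjacency through the induced-subgraph realisation of $G(u)$ is exactly right.

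However, there is a genuine gap, and you flag it yourself. The residual case $y\in L(b)\setminus A(b)$ is handled by one application of Lemma~\ref{4vertex} giving $z<_\sigma y$ with $zx_a\in E(G)$, $zx_b\notin E(G)$, and then the subcase $zx_c\notin E(G)$ is left with only a sketch (``iterating the pattern or invoking a direct label-edge argument eventually contradicts the minimality of $y$''). It is not established that the iteration terminates, nor what structural fact about the node containing $z$ forces termination, so as written the proof is incomplete. The cleaner and in fact immediate way to close this case is the one the paper uses at the analogous step of its inductive argument: invoke Remark~\ref{prefix}. Since $y\in L(b)\setminus A(b)$ and $x_a\notin L(b)$, every $G$-path from $y$ to $x_a$ must pass through a vertex of $A(b)$ (because no vertex of $L(b)\setminus A(b)$ has a neighbour outside $L(b)$). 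But the $\sigma$-prefix ending at $x_a$ contains $y$ and $x_a$ yet no vertex of $A(b)$, since the $\sigma$-earliest element of $A(b)$ is $x_b$ and $x_a<_\sigma x_b$. Hence that prefix induces a disconnected subgraph, contradicting Remark~\ref{prefix}. Replacing your iteration sketch with this connectedness argument would complete the proof; without it, the key case remains open.
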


\begin{proof}
First observe that if we collect in a set $S$ one leaf $\ell_q$ of $A(q)$ for every marker vertex $q\in V(u)$, then the induced subgraph $G[S]$ is isomorphic to $G(u)$. Notice that $\sigma[G(u)]$ is then the ordering $\sigma[S]$ if each selected leaf $\ell_q$ is chosen to be the earliest in $\sigma$. We prove by induction on the number of nodes in $T$ that $\sigma[S]=\sigma[G(u)]$ is an LBFS ordering of $G(u)$. To that aim, we use Lemma~\ref{4vertex}.

As an induction hypothesis, assume the lemma holds for all graphs whose split-tree has fewer nodes than $ST(G)$.  
The lemma clearly holds if $(T,\mathcal{F})$ contains only one node, because $G[S]$ is isomorphic to $G$ in this case.

So assume that $(T,\mathcal{F})$ contains more than one node.  Then there is a $q \in V(u)$ such that $T(q)$ contains at least one node.  Let $\ell_q \in A(q)$ be the leaf associated with $q$ in $\sigma[G(u)]$.  Let $G' = G[(V(G)\setminus L(q))\cup\{\ell_q\}]$.  Remove $T(q)$ from $(T,\mathcal{F})$, choosing $\ell_q$ to be the leaf that replaces its nodes; let $(T',\mathcal{F}')$ be the resulting GLT.  Clearly $G'=\G(T',\mathcal{F}')$. For simplicity, let $\sigma' = \sigma[V(G')]$. Suppose $a$, $b$ and $c$ form a triple of vertices of $V(G')$ as in Lemma~\ref{4vertex}. As $\sigma$ is an LBFS ordering of $G$, there exists $d\in V(G)$ appearing earlier than $a$ in $\sigma$ which is adjacent to $b$ but not to $c$. Suppose that $d$ does not belong to $V(G')$, i.e. $d\neq \ell_q$ and $d \in L(q)$. Let $p$ be $q$'s opposite in $(T,\mathcal{F})$. As $(L(q),L(p))$ is a split of $G$, the vertex $b$ either belongs to $L(p)$ or $L(q)$.  In the former case, since $b$ is adjacent to a vertex in $L(q)$, $b \in A(p)$ and thus $d \in A(q)$.  By the choice of $\ell_q$, it can replace vertex $d$.  
(Note that $c \in L(p) \setminus A(p)$ and thus $l_q$ and $c$ are not adjacent.)
In the latter case, since $\ell_q$ is the only $L(q)$ vertex in $G'$, $b = \ell_q$ and $b \in A(q)$.
Moreover, by the choice of $\ell_q$, $d$ belongs to $L(q)\setminus A(q)$. We now prove that $a$ cannot appear before $b$ in $\sigma$, yielding a contradiction. As $b$ is the only vertex of $L(q)$ present in $V(G')$, so vertex $a$ belongs to $L(p) \setminus A(p)$. By the choice of $\ell_q$, no vertex of $A(q)$ appears before $a$ in $\sigma$. 
By Remark~\ref{prefix}, the subgraph of $G$ induced on the vertices of $\sigma$ up to, and including $a$ is connected.  But, there can be no path in the subgraph connecting $d \in L(q) \setminus A(q)$ and $a$ since $A(q)$ is a separator for $d$ and $a$, and $b = \ell_q$ is the earliest vertex of $A(q)$ in $\sigma$. Thus $a$ cannot appear before $b$ in $\sigma$, thereby contradicting the existence of the triple $\{a,b,c\}$.
It follows that $\sigma'$ is an LBFS ordering of $G'$.

Of course, $(T',\mathcal{F}')$ has fewer nodes than $(T,\mathcal{F})$.  We can therefore apply our induction hypothesis.  Hence, $\sigma[S]$ is an LBFS ordering of $G'[S]$.  But notice that $G'[S]$ is isomorphic to $G[S]$ which is isomorphic to $G(u)$. The induction step follows.
\end{proof}


\section{Incremental split decomposition} \label{sec:characterization}

Throughout this section we assume that the graphs $G$ and $G + x$ are both connected.  
We provide a simple combinatorial description of the updates required in $ST(G)$ to arrive at $ST(G+x)$.
%
The proof is obtained by a case by case analysis of the properties of $ST(G)$ when removing $x$ from $ST(G+x)$, which turns out to be easily invertible. 
 


\subsection{State assignment}

Most results in the paper rely on the next definition.
%
Intuitively, its aim is to allow a characterization of the portions of the split-tree that change or fail to change under the insertion of a new~vertex.

\begin{definition} \label{def:states}
Let $(T,\mathcal{F})$ be a GLT, and let $q$ be one of its leaves or marker vertices.  Let $S$ be a subset of $T$'s leaves.  Then the \emph{state (with respect to} $S$\emph{) of} $q$ is: 

- \emph{perfect} if $S \cap L(q) = A(q)$; 

- \emph{empty} if $S \cap L(q) = \emptyset$; 

- and \emph{mixed} otherwise.  

For a node $u$, define the sets $P(u) = \{ q \in V(u)~ |~ q \textrm{ perfect}\}$, $M(u) = \{ q \in V(u)~ |~ q \textrm{ mixed}\}$, and $NE(u) = P(u) \cup M(u)$ (``NE'' for ``Not-empty'').  See Figure~\ref{fig:stateExample}.
\end{definition}

\begin{figure}[htbh]
\begin{center}
\includegraphics[scale=0.80]{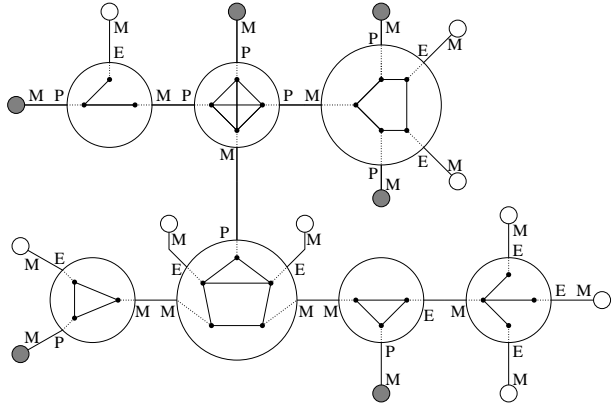}
\end{center}
\caption{A GLT and states assigned according to the shaded leaves (``P'' for ``perfect'', ``M'' for ``mixed'', and ``E'' for ``empty''.)}
\label{fig:stateExample}
\end{figure}

\begin{remark}\label{rk:inherited-states}
The state of a marker vertex before and after a node-split or a node-join is the same.
\end{remark}

Notice that the opposite of any leaf $l$ must be either perfect (if $l\in S)$ or empty (if $l\not\in S)$.  We extend the state definition to subtrees: if a marker vertex (or leaf) $q$ is perfect (respectively empty, mixed), then the \emph{subtree $T(q)$ is perfect} (respectively \emph{empty, mixed}) as well. 
%
%
%
%
%
%
A node $u$ of $T$ is \emph{hybrid} if every marker vertex $q\in V(u)$ is either perfect or empty and $q$'s opposite is mixed. 
A tree-edge $e$ of $T$ is  \emph{fully-mixed} if both of its extremities are mixed. 
A \emph{fully-mixed subtree} $T'$ of $T$ 
is one that contains at least one tree-edge, all of its tree-edges are fully-mixed, and it is maximal for inclusion with respect to this property.
%
For a degenerate node $u$, we denote:
\begin{eqnarray*}
P^*(u) & =  & \{ q \in V(u) | \; q \textrm{ perfect and not the centre of a star} \}, \\
E^*(u) & = & \{ q \in V(u) | \; q \textrm{ empty, or } q \textrm{ perfect and the centre of a star} \}. 
\end{eqnarray*}
We now describe some basic properties.
The first key lemma follows directly from Remark~\ref{rem:hered}, and implies the subsequent corollary. 

\begin{lemma}[Hereditary property] \label{lem:hereditary}
Let $(T,\mathcal{F})$ be a GLT marked with respect to a subset of leaves $S$. Then
\begin{enumerate}
\item a marker vertex $q$ is perfect if and only if every accessible descendant of $q$ is perfect and every non-accessible descendant of $q$ is empty.
\item a marker vertex $q$ is empty if and only if every descendant of $q$ is empty.
\end{enumerate}

\end{lemma}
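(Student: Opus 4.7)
The plan is to observe that both parts follow from applying Remark~\ref{rem:hered} to the decomposition of $L(q)$ induced by the tree structure, and that the forward direction is essentially immediate while the reverse direction uses only the immediate marker descendants.

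For the forward direction of part 1, I would suppose $q$ is perfect, so $S\cap L(q)=A(q)$. Then for any descendant $p$ of $q$, since $L(p)\subset L(q)$, intersecting with $L(p)$ gives $S\cap L(p)=A(q)\cap L(p)$. By Remark~\ref{rem:hered} this equals $A(p)$ when $p$ is accessible from $q$ and $\emptyset$ otherwise, so $p$ is perfect in the first case and empty in the second. The forward direction of part 2 is even easier: $S\cap L(q)=\emptyset$ forces $S\cap L(p)=\emptyset$ for every descendant $p$.

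For the reverse direction, I would use the immediate children of $q$ in the tree: let $w$ be the node incident to $q$'s tree-edge on the opposite side from $q$'s node, and let $q'\in V(w)$ be $q$'s opposite. The sets $\{L(p) : p\in V(w)\setminus\{q'\}\}$ form a partition of $L(q)$, and by Remark~\ref{rem:hered} applied to each such $p$ we have $A(q)=\bigsqcup_{p \text{ accessible from } q}A(p)$. Under the hypothesis of part 1, each accessible $p$ is perfect so $S\cap L(p)=A(p)$, and each non-accessible $p$ is empty so $S\cap L(p)=\emptyset$. Summing over the partition yields $S\cap L(q)=\bigsqcup_{p \text{ accessible}}A(p)=A(q)$, so $q$ is perfect. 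For part 2, if every $p\in V(w)\setminus\{q'\}$ is empty, the same summation gives $S\cap L(q)=\emptyset$.

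The main subtlety, and what I expect to be the only genuine obstacle, is the degenerate case in which $q$'s tree-edge is incident to a leaf $\ell$ rather than an internal node, so that $q$ has no marker-vertex descendants at all. Here $L(q)=A(q)=\{\ell\}$ and the state of $q$ simply records whether $\ell\in S$; this case must be addressed directly (treating the leaf $\ell$ as playing the role of the single immediate descendant, with ``perfect'' corresponding to $\ell\in S$ and ``empty'' to $\ell\notin S$), so that the partition argument above goes through uniformly. Once this base case is absorbed into the statement, the proof is a direct corollary of Remark~\ref{rem:hered}, as the authors indicate.
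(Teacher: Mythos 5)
Your proof is correct and follows the approach the paper indicates (the paper gives no explicit argument, merely noting the lemma ``follows directly from Remark~\ref{rem:hered}''). You correctly observe that the forward direction really is an immediate restriction of $S\cap L(q)=A(q)$ (resp.\ $=\emptyset$) to $L(p)$ via Remark~\ref{rem:hered}, while the reverse direction needs the additional observation that the sets $L(p)$ over the immediate descendants $p\in V(w)\setminus\{q'\}$ partition $L(q)$ and that $A(q)=\bigcup_{p\text{ accessible}}A(p)$; that partition argument is sound. Your flag of the degenerate case is also legitimate: when $q$ is opposite a leaf $\ell$, $q$ has no descendants in the sense of Definition~\ref{def:descendant}, the right-hand sides of both items are vacuously true, and the biconditionals as written would force $q$ to be simultaneously perfect and empty; the intended reading (and the one consistent with the paper's remark that the opposite of a leaf $\ell$ is perfect iff $\ell\in S$) is to treat $\ell$ as the sole immediate descendant, exactly as you propose. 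This is a small oversight in the lemma's wording, not in your argument.
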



\begin{corollary} \label{cor:hereditary}
Let $(T,\mathcal{F})$ be a GLT marked with respect to a subset of leaves $S$. 

\begin{enumerate}

\item If marker vertex $q$ is mixed, then every marker vertex having $q$ as a descendant is mixed.

\item If a tree-edge has a perfect or empty extremity $q$  with a mixed opposite, then, for every tree-edge in $T(q)$, the extremity that is a descendant of $q$ is perfect or empty and its opposite is mixed.

\item If there exists a hybrid node, then it is unique.

\item  In a clique node, if every marker vertex is perfect, then every opposite of a marker vertex is also perfect.

\item In a star node, if every marker vertex is empty, except the centre which is perfect, then every opposite of a marker vertex is perfect, except the opposite of the centre which is empty.
\end{enumerate}

\end{corollary}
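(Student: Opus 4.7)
The plan is to derive all five items from Lemma~\ref{lem:hereditary}, together with short accessibility computations for the degenerate cases (4) and (5).

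Item (1) is the contrapositive of Lemma~\ref{lem:hereditary}: every descendant of a perfect or empty marker vertex is itself perfect or empty. For item (2), I combine this downward propagation with (1)'s upward propagation of mixedness. Let $q^\ast$ denote the mixed opposite of $q$. For any tree-edge $e'=u_1u_2$ inside $T(q)$, with $u_1$ the endpoint closer to the node containing $q$, the extremity of $e'$ at $u_1$ has its $L$-set contained in $L(q)$ and is therefore a descendant of $q$, hence perfect or empty by Lemma~\ref{lem:hereditary}. The opposite extremity $p$ at $u_2$ satisfies $L(p)\supsetneq L(q^\ast)$, so $q^\ast$ is a proper descendant of $p$; as $q^\ast$ is mixed, item (1) forces $p$ to be mixed.

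For item (3), I suppose two hybrid nodes $u\neq u'$ exist and consider the unique path between them in $T$. Let $q$ be the marker vertex of $u$ on the first tree-edge of this path; by hybridness of $u$, $q$ is perfect or empty with mixed opposite, so (2) applies. The last tree-edge of this path is incident to $u'$ and lies in $T(q)$, so (2) forces its extremity at $u'$ to be mixed, contradicting the hybridness of $u'$, which requires every marker vertex of $u'$ to be perfect or empty.

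For item (4), $L(q_i^\ast)=\bigcup_{j\neq i} L(q_j)$, so perfectness of every $q_j$ gives $S\cap L(q_i^\ast)=\bigcup_{j\neq i}A(q_j)$. The matching identity $A(q_i^\ast)=\bigcup_{j\neq i}A(q_j)$ is a short accessibility computation in the clique: any alternating sequence leaving $q_i^\ast$ must enter the node through $e_i$, take a label-edge to some $q_j$ with $j\neq i$, and then exit through $e_j$; conversely, any sequence witnessing $\ell\in A(q_j)$ can be prefixed in this way. For item (5), emptiness of every $q_i$ immediately gives $S\cap L(c^\ast)=\emptyset$; for each $q_i^\ast$, the unique label-neighbour of $q_i$ in the star is the centre $c$, so an alternating sequence leaving $q_i^\ast$ is forced through $c$ and then $e_c$, yielding $A(q_i^\ast)=A(c)=S\cap L(q_i^\ast)$. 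The main subtlety will be item (2), namely correctly identifying which extremity of an interior tree-edge of $T(q)$ is the descendant of $q$; the remaining items are routine once this orientation is pinned down.
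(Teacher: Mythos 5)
The paper offers no explicit proof of this corollary — it simply asserts that it follows from Lemma~\ref{lem:hereditary} — so your task was essentially to supply the omitted details, and you have done so. Items (1) and (2) are correct and well-oriented: for a tree-edge $e'=u_1u_2$ inside $T(q)$ with $u_1$ the closer endpoint, the extremity $p_1$ at $u_1$ satisfies $L(p_1)\subsetneq L(q)$ (hence is a descendant of $q$) while the extremity $p_2$ at $u_2$ satisfies $L(p_2)=V(G)\setminus L(p_1)\supsetneq L(q^\ast)$, so your application of Lemma~\ref{lem:hereditary} and item (1) is sound. Items (4) and (5) are also correct; the direct verification $A(q_i^\ast)=\bigcup_{j\neq i}A(q_j)$ (respectively $A(q_i^\ast)=A(c)$) combined with $S\cap L(q_i^\ast)=\bigcup_{j\neq i}(S\cap L(q_j))$ is a perfectly valid way to reach the conclusion, and is equivalent in content to invoking Lemma~\ref{lem:hereditary}-1 together with Remark~\ref{rem:hered}.

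The one place where your argument as written has a small hole is item (3), in the case where the two hybrid nodes $u$ and $u'$ are adjacent. Then the path is the single tree-edge $e$ with extremity $q\in V(u)$ and opposite $q^\ast\in V(u')$, and $e$ does \emph{not} lie in $T(q)$ (by definition $T(q)$ is the subtree spanning $L(q)$ strictly on the far side of $e$, so $e\notin T(q)$). Your sentence ``the last tree-edge of this path is incident to $u'$ and lies in $T(q)$'' is therefore false in this case, and item (2) cannot be invoked. The conclusion nonetheless holds for an even more direct reason: $q^\ast$ is mixed because $u$ is hybrid, yet $q^\ast$ is a marker vertex of the hybrid node $u'$ and so must be perfect or empty, a contradiction. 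You should treat the adjacent case separately; the rest of your argument then covers paths of length at least two, for which the last edge genuinely lies in $T(q)$.

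Apart from that minor case split, the proposal is correct and consistent with what the paper intends.
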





\subsection{Lemmas deriving $ST(G)$ from $ST(G+x)$}

This subsection is devoted to technical lemmas, which aim to enumerate and characterize in terms of states all possible cases for the deletion of $x$ from $ST(G+x)$. Their proofs rely on an extensive use of the hereditary property (Lemma~\ref{lem:hereditary}) and Corollary~\ref{cor:hereditary}. These lemmas will only be used in the proofs of the next subsection to describe how to update $ST(G)$ when inserting a new vertex. 

\medskip
In this subsection, we let $u$ denote the node of $ST(G+x)$ to which the leaf $x$ is attached. 
Let  $(T_x,{\mathcal{F}}_x)$ be the GLT, obtained from $ST(G+x)$ by removing leaf $x$ and $x'$ its opposite marker vertex in the label of $u$, and let $u_x$ be the node corresponding to $u$ in $(T_x,\mathcal{F}_x)$, such that $G(u_x)=G(u)-x'$.  Note that the accessibility graph of $(T_x,\mathcal{F}_x)$ is $G$. For convenience, but contrary to the definition, the GLT $(T_x,\mathcal{F}_x)$ is allowed to have a binary node $u_x$ in the case where $u$ was ternary; in this case, ``{\it contraction of $u_x$ to $e$}'' refers to  the operation of replacing $u_x$ and its two adjacent tree-edges by a single tree-edge $e$.
To simplify, we may identify a marker vertex in $(T_x,\mathcal{F}_x)$ with the corresponding marker vertex in $ST(G+x)$. Finally, we assume that $ST(G)$ and $(T_x,\mathcal{F}_x)$ are marked with respect to $S=N_G(x)$. Notice we consider $x$ to have the perfect state and thus the states of the descendants of $x$ in $(T_x,\mathcal{F}_x)$ are determined to be either perfect or empty
by applying Lemma \ref{lem:hereditary}-1 in $ST(G+x)$. To shorten statements, a tree-edge is said to be $PP$, $PE$, $PM$, $EM$, or $MM$ (i.e. fully-mixed),  depending on the states of its two extremities, where $P$, $E$, and $M$, stands respectively for perfect, empty, and mixed.

In the following subsections, we deal with all possibilities of $u$, the node in $ST(G+x)$ to which $x$ is adjacent.

\subsubsection{$u$ is a clique}

\begin{lemma} \label{lem:x_to_clique_states}
Assume $x$ is adjacent to a clique $u$ in $ST(G+x)$. Then every tree-edge of $(T_x,\mathcal{F}_x)$ incident to $u_x$ is $PP$, and every other edge in $(T_x,\mathcal{F}_x)$ is either $PM$ or $EM$.
\end{lemma}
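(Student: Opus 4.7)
My first step is to identify $S = N_G(x)$ using the clique structure of $u$. Since $u$ is a clique and $x'$ (opposite of $x$ in $V(u)$) is adjacent in $G(u)$ to every $r \in V(u_x)$, any accessibility sequence from the leaf $x$ in $ST(G+x)$ begins $x, x', r, \ldots$, giving $S = \bigcup_{r \in V(u_x)} A(r)$, where each $A(r)$ is computed in $(T_x,\mathcal{F}_x)$ (identical to $ST(G+x)$ on each subtree rooted at $r$). The sets $L(r)$ for $r \in V(u_x)$ partition the leaves of $(T_x,\mathcal{F}_x)$, so $S \cap L(r) = A(r)$ and every $r \in V(u_x)$ is perfect. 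For the opposite extremity $r'$ of $r$ across the incident tree-edge (a marker in the adjacent node, or a leaf), the sequence $r' \to r \to s$ (tree-edge then a label-edge in the clique $u_x$ to any $s \in V(u_x) \setminus \{r\}$, then a descent into $s$'s subtree) gives $A(r') = \bigcup_{s \neq r} A(s)$, matching $S \cap L(r') = \bigcup_{s \neq r} A(s)$; so $r'$ is also perfect. Every tree-edge incident to $u_x$ is therefore $PP$.

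For tree-edges not incident to $u_x$, I would argue by induction on the distance from $u_x$ in $T_x$. Let such a tree-edge $e$ have extremity $p$ in the closer node $v_a$ and $p'$ in the farther node $v_b$ (possibly a leaf). Since $L(p) \subset L(r)$ for the unique $r \in V(u_x)$ whose $u_x$-subtree contains $v_a$ and $v_b$, and since $r$ is perfect, the Hereditary Property (Lemma~\ref{lem:hereditary}) forces $p$ to be perfect or empty. For the other extremity, $S \cap L(p') \supseteq \bigcup_{r_i \neq r} A(r_i) \neq \emptyset$ (using $|V(u_x)| \geq 2$ and each $A(r_i) \neq \emptyset$), so $p'$ is not empty. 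To rule out $p'$ being perfect, decompose $L(p') = \bigsqcup_{q \in V(v_a) \setminus \{p\}} L(q)$ and $A(p') = \bigsqcup_{q \in N_{G(v_a)}(p)} A(q)$ (each accessibility sequence from $p'$ must first take the tree-edge to $p$). Perfection would require that for every $q \in V(v_a) \setminus \{p\}$, either $q$ is perfect and $q \sim p$ in $G(v_a)$, or $q$ is empty and $q \not\sim p$; in particular, no such $q$ can be mixed.

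Let $q^{\uparrow}$ denote the upward marker of $v_a$ (whose tree-edge points back toward $u_x$). In the inductive step, when $v_a$ is not a child of $u_x$, $q^{\uparrow}$ is the farther extremity of the tree-edge immediately above $v_a$ and is mixed by the induction hypothesis, immediately contradicting the requirement. In the base case, when $v_a$ is a child $v_1$ of $u_x$, $q^{\uparrow}$ coincides with $r'$ and is perfect by the first paragraph; combining the perfection requirement with the observation that each downward $q_i \in V(v_a) \setminus \{p, r'\}$ is perfect iff $q_i \sim r'$ in $G(v_a)$ forces both $p \sim r'$ and $N_{G(v_a)}(p) \setminus \{r'\} = N_{G(v_a)}(r') \setminus \{p\}$, i.e.\ $p$ and $r'$ are adjacent twins in $G(v_a)$. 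But $v_a$ cannot be a clique (reducedness of $ST(G+x)$ forbids a clique-join with $u$), nor a prime label (primes have $\geq 4$ vertices and admit no twins), nor a star on $\geq 3$ vertices (its centre and any degree-$1$ marker have different neighbourhoods, and two degree-$1$ markers are non-adjacent). This contradiction shows $p'$ is mixed.

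The main obstacle is the base case, where the adjacent-twin configuration must be eliminated across the admissible label types for $v_a$ using reducedness of $ST(G+x)$; the inductive step and the state computations for $u_x$-incident tree-edges are routine consequences of the Hereditary Property and the clique-label accessibility structure.
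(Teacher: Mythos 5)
Your proof is correct and follows essentially the same route as the paper's: both establish the $PP$ edges at $u_x$ from the clique structure, both rule out the farther extremity of a non-$u_x$-incident edge being empty via the nonempty intersection $\bigcup_{r_i\ne r}A(r_i)$, and both rule out it being perfect by deriving that two markers of the adjacent node would be adjacent twins and then contradicting reducedness/primality. The only real differences are cosmetic: you propagate to deeper edges by an explicit induction on distance (the paper cites Corollary~\ref{cor:hereditary}-2 instead), and in the base case you enumerate the possible label types directly rather than, as the paper does, first deducing that $v$ is degenerate via a split and then that the adjacencies force a clique.
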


\begin{proof}
Every marker vertex of $u_x$ is a descendant of $x$ in $ST(G+x)$ and hence it is perfect by the hereditary property (Lemma~\ref{lem:hereditary}-1). Then by Corollary~\ref{cor:hereditary}-4, every opposite $p$ of a marker vertex $t$ of $u_x$ is perfect. So every tree-edge incident to $u_x$ is $PP$. 

Let $v$ be a node adjacent to $u_x$ by the tree-edge $e$, and let $p$ and $t$ be respectively the extremities of $e$ in $v$ and in $u_x$. Let $r$ be the opposite of a marker vertex $q$ of $v$ distinct from $p$. Observe that $T(r)$ contains the node $u_x$ and thus $r$ has a perfect descendant. So by the hereditary property (Lemma~\ref{lem:hereditary}-2), $r$ cannot be empty. 

We now prove that if $r$ is perfect then, by the hereditary property (Lemma~\ref{lem:hereditary}-1), $v$ is a clique node. Observe first that Lemma~\ref{lem:hereditary}-1 applied on $r$ and $p$ implies that $p$ and $q$ are adjacent. Since $G(v)$ is connected and contains at least $3$ marker vertices, $v$ contains a marker vertex distinct from $p$ and $q$ adjacent to at least one of 
$p$ or $q$. As every such vertex $s$ is a descendant of $t$ and $r$ (both being perfect),  Lemma~\ref{lem:hereditary}-1 implies that $s$ is adjacent to both $p$ and $q$. It follows that either $(\{p,q\}, V(G(v))\setminus \{p,q\})$ forms a split in $G(v)$ or $v$ is ternary.  Since $ST(G+x)$ is reduced, in both cases $v$ is degenerate and by the adjacencies between $p$, $q$ and 
$s$, $v$ is a clique node. 

So we proved that if $r$ is perfect, then $ST(G+x)$ contains two adjacent clique nodes: contradiction.
It follows that $r$ is mixed.  By the hereditary property (lemma~\ref{lem:hereditary}-1), since $t$ is perfect, $q$ is either perfect or empty.  Hence, every 
tree-edge not incident to $u_x$ is $PM$ or $EM$ by Corollary \ref{cor:hereditary}-2.
\end{proof}

\begin{lemma} \label{lem:x_to_clique}
Assume $x$ is adjacent to a clique node $u$ in $ST(G+x)$.
\begin{enumerate}
\item If $u$ is ternary, let $(T,\mathcal{F})$ be the GLT resulting from the contraction of $u_x$ to $e$ in $(T_x,\mathcal{F}_x)$.
\begin{enumerate}
\item If $(T,\mathcal{F})$ is reduced then $ST(G)=(T,\mathcal{F})$ and $e$ is the unique $PP$ tree-edge of $ST(G)$, every other tree-edge is $PM$ or $EM$.

\item Otherwise, $ST(G)$ results from the star-join in $(T,\mathcal{F})$ of the nodes incident to $e$ (let $v$ be the resulting node). Then $v$ is the unique hybrid node of $ST(G)$, and every tree-edge is $PM$ or $EM$.
\end{enumerate}


\item If $u$ is not ternary then $ST(G)=(T_x,\mathcal{F}_x)$, $u_x$ is the unique clique node of $ST(G)$
whose marker vertices are all perfect, the tree-edges incident to $u_x$ are $PP$ and every other tree-edge is $PM$ or $EM$.
\end{enumerate}
\end{lemma}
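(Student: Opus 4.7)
My approach is to combine the state information from Lemma~\ref{lem:x_to_clique_states} with the uniqueness of the reduced GLT (Theorem~\ref{theo:reduced-GLT}), handling the two cases separately.

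For Case 2 ($u$ not ternary), $u_x$ is a clique on at least three marker vertices, so its label is still a legitimate clique. The GLT $(T_x,\mathcal{F}_x)$ differs from the reduced $ST(G+x)$ only by a shrink of the single clique label $G(u)$ to $G(u_x)$; every other label, and every tree-edge not incident to $u_x$, is unchanged. I would verify $(T_x,\mathcal{F}_x)$ is reduced: its labels are still prime or degenerate; no clique-join with a neighbor of $u_x$ becomes possible because the same neighbors were present in $ST(G+x)$ with the same labels; and no star-join is introduced since $u_x$ is still a clique. So Theorem~\ref{theo:reduced-GLT} identifies $(T_x,\mathcal{F}_x)$ with $ST(G)$, and the state configuration follows directly from Lemma~\ref{lem:x_to_clique_states}. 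For uniqueness of the all-perfect clique node, note that any such node would have to be incident only to PP tree-edges, but by Lemma~\ref{lem:x_to_clique_states} the PP tree-edges are exactly those incident to $u_x$; another such node adjacent to $u_x$ would give two adjacent cliques, contradicting reducedness.

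For Case 1 ($u$ ternary), $u_x$ is binary with label $K_2$. Contracting $u_x$ to tree-edge $e$ preserves accessibility (the two markers of $u_x$ being adjacent forces any accessibility path through $u_x$ to pass straight through, just as a node-join would). The new edge $e$ has as its extremities the two former opposites of $u_x$'s markers, both perfect by Lemma~\ref{lem:x_to_clique_states}, so $e$ is PP; all other tree-edges are untouched and hence PM or EM. In Subcase~(a), where $(T,\mathcal{F})$ is already reduced, Theorem~\ref{theo:reduced-GLT} again identifies it with $ST(G)$. In Subcase~(b), since labels are unchanged (hence still prime or degenerate) and only $e$ is new, any violation of reducedness must sit on $e$; a clique-join on $e$ is impossible, for otherwise the appropriate neighbor of $u$ would have been clique-joinable with $u$ already in $ST(G+x)$. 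Hence the obstruction is a star-join on $e$.

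After performing this star-join to produce $v$, I would confirm that the result is reduced by examining each tree-edge incident to $v$: it was already present in $ST(G+x)$ with the same markers on both sides (degree-1 vertices of $a$ and $b$ remain degree-1 in $v$, while $b$'s centre becomes $v$'s centre), so any new clique-join or star-join would have existed in $ST(G+x)$, contradicting reducedness. For the state claim, Remark~\ref{rk:inherited-states} ensures each marker of $v$ keeps its former state; only $p$ and $q$ are consumed, and every other marker of $a$ or $b$ is perfect or empty (its incident tree-edge other than $e$ was PM or EM by Lemma~\ref{lem:x_to_clique_states}). The corresponding opposites are mixed and unchanged, so $v$ is hybrid, and uniqueness is then immediate from Corollary~\ref{cor:hereditary}-3. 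The main obstacle is the bookkeeping in Subcase~1(b): rigorously ruling out any obstruction to reducedness other than a star-join on $e$, and then certifying that the star-join introduces no further reducibility at neighbours of $v$.
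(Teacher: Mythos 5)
Your proof is correct and follows essentially the same route as the paper: derive the states from Lemma~\ref{lem:x_to_clique_states}, invoke the uniqueness of the reduced GLT (Theorem~\ref{theo:reduced-GLT}) to identify the constructed GLT with $ST(G)$ once reducedness is checked, and for subcase 1(b) observe that the only possible obstruction is a star-join on $e$ (clique-join being excluded since it would contradict reducedness of $ST(G+x)$), then verify that the joined node $v$ is the unique hybrid node via Remark~\ref{rk:inherited-states}, Lemma~\ref{lem:hereditary}, and Corollary~\ref{cor:hereditary}-3. You spell out the reducedness verifications in somewhat more detail than the paper does, but the key lemmas and the structure of the argument match.
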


\begin{proof}
The correctness of the construction of $ST(G)$ follows  directly from the definition of the split-tree since the involved operations preserve the accessibility graph $G$ and yield a reduced GLT. The state properties of tree-edges come directly from Lemma \ref{lem:x_to_clique_states}
since, by Remark \ref{rk:inherited-states}, states of marker vertices are preserved by the involved operations.  This conclude the proof for cases 1(a) and 2 since uniqueness follows in both cases from Lemma~\ref{lem:x_to_clique_states}.

In case 1(b), let $p$ and $q$ denote the two marker vertices of $u_x$. Observe that as $u$ is a clique node and as $(T,\mathcal{F})$ is not reduced, the two neighbours $v_1$ and $v_2$ of $u_x$ are star nodes such that the centre of $v_1$ is the opposite of $p$, whereas the centre of $v_2$ is not the opposite of $q$. It follows that $ST(G)$ results from a star-join of $v_1$ and $v_2$. Note that the node $v$ (resulting from the star-join) inherits from $v_1$ and $v_2$ the descendants of $x$ in $ST(G+x)$. It follows by the hereditary property (Lemma~\ref{lem:hereditary}-1), that the marker vertices of $v$ are perfect or empty. Finally observe that 
$v$ contains empty and perfect marker marker vertices: the non-centre marker vertices inherited from $v_2$ are empty, all the others are perfect. It follows that $v$ is a hybrid node and it is unique (by Corollary  \ref{cor:hereditary}-3).
\end{proof}

\subsubsection{$u$ is a star node}

\begin{lemma} \label{lem:x_to_star_states}
Assume $x$ is adjacent to a star node $u$ in $ST(G+x)$. Then every tree-edge of $(T_x,\mathcal{F}_x)$ incident to $u_x$ is $PE$, and every other edge in $(T_x,\mathcal{F}_x)$ is either $PM$ or $EM$.
\end{lemma}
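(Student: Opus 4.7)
My proof will parallel that of Lemma~\ref{lem:x_to_clique_states}, following the same three-part structure: a short structural observation about $x'$, an application of Corollary~\ref{cor:hereditary}-5 to $u_x$ for the tree-edges incident to $u_x$, and a descendants/reduced-ness argument (plus Corollary~\ref{cor:hereditary}-2) for the remaining tree-edges.

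First I would argue that $x'$ must be a degree-1 vertex of the star $u$, never its centre. If $x'$ were the centre, then the label $G(u_x)=G(u)-x'$ would be edgeless, so two leaves lying in distinct branches of $u_x$ could no longer be accessible in $(T_x,\mathcal{F}_x)$, contradicting the connectedness of $G=\G(T_x,\mathcal{F}_x)$. Fix then $c$ to be the centre of $u$ (still a marker of $u_x$). The same initial observation used in the clique proof then applies: a marker $t\in V(u)\setminus\{x'\}$ is perfect with respect to $S=N_G(x)$ if and only if $x't$ is a label-edge of $G(u)$. In the star $u$ the degree-1 vertex $x'$ has only the centre $c$ as a label-neighbour, so $c$ is perfect and every other vertex of $V(u_x)$ is empty. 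States are preserved under deletion of $x'$ by Remark~\ref{rk:inherited-states}, so this pattern holds in $(T_x,\mathcal{F}_x)$, and Corollary~\ref{cor:hereditary}-5 directly yields the matching opposite states: $c$'s opposite is empty and every other opposite incident to $u_x$ is perfect. This is exactly the claim that every tree-edge incident to $u_x$ is $PE$.

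To treat the remaining tree-edges I would first analyse those one step away from $u_x$. Let $v$ be a neighbour of $u_x$ via tree-edge $t$-$p$, pick $q\in V(v)\setminus\{p\}$, and let $r$ be its opposite in a neighbour $v'$ of $v$. By Remark~\ref{rem:hered}, $q$ is perfect or empty: if $t=c$ then $q$ is a descendant of the perfect $c$; if $t$ is a degree-1 vertex of $u_x$, then $q$ is a descendant of the empty $t$ and hence empty by Lemma~\ref{lem:hereditary}-2. It remains to show $r$ is mixed. That $r$ is not empty follows by exhibiting a perfect descendant of $r$. When $v=v_c$, at least one down-marker $q^*\ne q$ of $v_c$ is $c'$-label-adjacent (because if $v_c$ is a clique all its markers are $c'$-adjacent, if $v_c$ is a star then reduced-ness with the star $u_x$ forces $c'$ to be its centre and all others are $c'$-adjacent, and if $v_c$ is prime then $c'$ has at least two label-neighbours, since a single neighbour would produce a split of size $2$ in $G(v_c)$); such a $q^*$ is a perfect descendant of $r$. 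When $v=v_{t_j}$ for some degree-1 $t_j\ne x'$ of $u$, the perfect marker $p_{t_j}$ is already a descendant of $r$. Either way Lemma~\ref{lem:hereditary}-2 rules out $r$ being empty.

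The main obstacle is ruling out that $r$ is perfect, and I would handle it with a twins/split argument modelled on the clique proof's ``$v$ must be a clique'' step. Suppose $r$ is perfect. Since accessibility from $r$ into $G(v)$ is fully controlled by a single label-step at $q$, Lemma~\ref{lem:hereditary}-1 applied to the descendants of $r$ inside $v$ forces, in the case $v=v_c$, both that $q$ is not label-adjacent to the empty $c'$ and that $q^*\in N_{G(v_c)}(q)\Leftrightarrow q^*\in N_{G(v_c)}(c')$ for every $q^*\in V(v_c)\setminus\{q,c'\}$; thus $\{q,c'\}$ is a pair of non-adjacent twins in $G(v_c)$. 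As soon as $|V(v_c)|\ge 4$ this yields the split $(\{q,c'\},V(v_c)\setminus\{q,c'\})$, contradicting primality; and the remaining degenerate/small cases (including $|V(v_c)|=3$) each force $q$ to be adjacent to $c'$ or enable a star-join between $u_x$ and $v_c$, both of which contradict our setup. The case $v=v_{t_j}$ is analogous and forces $q$'s only label-neighbour in $G(v_{t_j})$ to be $p_{t_j}$, which in every possibility again contradicts reduced-ness. Hence $r$ is mixed. Finally, applying Corollary~\ref{cor:hereditary}-2 to the tree-edge $q$-$r$ with $q$ perfect-or-empty and $r$ mixed propagates the $PM$-or-$EM$ pattern to every tree-edge in $T(q)$, exhausting all the remaining tree-edges of $(T_x,\mathcal{F}_x)$.
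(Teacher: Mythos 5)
Your proof is correct and follows essentially the same route as the paper's: establish that $x'$ is a degree-1 vertex of $u$, so that $c$ is the unique perfect marker of $u_x$ and Corollary~\ref{cor:hereditary}-5 gives the $PE$ pattern on incident edges, then split into the two cases $t=c$ and $t\ne c$ and derive a contradiction with $ST(G+x)$ being reduced if the opposite $r$ were perfect or empty, concluding with Corollary~\ref{cor:hereditary}-2. The only variations are cosmetic: you phrase the ``$r$ not empty'' step (when $t=c$) via exhibiting a perfect descendant rather than the paper's direct $L(r)\cap S=\emptyset$ argument, and you cast the ``$r$ not perfect'' contradiction in the language of non-adjacent twins, which is a restatement of the paper's ``$s$ adjacent to both $p$ and $q$'' step.
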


\begin{proof}
Let $c$ be the centre of the star $G(u)$. Since $G$ is connected, $x'$ the opposite of $x$ is a degree-1 marker vertex of $G(u)$. It follows that $S=A(c)$, and thus, $c$ is perfect and, by Corollary \ref{cor:hereditary}-5, its opposite $p$ is empty. Now let $q$ be a marker vertex of $u_x$ distinct from $c$ and let $r$ be its opposite.
By the hereditary property (Lemma~\ref{lem:hereditary}-2), as a descendant of $p$, $q$ is empty. By Corollary \ref{cor:hereditary}-5, $r$ is perfect. So we proved that every tree-edge incident to $u_x$ is $PE$. 

We now prove that every tree-edge non-incident to $u_x$ is either PM or EM. Let $v$ be a node adjacent to $u_x$ by the tree-edge $e$, and let $p$ and $t$ be respectively the extremities of $e$ in $v$ and in $u_x$. Let $r$ be the opposite of a marker vertex $q$ of $v$ distinct from $p$.

Assume first that $t\neq c$. Then by Lemma~\ref{lem:hereditary}-2, since $c$ is a perfect descendant of $r$, $r$ is not empty. So suppose for contradiction that $r$ is perfect. Observe first that Lemma~\ref{lem:hereditary}-1 applied to $r$ and $p$ implies that $p$ and $q$ are adjacent and that by Lemma~\ref{lem:hereditary}-2, as a descendant of $t$, $q$ is empty. Since $G(v)$ is connected and contains at least $3$ marker vertices, $v$ contains a marker vertex $s$ distinct from $p$ and $q$ adjacent to 
at least one of $p$ or $q$. As $S=A(c)$, we have that $L(s)\cap S=\emptyset$ implying that $s$ is empty. As $s$ is a descendant of $r$, by Lemma~\ref{lem:hereditary}-1, $s$ is not adjacent to $q$ and thereby it is adjacent to $p$. It follows that in $G(v)$, the marker vertex $q$ has degree one. Then $v$ has to be a star node whose centre is $p$: this contradicts the fact that $ST(G+x)$ is reduced. It follows that $r$ is mixed (it can not be perfect or empty). 

Assume now that $t=c$. If $r$ is empty, by definition $L(r)\cap S=\emptyset$.  
For every neighbour $q'$ of $p$, there exists a leaf accessible from $c$ in $T(q')$, and hence an element of $S$ is in $T(q')$.
But now, for every $q' \ne q$, $T(q') \subseteq T(r)$ which contradicts $L(r)\cap S=\emptyset$.  Thus $q$
 is the only neighbour of $p$ in $G(v)$, and $v$ has to be a star node whose centre is $q$; this contradicts the fact that $ST(G+x)$ is reduced. So $r$ is perfect or mixed. Now assume $r$ is perfect. Since $p$ is an empty descendant of $r$, by 
 Lemma~\ref{lem:hereditary}-1, $p$ and $q$ are not adjacent in $G(v)$. Since $G(v)$ is connected and contains at least $3$ marker vertices, $v$ contains a marker vertex distinct from $p$ and $q$ adjacent to at least one of $p$ or $q$. As every such vertex $s$ is a descendant of $r$ and $c$, both being perfect, Lemma~\ref{lem:hereditary}-1 implies that $s$ is adjacent to $p$ and $q$. It follows that either $(\{p,q\}, V(G(v))\setminus \{p,q\})$ forms a split in $G(v)$ or $v$ is ternary. In both cases, $v$ is degenerate and by the adjacencies between $p$, $q$ and $s$, $v$ is a star node whose centre is $s$. This contradicts the fact that $ST(G+x)$ is reduced. It follows in this case also that $r$ is mixed (it can not be perfect or empty). 

So $r$ is always mixed and $q$ is perfect or empty by the hereditary property (Lemma \ref{lem:hereditary}-1 applied to the marker vertices of $u_x$). Then, every tree-edge not incident to $u_x$ is $PM$ or $EM$ by Corollary \ref{cor:hereditary}-2.
\end{proof}

\begin{lemma} \label{lem:x_to_star}
Assume $x$ is adjacent to a star node $u$ in $ST(G+x)$.
\begin{enumerate}
\item If $u$ is ternary, let $(T,\mathcal{F})$ be the GLT resulting from the contraction of $u_x$ to $e$ in $(T_x,\mathcal{F}_x)$.
\begin{enumerate}
\item If $(T,\mathcal{F})$ is reduced then $ST(G)=(T,\mathcal{F})$ and $e$ is the unique $PE$ tree-edge of $ST(G)$, and every other tree-edge is $PM$ or $EM$.

\item Otherwise, $ST(G)$ results from the star-join or a clique-join in $(T,\mathcal{F})$ of the nodes incident to $e$ (let $v$ be the resulting node). Then $v$ is the unique hybrid node of $ST(G)$, and every tree-edge is $PM$ or $EM$.
\end{enumerate}

\item If $u$ is not ternary then $ST(G)=(T_x,\mathcal{F}_x)$, $u_x$ is its unique star node
whose marker vertices are all empty except the centre, which is perfect, tree-edges adjacent to $u_x$ are $PE$, and all other tree-edges are $PM$ or $EM$.
\end{enumerate}
\end{lemma}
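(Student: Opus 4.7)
The plan is to follow the template of Lemma~\ref{lem:x_to_clique}'s proof, substituting the state pattern provided by Lemma~\ref{lem:x_to_star_states} in place of Lemma~\ref{lem:x_to_clique_states}. Each operation used to reconstruct $ST(G)$ from $(T_x,\mathcal{F}_x)$---contracting the binary node $u_x$ when $u$ is ternary, and possibly one subsequent clique-join or star-join across the resulting tree-edge $e$---preserves both the accessibility graph $G$ and, by Remark~\ref{rk:inherited-states}, the states of all surviving marker vertices. Hence the claims about tree-edge state types in every case follow directly from Lemma~\ref{lem:x_to_star_states}, and it remains only to verify that the constructed GLT is reduced; Theorem~\ref{theo:reduced-GLT} then identifies it with $ST(G)$.

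Cases 1(a) and 2 are then immediate. In case 2, $u_x$ is still a star (it retains at least three marker vertices), its centre $c$ is perfect and all other marker vertices are empty, its incident tree-edges remain $PE$, and no new clique- or star-join can arise because the surrounding labels are unchanged from the reduced $ST(G+x)$. In case 1(a), the only new tree-edge to inspect is $e$, which inherits its empty $v_1$-extremity (opposite of $c$) and its perfect $v_2$-extremity (opposite of the non-centre marker vertex $q$); all other tree-edges retain their $PM$ or $EM$ type from Lemma~\ref{lem:x_to_star_states}. Uniqueness in both cases is inherited from that lemma.

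In case 1(b), I would first determine exactly when $(T,\mathcal{F})$ fails to be reduced. Reducedness of $ST(G+x)$ combined with the fact that the $u$-extremity of $e_1$ is the centre $c$ forces any star $v_1$ to have its $e_1$-extremity be its own centre (otherwise a star-join with $u$ would already have been available); symmetrically, since the $u$-extremity of $e_2$ is the non-centre $q$, any star $v_2$ must have its $e_2$-extremity be a non-centre. Consequently, after contraction the tree-edge $e$ admits a clique-join exactly when both $v_1$ and $v_2$ are cliques, and a star-join exactly when both are stars (and then automatically with a single centre extremity, on the $v_1$-side); any other label-type pairing (one clique and one star, or either being prime) leaves $(T,\mathcal{F})$ reduced.

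It then remains to show the joined node $v$ is hybrid. Every marker vertex of $v$ is inherited from a non-$u_x$-incident tree-edge of $(T_x,\mathcal{F}_x)$, which is $PM$ or $EM$ by Lemma~\ref{lem:x_to_star_states}; each such marker vertex is thus perfect or empty with a mixed opposite, so $v$ is hybrid and every tree-edge of $v$ is $PM$ or $EM$. To see that $v$ carries both states, I would apply the hereditary property (Lemma~\ref{lem:hereditary}): each non-$e$-extremity marker vertex $m$ of $v_1$ admits the alternating path $c \to p \to m$ (the label-edge from $p$ to $m$ exists because $p$ is the centre of the star $v_1$ in the star-join subcase or because $v_1$ is a clique in the clique-join subcase), making $m$ an accessible descendant of the perfect $c$ and hence perfect; on the other side, each non-$e$-extremity marker vertex of $v_2$ is a descendant of the empty $q$ and hence empty. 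Uniqueness of $v$ as a hybrid node follows from Corollary~\ref{cor:hereditary}-3. The main obstacle will be precisely this case 1(b) analysis: ruling out the mixed label-type pairings, distinguishing the clique-join and star-join subcases, and verifying the hybrid structure of $v$ (with both perfect and empty marker vertices) uniformly in both subcases.
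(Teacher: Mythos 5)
Your proposal is correct and follows essentially the same route as the paper's proof: invoke Lemma~\ref{lem:x_to_star_states} in place of Lemma~\ref{lem:x_to_clique_states}, observe that contraction and any subsequent node-join preserve accessibility and (by Remark~\ref{rk:inherited-states}) marker states, and in case 1(b) use the perfect state of $c$ and the empty state of $q$ together with the hereditary property (Lemma~\ref{lem:hereditary}) to show that the marker vertices of the joined node $v$ inherited from $v_1$ are perfect while those from $v_2$ are empty, yielding a hybrid node, unique by Corollary~\ref{cor:hereditary}-3. The only notable difference is that you make explicit the subcase analysis of when $(T,\mathcal{F})$ fails to be reduced (both $v_1,v_2$ cliques, or both stars with the centre on the $v_1$-side), whereas the paper states more tersely that $p_1$ is universal in $G(v_1)$ using reducedness of $ST(G+x)$; this added detail is a useful clarification but not a genuinely different argument.
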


\begin{proof}
The proof follows the same lines as the proof of Lemma \ref{lem:x_to_clique}, using Lemma \ref{lem:x_to_star_states} instead of  Lemma \ref{lem:x_to_clique_states}. Only the arguments to show that $v$ is a hybrid node in case 1(b) differ slightly.

So assume case 1(b) holds. As $u$ is ternary and $(T,\mathcal{F})$ is not reduced, the two neighbours $v_1$ and $v_2$ of $u$ are either clique nodes or star nodes. Suppose that the centre $c$ of $u$ is the extremity of the tree-edge $uv_1$. Let: $p_1$ be the opposite of $c$; $p_2$ be the marker vertex of $v_2$'s  extremity of the tree-edge $uv_2$; and $q$ be the opposite of $p_2$. Observe that 
$p_1$ is universal in $G(v_1)$:  this is trivial if $G(v_1)$ is a clique node; if $G(v_1)$ is a star node, then the fact that $ST(G+x)$
is reduced implies that $p_1$ is the centre of the star.  Now, due to their adjacency with $x'$ (the opposite of $x$), 
 $c$ is perfect and $q$ is empty.
It follows by Lemma~\ref{lem:hereditary}-1, that the marker vertices of $v_1$ distinct from $p_1$ are perfect (since they are accessible descendants of $c$). Similarly, by Lemma~\ref{lem:hereditary}-2, the marker vertices of $v_2$ distinct from $p_2$ are empty (since they are descendant of $q$). As all these marker vertices are inherited by $v$, $v$ is an hybrid node which is unique (by Corollary  \ref{cor:hereditary}-3).
\end{proof}

\subsubsection{$u$ is a prime node}

Here we have two cases depending on whether or not $G(u_x)$ is also prime.

\begin{lemma} \label{lem:x_to_prime_hybrid}
Assume that $x$ is adjacent to a prime node $u$ in $ST(G+x)$ such that $G(u_x)$ is also prime.
Then $ST(G)=(T_x,\mathcal{F}_x)$, $u_x$ is its unique hybrid node, and every tree-edge is $PM$ or $EM$.
\end{lemma}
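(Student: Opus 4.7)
The plan is to compute the states (with respect to $S = N_G(x)$) in $(T_x,\mathcal{F}_x)$ and then verify that this GLT is already reduced, so that $ST(G) = (T_x,\mathcal{F}_x)$.

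First I would identify the states of the marker vertices of $u_x$. Since $x$ is attached to $u$ via $x'$, we have $N_{G+x}(x) = \bigcup_{p \in N_{G(u)}(x')} A(p)$, where the $A(p)$'s may be read in $(T_x,\mathcal{F}_x)$ because removing $x$ and $x'$ does not alter accessibility from any other marker vertex of $u$. The sets $L(q)$ for distinct $q \in V(u_x)$ are pairwise disjoint and each $A(p) \subseteq L(p)$, so this immediately yields $S \cap L(q) = A(q)$ when $q \in N_{G(u)}(x')$ (making $q$ perfect) and $S \cap L(q) = \emptyset$ otherwise (making $q$ empty). So every marker vertex of $u_x$ is perfect or empty.

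Next I would argue that the opposite of each $q \in V(u_x)$ is mixed. If the opposite is a leaf $\ell$, a direct computation shows $\ell$ is perfect (respectively empty) exactly when $\ell$ and $x$ are true (respectively false) twins in $G+x$. In the reduced split-tree $ST(G+x)$, twins must share the same degenerate parent; since $x$ is attached to the prime node $u$, no such $\ell$ exists, and $\ell$ is mixed. If the opposite is a marker vertex $p$, I would compute $A(p) = \bigcup_{r \in N_{G(u_x)}(q)} A(r)$ and compare to $S \cap L(p)$, which equals $\bigcup_{q' \in N_{G(u)}(x') \setminus \{q\}} A(q')$ or $\bigcup_{q' \in N_{G(u)}(x')} A(q')$ according as $q$ is perfect or empty. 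By disjointness of the $L(r)$'s, $p$ being perfect or empty would force either $q$ and $x'$ to be (adjacent or non-adjacent) twins in $G(u)$, or $x'$ to have $q$ as its unique neighbor in $G(u)$. Each of these scenarios yields the split $(\{q,x'\},V(u)\setminus\{q,x'\})$ in $G(u)$ (using $|V(u)| \geq 4$), contradicting primality. Hence $p$ is mixed.

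With this, $u_x$ is hybrid, and Corollary \ref{cor:hereditary}-3 then gives uniqueness. Applying Corollary \ref{cor:hereditary}-2 to each tree-edge incident to $u_x$ (whose $u_x$-side extremity is perfect or empty and whose opposite is mixed) shows that every tree-edge lying in some $T(q)$ for $q \in V(u_x)$ is $PM$ or $EM$; together with the tree-edges incident to $u_x$ themselves, this covers every tree-edge of $T_x$. Finally, $(T_x,\mathcal{F}_x)$ is reduced because its labels coincide with those of $ST(G+x)$ except at $u_x$, whose label $G(u_x)$ is prime by hypothesis, and $u_x$ being neither a clique nor a star cannot be involved in any new clique-join or star-join, while the rest of the tree is unchanged from the reduced $ST(G+x)$. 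The hard part will be the second step, where the primality of $G(u)$ must be used to exclude both the ``perfect opposite'' case (twins in $G(u)$) and the ``empty opposite'' case (pendant marker vertex); the leaf sub-case requires a separate but parallel twin argument at the level of $G+x$ rather than at $G(u)$.
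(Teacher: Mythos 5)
Your proof is correct and follows essentially the same route as the paper's: you show every marker vertex of $u_x$ is perfect or empty, then show each opposite must be mixed by deriving a split (twins or a pendant $x'$) in $G(u)$, contradicting primality, and finally invoke Corollary~\ref{cor:hereditary}(2)--(3). One small imprecision: in your separate treatment of leaf opposites, the equivalence ``$\ell$ perfect iff $\ell,x$ true twins; $\ell$ empty iff false twins'' is not quite right --- $\ell$ empty corresponds to $x$ being a pendant with unique neighbour $\ell$, and $\ell$ perfect may correspond to either true or false twins depending on whether $q$ is perfect or empty --- but each outcome still contradicts $u$ being prime, so the conclusion stands; moreover, your marker-vertex computation actually covers the leaf case as well (the formula $A(p)=\bigcup_{r\in N_{G(u_x)}(q)} A(r)$ holds for leaf opposites too), so the separate case is not needed, which is how the paper handles it uniformly.
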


\begin{proof}
Observe that by construction, every marker vertex of $u_x$ is either perfect or empty. Let $p$ be the opposite of a marker vertex $t$ of $u_x$. Assume that $p$ is perfect. Then, in $G(u)$, $t$ is a twin of $x'$, the opposite of $x$: contradicting the fact that $u$ is a prime node of $ST(G+x)$. So assume that $p$ is empty. Then, by the hereditary property (Lemma \ref{lem:hereditary}-2), every marker vertex $q$ distinct from $t$ in $u_x$ is empty. It follows that $A(t)=S$ and thereby $t$ is the only neighbour of $x'$ in $G(u)$. This is again a contradiction with the fact that $u$ is a prime node of $ST(G+x)$, since a prime graph does not contain pendant vertices. Thus $p$ is mixed and now the proof follows from Corollary \ref{cor:hereditary} (-2 and -3).
\end{proof}

It remains to analyse the case where $x$ is adjacent to a prime node $u$ in $ST(G+x)$ but $G(u_x)$ is not prime. To that aim, we describe a three step construction that computes $ST(G)$ from $ST(G+x)$.  Note that this construction is not part of the
Split Decomposition Algorithm itself.

Let us first recall that when a node-join or a node-split has been performed on an initial GLT, then a marker vertex is \emph{inherited} by the resulting GLT if its corresponding tree-edge is not affected by the operation. We say that a tree-edge $e=uv$ of a GLT is  \emph{non-reduced} if a node-join on $u$ and $v$ yields a star node or a clique node. 

The announced construction is the following; it uses $(T_x,\mathcal{F}_x)$ as input:
\begin{enumerate}
\item While the current GLT contains a node $v$ which is neither prime nor degenerate, find a split in $G(v)$ and perform the node-split accordingly. 
\item While the resulting GLT contains a non-reduced tree-edge $e$ both extremities of which are not inherited from $(T_x,\mathcal{F}_x)$, perform the corresponding node-join. Let $(T'_x,\mathcal{F}'_x)$ be the resulting GLT.
\item While the current GLT contains a non-reduced tree-edge, perform the corresponding node-join. Let $(T,\mathcal{F})$ be the resulting GLT.
\end{enumerate}

The rest of the results in this section should be interpreted in the context of the above construction, which will subsequently be referred to as the \emph{prime-splitting construction} for simplicity.  The following observation concerning the prime-splitting construction follows from the fact that a node-join and a node-split do not change the accessibility of a GLT and that $(T,\mathcal{F})$ is clearly reduced.

\begin{observation} \label{obs:st(g)}
The GLT $(T,\mathcal{F})$ resulting from the prime-splitting construction is the split-tree $ST(G)$.
\end{observation}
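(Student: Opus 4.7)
My plan is to invoke Theorem~\ref{theo:reduced-GLT}: it will suffice to verify two properties of $(T,\mathcal{F})$, namely that its accessibility graph is $G$ and that it is reduced. The uniqueness clause of that theorem then forces $(T,\mathcal{F}) = ST(G)$.

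For the accessibility claim, I would maintain the invariant that at every intermediate stage of the prime-splitting construction the current GLT has accessibility graph $G$. The starting point $(T_x,\mathcal{F}_x)$ satisfies this, since it was obtained from $ST(G+x)$ by removing the leaf $x$ and its opposite marker vertex $x'$ (as noted in the set-up of this subsection). The invariant is then preserved through the three steps, because each step consists exclusively of node-splits and node-joins, both of which preserve accessibility (the observation in Section~\ref{subsec:split-tree}).

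For reducedness, I need to show two things: that every label of $(T,\mathcal{F})$ is prime or degenerate, and that no clique-join or star-join is applicable. The first property is guaranteed by the stopping condition of Step~1, and it is preserved by Steps~2 and 3, since a clique-join of two cliques produces a clique node and a star-join of two stars in the required configuration produces a star node. The second property is precisely the stopping condition of Step~3: as long as some non-reduced tree-edge remains, the construction performs the corresponding node-join, so at termination no further clique-join or star-join is possible.

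The only real obstacle I foresee is verifying termination of the three loops. Step~1 terminates because every node-split of a non-prime, non-degenerate label of size $n$ produces two labels each of size strictly less than $n$ (both sides of the split contain at least two marker vertices of the original node plus one new marker), so the multiset of sizes of non-prime non-degenerate labels decreases in a well-founded order. Steps~2 and 3 each consist of node-joins, which strictly decrease the number of nodes in the tree. Once termination is confirmed, the two properties above combine with Theorem~\ref{theo:reduced-GLT} to yield $(T,\mathcal{F}) = ST(G)$.
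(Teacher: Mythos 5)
Your proof is correct and takes essentially the same approach as the paper: verify that the result has accessibility graph $G$ (by invariance under node-joins and node-splits) and is reduced, then invoke the uniqueness in Theorem~\ref{theo:reduced-GLT}. The paper states this compactly in the sentence preceding the observation; you have merely filled in the termination details, which the paper leaves implicit.
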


Intuitively, the GLT $(T'_x,\mathcal{F}'_x)$ is obtained from $T_x, \mathcal{F}_x$ by replacing the node $u_x$ with the split-tree $ST(G(u_x))$. Such a replacement is obtained by accurately identifying the leaves of $ST(G(u_x))$ with the marker vertices opposite the marker vertices of $u_x$.  Indeed, note that in the case where $G+x$ is prime, but not $G$, then $ST(G)=(T'_x,\mathcal{F}'_x)=(T,\mathcal{F})$.
To help the intuition of the following lemmas, we state the summarizing lemma (Lemma \ref{lem:x_to_prime}) in the context of this special case.

We will now describe the properties of $ST(G)$ and of the intermediate GLT $(T'_x,\mathcal{F}'_x)$  in terms of the states of their marker vertices. Recall that by Remark~\ref{rk:inherited-states}, the states of inherited marker vertices remain unchanged. Also observe that after a series of node-joins and node-splits, a tree-edge $e$ of the resulting GLT has its two extremities either both inherited or both non-inherited. In the former case, $e$ is an \emph{inherited tree-edge}, in the latter case a \emph{non-inherited tree-edge}. Finally observe that if $e$ is a non-inherited tree-edge, then it corresponds to a split $(A_x,B_x)$ of $G(u_x)$ since it results from the first and second steps of the prime-splitting construction. Intuitively, the non-inherited tree-edges correspond to the internal tree-edges of $ST(G(u_x))$.

\begin{lemma} \label{lem:x_to_prime_states}
Consider the prime-splitting construction.  Assume that $x$ is adjacent to a prime node $u$ in $ST(G+x)$.
Then every non-inherited tree-edge of $(T'_x,\mathcal{F}'_x)$ is MM and every inherited tree-edge is PM or EM.
\end{lemma}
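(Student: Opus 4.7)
The plan is to treat the two claims separately and, in both cases, to reduce everything to the analysis of $(T_x, \mathcal{F}_x)$ via Remark~\ref{rk:inherited-states}, which guarantees that inherited marker vertices retain their states throughout the prime-splitting construction.

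For the inherited edges I would first check, by unfolding the accessibility sequences out of $x$, that every $t \in V(u_x)$ is perfect when $t$ is a $G(u)$-neighbour of $x'$ (set $P := N_{G(u)}(x')$) and empty otherwise. Then, for every tree-edge $tr$ incident to $u_x$, I would show that the opposite $r$ is mixed by mimicking the arguments of Lemmas~\ref{lem:x_to_clique_states} and~\ref{lem:x_to_star_states}: if $r$ were perfect, Lemma~\ref{lem:hereditary}-1 applied at $r$ would force $N_{G(u)}(t)$ to equal $P$ or $P \setminus \{t\}$, making $t$ and $x'$ twins in $G(u)$; if $r$ were empty, Lemma~\ref{lem:hereditary}-2 would force $P \subseteq \{t\}$, making $x'$ a pendant of $G(u)$. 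Both contradict the primality of $u$. Corollary~\ref{cor:hereditary}-2 then propagates the PM/EM property throughout each subtree $T(t)$, so every tree-edge of $(T_x, \mathcal{F}_x)$ is PM or EM, and Remark~\ref{rk:inherited-states} transfers this to the inherited edges of $(T'_x, \mathcal{F}'_x)$.

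For the non-inherited edges, let $e = q_1 q_2$ be one of them; by construction $e$ is an internal tree-edge of the subtree $ST(G(u_x))$ embedded into $(T'_x, \mathcal{F}'_x)$, so it corresponds to a split $(A_x, B_x)$ of $G(u_x)$ with $A_x \subseteq V(u_x)$ on $q_1$'s side. Since a prime graph on at least four vertices contains neither a pendant nor a cut vertex, $G(u_x) = G(u) - x'$ is connected, hence every $A(a)$ with $a \in V(u_x)$ is non-empty and the family $\{A(a)\}$ is pairwise disjoint. Suppose for contradiction that $q_1$ is not mixed. If $q_1$ is perfect, then applying Remark~\ref{rem:hered} to each $a \in A_x$ together with $L(q_1) \cap S = A(q_1)$ and the disjointness of the $A(a)$ forces $\{a \in A_x : a \text{ accessible from } q_1\} = A_x \cap P$. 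Identifying this accessible set with the $G(u_x)$-frontier $A_x'$ of the split yields $A_x' = A_x \cap P$, and one verifies that $(A_x, B_x \cup \{x'\})$ is then a non-trivial split of $G(u)$ with frontiers $A_x'$ and $B_x' \cup \{x'\}$, the adjacency condition following from $A_x' \times B_x' \subseteq E(G(u_x))$ and $A_x' \subseteq P$. If $q_1$ is empty, then $L(q_1) \cap S = \emptyset$ immediately forces $A_x \cap P = \emptyset$, so $x'$ has no neighbour in $A_x$ and the same bipartition is again a split of $G(u)$ with frontiers $A_x'$ and $B_x'$. In either case we contradict the primality of $u$, so $q_1$ (and symmetrically $q_2$) must be mixed.

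The delicate step is identifying $\{a \in A_x : a \text{ accessible from } q_1\}$ with the frontier $A_x'$. The containment $A_x' \subseteq \{a \in A_x : a \text{ accessible from } q_1\}$ is straightforward: any adjacency $a \sim b$ in $G(u_x)$ with $a \in A_x, b \in B_x$ produces an alternating path in $ST(G(u_x))$ that must cross $e$, so its prefix up to $q_1$ witnesses $a$'s accessibility from $q_1$. The reverse inclusion requires prepending the tree-edge $e$ to an alternating path $q_1 \leadsto a$ and concatenating with an alternating path $q_2 \leadsto b$ for some $b \in A(q_2) \subseteq B_x'$ (non-empty by connectedness of $G(u_x)$), then verifying that the combined sequence respects the tree/label alternation across $e$. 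This path-concatenation argument is the main technical obstacle.
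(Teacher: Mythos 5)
Your treatment of the inherited tree-edges matches the paper's argument (determine the states of $V(u_x)$ via accessibility from $x'$, rule out perfect/empty opposites via twins/pendant, propagate with Corollary~\ref{cor:hereditary}-2 and Remark~\ref{rk:inherited-states}); the small inaccuracy that when $t\in P$ one gets $N_{G(u)}(t)=(P\setminus\{t\})\cup\{x'\}$ rather than ``$P$ or $P\setminus\{t\}$'' does not affect the twins conclusion. For the non-inherited edges you take a genuinely different route: the paper stays at the level of the ambient graph, using the state of $p$ (with $L(p)=B$) to pin down $N(x)\cap B$ as $\emptyset$ or $A(p)=B'$, so that $(A\cup\{x\},B)$ is a split of $G+x$ which then descends to the split $(A_x\cup\{x'\},B_x)$ of $G(u)$; you instead reason inside $G(u_x)$, compute $A_x\cap P$ from the state of $q_1$, identify it with the frontier $A_x'$, and verify by hand that $(A_x,B_x\cup\{x'\})$ is a split of $G(u)$. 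Your direct frontier check is arguably more explicit than the paper's terse ``thereby,'' but the load you put on the identification $\{a\in A_x : a\text{ accessible from }q_1\}=A_x'$ is heavier than the paper's reliance on the essentially definitional fact $A(p)=B'$.

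The step you flag as the main obstacle is indeed correct, but your sketch needs repair: the phrase ``$b\in A(q_2)\subseteq B_x'$'' conflates two levels, since $A(q_2)$ computed in $(T'_x,\mathcal{F}'_x)$ is a set of leaves of $T'_x$ (vertices of $G$), whereas $B_x'$ is a set of marker vertices of $u_x$, so the inclusion does not typecheck. The clean way to close the gap is via the leaf-identification built into the prime-splitting construction: the leaf $a$ of $ST(G(u_x))$ is identified with the marker vertex opposite $a$ in $(T_x,\mathcal{F}_x)$, so the $ST(G(u_x))$-marker opposite the leaf $a$ corresponds exactly to the inherited marker vertex $a$ in $(T'_x,\mathcal{F}'_x)$. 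Consequently leaf-accessibility in $ST(G(u_x))$ between $a$ and $q_1$ coincides with marker-vertex accessibility between $a$ and $q_1$ in $(T'_x,\mathcal{F}'_x)$, and $A_x'$ is precisely the set of $ST(G(u_x))$-leaves on the $A_x$-side accessible across $e$. If you prefer the path-concatenation argument, pick $b\in B_x$ (a marker of $u_x$, not a leaf of $T'_x$) accessible from $q_1$ via a label-edge at $q_1$---such $b$ exists since $G(u_x)$ is connected, hence $B_x'\neq\emptyset$---and glue it to the alternating sequence from $q_1$ to $a$ that begins with the tree-edge $e$; alternation at $q_1$ is then preserved. With this set up carefully, your proof goes through; the paper's route of lifting the split to $G+x$ simply avoids reasoning about the embedded copy of $ST(G(u_x))$ altogether.
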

\begin{proof}
Note that if $G(u_x)$ is prime, the result is trivial.
We first prove that inherited tree-edges are PM or EM. To that aim we first argue that every opposite $q$ of a marker vertex $p$ of $u_x$ is mixed in $(T_x,\mathcal{F}_x)$ marked with respect to $S$. Observe that $q$ cannot be perfect, since otherwise $p$ and $x'$ are twins in $G(u)$, contradicting node $u$ being a prime node (a prime graph cannot have a pair of twins). So assume that $q$ is empty. Then $p$ cannot be empty since otherwise we would have $S=\emptyset$ (as $L(q)=L(p)=\emptyset$). If $p$ is perfect, then $p$ has degree $1$ in $G(u)$ (since $L(q)=\emptyset$). Thus $p$ is a pendant vertex of $G(u)$: contradiction, a prime graph cannot have a pendant vertex. It follows by Corollary~\ref{cor:hereditary}-2 that every tree-edge of $(T_x,\mathcal{F}_x)$ is PM or EM. Now by Remark~\ref{rk:inherited-states},  the state of the inherited marker vertices are preserved under node-joins and node-splits. Thus every inherited tree-edge of $(T'_x,\mathcal{F}'_x)$ is PM or EM.

We now deal with non-inherited tree-edges. Let $p$ be the extremity of such a tree-edge $e$ in $(T'_x,\mathcal{F}'_x)$. Denote by $(A,B)$ the split of $G$ corresponding to $e$ with $L(p)=B$. As noticed before, $e$ also corresponds to a  split $(A_x,B_x)$ of $G(u_x)$ (we see that $A=\cup_{q\in A_x} L(q)$ and $B=\cup_{q\in B_x} L(q)$). We prove that if $p$ is not mixed, then $G(u)$ contains a split, a contradiction with $u$ being a prime node. So assume first that $p$ is empty. Then by definition $L(p)\cap S=B\cap S=\emptyset$. It follows that the bipartition $(A\cup\{x\},B)$ is a split of $G+x$ and thus $(A_x\cup\{x'\},B_x)$ is a split of $G(u)$. Assume now that $p$ is perfect, then $L(p)\cap S=B\cap S=A(p)$. It follows that  $(A\cup\{x\},B)$ is a split of $G+x$ (here $x$ belongs to the frontier of $A$). Thereby $(A_x\cup\{x'\},B_x)$ is again a split of $G(u)$.  Thus $p$ is mixed, as required.
\end{proof}


\begin{lemma} \label{lem:x_to_prime_cleaning}
Consider the prime-splitting construction.  Assume that $x$ is adjacent to a prime node $u$ in $ST(G+x)$.
Let $w$ be a degenerate node incident to a non-inherited tree-edge in $(T'_x,\mathcal{F}'_x)$.
\begin{enumerate}
\item If $w$ is a star, the centre of which is perfect, then $w$ has no empty marker vertex and at most two perfect marker vertices. 
\item Otherwise $w$ has at most one empty marker vertex and at most one perfect marker vertex.
\end{enumerate}
\end{lemma}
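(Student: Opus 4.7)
The plan is to translate the count of perfect and empty marker vertices of $w$ into a count of leaves of $ST(G(u_x))$ attached to $w$ with a prescribed adjacency to $x'$ in $G(u)$, and then exploit the primality of $G(u)$ by ruling out twin pairs and pendant vertices.

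First I would set up the correspondence. Since $w$ is incident to a non-inherited tree-edge, $w$ is a node of the sub-split-tree $ST(G(u_x))$ produced by step~1 of the prime-splitting construction. The marker vertices of $w$ on internal tree-edges of $ST(G(u_x))$ are non-inherited, hence mixed by Lemma~\ref{lem:x_to_prime_states}; the inherited marker vertices of $w$ correspond bijectively to the leaves of $ST(G(u_x))$ attached to $w$. Moreover each such attached leaf $p \in V(G(u_x))$ produces an inherited marker vertex at $w$ that is perfect if $p \in N_{G(u)}(x')$ and empty otherwise. So the perfect/empty marker vertices of $w$ are exactly the attached leaves of $w$ in $ST(G(u_x))$, partitioned by adjacency to $x'$.

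Next I would finish by three structural cases. (a)~If $w$ is a clique, any two leaves attached to $w$ are true twins in $G(u_x)$; two of them sharing their adjacency to $x'$ would remain true twins in $G(u)$, contradicting its primality. So at most one perfect and one empty attached leaf, giving the clique subcase of case~2. (b)~If $w$ is a star whose centre is non-inherited (hence mixed), then no leaf is attached at the centre, and two non-centre attached leaves are false twins both in $G(u_x)$ and in $G(u)$ when they share their adjacency to $x'$; again at most one perfect and one empty. (c)~If the centre carries an attached leaf $p_c$ (the only way for the centre to be perfect or empty), then every non-centre attached leaf $p_j$ satisfies $N_{G(u_x)}(p_j) = \{p_c\}$. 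An empty non-centre leaf would become a pendant of $G(u)$, forbidden by primality; two perfect non-centre leaves would share the neighbourhood $\{p_c, x'\}$ in $G(u)$ and hence be false twins, also forbidden. Combined with the state of $p_c$, this yields the ``no empty and at most two perfect'' of case~1 when $p_c$ is perfect, and the ``at most one empty and at most one perfect'' of case~2 when $p_c$ is empty.

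The main obstacle I anticipate is the careful bookkeeping of which marker vertices of $w$ are inherited or non-inherited after the three-step prime-splitting construction, and in particular verifying that the centre of a star is inherited precisely when a leaf of $ST(G(u_x))$ is attached at it. Once this dictionary is solid, each subcase collapses to a one-line contradiction with the primality of $G(u)$.
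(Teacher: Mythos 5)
Your proposal is correct and follows essentially the same route as the paper's proof: identify the perfect/empty marker vertices of $w$ with the inherited marker vertices coming from $u_x$ (equivalently, the leaves of $ST(G(u_x))$ attached to $w$), deduce that an excess of them forces a pair of twins or a pendant vertex in $G(u)$, and contradict the primality of $u$. The only differences are cosmetic: you split the star case by whether the centre is inherited (and then by its state) rather than by the lemma's own two clauses, and you make the $G(u_x)$-to-$G(u)$ neighbourhood translation a little more explicit than the paper does.
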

\begin{proof}
First observe that the result is trivial if $G(u_x)$ is prime.  Note 
that every empty or perfect marker vertex $p$ is inherited. Otherwise $p$ would be the extremity of a tree-edge $e$ corresponding to a split $(A_x,B_x)$ of $G(u_x)$ and being empty or perfect would imply that $(A_x\cup\{x\},B_x)$ is a split of $G(u)$, contradicting $u$ being prime.
\begin{enumerate}
\item \emph{$w$ is a star node, the centre $c$ of which is perfect:} Suppose that $w$ has an empty marker vertex $q$ (distinct from $c$). As $q$ is inherited from $u_x$ and not accessible from every marker vertex inherited from $u_x$, $q$ is a pendant vertex in $G(u)$: contradicting $u$ being a prime node. So no marker vertex of $w$ is empty. Suppose now that $w$ has two perfect marker vertices $p$ and $p'$ distinct from $c$. Again $p$ and $p'$ are inherited from $u_x$. Moreover, every inherited vertex from $u_x$ accessible to $p$ is accessible to $p'$ (and vice versa). It follows that $p$ and $p'$ form a pair of twins in $G(u)$:  contradicting $u$ being a prime node. So $w$ contains at most two perfect marker vertices (including $c$).

\item \emph{otherwise:} Suppose that $w$ is a clique node containing two perfect (or two empty) marker vertices $p$ and $p'$. Again $p$ and $p'$ are inherited from $u_x$ and have the same accessibility set among the inherited marker vertices of $u_x$. Thereby $p$ and $p'$ are twins in $G(u)$: contradicting $u$ being a prime node.

Assume that $w$ is a star node, the centre $c$ of which is not perfect. If $c$ is mixed, then the same argument as for the clique node proves the property. So assume that $c$ is empty. Then the same argument as in case 1 applies. If $w$ contains an empty marker vertex $q$ distinct from $c$, then $q$ is pendant in $G(u)$. If $w$ contains two perfect marker vertices $p$ and $p'$ (distinct from $c$ by hypothesis), then $p$ and $p'$ are twins in $G(u)$. Both cases lead to a contradiction.
\end{enumerate}
\end{proof}


The following lemma summarizes what we have found so far and completes the picture.  Recall that $P^*(u)  =   \{ q \in V(u) | q$ is perfect and not the centre of a star$\}$ and that $
E^*(u)  =  \{ q \in V(u) | q$ is empty, or $q$ is perfect and the centre of a star$\}$.

\begin{lemma} \label{lem:x_to_prime}
Consider the prime-splitting construction.  Assume that $x$ is adjacent to a prime node $u$ in $ST(G+x)$ such that $G(u_x)$ is not prime. Then:
\begin{enumerate}
\item $ST(G)=(T,\mathcal{F})$;
\item every tree-edge of $ST(G)$ both extremities of which are not inherited from $(T_x,\mathcal{F}_x)$ is $MM$ and all other tree-edges are $PM$ or $EM$;
\item a degenerate node $v$ of $(T,\mathcal{F})$ incident to a non-inherited tree-edge results from at most two node-joins during step 3 of the construction and these node-joins respectively generate the split $(P^*(v),V(v) \setminus P^*(v))$ and/or $(E^*(v),V(v) \setminus E^*(v))$ of $G(v)$.
\end{enumerate}
\end{lemma}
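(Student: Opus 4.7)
My plan for part (1) is to invoke Observation \ref{obs:st(g)} directly. For part (2), I would argue from Lemma \ref{lem:x_to_prime_states}, which already gives the state-labelling of the tree-edges in the intermediate GLT $(T'_x,\mathcal{F}'_x)$: non-inherited tree-edges are $MM$ while inherited ones are $PM$ or $EM$. Each node-join in step 3 eliminates its joined tree-edge but, by Remark \ref{rk:inherited-states}, leaves the state of every other marker vertex unchanged, and does not alter which surviving tree-edge is inherited. Hence the classification transports verbatim to $(T,\mathcal{F})=ST(G)$.

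Part (3) is the substance of the lemma and the main focus of the plan. Fix a degenerate node $v$ of $(T,\mathcal{F})$ incident to a non-inherited tree-edge; since its non-inherited extremity originated in step 1, $v$ descends from exactly one node $w$ of $ST(G(u_x))$ (with $v=w$ when no step-3 join touches $w$, and observing that non-inherited tree-edges internal to $ST(G(u_x))$ are reduced so are never joined in step 3). I will first observe, from the proof of Lemma \ref{lem:x_to_prime_states}, that the inherited marker vertices of $w$ (those originating in $u_x$) are perfect or empty, while the non-inherited ones are mixed. Then Lemma \ref{lem:x_to_prime_cleaning} applied to $w$ will bound the total number of perfect-or-empty marker vertices of $w$ by two (two perfect and no empty when $w$ is a star with perfect centre; otherwise at most one perfect and at most one empty). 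Since every outside inherited tree-edge incident to $w$ has its $w$-extremity perfect or empty, at most two such tree-edges exist and therefore at most two step-3 joins involve $w$.

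The hard part will be identifying the split generated by each such join. I plan to consider a join merging $w$ with an outside node $w_P$ through extremities $p\in V(w)$ (perfect or empty) and $p'\in V(w_P)$ (mixed), and split into two scenarios according to the adjacency of $p'$ inside $G(w_P)$: either $p'$ is adjacent in $G(w_P)$ to every marker vertex of $V(w_P)\setminus\{p'\}$ (the clique-join case, or a star-join in which $p'$ is the centre of $w_P$), or $p'$ is adjacent only to the centre $c_{w_P}$ of $w_P$ (the star-join case in which $p$ is the centre of $w$ and $p'$ is non-centre). In the first scenario the accessibility set $A(p)$ inside $u_x$ decomposes as $\bigsqcup_{r\in V(w_P)\setminus\{p'\}} A(r)$; matching this against $L(p)=\bigsqcup_r L(r)$ together with $S\cap L(p)=A(p)$ (when $p$ is perfect) or $S\cap L(p)=\emptyset$ (when $p$ is empty) forces every $r\in V(w_P)\setminus\{p'\}$ to inherit $p$'s state. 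In the second scenario $A(p)=A(c_{w_P})$; the analogous matching forces $c_{w_P}$ to inherit $p$'s state while every other $r\in V(w_P)\setminus\{p'\}$ must be empty.

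To finish, I will combine these state computations with Lemma \ref{lem:x_to_prime_cleaning} to determine $P^*(v)$ and $E^*(v)$ explicitly in every configuration, and check that the partition $(V(w)\setminus\{p\},V(w_P)\setminus\{p'\})$ produced by the node-split inverse of the join coincides with either $(P^*(v),V(v)\setminus P^*(v))$ or $(E^*(v),V(v)\setminus E^*(v))$. When two joins occur at $w$, their induced partitions coexist as splits of $G(v)$ by commutativity of node-joins (Remark \ref{rem:node-join-commutative}), so the two joins generate the two named splits. The chief obstacle is the asymmetry in the star-join case where $p$ is the perfect centre of $w$: although $p$ is perfect, the centre $c_{w_P}$ absorbed from $w_P$ is itself perfect and the rest of $V(w_P)\setminus\{p'\}$ becomes empty, so the generated split is $(E^*(v),V(v)\setminus E^*(v))$ rather than the expected $(P^*(v),V(v)\setminus P^*(v))$; this asymmetry has to be tracked carefully to pair the two joins of case~1 with the two distinct named splits.
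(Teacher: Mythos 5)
Your proposal is correct and follows essentially the same route as the paper's proof: Observation~\ref{obs:st(g)} for (1), Lemma~\ref{lem:x_to_prime_states} together with Remark~\ref{rk:inherited-states} for (2), and for (3) a state-by-state analysis of the step-3 node-joins pivoting on Lemma~\ref{lem:x_to_prime_cleaning} and the hereditary property. The only difference is organizational: you split the case analysis according to the adjacency of the mixed extremity $p'$ in the outside node, whereas the paper splits according to the state and star-role of the perfect-or-empty extremity $p$ in the inside node $w$; the cases are in correspondence, and the asymmetry you single out (the join through a perfect centre of $w$ producing the $E^*$ split rather than the $P^*$ split) is exactly the one case the paper works out in detail.
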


\begin{proof}
The first assertion is given by Observation~\ref{obs:st(g)} and the second follows from Lemma~\ref{lem:x_to_prime_states} and Remark~\ref{rk:inherited-states}. So it remains to prove the third property.

Let $v$ be a degenerate node of  $(T,\mathcal{F})$ incident to a non-inherited tree-edge (i.e. an $MM$ tree-edge). Observe that if $v$ results from the node-join of nodes $w$ and $w'$ during the third step, then the tree-edge $e=ww'$ is inherited (i.e. $EM$ or $PM$) and exactly one of the two nodes, say $w$ is incident to an $MM$ tree-edge. Let $p$ the extremity of $e$ in $w$ and $q$ be the (mixed) extremity of $e$ in $w'$.  We need to examine all the possibilities for $p$ and $w$. We provide all details for the first case; the others use similar arguments. 

If $w$ is a star node and $p$ its perfect centre, then $q$ is a degree-1 marker vertex of $w'$. The resulting star node $v$ contains the split $(E^*(v),V(v) \setminus E^*(v))$ where $E^*(v)$ is the set of inherited marker vertices of $w'$ ($E^*(v)$ contains the perfect  centre  and the empty degree-1 marker vertices of $w'$). This follows from Lemma~\ref{lem:x_to_prime_cleaning}-1 which shows that $q$ is the only empty marker vertex and from Remark~\ref{rk:inherited-states} that claims that the states of inherited marker vertices are unchanged under node-join. 

The other cases follow from Lemma~\ref{lem:x_to_prime_cleaning}-2. If $p$ is an empty marker  vertex of $w$ (in that case, the node $w$ can be a star or a clique as well), then node $v$ contains the split $(E^*(v),V(v) \setminus E^*(v))$ with $E^*(v)$ is the set of inherited marker vertices of $w'$.  Now if $p$ is a perfect marker vertex but not the centre of a star, then the resulting node contains the split $(P^*(v),V(v) \setminus P^*(v))$ where $P^*(v)$ is composed of the marker vertices inherited from $w'$.

Finally, by Lemma \ref{lem:x_to_prime_cleaning} a degenerate node incident to an $MM$ tree-edge contains at most two non-mixed marker vertices, and thus at most two node-joins are required to generate node $v$. We mention that forthcoming Figure \ref{fig:case4cleaning} illustrates the two inverse node-split operations on $v$.
\end{proof}

\subsection{Construction of $ST(G+x)$ from $ST(G)$}



Having shown how $ST(G)$ can be derived from $ST(G+x)$, we now use these results to characterize how $ST(G+x)$ can be derived from $ST(G)$.  The various cases of the following theorem drive our Split Decomposition algorithm in Section \ref{sec:algo}.  Recall that by definition, a 
fully-mixed subtree is maximal.

\begin{theorem}
\label{th:cases}
Let $ST(G) = (T,\mathcal{F})$ be marked with respect to a subset $S$ of leaves. Then exactly one of the following conditions holds:

\begin{enumerate}
\item $ST(G)$ contains a clique node, whose marker vertices are all perfect, and this node is unique; 
\item $ST(G)$ contains a star node, 
whose marker vertices are all empty except the centre, which is perfect, and this node is unique;
\item $ST(G)$ contains a unique hybrid node, 
and this node is prime;
\item $ST(G)$ contains a unique hybrid node, 
and this node is degenerate;
\item $ST(G)$ contains a $PP$ tree-edge, 
and this edge is unique;
\item $ST(G)$ contains a $PE$ tree-edge, 
and this edge is unique;

\item $ST(G)$ contains a unique fully-mixed subtree.
\end{enumerate}

Moreover, in every case, the unique node/edge/subtree is obtained from $T$ by deleting, for every tree-edge $e$ with a perfect or empty extremity $q$ whose opposite $r$ is mixed, the tree-edge $e$ and the node or leaf corresponding to $r$. In case 1 and case 2, the node, together with its adjacent edges, is obtained in this way.
\end{theorem}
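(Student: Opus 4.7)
The plan is to reduce Theorem~\ref{th:cases} to the case analysis already carried out in Lemmas~\ref{lem:x_to_clique}--\ref{lem:x_to_prime}. The trick is to view the subset $S$ as the neighborhood $N_G(x)$ of a virtual vertex $x$ to be inserted, which is legitimate under the standing assumption that both $G$ and $G+x$ are connected. Then $ST(G+x)$ is well-defined and $x$ is a leaf of it attached to some node $u$; the seven cases of the theorem correspond bijectively to the seven possibilities for $u$ already enumerated by the earlier lemmas. Specifically, I would dispatch as follows: $u$ a non-ternary clique gives case~1 (Lemma~\ref{lem:x_to_clique}-2); $u$ a non-ternary star gives case~2 (Lemma~\ref{lem:x_to_star}-2); $u$ prime with $G(u_x)$ also prime gives case~3 (Lemma~\ref{lem:x_to_prime_hybrid}); $u$ a ternary clique or a ternary star whose contraction of $u_x$ is not reduced gives case~4 (Lemmas~\ref{lem:x_to_clique}-1(b) and~\ref{lem:x_to_star}-1(b)); $u$ a ternary clique with reduced contraction gives case~5 (Lemma~\ref{lem:x_to_clique}-1(a)); $u$ a ternary star with reduced contraction gives case~6 (Lemma~\ref{lem:x_to_star}-1(a)); and $u$ prime with $G(u_x)$ not prime gives case~7 (Lemma~\ref{lem:x_to_prime}). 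Since every node of $ST(G+x)$ is clique, star, or prime, the enumeration is exhaustive.

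For mutual exclusivity, I would argue using the structural signatures. Cases~1 and~2 produce a distinguished degenerate node whose adjacent tree-edges are $PP$ or $PE$, whereas in cases~3--7 every tree-edge has at least one mixed extremity by the state lemmas~\ref{lem:x_to_clique_states}, \ref{lem:x_to_star_states}, and~\ref{lem:x_to_prime_states}. Cases~3 and~4 each assert a unique hybrid node---uniqueness by Corollary~\ref{cor:hereditary}-3---and are told apart by whether this node is prime or degenerate. Cases~5 and~6 are distinguished from each other by the type ($PP$ versus $PE$) of the unique special edge, and case~7 is distinguished from all the rest by having only $PM$, $EM$, and $MM$ tree-edges (with at least one $MM$). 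Uniqueness in cases~1, 2, 5, and~6 follows from the cited lemmas directly, while uniqueness of the fully-mixed subtree in case~7 follows from the maximality built into its definition.

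For the ``Moreover'' clause I would verify case by case that pruning each $PM/EM$ tree-edge together with the node or leaf at its mixed endpoint (allowing the pruning to cascade) leaves precisely the distinguished node, edge, or subtree---plus the adjacent $PP$ or $PE$ edges in cases~1 and~2, which survive untouched because they are not $PM$ or $EM$. By the proofs of Lemmas~\ref{lem:x_to_clique_states}, \ref{lem:x_to_star_states}, and~\ref{lem:x_to_prime_states}, in every case the mixed extremity of each $PM/EM$ edge lies on the side to be stripped away, and Corollary~\ref{cor:hereditary}-2 iterated outward guarantees that the cascade terminates exactly at the boundary of the distinguished structure. The step I expect to be the main obstacle is this orientation check, especially in case~7, where the mixed endpoints of the inherited boundary $PM/EM$ edges must be confirmed to lie outside the subtree formed by the non-inherited $MM$ edges, so that pruning retains the fully-mixed subtree instead of destroying it; pinning this down requires reading orientations from the proofs of the state lemmas rather than just from their statements.
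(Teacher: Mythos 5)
Your proposal is correct and follows essentially the same route as the paper: both reduce the theorem to Lemmas~\ref{lem:x_to_clique}--\ref{lem:x_to_prime} by taking $S=N(x)$ for a virtual last vertex $x$, dispatch on the type of the node $u$ to which $x$ attaches in $ST(G+x)$, derive exhaustion and mutual exclusivity from the state lemmas, and obtain the ``Moreover'' pruning construction from Corollary~\ref{cor:hereditary}-2. One small slip worth noting: your claim that in cases~3--7 every tree-edge has a mixed extremity fails for cases~5 and~6 (each has exactly one $PP$ or $PE$ edge); the correct discriminant, as the paper uses, is that the properties ``exists a $PP$ edge'', ``exists a $PE$ edge'', ``exists an $MM$ edge'', and ``exists a hybrid node'' are pairwise exclusive, after which counting $PP$ (resp.\ $PE$) edges separates case~1 from case~5 (resp.\ case~2 from case~6).
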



\begin{proof}
By Lemmas \ref{lem:x_to_clique}, \ref{lem:x_to_star}, \ref{lem:x_to_prime_hybrid} or \ref{lem:x_to_prime}, applied to $G+x$ with $N(x)=S$, we directly know that (at least) one condition holds. A more careful look at these lemmas also proves that exactly one condition holds, implying directly with Corollary \ref{cor:hereditary}-2 the given construction by deletion of $PM$ and $EM$ edges. First, notice that the following conditions are mutually exclusive: there exists a $PP$ edge; there exists a $PE$ edge; there exists an $MM$ edge. Indeed, every time one of these conditions holds, all the tree-edges of another type are known to be $PM$ or $EM$ (Lemmas \ref{lem:x_to_clique}, \ref{lem:x_to_star}, and \ref{lem:x_to_prime}). These three cases are mutually exclusive from the existence of a hybrid node (Lemmas \ref{lem:x_to_clique}, \ref{lem:x_to_star}, and \ref{lem:x_to_prime_hybrid}). Together, these four cases -- existence of a PP edge, PE edge, MM edge, and hybrid node -- determine the cases in the theorem: if there is exactly one (respectively at least two) $PP$ edge(s), then case 5 (respectively case 1) holds;
if there is exactly one (respectively at least two) $PE$ edge(s), then case 6 (respectively case 2) holds; if there is a hybrid node, then it is either prime (case 3) or degenerate (case 4) but not both; if there is an $MM$ edge, then case 7 holds.
\end{proof}

\begin{proposition} [Cases 1, 2 and 3 of Theorem~\ref{th:cases}] \label{prop:unique-node-cases123}
Let $ST(G) = (T,\mathcal{F})$ be marked with respect to a subset $N(x)$ of leaves. If $ST(G)$ contains:
\begin{itemize}
\item {[case 1]} a unique clique node $u$ the marker vertices of which are all perfect, or
\item {[case 2]} a unique star node $u$ the marker vertices of which are all empty except its centre which is perfect, or
\item {[case 3]} a unique hybrid node $u$ which is prime,
\end{itemize}
then $ST(G+x)$ is obtained by  adding to node $u$ a marker vertex $q$ adjacent in $G(u)$ to $P(u)$ and making the leaf $x$ the opposite of $q$.
\end{proposition}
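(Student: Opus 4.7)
The plan is to invert the analysis of Lemmas~\ref{lem:x_to_clique}, \ref{lem:x_to_star}, and \ref{lem:x_to_prime_hybrid}, which describe the shape of $ST(G)$ obtained by deleting leaf $x$ and its opposite marker $x'$ from $ST(G+x)$. I would begin by observing that the four Lemmas~\ref{lem:x_to_clique}, \ref{lem:x_to_star}, \ref{lem:x_to_prime_hybrid}, and \ref{lem:x_to_prime} exhaust the possibilities for the type of the node $u^*$ of $ST(G+x)$ to which $x$ is attached --- clique, star, prime with $G(u^*)-x'$ prime, and prime with $G(u^*)-x'$ non-prime --- and their conclusions match respectively cases $\{1,4,5\}$, $\{2,4,6\}$, $\{3\}$, and $\{7\}$ of Theorem~\ref{th:cases}. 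Since Theorem~\ref{th:cases} asserts that exactly one of cases 1--7 holds for $ST(G)$ marked by $N(x)$, the Proposition's hypothesis pins down a unique applicable Lemma: case~1 forces Lemma~\ref{lem:x_to_clique}-2, case~2 forces Lemma~\ref{lem:x_to_star}-2, and case~3 forces Lemma~\ref{lem:x_to_prime_hybrid}.

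In each of the selected Lemma cases, the conclusion $ST(G)=(T_x,\mathcal{F}_x)$ combined with the uniqueness of the characterized clique/star/hybrid node identifies $u^*_x$ with the given node $u$. Consequently $ST(G+x)$ is obtained from $ST(G)$ by reinserting a marker vertex $q$ (in the role of $x'$) into node $u$ and attaching the leaf $x$ as its opposite; no other part of the tree changes. The only nontrivial point is then the adjacency of $q$ in the label $G(u)+q$.

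For cases~1 and 2 this falls out of the type of $u^*$: in case~1, $G(u^*)$ is a clique so $q$ is universal, which coincides with being adjacent to $P(u)=V(u)$; in case~2, $G(u^*)$ is a star in which $q$ must be a degree-one marker because $N_G(x)=S=A(c)$ where $c$ is the centre of the star in $ST(G)$, so $q$ is adjacent to $c$ alone, the unique perfect marker of $u$. Case~3 is where the argument actually uses hybridicity. I would compute the accessibility of $x$ in $ST(G+x)$ as $\bigcup_{t\in N_{G(u^*)}(x')} A(t)$, the sets $A(t)$ being interpreted in $ST(G)$, and impose that this union equal $S$. Hybridicity forces every $t\in V(u)$ to be perfect --- with $A(t)=S\cap L(t)\subseteq S$ --- or empty --- with $A(t)$ non-empty (every marker in a GLT of a connected graph has a non-empty accessibility set) but $A(t)\cap S=\emptyset$. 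An empty $t$ placed in $N(x')$ would contribute leaves outside $S$, and omitting any $p\in P(u)$ would leave out the non-empty $A(p)\subseteq S$; hence $N_{G(u^*)}(x')=P(u)$, exactly as claimed.

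The main (mild) obstacle is the accessibility bookkeeping of case~3. Notably, the primality of $G(u)+q$ --- required for the constructed GLT to be reduced and hence genuinely equal to $ST(G+x)$ --- needs no separate check, because the inverse application of Lemma~\ref{lem:x_to_prime_hybrid} guarantees it; and the remaining verifications, that the rest of $ST(G)$ is undisturbed and that no new clique- or star-join is created at the modified node, are immediate from the deletion descriptions in the three Lemmas.
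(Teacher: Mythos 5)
Your proof is correct, but it takes a genuinely different route from the paper's. The paper argues \emph{forwards}: perform the described insertion on $ST(G)$, observe that the resulting GLT is reduced (a clique stays a clique, a star gains a degree-one leaf-marker, a prime hybrid node stays prime) and that its accessibility graph is $G+x$, and conclude by the uniqueness of the split-tree (Theorem~\ref{theo:reduced-GLT}); it then merely remarks that this is the converse of Lemmas~\ref{lem:x_to_clique}-2, \ref{lem:x_to_star}-2 and \ref{lem:x_to_prime_hybrid}. You instead argue \emph{backwards}: you take the existing $ST(G+x)$, use the four deletion lemmas to enumerate the possible marked $ST(G)$'s according to the type of the node $u^*$ carrying $x$, and use the mutual exclusivity of the cases of Theorem~\ref{th:cases} to isolate the one applicable lemma sub-case and invert it. This buys you something: the primality of $G(u)+q$ in case 3 comes for free (it \emph{is} $G(u^*)$, which is given prime), rather than needing a separate check, and the identity $S=\bigcup_{t\in N_{G(u^*)}(x')}A(t)$ is a clean, explicit way to derive $N_{G(u^*)}(x')=P(u)$, a step the paper leaves entirely implicit. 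One small stylistic nit: in case 2, the crispest reason why $q=x'$ must be a degree-one marker of $G(u^*)$ is simply that $G(u)=G(u^*)-x'$ is by hypothesis a star, so $x'$ cannot be the centre of $G(u^*)$ (deleting the centre of a star on at least four vertices leaves an edgeless graph); your route through $S=A(c)$ reaches the same place but is a mild detour.
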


\begin{proof}
Trivial by the definition of the split-tree; the resulting GLT is reduced and its accessibility graph is $G+x$.
Notice that each of these three cases is the converse construction of the one provided in Lemma \ref{lem:x_to_clique}-2, or Lemma \ref{lem:x_to_star}-2, or Lemma \ref{lem:x_to_prime_hybrid}, respectively.
\end{proof}


%
\begin{proposition}[Case 4 of Theorem~\ref{th:cases}]
\label{prop:hybrid-degen-case4}
Let $ST(G) = (T,\mathcal{F})$ be marked with respect to a subset $N(x)$ of leaves. If $ST(G)$ contains a unique hybrid node $u$ which is degenerate, then $ST(G+x)$ is obtained in two steps:
\begin{enumerate}
\item performing the node-split corresponding to 
$(P^*(u),E^*(u))$
thus creating a tree-edge $e$ both of whose extremities are perfect or empty (see Figure~\ref{fig:case3Example});
\item subdividing $e$ with a new ternary node adjacent to $x$ and $e$'s extremities, such that the node is a clique if both extremities of $e$ are perfect, and 
such that the node is a star whose centre is the opposite of $e$'s empty extremity otherwise (see Figure~\ref{fig:cases12Example}).

\end{enumerate}
\end{proposition}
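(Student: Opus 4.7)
My plan is to argue that the described two-step construction is the converse of the constructions in Lemma \ref{lem:x_to_clique}-1(b) and Lemma \ref{lem:x_to_star}-1(b), which together enumerate all configurations producing a hybrid degenerate node in $ST(G)$ from a vertex insertion. Specifically, Lemma \ref{lem:x_to_clique}-1(b) handles the case where $x$ is adjacent in $ST(G+x)$ to a ternary clique (yielding in $ST(G)$ a hybrid star with perfect centre), while Lemma \ref{lem:x_to_star}-1(b) handles the case of a ternary star (yielding either a hybrid star with empty centre via a star-join, or a hybrid clique via a clique-join). By the uniqueness of the split-tree (Theorem \ref{theo:reduced-GLT}), it suffices to verify that the GLT produced by the construction is reduced and has $G+x$ as its accessibility graph.

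First, I would verify that $(P^*(u), E^*(u))$ is a valid split of the degenerate graph $G(u)$. Since $u$ is hybrid, every marker of $u$ is perfect or empty, so $P^*(u) \cup E^*(u) = V(u)$. By Remark \ref{degenerate}, any bipartition of a clique or star with both parts of size at least $2$ is a split. The lower bounds $|P^*(u)|, |E^*(u)| \geq 2$ follow from the cited lemmas: the hybrid degenerate node arises from joining two degenerate labels, each of size at least $3$, contributing at least two inherited markers to one side each.

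Next, I would compute the states of the two new marker vertices created by the node-split. Denote by $q_P, q_E$ the extremities of the new tree-edge $e$ on the $P^*$-side and $E^*$-side respectively. Using Remark \ref{rk:inherited-states}, the definitions of $L$ and $A$, and the fact that $q_P$ is universal to $P^*(u)$ in its new node while $q_E$ is adjacent only to the frontier of $E^*(u)$ in its new node, a direct calculation shows that $q_E$ is always perfect, since $A(q_E) = \bigcup_{p \in P^*(u)} A(p) = S \cap L(q_E)$. For $q_P$, one finds $A(q_P) = A(c)$ when $u$ is a star with centre $c$ (as $q_E$'s only label-neighbour in its node is $c$), and $A(q_P) = \bigcup_{p \in E^*(u)} A(p)$ when $u$ is a clique. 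Comparing with $S \cap L(q_P) = \bigcup_{p \in E^*(u)} (S \cap L(p))$, which equals $A(c)$ when $u$ is a star with perfect centre and $\emptyset$ otherwise, we conclude that $q_P$ is perfect precisely when $u$ is a star with perfect centre, and empty otherwise. Thus $e$ is $PP$ in the perfect-centre-star case and $PE$ in all other hybrid degenerate cases, exactly matching the case distinction of the proposition.

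Finally, subdividing $e$ by the prescribed ternary node and attaching $x$ as the third leaf yields the required GLT. Tracing alternating paths from $x$ verifies $A(x) = S$: in the $PP$ case the ternary clique lets $x$ reach both sides of $e$ through its two label-edges, capturing both $\bigcup_{p \in P^*(u)} A(p)$ and the perfect centre's contribution $A(c)$ to $S$; in the $PE$ case the ternary star's centre (the opposite of the empty extremity of $e$) channels $x$'s access to only the $P^*$ side, which suffices since $S$ lies entirely in $\bigcup_{p \in P^*(u)} A(p)$. Reducedness is verified locally at the new ternary node: its two degenerate neighbours have types (and, when stars, centre placements) that prevent any clique-join or star-join, and the rest of the tree is unchanged from $ST(G)$. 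The main obstacle is the sub-case analysis in the state computation for $q_P$; the remaining verifications are local and reduce to inverting the operations in the cited lemmas.
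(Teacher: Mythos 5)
Your proof is correct and, at the top level, takes the same route as the paper: it invokes uniqueness of the split-tree (Theorem~\ref{theo:reduced-GLT}) and observes that the construction is the converse of Lemma~\ref{lem:x_to_clique}-1(b) and Lemma~\ref{lem:x_to_star}-1(b). Where you genuinely differ is in how you establish $|P^*(u)|,|E^*(u)|\geq 2$. The paper argues this directly from the hypothesis that $u$ is hybrid: if one side had size at most one, then $u$ would be incident to a $PP$ or $PE$ tree-edge (one of its markers would have a perfect or empty opposite), contradicting that every opposite of a marker of $u$ is mixed. You instead appeal to the provenance of the hybrid node, via Theorem~\ref{th:cases} and the two cited sub-lemmas: the hybrid degenerate node comes from a star-join or clique-join of two degenerate labels each of size at least three, hence at least two markers on each side after the join. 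Both arguments work; the paper's is self-contained and does not need to trace back through the case analysis, while yours is slightly longer but has the benefit of simultaneously identifying which inherited markers land in $P^*(u)$ and which in $E^*(u)$, which you then reuse in the state computation. Your explicit state calculation for $q_P$ and $q_E$ (determining which of $PP$/$PE$ the new edge $e$ is, by case on whether $u$ is a clique, a star with perfect centre, or a star with empty centre) and your explicit accessibility check $A(x)=N(x)$ fill out details that the paper condenses into ``the construction follows easily from the definition of the split-tree.'' Your reducedness claim is correct but somewhat compressed: to be fully rigorous you would also want to note that no edge between $u_P$ (or $u_E$) and an \emph{old} neighbour of $u$ can become non-reduced, since any such clique-join or star-join would already have been available at $u$ in $ST(G)$.
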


\begin{proof}
First, observe that $(P^*(u),E^*(u))$ is a split since $|P^*(u)|>1$ and $|E^*(u))|>1$, otherwise the degenerate node $u$ would either be a clique adjacent to a $PP$ edge or a star adjacent to a $PE$ edge, contradicting $u$ being hybrid.
Then the construction follows easily from the definition of the split-tree: the resulting GLT is reduced and its accessibility graph is $G+x$.
Notice that this construction is the converse of the one provided in Lemma~\ref{lem:x_to_clique}-1(b) if the label is a clique, or Lemma \ref{lem:x_to_star}-1(b) if the label is a star. 
\end{proof}

\begin{figure}[htbh]
\begin{center}
\includegraphics[scale=1]{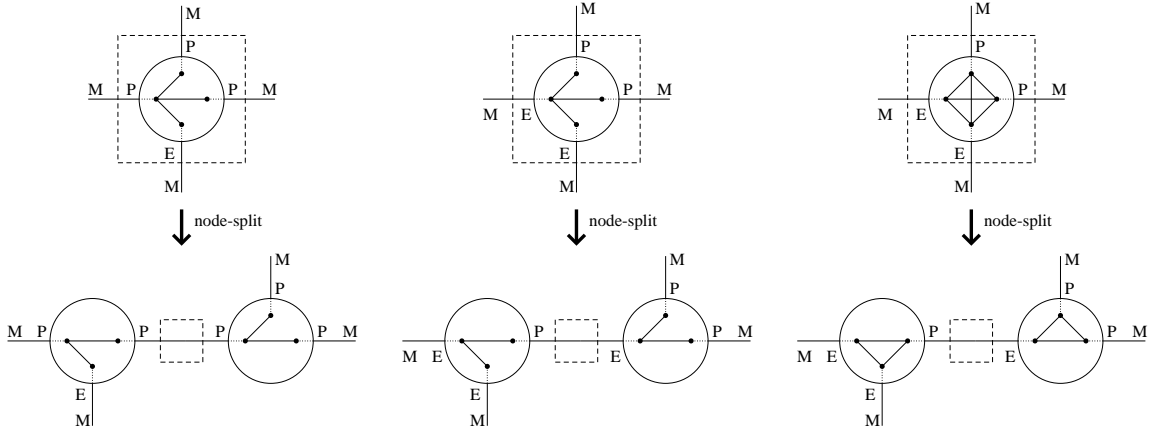}
\end{center}
\caption{Node-split performed when there is a degenerate hybrid node  (case 4 of Theorem~\ref{th:cases}, first step of Proposition \ref{prop:hybrid-degen-case4}). 
The dashed rectangle shows where the local transformation takes place, as described by the second step of Proposition \ref{prop:hybrid-degen-case4} (Figure \ref{fig:cases12Example}).}
\label{fig:case3Example}
\end{figure}


\begin{figure}[htbh]
\begin{center}
\includegraphics[scale=1]{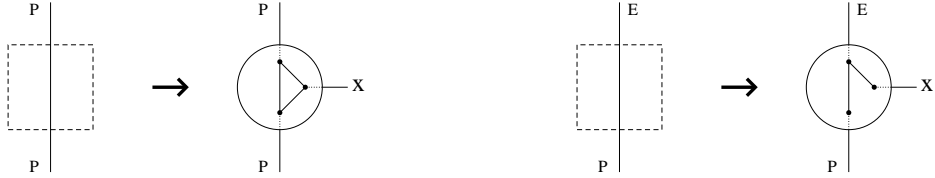}
\end{center}
\caption{Insertion of vertex $x$ in case 4 of Theorem~\ref{th:cases}, second step of Proposition \ref{prop:hybrid-degen-case4} and when there is a unique edge with no mixed extremity (cases 5 or 6 of Theorem~\ref{th:cases}, Proposition \ref{prop:unique-edge-case56}). The dashed rectangle shows where the local transformation is made.}
\label{fig:cases12Example}
\end{figure}

\begin{proposition} [Cases 5 and 6 of Theorem~\ref{th:cases}] \label{prop:unique-edge-case56}
Let $ST(G) = (T,\mathcal{F})$ be marked with respect to a subset $N(x)$ of leaves. If $ST(G)$ contains:
\begin{itemize}
\item {[case 5]} a unique tree-edge $e$ both of whose extremities are perfect, then $ST(G+x)$ is obtained by subdividing $e$ with a new clique node adjacent to $x$ and $e$'s extremities (see Figure \ref{fig:cases12Example});
\item {[case 6]} a unique tree-edge $e$ one of whose extremities is perfect and the other empty, then $ST(G+x)$ is obtained by subdividing $e$ with a new star node adjacent to $x$ and $e$'s extremities, such that the centre of the star is opposite $e$'s empty extremity (see Figure \ref{fig:cases12Example}).
\end{itemize}
\end{proposition}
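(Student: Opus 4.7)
My plan is to verify the construction in two steps for each case: first showing that the resulting GLT indeed has $G+x$ as its accessibility graph, and then showing that the resulting GLT is reduced, so that by Theorem~\ref{theo:reduced-GLT} it must equal $ST(G+x)$.

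For case 5, let $e = pq$ be the unique $PP$ tree-edge. After subdividing $e$ with a new clique node $w$ on three marker vertices, with $x$ attached as a leaf adjacent to the third marker vertex of $w$, the accessibility relation from the leaf $x$ reaches exactly those leaves accessible from $p$ and from $q$ in $(T,\mathcal{F})$, namely $A(p) \cup A(q)$. By Theorem~\ref{th:cases}, the perfect/empty/mixed structure precisely ensures that $A(p) \cup A(q) = N(x)$: every leaf in $S$ lies in some perfect subtree reachable from an extremity of $e$, and conversely Corollary~\ref{cor:hereditary}-2 together with the deletion description in Theorem~\ref{th:cases} guarantees no mixed subtree contributes to $N(x)$ once $e$ is the unique $PP$ edge. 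For case 6, the same argument applies with the star node: because $x$ is joined only to the centre, and the centre is opposite the empty extremity, the leaves accessible from $x$ are precisely those reachable through the perfect extremity, yielding exactly $N(x)$.

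Next I would verify that the new GLT is reduced. The only way it could fail is if the newly inserted degenerate node is adjacent to a node allowing a clique-join (case 5) or star-join (case 6). In case 5 this would require the new clique node to be adjacent to another clique node through an inherited tree-edge; but such an inherited tree-edge is $PM$ or $EM$ by Lemmas~\ref{lem:x_to_clique}, \ref{lem:x_to_star}, \ref{lem:x_to_prime_hybrid}, and \ref{lem:x_to_prime} applied in reverse, so it cannot be the $PP$ edge $e$, and the neighbouring node on the other side of $e$ must already have been reduced in $ST(G)$ against the original endpoint. A similar obstruction rules out the star-join in case 6. Concretely, the appeal is easier: these two constructions are precisely the inverses of the constructions produced in Lemma~\ref{lem:x_to_clique}-1(a) and Lemma~\ref{lem:x_to_star}-1(a), where reducedness of the resulting tree is already established via the contraction operation in those lemmas.

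The main obstacle I anticipate is the reducedness verification, because even though accessibility is immediate, one must rule out that the newly inserted ternary node participates in a forbidden join with one of the two nodes incident to $e$. The cleanest route is to invoke the inversion argument: since the construction described here is the exact inverse of the ``ternary degenerate node adjacent to $x$ is contracted'' step in Lemmas~\ref{lem:x_to_clique}-1(a) and \ref{lem:x_to_star}-1(a), and since those lemmas establish reducedness of the image under contraction given a reduced preimage, the preimage under the inverse construction is likewise reduced. Combined with the accessibility argument, Theorem~\ref{theo:reduced-GLT} then gives the identification with $ST(G+x)$.
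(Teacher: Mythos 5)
Your proposal matches the paper's proof: both establish that the subdivided GLT has accessibility graph $G+x$, observe that it is reduced, appeal to the uniqueness of the reduced GLT (Theorem~\ref{theo:reduced-GLT}), and note that the construction is exactly the inverse of the contraction step in Lemma~\ref{lem:x_to_clique}-1(a) and Lemma~\ref{lem:x_to_star}-1(a). Your first attempt at the reducedness argument (via ``inherited tree-edges being $PM$ or $EM$'') is a bit off --- the edges incident to the new ternary node are fresh, not inherited, and $e$ itself is deleted --- but you then correctly fall back to the inversion argument, which is exactly what the paper uses.
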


\begin{proof}
Direct by the definition of the split-tree; the resulting GLT is reduced and its accessibility graph is $G+x$.
Notice that each of these two cases is the converse construction of the one provided in Lemma \ref{lem:x_to_clique}-1(a), or Lemma \ref{lem:x_to_star}-1(a), respectively.
\end{proof}

\begin{definition} \label{def:cleaning}
Let $(T,\mathcal{F})$ be a GLT marked with respect to a subset of leaves and having a fully-mixed subtree.
\emph{Cleaning} the GLT consists of performing, for every degenerate node $u$ of the fully-mixed subtree, the node-splits defined by
$(P^*(u),V(u) \setminus P^*(u))$ and/or $(E^*(u),V(u) \setminus E^*(u))$ as long as they are splits of $G(u)$.
The resulting GLT is denoted $c\ell(T,\mathcal{F})$.
\end{definition}

The above definition makes sense thanks to Remark \ref{rem:node-split-commutative} since $P^*(u)\cap E^*(u)=\emptyset$; the two node-splits corresponding to these splits can be done in any order with the same result.
Figure \ref{fig:case4cleaning} illustrates the possible local transformations at each node $u$. 

\begin{figure} 
\begin{center}
\includegraphics[scale=0.9]{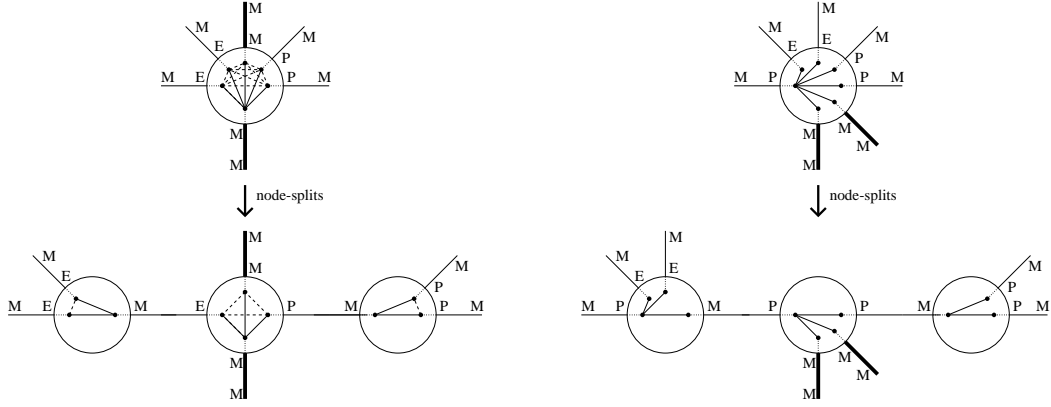}
\end{center}
\caption{Cleaning of a degenerate node of the fully-mixed subtree (Definition \ref{def:cleaning}
).
The left picture (as shown with dashed label-edges) applies equally to a clique or star node with a mixed centre. The right picture concerns a star node with a perfect centre. Bold edges are fully-mixed.}
\label{fig:case4cleaning}
\end{figure}


\begin{remark} \label{rk:cleaning}
With Lemma \ref{lem:hereditary} and the fact that $u$ contains at least one mixed marker vertex whose opposite is mixed,
one can easily show that
$(P^*(u),V(u) \setminus P^*(u))$, respectively $(E^*(u),V(u) \setminus E^*(u))$, is a split of $u$ if and only if $| P^*(u) |>1$, respectively $| E^*(u) |>1$. 
\end{remark}

\begin{proposition}[Case 7 of Theorem~\ref{th:cases}]
\label{prop:fully-mixed-case-7}
Let $ST(G) = (T,\mathcal{F})$ be marked with respect to a subset $N(x)$ of leaves. If $ST(G)$ contains a fully-mixed tree-edge $e$, then $ST(G+x)$ is obtained by:
\begin{itemize}
\item contracting, by a series of node-joins,  the fully-mixed  subtree of $c\ell(ST(G))$ into a single node~$u$;
\item adding to node $u$ a marker vertex $q_x$ adjacent in $G(u)$ to $P(u)$ and making $x$, $q_x$'s opposite. The resulting node $u$ is prime.  See Figure \ref{fig:case4Example} for an illustration of the whole process.

\end{itemize}
\end{proposition}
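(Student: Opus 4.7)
The plan is to treat this as the inverse of the prime-splitting construction from Lemma~\ref{lem:x_to_prime}. Theorem~\ref{th:cases} tells us that case~7 is mutually exclusive with the other six, and it corresponds exactly to the situation of Lemma~\ref{lem:x_to_prime}: $x$ was attached in $ST(G+x)$ to a prime node $u$ with $G(u_x)$ not prime. So it suffices to reverse the three steps of that construction in reverse order, and then invoke Theorem~\ref{theo:reduced-GLT} to identify the reduced GLT obtained, whose accessibility graph is $G+x$, as $ST(G+x)$.

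First I would argue that cleaning $ST(G)$ inverts step~3 of the prime-splitting construction. By Lemma~\ref{lem:x_to_prime}-2, the $MM$ tree-edges of $ST(G)$ are exactly the non-inherited tree-edges produced in steps~1 and 2; in particular a degenerate node $v$ of the fully-mixed subtree is incident to at least one such tree-edge. Lemma~\ref{lem:x_to_prime}-3 then shows that $v$ was obtained by at most two step-3 node-joins, along the splits $(P^*(v),V(v)\setminus P^*(v))$ and/or $(E^*(v),V(v)\setminus E^*(v))$; by Remark~\ref{rk:cleaning} these are precisely the bipartitions that become splits and are applied by Definition~\ref{def:cleaning}. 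Hence applying cleaning to every such $v$ reverses step~3 exactly, and the (possibly enlarged) fully-mixed subtree of $c\ell(ST(G))$ is precisely the non-inherited portion of $(T'_x,\mathcal{F}'_x)$ from the prime-splitting construction.

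Next, contracting this fully-mixed subtree by a series of node-joins reverses steps~1 and~2. Remark~\ref{rem:node-join-commutative} makes the outcome independent of the order, and node-joins preserve accessibility so the underlying graph remains $G$. By construction the resulting single node coincides with $u_x = G(u)-x'$, where $u$ is the prime node of $ST(G+x)$ to which $x$ was attached. Reintroducing the marker vertex $q_x$ (playing the role of $x'$) adjacent in $G(u)$ to the marker vertices whose accessible leaves lie in $N(x)$---this set being exactly $P(u)$ by the definition of the perfect state together with Lemma~\ref{lem:hereditary}---and attaching $x$ as the leaf opposite $q_x$, recovers the prime node $u$ and the leaf $x$ as they appear in $ST(G+x)$.

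To finish I would verify that the resulting GLT is reduced and has accessibility graph $G+x$. Primality of $u$ is immediate since it coincides with the prime node of $ST(G+x)$, so no clique-join or star-join involving $u$ is possible; tree-edges outside the contracted region are unchanged and remain compatible with reducedness. The accessibility graph equals $G+x$ by the preservation property of node-splits and node-joins together with the direct specification of the adjacency of $q_x$ and $x$. Theorem~\ref{theo:reduced-GLT} then identifies the result with $ST(G+x)$. The main obstacle is the bookkeeping for cleaning: one must carefully check that Lemma~\ref{lem:x_to_prime_cleaning} together with Remark~\ref{rk:cleaning} really do match the node-splits of Definition~\ref{def:cleaning} to the step-3 node-joins node-by-node, including the degenerate case where $P^*(v)$ or $E^*(v)$ is a singleton and no corresponding node-split occurs.
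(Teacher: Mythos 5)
Your proposal follows the same route as the paper's proof: invoke Remark~\ref{rem:node-join-commutative} for well-definedness, identify $c\ell(ST(G))$ with $(T'_x,\mathcal{F}'_x)$ from the prime-splitting construction of Lemma~\ref{lem:x_to_prime}, recognize the fully-mixed subtree as the portion corresponding to $ST(G(u_x))$, contract it back to $u_x$, and reattach $x$. You expand on several details the paper leaves implicit (the step-3/cleaning correspondence, the adjacency of $q_x$ to $P(u)$, the reducedness check via Theorem~\ref{theo:reduced-GLT}); one small slip is the parenthetical ``(possibly enlarged)'' --- cleaning never enlarges the fully-mixed subtree since the new tree-edges it creates always have a perfect or empty extremity --- but this has no bearing on the correctness of the argument.
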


\begin{proof}
First, observe that the series of node-joins is well defined by Remark \ref{rem:node-join-commutative}.
This construction is exactly the inverse of the prime-splitting construction referenced in Lemma \ref{lem:x_to_prime}.
More precisely, the GLT $cl(ST(G))$ here is exactly the GLT $(T'_x,\mathcal{F}'_x)$ there.
So the fully-mixed subtree of  $cl(ST(G))$ here is the fully-mixed subtree
induced by $ST(G(u_x))$ there. And the series of node-joins applied to this subtree leads to the node labelled by $u_x$, to which $x$ is added naturally.
 \end{proof}

\begin{figure}[] 
\begin{center}
\includegraphics[scale=0.9]{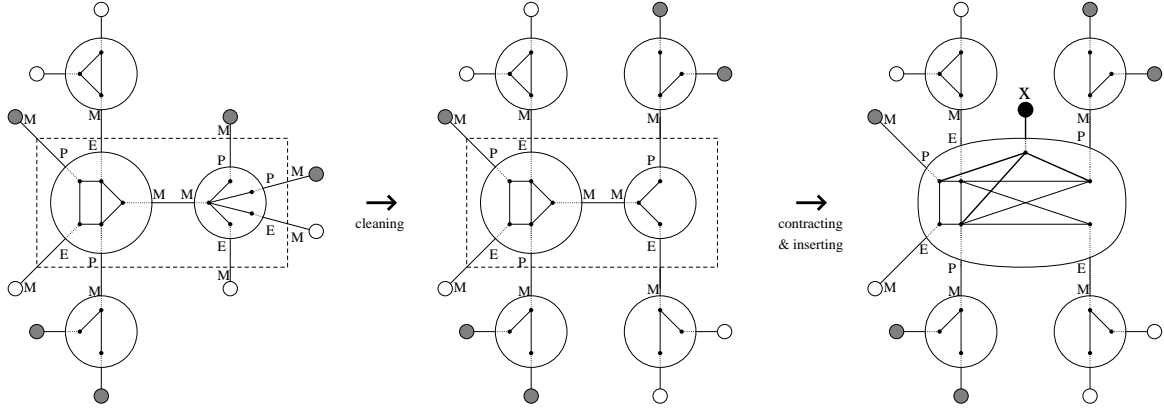}
\end{center}
\caption{A complete example of the split-tree update where case 7 of Theorem~\ref{th:cases} applies (Proposition \ref{prop:fully-mixed-case-7}). The dashed rectangle contains the fully-mixed subtree of $ST(G)$ in the left picture, and the fully-mixed subtree of $cl(ST(G))$ in the middle one. }
\label{fig:case4Example}
\end{figure}

Throughout the rest of the paper, we will use the phrase \emph{contraction step}, or simply \emph{contraction}, to refer to the procedure involved in Proposition \ref{prop:fully-mixed-case-7} which transforms the fully-mixed subtree of $c\ell(ST(G))$ into a prime node that has the new vertex  $x$ attached.

To end this section, we point out a number of observations that follow from the results in this section.  First, the construction provided by  Propositions~\ref{prop:unique-node-cases123}, \ref{prop:hybrid-degen-case4}, and~\ref{prop:unique-edge-case56} applied to a distance hereditary graph (i.e. when every node is degenerate), amounts to the one provided in \cite{GP07,GP08}. Thus, the present construction is a generalization to arbitrary graphs.
 
Secondly, note that we chose to separate the cases in Theorem \ref{th:cases} for consistency with our next algorithm. But other shorter and equivalent presentations would have been possible; for instance: case 1 and case 5 (respectively case 2 and case 6) could be grouped and treated the same way as they are the only cases where there exists a $PP$ (respectively $PE$) edge, with a clique-join or star-join after insertion if the edge was not unique; case 4 comes to cases 5 and 6 by making a $PP$ or $PE$ edge  appear after splitting the node; case 1 could be considered as a trivial subcase of case 7.

Finally, the results of this subsection, together with Lemmas \ref{lem:x_to_prime_states} and \ref{lem:x_to_prime_cleaning}
yield the following theorem which plays an important role in our circle graph recognition algorithm \cite{GPTC11b}.

\begin{theorem}\label{circle}
A graph $G + x$ is a prime graph if and only if $ST(G)$, marked with respect to $N(x)$, satisfies the following:
\begin{enumerate}
\item  Every marker vertex not opposite a leaf is mixed,
\item Let $w$ be a degenerate node.  If $w$ is a star node, the centre of which is perfect, then $w$ has no empty marker vertex
and at most two perfect marker vertices;  otherwise, $w$ has at most one empty marker vertex and at most one perfect marker
vertex.
\end{enumerate}
\end{theorem}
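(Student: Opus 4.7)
My plan is to prove the two implications separately, using the fact that $G+x$ is prime if and only if $ST(G+x)$ consists of a single prime node to which $x$ is attached. For the forward direction, I would first observe that $G+x$ prime implies $ST(G+x)$ is a single prime node $u$ and $(T_x,\mathcal{F}_x)$ is the single node $u_x = G(u)-x'$ with the leaves of $G$ attached directly to it. If $G$ is also prime then $ST(G)=(T_x,\mathcal{F}_x)$ by Lemma \ref{lem:x_to_prime_hybrid} and both conditions hold vacuously. Otherwise Lemma \ref{lem:x_to_prime}'s prime-splitting construction applies; because every tree-edge of $(T_x,\mathcal{F}_x)$ is a leaf tree-edge, no inherited node-to-node tree-edge ever appears, so step~3 of the construction is vacuous and $ST(G)=(T'_x,\mathcal{F}'_x)$. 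Lemma \ref{lem:x_to_prime_states} then gives condition~1 (every internal tree-edge of $ST(G)$ is non-inherited and hence $MM$), and Lemma \ref{lem:x_to_prime_cleaning}---applicable at each degenerate node of $ST(G)$ since in the multi-node tree each such node is incident to a non-inherited tree-edge---yields condition~2.

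For the reverse direction, I plan to apply Theorem \ref{th:cases} to $ST(G)$ marked with $N(x)$ and rule out each case that does not make $ST(G+x)$ a single prime node. Cases~5 and~6 are eliminated by condition~1, since the two extremities of a $PP$ or $PE$ tree-edge are marker vertices opposite one another (not opposite leaves) and fail to be mixed. When $ST(G)$ has more than one node, cases~1--4 are also eliminated by condition~1: the ``unique'' node $u$ of each case would have a marker vertex that is the extremity of an internal tree-edge, and the case's structure would force that marker vertex to be perfect or empty. When $ST(G)$ is a single node, condition~2 eliminates cases~1, 2, and 4: every marker vertex is then opposite a leaf and hence perfect or empty, and for any degenerate $u$ with $|V(u)|\geq 3$ this violates the counts in condition~2. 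The only remaining possibilities are case~3 with $ST(G)$ a single prime node, and case~7 with a multi-node $ST(G)$ whose unique fully-mixed subtree, by condition~1, spans every internal tree-edge and hence every node of $ST(G)$.

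In case~3, Proposition \ref{prop:unique-node-cases123} shows that $ST(G+x)$ is obtained by adding a single marker vertex to the prime node and attaching $x$, so $ST(G+x)$ is a single prime node and $G+x$ is prime. In case~7, condition~2 combined with Remark \ref{rk:cleaning} forces $|P^*(u)|\leq 1$ and $|E^*(u)|\leq 1$ at every degenerate $u$ in the fully-mixed subtree, so neither $(P^*(u), V(u)\setminus P^*(u))$ nor $(E^*(u), V(u)\setminus E^*(u))$ is a valid split, and the cleaning operation makes no modifications. The fully-mixed subtree of $c\ell(ST(G))=ST(G)$ therefore still spans every node of $ST(G)$, and Proposition \ref{prop:fully-mixed-case-7} contracts it into a single prime node with $x$ attached, yielding $ST(G+x)$ as a single prime node and $G+x$ as prime. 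The main obstacle will be the case analysis in the reverse direction: verifying that each of cases~1--6 is incompatible with at least one condition in both the single-node and multi-node settings, and matching condition~2's bounds precisely to the validity criterion for cleaning splits in case~7.
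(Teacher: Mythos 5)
Your proof of the forward direction is essentially sound, and the bulk of the reverse direction is also correct.  However, there is a concrete gap in the reverse direction which your argument glosses over, and which in fact suggests that Theorem~\ref{circle} as stated is imprecise.

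The problem is the sentence ``Cases~5 and~6 are eliminated by condition~1, since the two extremities of a $PP$ or $PE$ tree-edge are marker vertices opposite one another (not opposite leaves).''  This is false: a tree-edge incident to a leaf has one extremity which is a leaf and one extremity which is a marker vertex opposite that leaf, and both can be assigned the states $P$ or $E$.  Since condition~1 says nothing about marker vertices that \emph{are} opposite a leaf, nor about leaves themselves, such a $PP$ or $PE$ leaf tree-edge is perfectly compatible with condition~1.  In particular, when $ST(G)$ is a single prime node, conditions~1 and~2 are both vacuous, yet cases~5 and~6 of Theorem~\ref{th:cases} can still occur.  Concretely, take $G = C_5$ with vertices $1,\dots,5$ in cyclic order and let $N(x) = \{1,2,5\} = N_{C_5}[1]$.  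Then $ST(G)$ is a single prime node, conditions~1 and~2 hold vacuously, but the leaf $1$ and its opposite marker are both perfect (a $PP$ leaf tree-edge, hence case~5), and indeed $x$ and $1$ are twins in $G+x$, so $G+x$ is not prime.  Your claim that ``the only remaining possibilities are case~3 with $ST(G)$ a single prime node, and case~7\ldots'' therefore omits cases~5 and~6 with a single prime node, which are the very cases in which the theorem's conditions hold but $G+x$ is not prime.

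Worth noting: this appears to be a gap in the statement of Theorem~\ref{circle} itself, not just in your argument.  Tracing through the forward direction via Lemma~\ref{lem:x_to_prime_states}, the hypothesis $G+x$ prime additionally forces every inherited (i.e.~leaf) tree-edge to be $PM$ or $EM$, which is equivalent to saying every leaf of $ST(G)$ is mixed; this is what prevents $x$ from being a twin of, or pendant on, some existing vertex.  That requirement does not follow from conditions~1 and~2 as written when $ST(G)$ is a single prime node.  For the reverse direction to go through, one either needs that extra condition, or one must interpret condition~1 as applying to leaves as well as marker vertices (so that it reads ``every marker vertex or leaf not opposite a leaf is mixed'').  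With that strengthened reading, your approach works: in the multi-node case conditions~1 and~2 force case~7 as you argue (and the mutual exclusivity of cases used in the proof of Theorem~\ref{th:cases} then rules out any $PP$ or $PE$ leaf tree-edge), and in the single-node case the extra clause forces every leaf to be mixed, hence case~3.  Since the paper does not spell out a proof, I cannot directly compare your argument to theirs, but you should address the leaf-tree-edge issue explicitly before the argument can be considered complete.
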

\section{An LBFS incremental split decomposition algorithm} \label{sec:algorithm}
\label{sec:algo}

Our combinatorial characterization, described by Theorem~\ref{th:cases} and the subsequent Propositions~\ref{prop:unique-node-cases123}, \ref{prop:hybrid-degen-case4},~\ref{prop:unique-edge-case56} and~\ref{prop:fully-mixed-case-7}, immediately suggests an incremental split decomposition algorithm.
Our characterization makes no assumption about the order vertices are to be inserted.  For the sake of complexity issues, we choose to add vertices according to an LBFS ordering $\sigma$, which we assume to be built by a preprocess. 
%
%
From Remark~\ref{prefix}, such an ordering is compatible with the assumption made in Section \ref{sec:characterization}: all iterations of the algorithm
satisfy the condition that $G$ and $G + x$ are connected.
%
Roughly speaking, the LBFS ordering will play two crucial parts: first, it permits a costless twin test allowing us to avoid ``touching'' non-neighbours of the new vertex when identifying states (Subsection \ref{subsec:cases});
second, it means that successive updates of the split-tree have an efficient amortized cost (Section \ref{sec:runningTime}).
\medskip

As in the previous section, we assume throughout that $ST(G) = (T,\mathcal{F})$, and that leaves and marker vertices in $ST(G)$ are assigned states according to the set $N(x)$; then we consider the changes required to form $ST(G+x)$.   
Algorithm~\ref{alg:vertexinsertion} outlines how the split-tree is updated to insert the last vertex of an LBFS ordering. 

\smallskip

\begin{algorithm}[ht]
\KwIn{A graph $G$, a vertex $x\notin V(G)$ which is the last vertex in an LBFS ordering of $G+x$, and the split-tree $ST(G)=(T,\mathcal{F})$}
\KwOut{The split-tree $ST(G+x)$}
\BlankLine

\lnl{line:cases} Determine whether $ST(G)$ contains either a tree-edge neither of whose extremities is mixed, or a hybrid node, 
or a fully-mixed subtree\;

 \uIf{$ST(G)$ contains a tree-edge $e$ neither of whose extremities is mixed}{
\lnl{line:PP-PE}	\lIf{$e$ is unique}{update $ST(G)$ according to Proposition~\ref{prop:unique-edge-case56}\;
		}
	\lElse{update $ST(G)$ according to Proposition~\ref{prop:unique-node-cases123}\;}
	}
	
\uIf{$ST(G)$ contains a  hybrid node $u$}{
\lnl{line:split}\lIf{$u$ is degenerate}{update $ST(G)$ according to Proposition~\ref{prop:hybrid-degen-case4}\;}
	\lElse{update $ST(G)$ according to Proposition~\ref{prop:unique-node-cases123}\;}
	}
	
\uIf{$ST(G)$ contains a fully-mixed subtree}{
\lnl{line:contraction}	compute and update  $c\ell(ST(G))$ according to Proposition~\ref{prop:fully-mixed-case-7}\;
	}
\caption{Vertex insertion} \label{alg:vertexinsertion}
\end{algorithm}

The first task consists of identifying which of the cases of Theorem~\ref{th:cases} holds, at line \ref{line:cases} of Algorithm~\ref{alg:vertexinsertion}.  The implementation of line~\ref{line:cases} is by a procedure which also returns the states of the involved marker vertices 
(see Subection \ref{subsec:cases}),
and hence allows us to apply the constructions provided by the propositions.
At line~\ref{line:PP-PE}, testing the uniqueness of the tree-edge $e$ amounts to a check as to whether $e$ is incident to a clique or a star node. This is required to discriminate between cases 1 and 5 or cases 2 and 6.
More precisely, if $e$ has two perfect extremities and is adjacent to a clique $u$, then all the marker vertices of $u$ are perfect by Lemma \ref{lem:hereditary}, and then Proposition~\ref{prop:unique-node-cases123} is applied to this clique node. If $e$ has a perfect extremity $q$ and its opposite $r$ is empty, and if $q$ is the centre of a star or $r$ is a degree-1 vertex of a star, then by Lemma \ref{lem:hereditary}, this star has all of its marker vertices empty except the centre, which is perfect,
and then Proposition~\ref{prop:unique-node-cases123} is applied to this star node.
%
These tests at line~\ref{line:PP-PE} can be done in constant time in the data-structure we use,
as well as the updates required in these simplest cases (Proposition~\ref{prop:unique-node-cases123} and Proposition~\ref{prop:unique-edge-case56}). They will not be considered again in the implementation.%
\medskip

Now, this section fills out the framework by specifying procedures for the state assignment, node-split, node-join, cleaning, and contraction involved at lines \ref{line:cases}, \ref{line:split}, and \ref{line:contraction} of Algorithm \ref{alg:vertexinsertion}. We first describe our data-structures which is partly based on \emph{union-find}~\cite{CLR01}. We then provide a complexity analysis of the insertion algorithm parameterized by elementary union-find requests. An amortized complexity analysis is developed in the next section.


\subsection{The data-structure}

In order to achieve the announced time complexity, we implement a GLT $(T,\mathcal{F})$ with the well-known union-find data-structure~\cite{CLR01}, making $T$ a rooted tree.
There are two reasons for this choice. First, identifying empty and perfect subtrees is easier if the tree $T$ is rooted. Second, in the contraction step, we'll need to union the neighbourhoods of two nodes to perform a node-join. 
We first present how the tree $T$ will be encoded with a union-find data-structure. We then detail how each node and its labels are represented.
\medskip

A union-find data-structure maintains a collection of disjoint sets.  Each set maintains a distinguished member called its \emph{set-representative}.  Union-find supports three operations:

\begin{enumerate}
\item \texttt{initialize}$(x)$: creates the singleton set $\{x\}$;
\item \texttt{find}$(x)$: returns the set-representative of the set containing $x$;
\item \texttt{union}$(S_1,S_2)$: forms the union of $S_1$ and $S_2$, and returns the new set-representative, chosen from amongst those of $S_1$ and $S_2$.
\end{enumerate}

The initialization step takes $O(1)$ time, and a combination of $k$ union and find operations takes time $O(\alpha(N) \cdot k+N)$, where $N$ is the number of elements in the collection of disjoint sets 
and $\alpha$ is the inverse Ackermann function~\cite{CLR01}. 
The complexity of the algorithms described in this section will be parameterized by
\texttt{initialize-cost}, \texttt{find-cost}, \texttt{union-cost} the respective costs of the above requests.
\medskip

Our algorithm will store a GLT $(T,\mathcal{F})$ as a \emph{rooted GLT} where 
%
a leaf of $T$ will serve as \emph{the root} (it is the leaf corresponding to the first inserted vertex).
Each node or leaf of $T$, except the root, has a \emph{parent pointer} to its parent (which is a node or the root leaf) in $T$ with respect to the root. 
To each prime node is associated a \emph{children-set} containing 
the set of its 
children  in $T$ with respect to the root (nodes or leaves).
The children-sets of prime nodes form the collection of disjoint sets
maintained by the union-find data-structure.
Every children-set has a set representative, which is a 
child of the node in $T$.
%
A parent pointer may be active or not. The nodes or leaves with an \emph{active} parent pointer are: the child of the root, the children of degenerate nodes (clique or star), and the nodes or leaves that are the set representative of the children-set of a prime node.
A non-active parent pointer is just one that will never be used again; there is no need to update information for it.

\begin{remark}\label{rem:traversal}
A traversal of a rooted GLT $(T,\mathcal{F})$  can be implemented in time $O((1+\emph{\texttt{find-cost}}) \cdot |T|)$.
\end{remark}

Union-find is required only to update the tree structure (child-parent relationship) efficiently.  As the node-splits only apply to degenerate nodes (lines~\ref{line:split}, \ref{line:contraction}), union-find is not required here. Union operations are performed after the cleaning step, during the contraction step (line~\ref{line:contraction}). 

\begin{remark}\label{rem:data}
The data-structure described here concerns a split-tree, whose labels are either prime or degenerate, since after each step of the construction it is such a GLT that will be obtained. 
%
%
Still, we need to allow node-joins in the data-structure.
In what follows, during the successive node-joins in the contraction step (Proposition \ref{prop:fully-mixed-case-7}), the GLT has one non-degenerate node whose 
label graph will eventually become prime only after the final insertion step.
%
The data-structure for such a GLT remains the same by recording the type of this non-degenerate node as prime, and dealing with it the same way as a prime node. 
\end{remark}


It is important to note that removing elements from sets is not supported in the union-find data-structure. It follows that when a node-join is performed and the children-set of a node $u'$ is unioned with the children-set of its parent $u$, the node object corresponding to $u'$ still exists in the children-set of $u$. As we will see later, the persistence of these \emph{fake} nodes is not a problem. Indeed, 
their total number will be suitably bounded, 
 they will never be selected again as set representatives, and the data-structure we develop below guarantees that they will never be accessed again. 
In particular, no active parent pointer points to a fake node. 
That is why the children-set of a prime node may strictly contain its set of 
children in $T$. 

\medskip

\begin{figure}[htbh]
\begin{center}
\includegraphics[scale=0.9]{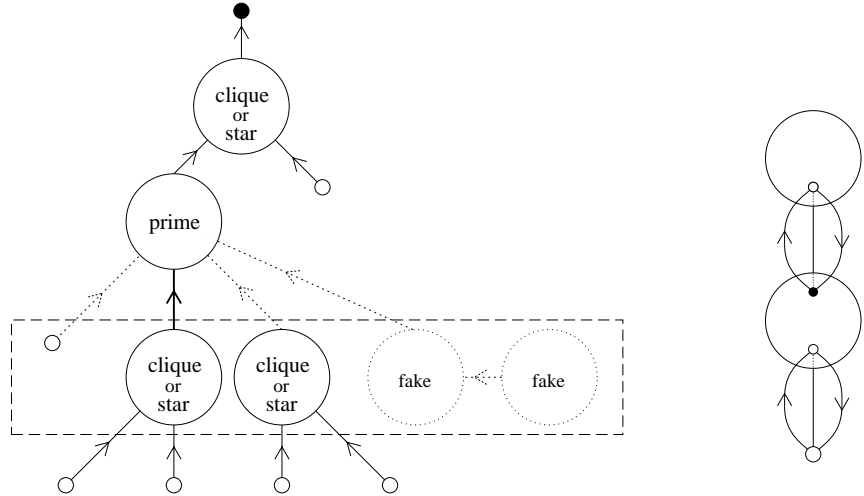}
\end{center}
\caption{Representation of a rooted GLT with the union-find data-structure. In the left picture: full arrows going up represent active parent pointers, the dashed arrows are non-active; the dashed rectangle represents the children-set of the prime node; small circles outside nodes represent leaves, the black one is the root.
In the right picture: small circles inside nodes represent marker vertices, the black one is a root marker vertex; curved arrows represent pointers to the opposite marker vertex or leaf.}
\label{fig:union-find}
\end{figure}

To complete our data-structure we define a data-object for every node and leaf in the rooted GLT. These data-objects will maintain several fields:
\begin{itemize}
\item We already mentioned that every node and leaf, except the root, has a parent pointer. 
Each node $u$ has a distinguished marker vertex, called the \emph{root marker vertex}, which is the extremity of the tree-edge between $u$ and its parent. Every node maintains a pointer to its root marker vertex. As already implied, we need to store the \emph{type} of every node 
(prime, clique, or star),
and we also store the \emph{number of its children}.
\end{itemize}

\noindent 
On the top of that, depending on its type, every node $u$ maintains the following fields:
\begin{itemize}
\item if $u$ is prime: an \emph{adjacency-list representation} of $G(u)$; a pointer to the \emph{last marker vertex} in $\sigma[G(u)]$; a pointer to its \emph{universal marker vertex} (if it exists);
\item if $u$ is degenerate: a \emph{list of its marker vertices} $V(u)$; and a pointer to its \emph{centre} if it is a star;
\end{itemize}

\noindent
To each leaf and marker vertex, we associate: 
\begin{itemize}
\item a pointer, called the \emph{opposite pointer}, to its opposite marker vertex; 
a field for its \emph{perfect-state} (at each new vertex insertion, perfect states, but no other state, 
will be computed and recorded,
and the content of these fields from previous vertex insertions is not reused); and, for every root marker vertex, a pointer, called the \emph{node pointer}, toward the node to which it belongs.
\end{itemize}

Figure~\ref{fig:union-find} illustrates this rooted GLT data-structure.  
%
For instance,
notice how a prime node accesses its children-set using this data-structure: pick a non-root marker vertex of the node, then its opposite marker vertex, then the node to which this marker vertex belongs, then the \texttt{find} on this node gives the set-representative of the corresponding children-set.
\subsection{State assignment and case identification}
\label{subsec:cases}

Prior to any update, a preprocessing of the split-tree is required to identify which of the cases of Theorem~\ref{th:cases} holds. This preprocessing is based on state assignment and tree traversals. The LBFS ordering will play an important role here. The procedure we use for the empty subtrees detection (Algorithm~\ref{alg:emptyPruning}) differs from that for perfect subtrees (Algorithm~\ref{alg:perfectPruning}).

\subsubsection{Empty subtrees} \label{sec:emptyPruning}

If a marker vertex or leaf $q$ is empty, then, by definition, $T(q)$ contains no leaf in $N(x)$, and thus it will be unchanged under $x$'s insertion. 
For the sake of complexity issues, we will want to avoid ``touching'' $q$ and any part of $T(q)$. 
We are fortunate that identifying such empty subtrees can be simulated indirectly:

\begin{lemma} \label{emptyPruning}
Consider $ST(G) = (T,\mathcal{F})$, and let $T(N(x))$ be the smallest connected subtree of $T$ spanning the leaves $N(x)$.  Then $q$ is an empty leaf or an empty marker vertex if and only if $T(q)$ and $T(N(x))$ are node disjoint.
\end{lemma}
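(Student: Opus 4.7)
The plan rests on one structural observation. Let $q$ be a marker vertex of a node $u$ corresponding to the tree-edge $e$ incident to $u$. Then $T(q)$, the smallest subtree of $T$ spanning $L(q)$, coincides with the connected component of $T$ obtained by removing $e$ that does not contain $u$. Indeed, $L(q)$ consists, by definition, of those leaves of $T$ whose path to $u$ uses $e$, which are precisely the leaves of this component; and since every internal node of a split-tree has degree at least three, every such node lies on a path between two leaves of $L(q)$, so the minimal spanning subtree is exactly the component itself. If $q$ is a leaf $\ell$, the analogue holds with $L(\ell)=V(G)\setminus\{\ell\}$ and $T(\ell)$ equal to the component of $T-e_\ell$ not containing $\ell$, where $e_\ell$ is $\ell$'s unique incident tree-edge. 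In both cases, the leaves of $T$ lying inside $T(q)$ are exactly the elements of $L(q)$, and $T(q)$ is attached to $T\setminus T(q)$ through the single edge $e$.

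With this in hand, both implications reduce to elementary tree arguments. For the forward direction, suppose $q$ is empty, so $N(x)\cap L(q)=\emptyset$. Since the leaves of $T(q)$ are exactly $L(q)$, no leaf of $N(x)$ lies in $T(q)$; hence $N(x)$ is entirely contained in $T\setminus T(q)$. Because removing the single edge $e$ splits $T$ into two connected components, $T\setminus T(q)$ is connected, so the minimal subtree spanning $N(x)$ sits inside $T\setminus T(q)$, proving that $T(q)$ and $T(N(x))$ are node-disjoint.

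Conversely, suppose $T(q)$ and $T(N(x))$ are node-disjoint. Then no leaf of $N(x)$ lies in $T(q)$, because $T(N(x))$ contains all of $N(x)$ and none of its nodes belong to $T(q)$. Since the leaves of $T(q)$ are exactly $L(q)$, we deduce $N(x)\cap L(q)=\emptyset$, which is the definition of $q$ being empty.

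There is essentially no obstacle here; the only subtlety is the leaf case, which is handled uniformly by the structural observation in the first paragraph (degenerate edge cases such as $N(x)=\emptyset$ do not arise, since $G+x$ is connected and $G$ is nonempty throughout the algorithm).
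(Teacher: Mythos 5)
Your proof is correct and takes essentially the same approach as the paper, which in fact states the lemma's proof quite tersely (just "If $q$ is empty, then $L(q)\cap N(x)=\emptyset$, meaning $T(q)$ and $T(N(x))$ must be disjoint," and the converse). Your preliminary structural observation — that $T(q)$ coincides with the full connected component of $T-e$ on the far side of $e$, which requires that internal nodes of a split-tree have degree at least three — is exactly the tacit fact the paper relies on, and making it explicit fills in the one nontrivial step (ruling out $T(N(x))$ ``passing through'' $T(q)$ via non-leaf nodes).
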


\begin{proof}
It's important to recall that if $q\in V(u)$, then the node $u$ is not part of $T(q)$.  If $q$ is empty, then $L(q) \cap N(x) = \emptyset$, meaning $T(q)$ and $T(N(x))$ must be disjoint.  If $T(q)$ and $T(N(x))$ are disjoint, then $L(q) \cap N(x) = \emptyset$, meaning $q$ is empty.
\end{proof}


\begin{algorithm} 
\KwIn{A tree $T$ rooted at a leaf and a subset $N(x)$ of its leaves  (assuming $|N(x)|>1$).}
\KwOut{The tree $T(N(x))$, the smallest connected subtree of $T$ spanning the leaves $N(x)$.}

\BlankLine
Mark each leaf of $N(x)$ as \emph{active} (other nodes and leaves are considered inactive)\;

\BlankLine

\While{[the root is not visited and there are at least two active leaves or nodes] OR [the root is visited and there is at least one active leaf or node]}{
         Let $L$ be the current set of active leaves or nodes\;
	\ForEach{element of $L$, $u$}{
		$u$ is no longer \emph{active}, it becomes \emph{visited}\;
		\lIf{$u$ is not the root and its parent is not visited}{$u$'s parent is marked \emph{active}\;}
		}
	}

\BlankLine

Let $T'$ be the subtree of $T$ induced by the visited leaves and nodes\;
\uIf{$t'$, the root of $T'$, has a unique visited child but $t'$ does not belong to $N(x)$}{
	remove in $T'$ the path from $t'$ to the closest node with at least two visited children\;
	}
\KwRet{$T'$}\;

\caption{\cite{GP07, GP08}~Detection of empty subtrees: computing the smallest connected subtree spanning a set of leaves} \label{alg:emptyPruning}
\end{algorithm}


We can compute $T(N(x))$ using the procedure specified in~\cite{GP07, GP08}, which is repeated here as Algorithm~\ref{alg:emptyPruning}.    
It was proved in~\cite{GP07, GP08} that  a call to Algorithm~\ref{alg:emptyPruning} runs in time $O(|T(N(x))|)$, assuming each node maintains a pointer to its parent.  Therefore, given the data-structure proposed above, a \texttt{find()} request is needed to move from a node to its parent, when prime. So the following holds:

\begin{lemma} \label{emptyPruningTime}
Given a GLT $(T,\mathcal{F})$, Algorithm~\ref{alg:emptyPruning} returns a subtree of $T$ that is node disjoint from every empty subtree, and runs in time $O((1+$\emph{\texttt{find-cost}}$) \cdot |T(N(x))|)$.
\end{lemma}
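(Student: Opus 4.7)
The approach is to separate correctness from running time, since the correctness of the algorithm producing $T(N(x))$ is already established in \cite{GP07, GP08}. For the returned subtree being node-disjoint from every empty subtree, I will simply invoke Lemma~\ref{emptyPruning}: since the algorithm outputs $T(N(x))$, and $T(q)$ is empty iff it is node-disjoint from $T(N(x))$, the structural claim follows immediately.

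The running-time analysis rests on showing two things: (i) the total number of nodes ``touched'' (marked active and then visited) is $O(|T(N(x))|)$, and (ii) each such touch costs $O(1+\texttt{find-cost})$ in the data-structure described just before the lemma. For (i), I would observe that the algorithm performs a bottom-up layered traversal: initially precisely the leaves of $N(x) \subseteq T(N(x))$ are active; in each iteration, every active element becomes visited and propagates activity only to its parent, which necessarily lies on the path from some leaf of $N(x)$ to the root of $T(N(x))$ (or just above it, since the loop may proceed one step higher before the termination condition kicks in). Hence every visited node is either in $T(N(x))$ or lies at most one step above its root, giving a total of $O(|T(N(x))|)$ touches; the post-processing step then trims any excess.

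For (ii), the only per-node operation beyond a constant number of boolean updates is accessing the parent. Using the data-structure from Remark~\ref{rem:data}, a node's stored parent pointer is reliable (``active'') in three situations: it is the child of the root, the child of a degenerate node, or the set-representative of the children-set of its prime parent. In the remaining case---a non-representative child of a prime node---the stored pointer may reference a fake node left behind by previous node-joins, so one \texttt{find} is required to obtain the true set-representative, whose active parent pointer then yields the current parent. Thus each parent access costs $O(1+\texttt{find-cost})$, and summing over all $O(|T(N(x))|)$ touches yields the stated bound.

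The only real subtlety---and the main obstacle I anticipate---is justifying that fake nodes produced by earlier node-joins during the contraction step do not inflate the traversal cost: one must argue that Algorithm~\ref{alg:emptyPruning} never traverses into a fake node, because from any true child the \texttt{find} step jumps directly to the current representative, and the representative's active parent pointer always targets a genuine ancestor. This observation (together with Remark~\ref{rem:traversal}, which packages precisely the same $O((1+\texttt{find-cost})\cdot|T|)$ accounting for generic rooted-GLT traversals) closes the analysis and gives the claimed $O((1+\texttt{find-cost})\cdot|T(N(x))|)$ bound.
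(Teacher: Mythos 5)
Your proposal follows essentially the same approach as the paper: correctness is delegated to \cite{GP07, GP08} (which establishes that Algorithm~\ref{alg:emptyPruning} produces $T(N(x))$ touching $O(|T(N(x))|)$ nodes), the node-disjointness from empty subtrees is immediate from Lemma~\ref{emptyPruning}, and the extra \texttt{find-cost} factor comes from the union-find-based parent access when moving up through a prime node, as packaged by Remark~\ref{rem:traversal}. Two small inaccuracies in your elaboration, neither of which affects the conclusion: the upward wavefronts can visit the entire path from the top of $T(N(x))$ up to the root leaf of $T$ (not merely ``one step above''), but that path has length at most $d_{\max}-d_{\text{lca}}\le|T(N(x))|$ so the $O(|T(N(x))|)$ count still holds; and the active/non-active parent-pointer mechanism you describe is laid out in the data-structure text preceding Remark~\ref{rem:traversal} rather than in Remark~\ref{rem:data}.
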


\subsubsection{Perfect subtrees}

As $T$ is rooted, the subtree $T'=T(N(x))$ has a root which is a leaf or a node of $T$. For every node $u$ of $T'$, the marker vertices of $u$ which are the extremity of a tree-edge in $T'$ form the set $NE(u)$ (recall Definition~\ref{def:states}).
The next task is to identify the perfect subtrees and derive the case identification used at line \ref{line:cases} of Algorithm 
\ref{alg:vertexinsertion}. 
Our procedure, described in detail as Algorithm~\ref{alg:perfectPruning}, outputs either a tree-edge 
or a hybrid node, or the fully-mixed subtree of $ST(G)$. (Recall that a fully-mixed subtree is maximal, by definition.)  It works in three main steps:

%
\begin{enumerate}
\item  First it traverses the subtree $T(N(x))$ in a bottom-up manner to identify the pendant perfect subtrees: for each non-leaf node $u$, we test if the marker $q$ opposite $u$'s root marker is perfect and if so remove $u$ from $T'$ and move to $u$'s parent.

\item Then, if the root of the remaining subtree $T'$ has a unique child $v$ in $T'$, we check whether $v$'s root marker vertex $r$ is perfect and if so remove the root from $T'$ and move to $v$. This test is repeated until the current root of $T'$ neighbours at least two nodes in $T'$ or is the unique remaining node of $T'$. 

\item In the former case, we are done and the resulting tree $T'$ is fully-mixed. 
In the latter case, we still need to test whether the remaining node is hybrid or contains a marker vertex opposite a perfect leaf, in which case the output is this edge. As we will see, using the LBFS ordering allows us to test only two marker vertices. 
\end{enumerate}


The next remark explains how one can test whether a given marker vertex is perfect and will be used in the bottom-up and top-down traversal of $T(N(x))$.
%

\begin{remark} \label{twinTest}
Let $q \in V(u)$ be a marker vertex in $ST(G)$, and let $r$ be $q$'s opposite.  Then $r$ is perfect if and only if:
\begin{enumerate}
\item $P(u) = N_{G(u)}(q)$, or $P(u) = N_{G(u)}[q]$; and
\item $NE(u) \setminus P(u) = \emptyset$ or $NE(u) \setminus P(u) = \{q\}$.
\end{enumerate}
\end{remark}  

Remark~\ref{twinTest} must not be applied, at the third step of our procedure, to every marker vertex of the unique remaining node $u$. Indeed, consider $q,q' \in V(u)$, and let $r$ and $r'$ be their opposites, respectively.  To test if $r$ and $r'$ are perfect using Remark~\ref{twinTest} requires us to test if $P(u) = N(q)$ and $P(u) = N(q')$.  But if $N(q) \cap N(q') \ne \emptyset$, then this involves ``touching'' marker vertices of $u$ multiple times.  In general, we cannot bound the number of times marker vertices in $u$ will have to be ``touched''.  

The solution for a degenerate node follows from the next lemma. In the case of a prime node, we will use the LBFS Lemmas~\ref{LBFStwin} and~\ref{inducedLBFS} of Section~\ref{sec:LBFS}.


\begin{lemma} \label{degenerateTwinTest}
Let $u$ be a degenerate node of $ST(G)$, the marker vertices of which are all either perfect or empty (i.e. $P(u) = NE(u)$).
There exists a marker vertex $q \in V(u)$ whose opposite $r$ is perfect if and only if one of the following conditions holds:

\begin{enumerate}
\item $P(u) = V(u)$ (in this case, if $u$ is a clique then any $q \in V(u)$ is suitable, and if $u$ is a star then $q$ is its centre);

\item $P(u) = V(u)\setminus \{q\}$ and, when $u$ is a star, $q$ is the centre of $u$; 

\item $P(u) = \{c\}$ and $u$ is a star with centre $c$ (in this case any $q \in V(u)\setminus \{c\}$ is suitable);

\item $P(u) = \{c,q\}$ and $u$ is a star with centre $c$.
\end{enumerate}
\end{lemma}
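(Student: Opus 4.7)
The plan is to apply Remark \ref{twinTest} and then do a straightforward case analysis based on whether $u$ is a clique or a star (and, for a star, whether $q$ is the centre). The hypothesis $P(u) = NE(u)$ has the convenient consequence that $NE(u) \setminus P(u) = \emptyset$, so the second bullet of Remark \ref{twinTest} is automatically satisfied. Thus the opposite $r$ of a marker vertex $q \in V(u)$ is perfect if and only if
\[
P(u) = N_{G(u)}(q) \quad \text{or} \quad P(u) = N_{G(u)}[q].
\]
This reduces the lemma to computing $N_{G(u)}(q)$ and $N_{G(u)}[q]$ in each of the two degenerate shapes.

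First I would handle the clique case. For every $q \in V(u)$ one has $N_{G(u)}(q) = V(u) \setminus \{q\}$ and $N_{G(u)}[q] = V(u)$, so the perfect-opposite condition rewrites as $P(u) = V(u) \setminus \{q\}$ or $P(u) = V(u)$. The first equality is condition 2 of the lemma (with the clique branch of the ``$q$ restriction'' vacuous), and the second is condition 1, in which case any $q \in V(u)$ indeed works.

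Next I would treat the star case, letting $c$ denote the centre. I would split on $q = c$ versus $q \ne c$. When $q = c$, $N_{G(u)}(c) = V(u) \setminus \{c\}$ and $N_{G(u)}[c] = V(u)$, recovering conditions 2 (with $q = c$, as required by the lemma) and 1, respectively. When $q \ne c$, the star structure gives $N_{G(u)}(q) = \{c\}$ and $N_{G(u)}[q] = \{c,q\}$, so the perfect-opposite condition becomes $P(u) = \{c\}$ (condition 3, valid for any non-centre $q$) or $P(u) = \{c,q\}$ (condition 4, valid for that specific $q$).

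Putting the two cases together yields the ``only if'' direction; the ``if'' direction is immediate by substituting each of the four listed conditions back into Remark \ref{twinTest} and checking that both bullets hold. The only subtlety to flag is the need, in the star subcase when $q = c$, to insist on $q$ being the centre in condition 2; I would mention explicitly that in a star the unique marker vertex whose neighbourhood equals $V(u) \setminus \{q\}$ is $q = c$, which is exactly the restriction stated in the lemma. No substantial obstacle is anticipated beyond the bookkeeping of these subcases.
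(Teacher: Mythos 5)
Your proof is correct. It differs from the paper's own proof in the key tool used: the paper invokes the hereditary property (Lemma~\ref{lem:hereditary}-1) directly, identifying which marker vertices of $u$ are accessible descendants of $r$ (namely, the neighbours of $q$ in $G(u)$) and reading off their required states; you instead factor through Remark~\ref{twinTest}, which repackages exactly that reasoning as a neighbourhood condition on $P(u)$ and $NE(u)$, and then observe that the hypothesis $P(u)=NE(u)$ trivializes the second bullet of the remark. The two routes are mathematically equivalent — the accessible descendants of $r$ among the marker vertices of $u$ are precisely $N_{G(u)}(q)$ — but yours is more mechanical: after the reduction, the argument consists entirely of listing $N_{G(u)}(q)$ and $N_{G(u)}[q]$ for a clique and for a star (centre vs.\ non-centre), which you do carefully, including the observation that in a star $P(u)=V(u)\setminus\{q\}$ can only arise from the remark when $q$ is the centre. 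Since Remark~\ref{twinTest} is stated immediately before this lemma and is already the workhorse of the algorithm's implementation, your version arguably fits the surrounding material more tightly, at the small cost of relying on a remark the paper states without proof; the paper's version is self-contained in that it re-derives the relevant accessibility facts from Lemma~\ref{lem:hereditary}.
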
  

\begin{proof}
Choose some $q \in V(u)$ and let $r$ be its opposite marker vertex.  Assume that $r$ is perfect.  Then if $u$ is a clique or a star with centre $q$, all marker vertices in $V(u) - \{q\}$ are accessible descendants of $r$ and are therefore perfect, by Lemma~\ref{lem:hereditary}-1.  Therefore either items 1 or 2 of the lemma hold.  

So assume that $u$ is a star with centre $c \ne q$.  Then $c$ is an accessible descendant of $r$, and all marker vertices in $V(u) - \{c,q\}$ are inaccessible descendants of $r$.  Thus, $c$ is perfect and the marker vertices in $V(u) - \{c,q\}$ are empty, by Lemma~\ref{lem:hereditary}-1.  It follows that either items 3 or 4 of the lemma hold.  

Now assume that one of items 1-4 of the lemma holds.  If items~1 or~2 hold, then all marker vertices in $V(u) - \{q\}$ are accessible descendants of $r$, and all are perfect.  So by Lemma~\ref{lem:hereditary}-1, $r$ is perfect as well.  And if items~3 or~4 hold, then only $c$ is perfect, and only $c$ is an accessible descendent of $r$.  It follows that $r$ is perfect, once again by Lemma~\ref{lem:hereditary}-1.
\end{proof}



Applying Lemma~\ref{degenerateTwinTest} at node $u$ is straightforward as soon as $P(u)$ has been computed. Notice that in cases 2 and 3, the marker vertex $q$ is empty, but it can be determined without considering other empty marker vertices. Hence at most one empty marker vertex is involved in this step of the procedure.

Let us now turn to prime nodes. First observe the following remark, which is a straightforward application of the definitions:

\begin{remark} \label{twins}
Let $q \in V(u)$ be a marker vertex in $ST(G)$, and let $r$ be its opposite.  Let $t$ be a marker vertex added to $u$, made adjacent precisely to $P(u)$.  Then $r$ is perfect if and only if $q$ and $t$ are twins.
\end{remark}


The next lemma merely translates Lemma~\ref{LBFStwin} to the split-tree; its corollary is the important result for our purposes:

\begin{lemma} \label{LBFStwinNode}
Let $\sigma$ be an LBFS of the connected graph $G + x$ in which $x$ appears last, and let $u$ be a prime node in $ST(G)$.  Let $r$ be the opposite of some $q \in V(u)$.  If $r$ is perfect, then $q$ is universal in $G(u)$ or $q$ appears last in $\sigma[G(u)]$.
\end{lemma}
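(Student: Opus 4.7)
My plan is to reduce the statement to Lemma~\ref{LBFStwin} by applying it to $G(u)$ and $G(u) + t$, where $t$ is a new marker vertex added to $u$ adjacent precisely to $P(u)$. By Remark~\ref{twins}, the hypothesis ``$r$ is perfect'' is equivalent to ``$q$ and $t$ are twins in $G(u)+t$''. The hereditary property (Lemma~\ref{lem:hereditary}-1) applied to the perfect marker vertex $r$ shows that every neighbour of $q$ in $G(u)$ is perfect and hence lies in $P(u)$; since $u$ prime forces $|V(u)| \geq 4$ and $G(u)$ is connected, this gives $P(u) \neq \emptyset$, so $G(u) + t$ is connected. Lemma~\ref{LBFStwin} will then yield ``$q$ universal in $G(u)$ or $q$ penultimate in $\sigma'$'', where $\sigma'$ is any LBFS ordering of $G(u) + t$ placing $t$ last; since $t$ is last, penultimate in $\sigma'$ is precisely last in $\sigma[G(u)]$.

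The remaining task is to exhibit such a $\sigma'$. The natural candidate is $\sigma[G(u)]$ (an LBFS ordering of $G(u)$ by Lemma~\ref{inducedLBFS}) with $t$ appended at the end. Verifying this is an LBFS ordering of $G(u) + t$ reduces, via the 4-vertex condition of Lemma~\ref{4vertex}, to triples $a <_{\sigma'} b <_{\sigma'} t$ with $at \in E$ and $ab \notin E(G(u))$: we have $a \in P(u)$ and must produce $d <_{\sigma[G(u)]} a$ with $db \in E(G(u))$ and $d \notin P(u)$. Writing $\ell_p$ for the earliest leaf of $A(p)$ in $\sigma$ (as in Lemma~\ref{inducedLBFS}), I have $\ell_a x \in E(G+x)$ (since $a \in P(u)$ so $A(a) \subseteq N(x)$) and $\ell_a \ell_b \notin E(G)$ (since $ab \notin E(G(u))$ and $\ell_a, \ell_b$ sit in different columns of $u$). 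Applying Lemma~\ref{4vertex} to the triple $\ell_a <_\sigma \ell_b <_\sigma x$ in $\sigma$ produces a witness $d' <_\sigma \ell_a$ with $d' \in N_G(\ell_b) \setminus N_G(x)$.

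I then split on which column of $u$ contains $d'$. If $d' \in L(p)$ for some $p \in V(u) \setminus \{b\}$, the adjacency of $d'$ and $\ell_b$ forces the accessibility path in the GLT to cross $u$, hence $d' \in A(p)$ and $p$ is adjacent to $b$ in $G(u)$; moreover $d' \in A(p) \setminus N(x)$ implies $p \notin P(u)$, so $d := p$ works (since $\ell_p \leq_\sigma d' <_\sigma \ell_a$). If instead $d' \in L(b)$, this case is impossible: $d' \notin A(b)$ (otherwise $\ell_b \leq_\sigma d'$, contradicting $d' <_\sigma \ell_a <_\sigma \ell_b$), so $d'$ has no neighbour outside $L(b)$ because $A(b)$ is the frontier of the split $(L(b), V(G) \setminus L(b))$ guaranteed by Theorem~\ref{prop:split_list}. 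But Remark~\ref{prefix} applied to $\sigma$ on $G+x$ gives that the prefix $S = \{y : y \leq_\sigma \ell_a\}$ induces a connected subgraph of $G$ (as $x \notin S$); any path in $G[S]$ from $d'$ to $\ell_a \notin L(b)$ must traverse $A(b)$, yet every vertex of $A(b)$ is at least $\ell_b >_\sigma \ell_a$, so $A(b) \cap S = \emptyset$, a contradiction.

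I expect the second case above to be the only real obstacle: the crucial insight is that the split $(L(b), V(G) \setminus L(b))$ forces any connection from $L(b) \setminus A(b)$ to the outside through the frontier $A(b)$, and the LBFS prefix-connectedness (Remark~\ref{prefix}) together with the minimality of $\ell_b$ in $A(b)$ blocks such a crossing within the prefix up to $\ell_a$. The remaining pieces---Lemma~\ref{inducedLBFS}, Remark~\ref{twins}, and Lemma~\ref{LBFStwin}---then combine without further difficulty.
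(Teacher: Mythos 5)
Your proof is correct, and its high-level strategy matches the paper's: introduce the marker vertex $t$ adjacent precisely to $P(u)$, use Remark~\ref{twins} to see that $q$ and $t$ are twins in $G(u)+t$, check connectivity, and reduce to Lemma~\ref{LBFStwin}. Where you part ways with the paper is in producing the required LBFS ordering of $G(u)+t$ with $t$ last. The paper realises $G(u)+t$ as the label $G(u')$ inside a new GLT $(T',\mathcal{F}')$ of a graph $G+\ell$, where $\ell$ is a fresh leaf opposite $t$ with $N(\ell)\subseteq N(x)$; it then observes (without spelling it out) that replacing $x$ by $\ell$ in $\sigma$ still gives an LBFS of $G+\ell$, and applies Lemma~\ref{inducedLBFS} as a black box to obtain $\sigma_\ell[G(u')]$. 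You instead append $t$ to $\sigma[G(u)]$ and verify the four-vertex characterisation (Lemma~\ref{4vertex}) by hand for triples ending in $t$, lifting the witness search to $\sigma$ on $G+x$ and steering the witness back into $G(u)$ via the frontier/separator structure of the split $(L(b),V(G)\setminus L(b))$ together with prefix-connectedness (Remark~\ref{prefix}). In effect you locally re-derive the same lifting-and-separator mechanics that already power the proof of Lemma~\ref{inducedLBFS}. Your version is more explicit and self-contained (no auxiliary GLT, and it fills in the LBFS-of-$G+\ell$ step that the paper merely asserts), at the cost of repeating work the lemma already packages; the paper's version is shorter but relies on reusing the lemma and leaves a small gap to the reader. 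One minor technicality: when $b$ is opposite a leaf, $(L(b),V(G)\setminus L(b))$ is not formally a split since $|L(b)|=1$, so Theorem~\ref{prop:split_list} does not literally apply; but in that case $L(b)=A(b)$, so your Case~2 is vacuous and the argument is unaffected.
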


\begin{proof}
Let $u'$ be the same as $u$ but with a new marker vertex $t$ adjacent precisely to $P(u)$.  Consider the GLT $(T',\mathcal{F}')$ that results from replacing $u$ with $u'$, and adding a new leaf $\ell$ opposite $t$.  Let $\sigma_{\ell}$ be the same as $\sigma$ but with $x$ replaced by $\ell$.  Since $t$ is only adjacent to $P(u)$, we have $N(\ell) \subseteq N(x)$.  Therefore $\sigma_{\ell}$ is an LBFS of $G+\ell$ in which $\ell$ appears last, and $\sigma_{\ell}[G(u')]$ 
is an LBFS of $G(u')$ in which $t$ appears last, by Lemma~\ref{inducedLBFS} applied to the split-tree.  

Assume that $r$ is perfect.  Then $q$ and $t$ are twins, by Remark~\ref{twins}.  Therefore $u'$ is not prime.  But recall that $u$ was prime.  So by Lemma~\ref{LBFStwin}, either $q$ is universal in $G(u)$ or it is the penultimate vertex in $\sigma_{\ell}[G(u')]$.
If it is the penultimate vertex in $\sigma_{\ell}[G(u')]$,
 then it must be the last vertex in $\sigma[G(u)]$.
\end{proof}

\begin{corollary} \label{primeTwinTest}
Let $\sigma$ be an LBFS of $G+x$ in which $x$ appears last.  Let $\ell$ be a leaf adjacent to a prime node $u$ in $ST(G)$, and let $q \in V(u)$ be $\ell$'s opposite.  Then $\ell$ is perfect if and only if $q$ is universal in $G(u)$ or $q$ appears last in $\sigma[G(u)]$.
\end{corollary}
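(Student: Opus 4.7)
My plan is to derive the corollary as a direct specialization of Lemma \ref{LBFStwinNode} to the case where the ``opposite'' under consideration happens to be a leaf $\ell$ rather than an arbitrary marker vertex. The forward direction (``$\ell$ perfect $\Rightarrow$ $q$ universal in $G(u)$ or $q$ last in $\sigma[G(u)]$'') is immediate: since $\ell$ is the leaf opposite $q$ in the prime node $u$, we may instantiate Lemma \ref{LBFStwinNode} with $r := \ell$, and its conclusion is exactly what we need.

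For the backward direction, the key observation is that since $\ell$ is a leaf with opposite $q \in V(u)$, we have $L(q)=\{\ell\}$ and hence $A(q)=\{\ell\}$. By the definition of $\sigma[G(u)]$, the representative $x_q$ of $q$ is $\ell$ itself; thus the relative position of $q$ in $\sigma[G(u)]$ is governed directly by the position of $\ell$ in $\sigma$. I would then use this together with the LBFS 4-vertex characterization (Lemma \ref{4vertex}) applied to $\sigma[G(u)]$, which is an LBFS of $G(u)$ by Lemma \ref{inducedLBFS}, to argue as follows: in the hypothetical GLT $(T',\mathcal{F}')$ obtained by attaching a new leaf $\ell'$ opposite a new marker vertex $t$ inserted in $u$ and made adjacent precisely to $P(u)$, we get that $N_G(\ell')\subseteq N(x)$, so replacing $x$ by $\ell'$ in $\sigma$ still yields an LBFS. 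Universality or lastness of $q$ together with the reverse direction of Lemma \ref{LBFStwin} arguments (twin testing in prime graphs) then identifies $\ell$ as having $N_G(\ell) = N(x)\setminus\{\ell\}$, i.e. $\ell$ is perfect.

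The main obstacle will be the converse direction, since the converse of Lemma \ref{LBFStwinNode} does not hold for general marker vertices. What rescues it for leaves is precisely that $A(q)$ is the singleton $\{\ell\}$, so that the LBFS-induced ordering on $G(u)$ encodes $\ell$'s own position and no averaging over a subtree is possible. I expect the proof to be brief once this observation is made, essentially reducing to a careful symmetric reading of the proof of Lemma \ref{LBFStwinNode} (and behind it Lemma \ref{LBFStwin}), substituting $\ell$ for the marker vertex $r$ throughout.
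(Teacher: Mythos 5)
Your forward direction is exactly the paper's: Corollary~\ref{primeTwinTest} is proved there by citing Lemma~\ref{LBFStwinNode} (with $r = \ell$) together with Remark~\ref{twins}, and you do the same. Also worth noting: your observation that for a leaf $\ell$ opposite $q$ one has $A(q)=\{\ell\}$, hence $x_q=\ell$, is correct and is precisely why Lemma~\ref{LBFStwinNode} lands where we want it.

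The problem is the backward direction. Your sketch never becomes an argument: you invoke Lemma~\ref{4vertex}, the auxiliary GLT $(T',\mathcal{F}')$ from the proof of Lemma~\ref{LBFStwinNode}, and then assert that ``universality or lastness of $q$ together with the reverse direction of Lemma~\ref{LBFStwin} arguments'' yields $N_G(\ell)=N(x)\setminus\{\ell\}$. No chain of implications is supplied, and none can be, because the backward implication is not forced. Whether $q$ is universal in $G(u)$, and where $\ell$ sits in $\sigma$, are properties of the split-tree $ST(G)$ and of $\sigma$ alone; they say nothing about how $\ell$ and $N(x)$ relate. For instance $N(x)$ may be entirely disjoint from $L(q)=\{\ell\}$, making $\ell$ empty, while $q$ happens to be universal in $G(u)$. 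The paper's one-line proof (``a direct consequence of Remark~\ref{twins} and Lemma~\ref{LBFStwinNode}'') in fact only delivers the ``only if'' direction: Lemma~\ref{LBFStwinNode} gives ``perfect $\Rightarrow$ universal or last,'' and Remark~\ref{twins} is what lets Algorithm~\ref{alg:perfectPruning} then \emph{test} perfection of the two surviving candidates via Remark~\ref{twinTest}. That is all the algorithm ever uses. You were right to distrust a converse to Lemma~\ref{LBFStwinNode}; the correct resolution is to recognize that only the forward implication is needed (and provable), not to try to salvage the converse.
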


\begin{proof}
A direct consequence of Remark~\ref{twins} and Lemma~\ref{LBFStwinNode}.
\end{proof}

Therefore to perform the third step of the case identification procedure, at most two marker vertices of the remaining prime node $u$ can have opposites that are perfect. Thus, Remark~\ref{twinTest} needs to be applied at most twice. Remember that our data-structure keeps track of these two marker vertices.  
\begin{algorithm} 
\KwIn{The rooted split-tree $ST(G)=(T,\mathcal{F})$ and an LBFS ordering $\sigma$ of $G+x$ where $x$ is the last vertex.}
\KwOut{A tree-edge $e$, with one extremity perfect and the other empty or perfect (if one exists); or a hybrid node $u$ (if one exists); or the fully-mixed subtree $T'$ of $T$ (if one exists).}

\BlankLine
\lnl{pruned:init} $T'\gets T(N(x))$, the result of Algorithm~\ref{alg:emptyPruning} with input $T$ and $N(x)$\;
Set the non-root marker vertices opposite leaves of $N(x)$ to \emph{perfect} (these markers belong to nodes of $T'$)\;

\BlankLine
\tcp{bottom-up traversal:  discard the pendant perfect subtrees}
\uIf{$T'$ contains more than one node}{
\lnl{pruned:botoom-up}	\While{there exists a non-processed node $u$ in $T'$ all of whose children in $T'$ are leaves}{
		let $q$ be the (non-root) marker vertex opposite $u$'s root marker vertex\;
		determine $q$'s state by applying Remark~\ref{twinTest}\;
		\lIf{$q$ is perfect} {remove $u$ from $T'$\;}
	}
}
\BlankLine

let $u$ be the root of $T'$\;

\BlankLine
\tcp{top-down traversal: discard the perfect subtree containing the root}
\lnl{pruned:top-down}\While{node $u$ has exactly one non-leaf child $v$ in $T'$ and no fully-mixed edge has been identified}{
	apply Remark~\ref{twinTest} to determine the state of  the root marker vertex $q$ of $v$\;
	\lIf{$q$ is perfect} {remove $u$ from $T'$ and $u\gets v$\;}
	\lElse{the tree-edge $uv$ is fully-mixed\;}
	}

\BlankLine
\tcp{case identification}
\lnl{prune:identification}
\uIf{$T'$ contains a unique node $u$}{
	\uIf{$u$ is degenerate} {
		apply Lemma~\ref{degenerateTwinTest} to determine if there is a $q \in V(u)$ whose opposite $r$ is 		perfect\;
		\lIf{such a $q$ exists}{
			\KwRet the tree-edge $e$ of $T$ whose extremities are $q$ and $r$\;
			}
		\lElse{
			\KwRet the hybrid node $u$\; }
	}
	\uElse{
	\tcp{twin-test in a prime node}
		let $q$ be the last marker vertex in $\sigma[G(u)]$ and let $r$ be its opposite\;
		let $q'$ be $u$'s universal vertex (if it exists), and let $r'$ be its opposite\;
		apply Remark~\ref{twinTest} to determine the states of $r$ and $r'$\;
		\uIf{$r$ (respectively $r'$) is perfect} {
			\KwRet the tree-edge $e$ of $T$ whose extremities are $q$ and $r$ (respectively $q'$ and $r'$)\;
			}
		\lElse{
			\KwRet the hybrid node $u$\;
			}
		}
	}
\smallskip	
\lElse{\KwRet the fully-mixed subtree $T'$\;}


\caption{Detection of perfect subtrees, and split-tree case identification} \label{alg:perfectPruning}
\end{algorithm}

\begin{lemma} \label{lem:PruningCorrectness}
Given an LBFS ordering of a connected graph $G+x$ and the split-tree $ST(G)$,  Algorithm~\ref{alg:perfectPruning} returns: 
\begin{itemize}
\item a tree-edge $e$, one of whose extremities is perfect and the other is either empty or perfect, if case 1, 2, 5 or 6 of Theorem~\ref{th:cases} applies; 
\item a hybrid node $u$, if case 3 or 4 of Theorem~\ref{th:cases} applies; 
\item the full-mixed subtree $T'$ of $T$, if case 7 of Theorem~\ref{th:cases} applies; 
\end{itemize}
It can be implemented to run in time $O((1+$\emph{\texttt{find-cost}}$) \cdot |T(N(x))|)$.
\end{lemma}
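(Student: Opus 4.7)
The plan is to verify correctness case by case following the classification given by Theorem~\ref{th:cases}, and then to account for the running time by bounding the number of marker vertices ``touched'' during each traversal.

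First, for correctness, I would argue that throughout Algorithm~\ref{alg:perfectPruning} the only marker vertices ever assigned the label \emph{perfect} are in fact perfect in the sense of Definition~\ref{def:states}. The leaves of $N(x)$ are perfect by definition, and each subsequent perfect-labelling of a marker vertex $q$ is done via Remark~\ref{twinTest}, whose correctness is guaranteed once the states of all marker vertices of the enclosing node $u$ that lie in $T(N(x))$ have been correctly identified. The bottom-up loop visits each node $u$ of $T(N(x))$ only after all of its children in $T'$ are leaves of $N(x)$ or have had their root markers labelled perfect, so the invariant is maintained. Symmetrically, the top-down loop processes a chain of single-child nodes, each time labelling the root marker of the unique child if appropriate; by the hereditary property (Lemma~\ref{lem:hereditary}) and Corollary~\ref{cor:hereditary}, the pendant and top-most perfect subtrees identified by these two loops are precisely the perfect subtrees of $ST(G)$, so the tree $T'$ that remains is exactly the portion of $ST(G)$ made of mixed marker vertices together with their direct perfect/empty neighbours.

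Next I would analyse what $T'$ can look like when the two traversals terminate. If $T'$ still contains a fully-mixed tree-edge (detected in the top-down loop) we are in case 7 of Theorem~\ref{th:cases}, and returning $T'$ is correct. Otherwise $T'$ reduces to a single node $u$. If $u$ is degenerate, Lemma~\ref{degenerateTwinTest} enumerates the (at most two) patterns of $P(u)$ under which $u$ admits a perfect opposite, so testing these yields either a tree-edge with no mixed extremity (cases 1, 2, 5, 6) or establishes that $u$ is hybrid (case 4). If $u$ is prime, Corollary~\ref{primeTwinTest} says that the only candidates for a perfect opposite are the universal marker vertex of $G(u)$ (if one exists) and the last marker vertex in $\sigma[G(u)]$; applying Remark~\ref{twinTest} to these at most two candidates correctly distinguishes cases 3, 5, and 6. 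Here is where the LBFS assumption is crucial: without it, we would have no constant-size set of candidate opposites to test.

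For the running time, I would argue that every operation performed by Algorithm~\ref{alg:perfectPruning} is charged either to the construction of $T(N(x))$ or to a constant amount of work per node or tree-edge of $T(N(x))$. Algorithm~\ref{alg:emptyPruning} produces $T(N(x))$ in time $O((1+\texttt{find-cost})\cdot|T(N(x))|)$ by Lemma~\ref{emptyPruningTime}. Each node $u$ of $T(N(x))$ is visited a bounded number of times by the two loops; each visit applies Remark~\ref{twinTest}, which only consults the root marker of $u$ and the stored $P(u),NE(u)$ information whose size is bounded by the number of children of $u$ in $T'$. Summed over all nodes of $T'$, this is $O(|T(N(x))|)$. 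The final case-identification phase touches $u$ a constant number of times for a degenerate node (Lemma~\ref{degenerateTwinTest}) and at most two marker vertices for a prime node (Corollary~\ref{primeTwinTest}), which by our data structure are directly accessible in $O(1)$ time. Each traversal from a node to its parent incurs a \texttt{find} when the parent is prime, giving the advertised factor of $(1+\texttt{find-cost})$.

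The main obstacle, and what the argument hinges on, is the prime-node step: showing that only two marker vertices need to be tested guarantees we never scan all of $V(u)$. This is exactly what Lemma~\ref{LBFStwinNode} and Corollary~\ref{primeTwinTest} provide, so the bulk of the proof reduces to verifying that the invariants required by those lemmas (the LBFS property inherited on $\sigma[G(u)]$ via Lemma~\ref{inducedLBFS}, and the preservation of primality of $u$) are indeed in place at the point where they are invoked.
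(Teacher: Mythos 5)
Your proposal is correct and follows essentially the same route as the paper's proof: both proceed by (i) using Lemma~\ref{emptyPruningTime} to remove empty subtrees, (ii) arguing that the bottom-up and top-down traversals correctly identify the perfect marker vertices via Remark~\ref{twinTest} and the hereditary property, (iii) using Lemma~\ref{degenerateTwinTest}, Remark~\ref{twins}, and Corollary~\ref{primeTwinTest} to resolve the remaining single-node case, and (iv) charging the running time to edges of $T(N(x))$ via the bound $O(|NE(u)|)$ per node, with the $(1+\texttt{find-cost})$ factor coming from parent-pointer traversals. The one step you state without justification is that the top-down traversal processes ``a chain of single-child nodes'': the paper explicitly proves that if the root of $T'$ lies in a perfect subtree $T(p)$, then $T(p)\cap T'$ must form a path in $T'$ (otherwise some node in $T(p)$ would have two non-root, non-perfect, non-empty marker vertices, contradicting perfection of $T(p)$), which is what guarantees the top-down loop's termination condition correctly distinguishes case 7 from the other cases. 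This is a small but genuinely load-bearing observation worth spelling out; otherwise your proof matches the paper's.
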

\begin{proof}
By Lemma~\ref{emptyPruningTime}, we know that $T(N(x))$, and thus $T'$, is node disjoint from every empty subtree of $ST(G)$. Clearly, before the while loop at line~\ref{pruned:top-down}, the current subtree $T'$ is disjoint from every pendant perfect subtree of $ST(G)$. Thus, if the root of the tree belongs to a perfect subtree $T(p)$, then the nodes of $T(p)\cap T'$ can only form a path of $T'$: otherwise $T(p)$ should contain a node $u$ with two non-root marker vertices $q$ and $r$ which are neither perfect nor empty, as $T(q)$ and $T(r)$ have not been removed so far, contradicting the fact that $T(p)$ is perfect.



Concerning the correctness of the third step (case identification), first observe that if, at line~\ref{prune:identification}, $T'$ contains more than one node, then every tree-edge in $T'$ is fully-mixed (otherwise one of its extremities would have been removed during the tree traversals). Thus case 7 of Theorem~\ref{th:cases} holds. So assume that $T'$ consists of one node. It follows from Lemma~\ref{degenerateTwinTest}, Remark \ref{twins} and  Corollary~\ref{primeTwinTest}, that a single tree-edge is returned by the algorithm if there exists a tree-edge with a perfect extremity and the other extremity either perfect or empty. Notice that we do not test the uniqueness of such a tree-edge. If none of the previous cases applies, then, by Theorem~\ref{th:cases}, $ST(G)$ contains a hybrid node which is correctly identified by the algorithm. This corresponds to cases 3 and 4 of Theorem~\ref{th:cases}. 

By Remark~\ref{rem:traversal}, the cost of performing the tree traversals is $O((1+\texttt{find-cost}) \cdot |T(N(x))|)$.
During the algorithm, either Remark~\ref{twinTest} or Lemma~\ref{degenerateTwinTest} is applied a constant number of times at each node.  The cost to apply each remark/lemma in the context of the data-structure presented above is clearly $O(|NE(u)|)$, where $NE(u) = P(u) \cup M(u)$ (recall Definition~\ref{def:states}).  But every $q \in NE(u)$ has its corresponding edge in $T(N(x))$, by Lemma~\ref{emptyPruning} and the definition of $NE(u)$.  So the total cost of applying Remark~\ref{twinTest} and Lemma~\ref{degenerateTwinTest} is $O(|T(N(x))|)$.  Of course, once $q$ is found to be perfect, then $u$ can be removed from $T'$ in constant time. 
\end{proof}


Before describing how the split-tree is updated in each of the different cases, let us point out how the states (perfect, empty, mixed) of marker vertices are computed (or not) during Algorithm~\ref{alg:perfectPruning}:

\begin{remark} \label{rem: state-algo}
Algorithm~\ref{alg:perfectPruning} assigns a state to a marker vertex $q$ and updates the data-structure accordingly only if $q$ is perfect. From these recorded \emph{perfect-state} fields, the states of all marker vertices involved in the output of Algorithm \ref{alg:perfectPruning} can be deduced.
\end{remark}

It is not a problem to avoid explicitly computing the state of all the marker vertices. Indeed, computing them would affect our complexity. Moreover, notice that in every case (see Propositions~\ref{prop:unique-node-cases123}, \ref{prop:hybrid-degen-case4},~\ref{prop:unique-edge-case56} and~\ref{prop:fully-mixed-case-7}) the knowledge of the perfect marker vertices is enough to determine the state of every other marker vertex which will be affected in the successive steps of the updates. For example, in a hybrid node, the non-perfect marker vertices are by definition empty. Similarly if $ST(G)$ contains a fully-mixed subtree $T'$, then a marker vertex of a node of $T'$ is empty if and only if it is not perfect and not incident to a tree-edge of $T'$. If follows that, once Algorithm~\ref{alg:perfectPruning} has been performed, 
we can conclude that the state of every useful marker vertex has been determined.

\subsection{Node-split and cleaning}

The node-split operation (see Definition~\ref{def:node-split}) is required when cases 4 and 7 of Theorem~\ref{th:cases} hold. Case 4, existence of a degenerate hybrid node $u$ (Proposition \ref{prop:hybrid-degen-case4}), only requires the node-split of $u$ according to $(P^*(u),V(u)\setminus P^*(u))$. Case 7 potentially implies a large number of node-splits since, before the contraction step, the cleaning of $ST(G)$ is necessary (Proposition \ref{prop:fully-mixed-case-7}). Notice that degenerate nodes are the only nodes that are ever node-split.

To be as efficient as possible, to perform a node-split we won't create two new nodes as seemingly required by the definition.  Instead, we will reuse the node being split so that only one new node has to be created.  This is presented in Algorithm~\ref{alg:nSplit}.

\begin{algorithm}
\KwIn{A rooted GLT with a node $v$ such that $G(v)$ contains the split $(A,B)$ having frontiers $A'$ and $B'$.}
\KwOut{The rooted GLT with nodes $u$ and $u'$ resulting of the node-split of $v$ with respect to $(A,B)$.}

\BlankLine
replace the vertices in $A$ with a new marker vertex $q$ adjacent precisely to $B'$\;
call the result $u'$\;

\BlankLine
create a new node $u$ consisting of the vertices in $A$, plus one new marker vertex $r$ adjacent precisely to $A'$\;

\BlankLine
add an internal tree-edge between $u$ and $u'$ having extremities $q$ and $r$\;
\lIf{the root marker vertex of $v$ belongs to $A$}{make $u'$ a child of $u$\;}
\lElse{make $u$ a child of $u'$\;}

\KwRet{the resulting GLT with $u$, $u'$ and their child relation}\;
\caption{Node-split$(v,A,B)$} \label{alg:nSplit}
\end{algorithm}

 
\begin{lemma} \label{nSplitTime}
Algorithm~\ref{alg:nSplit} performs a node-split $(A,B)$ for a degenerate node in time $O(|A|)$.
\end{lemma}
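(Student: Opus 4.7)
The plan is to enumerate the constant‑size pieces of work and the pieces that are linear in $|A|$, checking in each case that nothing scales with $|B|$ or with the number of markers not in $A$. The core observation that makes $O(|A|)$ achievable is that $v$ is degenerate, so $u$ and $u'$ inherit $v$'s type (clique or star) and there is no need to instantiate or examine the adjacencies of $G(u)$ and $G(u')$: their labels are determined by the type and, for stars, the centre. In particular, $u'$ will be obtained by in‑place modification of $v$ (removing $A$ from its marker list and inserting $q$), so the markers of $B$ are never enumerated; only the markers in $A$ are actually touched.

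First I would account for the work on the markers themselves. Using a doubly‑linked list of markers with a direct pointer to each marker object, removing the $|A|$ markers of $A$ from $v$'s list and inserting them into the new node $u$'s list costs $O(|A|)$. Creating the two new marker vertices $q$ (in $u'$) and $r$ (in $u$), wiring their mutual opposite pointers, and allocating the new tree‑edge $uu'$ is $O(1)$. The types of $u$ and $u'$ are copied from $v$ in $O(1)$; if $v$ is a star, the centre of $u'$ is either $v$'s old centre (if it lay in $B$) or the new marker $q$ (if the old centre lay in $A$, which is then the only case where $q$ plays the role of the centre), and symmetrically for $u$---a constant‑time determination straight from the characterization of splits of a star.

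Next I would handle the parent–child bookkeeping. The node pointers of the (at most two) root markers of $u$ and $u'$ are updated in $O(1)$: in Case~1 ($v$'s root marker lies in $A$), $u$ inherits $v$'s root marker and $u$ takes $v$'s former parent, while $u'$'s root marker becomes $q$ and $u'$ becomes a child of $u$; in Case~2 ($v$'s root marker lies in $B$), $u'$ keeps $v$'s root marker and $v$'s former parent pointer unchanged, and $u$ becomes a child of $u'$ with root marker $r$. The only remaining pointer updates are the parent pointers of those children of $v$ whose opposite marker lies in $A$; there are at most $|A|$ such children, so this is $O(|A|)$ in total. All other children of $v$ (whose opposites lie in $B$) still have parent pointer to the object that became $u'$ and require no update.

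Finally, I would note what is \emph{not} needed: by Remark~\ref{rem:data}, the node‑split routine does not invoke any union–find primitive, because node‑splits are applied only to degenerate nodes and only the contraction phase requires union operations. In particular, neither $u$ nor $u'$ is prime, so no children‑set of either needs to be initialised or modified here. Summing the contributions gives total cost $O(|A|)$. The main obstacle in the proof is convincing oneself that no step is hiding an $O(|B|)$ scan; this reduces to the two observations above—$u'$ is built by editing $v$ in place, and the degenerate label structure allows all adjacencies to be stored implicitly—so no enumeration of $B$ or of the edges of $G(v)$ ever occurs.
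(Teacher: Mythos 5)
Your proof is correct and follows essentially the same data-structure inspection as the paper's own argument: both isolate the $O(|A|)$ work (moving the markers of $A$ into the new node $u$, updating the pointers tied to those markers and the parent pointers of the corresponding children) and dispose of the remaining bookkeeping (root markers, type, centre pointer, the new tree-edge, opposite pointers) in $O(1)$. You are somewhat more explicit than the paper about why reusing $v$'s object as $u'$ means no scan of $B$ ever occurs, and about the children-side parent-pointer updates, but the reasoning and the bound are the same.
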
  
\begin{proof}
The correctness follows from the definitions. The time complexity is obvious as well from a simple examination of our data-structure. Recall that every child of a degenerate node maintains a parent pointer (unlike for prime nodes which use the union-find). Depending on whether the root marker vertex of $v$ belongs to $A$ or $B$, $u'$ becomes a child of $u$ or vice-versa. The root marker vertices (and their respective node pointers) of the resulting nodes are updated accordingly. As these two nodes are degenerate, they need to have a list of their marker vertices: $u'$ inherits the list of $v$ in which $A$ has been removed plus the new marker vertex $q$, while $u$'s list is created and contains $A\cup\{r\}$. Meanwhile, the marker vertices of $A$ update their node pointer. This work requires $O(|A|)$ time. Other information such as type of the node, number of children, pointer to the centre (if it is a star) or opposite pointer, perfect-states, is easily updated in constant time. 
\end{proof}

Cleaning was introduced in Definition~\ref{def:cleaning} along with the notation $c\ell(ST(G))$.  Notice that it amounts to 
repeated application of the node-split operation. The cleaning step will proceed according to Algorithm~\ref{alg:cleaning}.
To determine if a degenerate node needs to be node-split according to $(E^*(u),V(u) \setminus E^*(u))$, Algorithm~\ref{alg:cleaning} takes advantage of the equivalence: $E^*(u) = V(u) \setminus (P^*(u) \cup M(u))$, where $P^*(u)$ and $M(u)$ are deduced directly from perfect-states fields and the fully-mixed subtree structure. As before, the reason is one of efficiency; we want to avoid ``touching'' empty subtrees.  To this end, it is important that the node-split is performed only ``touching'' perfect and mixed marker vertices of $V(u) \setminus E^*(u)$. This is the reason for defining Algorithm~\ref{alg:nSplit} as we did. The rest of Algorithm~\ref{alg:cleaning} is a direct implementation of the definition.


\begin{algorithm}[h]
\KwIn{The rooted split-tree $ST(G)=(T,\mathcal{F})$ marked with respect to $N(x)$ and the fully-mixed subtree $T'$ of $T$.}
\KwOut{The rooted GLT $cl(ST(G))$ resulting from the cleaning of $(T,\mathcal{F})$ and the fully-mixed subtree $T_c$ of $cl(ST(G))$.}

\BlankLine
Assume all nodes are marked \emph{unvisited}\;
$T_c \gets T'$\;

\BlankLine
\ForEach{unvisited degenerate node $v$ in $T_c$ } {
	\uIf{$|P^*(v)| > 1$} {
		node-split $v$ according to the split $(P^*(v),V(v) \setminus P^*(v))$, using Algorithm~\ref{alg:nSplit}\;
		mark \emph{visited} the two nodes that result from the node-split\;
		keep in $T_c$ only the node containing the marker vertices in $V(v) \setminus P^*(v)$\;
		}
	}

\BlankLine
reset all the marks in $T_c$\;

\BlankLine
\ForEach{unvisited degenerate node $v$ in $T_c$} {
	\uIf{$|V(v)| \setminus (|P^*(v)| + |M(v)|) > 1$} {
		node-split $u$ according to the split $(V(v) \setminus E^*(v),E^*(v))$, using Algorithm~\ref{alg:nSplit}\;
		mark \emph{visited} the two nodes that result from the node-split\;
		keep in $T_c$ only the node containing the marker vertices in $V(v) \setminus E^*(v)$\;
		}
	}

\KwRet{the updated GLT, $c\ell(ST(G))$ together with its fully-mixed subtree $T_c$}\;
\caption{Cleaning$((T,\mathcal{F}),T')$} \label{alg:cleaning}
\end{algorithm}

\begin{lemma} \label{cleaningTime}
Given the split-tree $ST(G)=(T,\mathcal{F})$ marked with respect to $N(x)$ and the fully-mixed subtree $T'$ of $T$, Algorithm~\ref{alg:cleaning} computes $c\ell(ST(G))$ together with its fully-mixed subtree, and it runs in time $O((1+\mbox{\emph{\texttt{find-cost}}}) \cdot |T(N(x))|)$.
\end{lemma}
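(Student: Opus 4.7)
The plan is to establish correctness first, then bound the running time by a careful accounting of which marker vertices are ``touched.''

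For correctness, the algorithm is a direct implementation of Definition \ref{def:cleaning}. By Remark \ref{rk:cleaning}, the split $(P^*(v),V(v)\setminus P^*(v))$ (respectively $(E^*(v),V(v)\setminus E^*(v))$) is a valid split of $G(v)$ precisely when $|P^*(v)|>1$ (respectively $|E^*(v)|>1$); the latter is equivalent to $|V(v)|-(|P^*(v)|+|M(v)|)>1$, which is the test used by the algorithm. Remark \ref{rem:node-split-commutative} ensures that the two splits at a given node, and the splits performed at different nodes, may be done in any order with the same final GLT. Each individual node-split is carried out correctly by Algorithm \ref{alg:nSplit} (Lemma \ref{nSplitTime}). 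To check that $T_c$ is maintained as the fully-mixed subtree of $c\ell(ST(G))$, note that whenever we split $v$ using $(P^*(v),V(v)\setminus P^*(v))$, the new marker vertex attached to $P^*(v)$ is perfect (its leaves are all in $S$) while its opposite in the retained node is mixed (since $M(v)\neq\emptyset$ for any node of $T'$); so the new tree-edge is $PM$ and is correctly excluded from $T_c$. The same argument applies to the second loop, where the new edge is $EM$ or involves a star-centre treated as in $E^*$. Each originally fully-mixed tree-edge is preserved by the splits (we only alter labels of degenerate nodes), so $T_c$ indeed ends as the fully-mixed subtree of $c\ell(ST(G))$.

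For the time bound, the cost decomposes into (a) traversing $T_c$ to locate degenerate nodes, (b) evaluating the two tests at each node, and (c) running Algorithm \ref{alg:nSplit}. Initially $T_c = T'\subseteq T(N(x))$, and each node-split adds exactly one node to $T_c$ which is immediately marked \emph{visited}; so the total number of nodes ever in $T_c$ is $O(|T(N(x))|)$. By Remark \ref{rem:traversal}, traversing these costs $O((1+\texttt{find-cost})\cdot |T(N(x))|)$. At each degenerate node $v$, the tests require reading only the stored value $|V(v)|$ and counting the perfect-state and mixed marker vertices, both of which can be collected by iterating over $P^*(v)\cup M(v)$ in $O(|P^*(v)|+|M(v)|)$ time (without touching any empty marker vertex).

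The crucial piece, and what I view as the main obstacle, is bounding the node-split cost so it depends only on $|T(N(x))|$ rather than the full sizes of the labels. This is exactly why Algorithm \ref{alg:nSplit} is written to touch only the ``$A$''-side. In the first loop we call it with $A=P^*(v)$, costing $O(|P^*(v)|)$; in the second loop with $A=V(v)\setminus E^*(v)$, whose size is $O(|P^*(v)|+|M(v)|+1)$ (accounting for the marker created in the first loop, which contributes at most $1$). Every marker vertex counted in $P^*(v)\cup M(v)$ for some $v\in T'$ is incident to a tree-edge whose other extremity lies in $T(N(x))$ — for perfect marker vertices by Lemma \ref{lem:hereditary} applied in conjunction with Lemma \ref{emptyPruning}, and for mixed marker vertices by definition of the fully-mixed subtree. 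Thus $\sum_{v\in T'}(|P^*(v)|+|M(v)|) = O(|T(N(x))|)$.

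Combining these three contributions and observing that the ``$+1$'' from each second-loop call totals $O(|T'|)=O(|T(N(x))|)$, we obtain an overall running time of $O((1+\texttt{find-cost})\cdot |T(N(x))|)$, as claimed.
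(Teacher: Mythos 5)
Your argument is correct and follows essentially the same route as the paper's proof: correctness via Remark~\ref{rk:cleaning} and Remark~\ref{rem:node-split-commutative}, the traversal cost via Remark~\ref{rem:traversal}, and the node-split cost bounded through Lemma~\ref{nSplitTime} together with the observation that the relevant $A$-sets have size $O(|NE(u)|)$ and every $q\in NE(u)$ corresponds to an edge of $T(N(x))$. One minor slip: a node-split does not add a node to $T_c$ — Algorithm~\ref{alg:cleaning} explicitly keeps only the node retaining $V(v)\setminus P^*(v)$ (or $V(v)\setminus E^*(v)$), so $|T_c|$ does not grow; this does not affect your conclusion since $|T_c|\le |T'|\le |T(N(x))|$ throughout.
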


\begin{proof}
Correctness is clear as Algorithm~\ref{alg:cleaning} traverses $T'$ twice. In both traversals, a single test is performed at each node, and then if it succeeds, a node-split is applied.  
Recall that $(P^*(u),V(u) \setminus P^*(u))$, respectively $(E^*(u),V(u) \setminus E^*(u))$, is a split of $u$ if and only if $| P^*(u) |>1$, respectively $| E^*(u) |>1$, by Remark \ref{rk:cleaning}. The resulting subtree $T_c$ is the fully-mixed subtree of $cl(ST(G))$ by construction. 

The traversals require time $O((1+\texttt{find-cost}) \cdot |T(N(x))|)$. To perform the test required by Algorithm~\ref{alg:cleaning}, we first need to compute $P^*(u)$.  But this clearly can be done in time bounded by $O(|P(u)|)$, which is bounded by $O(|NE(u)|)$.  So by the argument already used in the proof of Lemma~\ref{lem:PruningCorrectness}, the total cost of computing the sets $P^*(u)$ is $O(|T(N(x))|)$.  Once $P^*(u)$ is computed, the test performed at each node can be carried out in constant time.  If the degenerate node $u$ is node-split during the first pass, then the set $P^*(u)$ plays the role of $A$ in the input to Algorithm~\ref{alg:nSplit}; if $u$ is split during the second pass, then the set $V(u) \setminus E^*(u)$ plays the role of $A$.  The size of both of these sets is bounded by $|NE(u)|$.  But every $q \in NE(u)$ has its corresponding edge in $T(N(x))$, by Lemma~\ref{emptyPruning} and the definition of $NE(u)$.  So the total cost of the node-splits performed during cleaning is $O(|T(N(x))|)$, by Lemma~\ref{nSplitTime}.  
\end{proof}

No \texttt{union()} operation has so far been needed; it has been sufficient to employ \texttt{find()} operations while traversing various trees.  

\subsection{Node-joins and contraction}
\label{sec:contraction}

Contraction amounts to repeated application of the node-join, which requires the  \texttt{union()} operation.  In the same way that we reused one node for the node-split, we will want to reuse one node for the node-join (we arbitrarily choose to reuse the parent node).  Algorithm~\ref{alg:nJoin} provides the details of the implementation of the node-join between a node $u'$ and its parent $u$. Notice that in the case where $u'$ is a star and its root marker vertex $q'$ has degree one, a node-join is performed differently. Here we reuse the marker vertex $q$ of $u$ adjacent to $u'$ to play the role of the unique neighbour $t$ of $q'$. We do so for reasons of efficiency that will become clear later.
In other cases, the label-edges adjacent to the two marker vertices disappearing in the node-join are not reused.
\begin{algorithm}
\KwIn{A rooted GLT with two adjacent nodes $u$ and $u'$, where $u'$ is the child of $u$.}
\KwOut{The rooted GLT resulting from the node-join of $u$ and $u'$.}

\BlankLine
let $q \in V(u)$ and $q' \in V(u')$ be the extremities of the tree-edge between $u$ and $u'$\;

\BlankLine
\uIf{$u'$ is a star node and its root marker vertex $q'$ has degree one}{
\lnl{line:type2}
	\ForEach{non-neighbour $t'$ of $q'$ in $G(u')$}{
		move $t'$ to $G(u)$ and make it adjacent to $q$\;
		let $v$ be the child of $u'$ containing the marker vertex opposite $t'$\;
		update $v$'s parent pointer to $u$\;
		}
	}

\BlankLine
\lnl{line:type1-3}
\uElse{
	move all the marker vertices of $G(u')$ except $q'$ to $G(u)$ and remove $q$\;
	add adjacencies in $G(u)$ between every neighbour of $q'$ and every neighbour of $q$\;
	\uIf{$u'$ is prime}{
		let $v$ be the representative of the children-set of $u'$\;
		update the parent pointer of $v$ to $u$\;
		}
	\lElse{
		update the parent pointer of every child $v$ of $u'$ to $u$\;
		}
	}
\smallskip	
\lnl{line:children-set}
create a single children-set containing the children of $u$ and $u'$ (by the way of a series of \texttt{initialize()} and \texttt{union()})\;

\KwRet{the resulting rooted GLT}\;
\caption{Node-join$(u, u')$} \label{alg:nJoin}
\end{algorithm}

The node resulting from the node-join of $u$ and $u'$ in Algorithm~\ref{alg:nJoin} may be neither prime nor degenerate.
In the data-structure of the resulting rooted GLT, this resulting node  is 
nevertheless marked as prime (standing for non-degenerate), as explained in Remark \ref{rem:data}, since it will finally be prime after 
insertion of the new vertex.
The children-set associated with this node is created at line \ref{line:children-set}, and may require several \texttt{initialize()} and \texttt{union()} operations when $u$ and/or $u'$ were degenerate and thus were not associated with a children-set.

\begin{lemma} \label{nJoinTime}
Let $u$ and $u'$ be two adjacent nodes of a GLT and let $q$ and $q'$ be the respective extremities of the tree-edge between $u$ and $u'$.
Algorithm~\ref{alg:nJoin} computes the GLT resulting from the node-join of $u$ and $u'$. It can be implemented to run in time  $\texttt{\emph{join-cost}}=$O(\emph{\texttt{\#new-label-edge}})$+\texttt{\emph{tree-update-cost}}$ where:
\begin{itemize}
\item \emph{\texttt{\#new-label-edge}} denotes the number of newly created label-edges
(i.e. at line \ref{line:type1-3}: the number of neighbours of $q$ multiplied by the number of neighbours of $q'$).
\item the \texttt{\emph{tree-update-cost}} amounts to
\begin{itemize}
\item $O(d \cdot (\texttt{union-cost}+\texttt{initialize-cost}))$ if $u$ is a degenerate node with $d$ children, plus
\item $O(d' \cdot (\texttt{union-cost}+\texttt{initialize-cost}))$ if $u'$ is a degenerate node with $d'$ children, plus
\item $O(\texttt{union-cost})$ to perform the union of children-sets of $u$ and $u'$.
\end{itemize}
\end{itemize}
\end{lemma}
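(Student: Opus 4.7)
The plan is to establish correctness of Algorithm~\ref{alg:nJoin} first and then bound its running time by separately accounting for label-edge creations, constant-time per-marker updates, and union-find activity.

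\textbf{Correctness.} In the general branch (line~\ref{line:type1-3}), the algorithm is a literal implementation of the node-join: it moves $V(u')\setminus\{q'\}$ into $V(u)$, removes $q$, installs all $|N(q)|\cdot|N(q')|$ possible label-edges between $N(q)$ and $N(q')$, and reroutes the children of $u'$ to $u$. For the star optimization at line~\ref{line:type2}, $u'$ is a star whose root marker $q'$ has degree one, so $N(q')$ reduces to the centre $c$ and the definition would produce a node in which $c$ is adjacent to $N(q)$ together with the other degree-one markers. The algorithm instead keeps $q$ and discards $c$ implicitly, identifying $q$ with $c$: each non-neighbour $t'$ of $q'$ is moved into $V(u)$ and made adjacent to $q$, which yields a labelled graph isomorphic to the one prescribed, but without installing the $|N(q)|$ label-edges from $N(q)$ to $c$.

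\textbf{Accounting for the main work.} First I would bound the super-linear cost of the general branch: creating the $|N(q)|\cdot|N(q')|$ new label-edges takes $O(\texttt{\#new-label-edge})$ operations. The rest of that branch moves each non-root marker of $u'$ into $V(u)$ and, in the degenerate sub-case, updates one parent pointer per child for a total of $O(d')$; in the prime sub-case only the unique set-representative's parent pointer is rerouted in constant time. The star branch at line~\ref{line:type2} loops over the $d'-1$ non-neighbours of $q'$ with $O(1)$ work per iteration (the opposite pointer of $t'$ yields the marker opposite $t'$, whose node pointer identifies $v$ whose parent pointer is then redirected), creating $d'-1$ new label-edges on the fly. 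All of this is $O(d')$, which fits inside the tree-update term for the degenerate node $u'$.

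\textbf{Children-set assembly at line~\ref{line:children-set}.} Next I would analyse the union-find work needed to rebuild the children-set of the merged node. If $u$ is prime, its $d$ children are already packaged in a single set, so nothing extra is required on $u$'s side before the final union; if $u$ is degenerate, one \texttt{initialize} plus one \texttt{union} per child suffices, at total cost $O(d\cdot(\texttt{initialize-cost}+\texttt{union-cost}))$. The symmetric analysis applies to $u'$. A single concluding \texttt{union} of the two resulting sets costs $O(\texttt{union-cost})$. Summing the three contributions yields the \texttt{tree-update-cost} displayed in the lemma, and adding the label-edge charge recovers the claimed bound on \texttt{join-cost}.

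\textbf{Main obstacle.} The delicate point will be the correctness verification of the star shortcut at line~\ref{line:type2}: one must check that all auxiliary fields (the opposite pointer of $q$, the parent pointer of the child originally opposite $c$, the centre pointer, the root-marker pointers and the node pointers of the moved $t'$'s) remain consistent after the identification of $q$ with $c$, and that each such adjustment stays within the $O(1)$-per-iteration budget so that no $|N(q)|$ factor ever enters the cost of this branch. Once this consistency is established, the bookkeeping for union-find and for the non-star case is a straightforward aggregation.
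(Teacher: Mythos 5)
Your proposal takes the same implementation-level approach as the paper and organizes the accounting identically: separate the label-edge creation, the per-marker pointer updates, and the union-find work, then sum. You also supply a more detailed correctness argument than the paper, which dispatches it with ``follows from the definitions,'' and your treatment of the star shortcut is sound.

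There is one omission worth flagging, which the paper's own proof handles explicitly but you do not. When a degenerate node $u'$ participates in the general branch at line~\ref{line:type1-3}, it does not yet possess an adjacency-list representation of $G(u')$ (degenerate nodes store only a marker-vertex list and a type); before its markers can be merged into the adjacency-list representation of the resulting prime-typed node, the label-edges internal to $G(u')$ must be materialized. For a clique with $d'$ children that is $\Theta(d'^2)$ edge creations, which is strictly more than the $O(d')$ you assign to ``moves each non-root marker of $u'$ into $V(u)$.'' The same applies to a degenerate $u$. The paper's proof calls this out (``Concerning the adjacency-lists: we first create those not already present at degenerate nodes''), and the charge is absorbed into \texttt{\#new-label-edge} once that quantity is read, as intended, as \emph{all} newly created label-edges rather than only the $|N(q)|\cdot|N(q')|$ cross-edges. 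Your bound is therefore still correct under that reading, but your bookkeeping silently under-counts the work at this step; you should state that the internal label-edges of any degenerate participant are created here and folded into \texttt{\#new-label-edge}, otherwise the $O(d')$ claim for the ``move'' step is wrong for cliques.
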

\begin{proof}
The correctness easily follows from the definitions.
%
The resulting node is assigned the prime type and therefore is associated with a children-set, 	and its graph label has to be represented by an adjacency list. Concerning the series of \texttt{union()} and \texttt{initialize()} requests to build the children-set: if a degree $d$ degenerate node is involved in a node-join, then the cost to create a set containing its children amounts to the cost of $d$ \texttt{initialize()} and $d$ \texttt{union()} requests. Concerning the adjacency-lists: we first create those not already present at degenerate nodes.  Then the existing ones can be combined to create one for the result of the node-join.  This can be done in the obvious way. 
\end{proof}

Recall that node-joins on a given set of tree-edges can be performed in any order (Remark~\ref{rem:node-join-commutative}).
But to ease the amortized time complexity (developed in Section~\ref{sec:runningTime}), during the contraction step, different types of node-joins are performed before others. This is reflected in Algorithm~\ref{alg:contraction}, which separates the node-joins into three phases, defined as \emph{phase 1}, \emph{phase 2}, and \emph{phase 3} node-joins,
and dealing with different types of node-joins.

\begin{algorithm}[h]
\KwIn{The rooted GLT $(T,\mathcal{F})=c\ell(ST(G))$ and the fully-mixed subtree $T'$ of $T$} 
\KwOut{The rooted GLT resulting from the contraction of $T'$ into a single node $u$ to which the new leaf $x$ has to be attached, with $x$'s opposite made adjacent in $G(u)$ precisely to $P(u)$.}

\BlankLine
\tcp{Phase 1 node-joins}
\ForEach{star node $u$ in $T'$ whose root marker vertex is its centre}{
	perform node-join$(u,u')$ (Algorithm~\ref{alg:nJoin}) for every non-leaf child $u'$ of $u$ in $T'$\;
	 update $T'$ accordingly\;
	}

\BlankLine
\tcp{Phase 2 node-joins}
\ForEach{node $u'$ in $T'$ whose root marker vertex $r$ has degree 1}{
	\lIf{the parent node $u$ of $u'$ is not a leaf and is in $T'$}{perform node-join$(u,u')$ (Algorithm~\ref{alg:nJoin})\;}
		 update $T'$ accordingly\;
	}

\BlankLine
\tcp{Phase 3 node-joins}
recursively perform node-joins to contract $T'$ into a single node $u$, using Algorithm~\ref{alg:nJoin}\;

\BlankLine
\tcp{New vertex insertion}
add a marker vertex $q$ to $u$, adjacent precisely to $P(u)$, then make $x$ opposite~$q$\;
let $u+x$ be the resulting node and mark $q$ as the last marker vertex of $\sigma[G(u+x)]$\;
$x$'s parent is $u+x$\;
\KwRet{the resulting rooted GLT}\;
\caption{Contraction$((T,\mathcal{F}),T')$}\label{alg:contraction}
\end{algorithm}

\begin{lemma} \label{contractionTime}
If $ST(G)$ contains a fully-mixed subtree, given $c\ell(ST(G))=(T,\mathcal{F})$ and its fully-mixed subtree $T'$, Algorithm~\ref{alg:contraction} computes $ST(G+x)$. It can be implemented to run in time:
$$O(|T(N(x))| \cdot (1+\emph{\texttt{find-cost}}) + k \cdot \emph{\texttt{join-cost}} + \emph{\texttt{intialize-cost}}+\emph{\texttt{union-cost}})$$
where $k$ is the number of fully-mixed tree-edges of $T'$.
\end{lemma}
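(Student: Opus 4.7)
The plan is to establish correctness of Algorithm~\ref{alg:contraction} using the earlier structural results, and then bound its cost by analysing the traversals and the node-join operations separately.

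For correctness, I would first argue that the three phases together perform a node-join on every tree-edge of $T'$, thus contracting $T'$ into a single node $u$. Phases~1 and~2 target specific types of tree-edges (those incident to a star whose root marker is the centre, and those whose child-side extremity is the degree-1 root marker of a node), while phase~3 sweeps up any remaining tree-edges of $T'$. By Remark~\ref{rem:node-join-commutative}, the order in which these node-joins are performed does not affect the resulting GLT; the three-phase split is only a bookkeeping device that will be exploited by the amortized analysis of Section~\ref{sec:runningTime}. Finally, attaching a new marker vertex $q$ adjacent to $P(u)$ with $x$ opposite $q$ is precisely the completion prescribed by Proposition~\ref{prop:fully-mixed-case-7}, so the final GLT is $ST(G+x)$.

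For the running time, I would first observe that $|T'| = O(|T(N(x))|)$. Indeed, $T'$ is a subtree of $c\ell(ST(G))$, and by Lemma~\ref{cleaningTime} the cleaning step replaces each node of the original fully-mixed subtree by at most a constant number of nodes, while each node of that original fully-mixed subtree lives in $T(N(x))$ by Lemma~\ref{emptyPruning}. Each of the three \emph{foreach} loops traverses $T'$ and, at each node, performs a constant-time test on its root marker vertex (type, degree, parent membership in $T'$); traversing parent pointers of prime nodes via the union-find structure gives, by Remark~\ref{rem:traversal}, an overall traversal cost of $O(|T(N(x))|\cdot(1+\texttt{find-cost}))$.

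Next, I would bound the cost of the node-joins themselves. Every node-join removes exactly one tree-edge of $T'$; since $T'$ has exactly $k$ tree-edges and the algorithm contracts all of $T'$ into a single node, the total number of calls to Algorithm~\ref{alg:nJoin} across the three phases is exactly $k$. Each such call costs \texttt{join-cost} by Lemma~\ref{nJoinTime}, yielding $O(k\cdot\texttt{join-cost})$. The final insertion step adds a marker vertex $q$ adjacent precisely to $P(u)$, which takes $O(|P(u)|)=O(|T(N(x))|)$ time, then attaches $x$ as a new leaf; since $u+x$ is stored as prime, this requires one \texttt{initialize} to create $x$'s singleton set and one \texttt{union} to merge it into the children-set of $u+x$, accounting for the additive $\texttt{initialize-cost}+\texttt{union-cost}$ term. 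Summing the three contributions gives the claimed bound.

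The main obstacle is verifying that the cost of identifying, inside $T'$, which nodes satisfy the phase-1 and phase-2 conditions (centre-rooted stars, degree-1 root markers, parent still inside $T'$) can indeed be absorbed in the $O(|T(N(x))|\cdot(1+\texttt{find-cost}))$ traversal bound, and that updating the subtree $T'$ after each node-join (so that subsequent phases operate on the correct remaining subtree) does not introduce any further factor beyond what is already charged per tree-edge to \texttt{join-cost}.
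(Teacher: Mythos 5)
Your proof is correct and follows essentially the same route as the paper: correctness via Remark~\ref{rem:node-join-commutative} and Proposition~\ref{prop:fully-mixed-case-7}, traversal cost bounded using Remark~\ref{rem:traversal}, exactly $k$ node-joins each costing \texttt{join-cost}, and the final insertion costing $O(|T(N(x))|)$ plus one \texttt{initialize}/\texttt{union} pair. Your explicit note that $|T'| = O(|T(N(x))|)$ (via Lemmas~\ref{cleaningTime} and~\ref{emptyPruning}) is a slight elaboration the paper leaves implicit, and the "obstacle" you flag at the end is handled in the paper exactly as you anticipate (a single pass locating the phase participants, with $T'$ bookkeeping absorbed into the per-tree-edge join work), so it is a caveat rather than a gap.
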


\begin{proof}
From Remark~\ref{rem:node-join-commutative}, the GLT resulting from the node-joins between nodes incident to the tree-edges of $T'$ is independent of the order in which they are applied. If $u$ is the node resulting from the 
contraction of $T'$ (together with the new vertex insertion), 
then observe that $\cup_{q\in P(u)} A(q)=N(x)$. If follows that the accessibility graph of the resulting GLT is $G+x$. Now from
Proposition~\ref{prop:fully-mixed-case-7}, $u$ is prime thereby showing that Algorithm~\ref{alg:contraction} computes $ST(G+x)$ since its result is reduced.


The nodes participating in each phase can be located by performing a single pass over the tree.  This traversal can be performed in time $O((1+\texttt{find-cost}) \cdot |T(N(x))|)$. Once the nodes participating in each phase have been located, the node-joins can proceed.  The cost of each one is described by Lemma~\ref{nJoinTime}.  The outcome of these node-joins is a single node $u$.  At this point, all of $u$'s children are organized into a children-set.  

The new vertex $x$ is made a neighbour of $u$, which in our situation means it is made a child of $u$.  This requires $x$ to be added to the set that represents $u$'s children.  To do so we need one \texttt{initialization()} operation and one  \texttt{union()} operation.  The opposite of $x$ is a new marker vertex $q$ added to $G(u)$ and made adjacent precisely to $P(u)$.  Recall that our split-tree algorithm inserts vertices according to an LBFS ordering, say $\sigma$.  Notice that $q$ clearly becomes the last vertex in $\sigma[G(u)+q]$.  
So given the data-structure assumed earlier, the cost of adding $q$ is $O(|P(u)|)$.  But every marker vertex in $NE(u)$ has its corresponding edge in $T(N(x))$, by Lemma~\ref{emptyPruning} and the definition of $NE(u)$.  Therefore the total cost of adding $q$ is $O(|T(N(x))|)$.  
\end{proof}

At this point of the paper, the reader can completely compute the split decomposition of a graph with our algorithm.
It is the number of node-joins involved in successive uses of Lemma \ref{contractionTime}
for contraction that prevents us from concluding 
its running time.
We have already seen one example where the number of node-joins required by contraction is linear in the size of the split-tree (see Figure~\ref{fig:badExample}).  But later we emphasized that this was worst-case behaviour.  We promised that our LBFS ordering would make it possible to amortize the cost of contraction.  We finally prove this in the next section.  

%

\section{An ammortized running time analysis} \label{sec:runningTime}

This section completes the proof for the running time of our algorithm, described completely in  Section \ref{sec:algorithm}.
Our main goal is to amortize the cost of contraction (Lemma \ref{contractionTime}), 
involving the number of updates and requests to the union-find data-structure, and the number of created label-edges and vertices involved in the adjacency lists of label graphs. 
\smallskip

So far, the number of \texttt{find()} operations required by our algorithm has always been bounded by $O(T(|N(x)|)$. This will directly imply a suitable bound (by Lemma \ref{lem:numSubtree} in Subsection \ref{sub:final}).

The \texttt{initialization()} routine is always performed just prior to a \texttt{union()} during a node-join operation and it involves a child of a degenerate node or the new vertex to be inserted (Algorithm \ref{alg:nJoin} line \ref{line:children-set}). It follows that the number of  \texttt{initialization()} operations is bounded by the number of non-root marker vertices belonging to a degenerate node that appear at some step of the algorithm (when $x$ is inserted or when a node-split is performed).
Bounding the number of such vertices is also required since they participate in the data-structure (see Subsection \ref{sub:degenerate-marker}).

The  \texttt{union()} operations are performed during a node-join (Algorithm \ref{alg:nJoin} line \ref{line:children-set})
once together with each \texttt{initialization()} operation, and once to finalize the node-join.
The total number for the first part is bounded the same way as \texttt{initialization()} operations, and the total number for the second part is bounded by the total number of node-joins, which is bounded by the total number of created label-edges, since each node-join implies the creation of a label-edge
(Lemma \ref{nJoinTime}).

Therefore, bounding the number of created label-edges is the key to the complexity analysis and is the difficult part of our complexity argument.
%
To count and bound the number of label-edges created during the whole algorithm, we use a charging argument in which the role of LBFS is critical (see Subsection \ref{sub:charging}). 

\subsection{Bounding the number of degenerate marker vertices}
\label{sub:degenerate-marker}

We prove that the number of non-root marker vertices  that are created in some degenerate node during the process of building $ST(G)$ is linearly bounded by the number of vertices of the input graph $G$. The idea  is to show that at each vertex insertion, only a constant number of such marker vertices are generated by our incremental algorithm.

\begin{lemma} \label{lem:degenerate-marker}
Let $G$ be a connected graph. The insertion of vertex $x$ in the process of building $ST(G+x)$ creates at most two new non-root marker vertices belonging to a degenerate node in the rooted GLT data-structure. 
\end{lemma}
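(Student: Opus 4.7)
The plan is a case analysis following the seven cases of Theorem~\ref{th:cases}, since the update performed by Algorithm~\ref{alg:vertexinsertion} at $x$'s insertion depends on which case applies, via Propositions~\ref{prop:unique-node-cases123}, \ref{prop:hybrid-degen-case4}, \ref{prop:unique-edge-case56}, or~\ref{prop:fully-mixed-case-7}. For each case I would enumerate the newly created marker vertices, then decide for each whether it lies in a degenerate (clique or star) node of the final rooted GLT, and whether it is the root marker of that node (the extremity of the tree-edge going toward the node's parent in the rooted tree). Only markers that are \emph{both} in a degenerate node and non-root are counted. I would use throughout the fact that node-joins do not create markers, that new markers arise only from node-splits (Algorithm~\ref{alg:nSplit}), from subdividing a tree-edge with a new ternary node, and from the direct addition of a marker opposite $x$, and that for a node-split $(A,B)$ of $w$ the newly created marker on the side containing $w$'s original root marker is non-root while the one on the opposite side is that side's new root marker.

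I would dispatch the easy cases first. For cases 1 and 2 (Proposition~\ref{prop:unique-node-cases123}), exactly one marker $q$ is created, opposite the new child leaf $x$ in a degenerate node: 1 non-root degenerate marker. For case 3, the same happens but in a prime node: 0. For cases 5 and 6 (Proposition~\ref{prop:unique-edge-case56}), a new ternary degenerate node $v$ is created by subdividing an edge, with $x$ as a new child of $v$; of its three markers exactly one, the one toward $v$'s parent, is root, contributing 2. The interesting case is case 7 (Proposition~\ref{prop:fully-mixed-case-7}), where cleaning may perform many node-splits. The key observation would be that for any degenerate $w$ inside the fully-mixed subtree, except the unique node $w^*$ whose parent in the rooted tree lies outside the subtree, $w$'s root marker lies in $M(w)$: indeed the path from $w$ to the tree's root begins with a fully-mixed tree-edge. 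Hence for $w \neq w^*$ both splits $(P^*(w),V(w)\setminus P^*(w))$ and $(E^*(w),V(w)\setminus E^*(w))$ keep $w$'s root marker on the residual side, so the split-off $P^*/E^*$ node is a child and its new marker is a root marker, contributing 0. Only at $w^*$ can a cleaning split generate a non-root new marker, and at most once, since $w^*$'s root marker lies in exactly one of $P^*(w^*)$ or $E^*(w^*)$. After contraction the residuals are absorbed into the prime node $u$ (not counted) and the marker added for $x$ sits inside the prime $u$ (also not counted), giving at most 1.

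The remaining case, case 4 (Proposition~\ref{prop:hybrid-degen-case4}), is the delicate one. The node-split of the hybrid degenerate $u$ contributes exactly one non-root degenerate marker by the same ``root-on-residual'' argument as in case 7, and the subsequent subdivision by a new ternary degenerate node $v$ would naively contribute two more, giving three. To match the stated bound of two, I would show that the combined step can be implemented so that one of $v$'s three markers coincides with the marker created by step~1 on the side of the split that ends up adjacent to $v$; equivalently, one performs the node-split directly into the ternary configuration rather than as two separate operations, reusing one of the newly-created edge extremities. Making this implementation explicit and verifying that the resulting GLT is still reduced and accessibility-preserving is the main obstacle of the proof.
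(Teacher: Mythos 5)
Your case-by-case structure matches the paper's proof, and your counts for cases 1 (one), 2 (one), 3 (zero), 5 (two), 6 (two), and 7 (at most one) coincide with the paper's, including the key case-7 observation that only the node of the fully-mixed subtree whose parent lies outside the subtree can contribute, and at most once, because every other node's root marker is mixed and so lies in neither $P^*$ nor $E^*$. You also correctly observe that case 4 naively yields three, not two: one non-root extremity of the new split tree-edge (the one landing in whichever half inherits the old root marker of the hybrid node $u$), plus the two non-root markers of the new ternary node. In fact this is exactly what the paper's own proof computes -- its first bullet credits the ternary node with ``exactly two new non-root marker vertices'' and its second bullet then adds ``one extra non-root marker'' from the node-split in case 4. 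The stated bound of two is therefore not established by the paper's argument either; the correct constant is three. This does not hurt the subsequent analysis, since Subsection~6.2 already draws the weaker consequence that ``every vertex of $G$ is used at most three times'' as a degenerate stamp, and the totals remain $O(n)$.

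Where your proposal has a genuine gap is the proposed ``merged-step'' fix, which cannot work: the count of three in case 4 is structural, not an artifact of performing the node-split and the subdivision as two separate operations. In $ST(G+x)$ under case 4 there are exactly five marker vertices absent from $ST(G)$: the two extremities of the new split tree-edge, and the three markers of the ternary node (whose tree-edges go to the two split halves and to the new leaf $x$). Each is the extremity of a distinct tree-edge, so all five are pairwise distinct, cannot be identified with any marker already present in $ST(G)$, and $x$ being a new leaf forces its opposite to be new. Of the five, exactly two are root markers -- the ternary node's root marker and the root marker of whichever split half is the ternary node's child -- and the remaining three are non-root, all in degenerate nodes (the split halves are degenerate pieces of the degenerate $u$, and a ternary node's connected label on three vertices is necessarily a clique or a star). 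So three is a lower bound for case 4 under any implementation; the right fix is to correct the lemma's constant to three, not to look for a cleverer construction.
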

\begin{proof}
Consider forming the split-tree $ST(G+x)$, where $x$ is some new vertex not already in $G$.  We consider the changes required of $ST(G)$ to form $ST(G+x)$, as described by Theorem~\ref{th:cases} and the subsequent propositions.  Notice that the set of marker vertices belonging to some degenerate node is modified in three different ways:

\begin{itemize}
\item \textit{the leaf $x$ is attached to a degenerate node.} This occurs when cases 1, 2, 4, 5 and 6 of Theorem~\ref{th:cases} apply. Two subcases are to be considered. If the degenerate node $u$ neighbouring leaf $x$ has degree $3$ (that is case 4, 5 or 6 holds and $u$ is a new node), then exactly two new non-root marker vertices have been created. It also follows that the degree in $G(u)$ of these two marker vertices is at most two. Otherwise (case 1 or 2), the only new non-root marker vertex $q\in V(u)$ is the opposite of $x$. 

\item  \textit{a node-split is performed on a degenerate node.} This occurs when cases 4 or 7 (during cleaning) of Theorem~\ref{th:cases} applies. Observe that every split creates exactly two new marker vertices, one of which is the root of its node. In case 4, only one node-split is performed, thereby creating one extra non-root marker vertex in a degenerate node. 

So let us consider the node-splits performed during the cleaning step when case 7 holds. Each degenerate node $u$ of the fully-mixed subtree is involved in at most two node-splits (see Figure \ref{fig:case4cleaning}). Among the two nodes resulting from each node-split, one will eventually be node-joined to form a prime node and the other remains degenerate in $ST(G+x)$.  Let us call $v$ such a created degenerate node.
All marker vertices inherited by $v$ through the node-split are reused, and remain non-root marker vertices if they were non-root marker vertices in $u$. Hence the only case where a non-root marker vertex is created in $v$ is when $v$ inherits the root marker of $u$ and thus a non-root marker vertex is created as the extremity of the new tree-edge resulting from the split.
Of course, this case can happen at most once for any degenerate node $u$ affected by a node-split. Moreover, this can only happen at the node at the root of the fully-mixed subtree. Thus at most one non-root degenerate marker is created during the series of node-split required by $x$'s insertion.



\item \textit{a node-join is performed and involves a degenerate node.} This only occurs during the contraction step while a prime node is being formed. In this case, marker vertices of a degenerate node are lost. The invariant trivially holds.
\end{itemize}
\end{proof}

From the previous lemma, we can conclude the following:

\begin{lemma} \label{initializationCost}
The total cost of \emph{\texttt{initialization()}} operations to the construction of $ST(G)$ is $O(n)$, where $n$ is the number of vertices in $G$.
\end{lemma}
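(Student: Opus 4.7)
My plan is to combine the just-proved Lemma~\ref{lem:degenerate-marker} with the discussion that precedes this subsection, which already reduces the cost of \texttt{initialize()} calls to counting non-root marker vertices belonging to degenerate nodes.

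First, I would recall that, as noted at the start of Section~\ref{sec:runningTime}, every \texttt{initialize()} invoked by the algorithm is performed either (i) at line~\ref{line:children-set} of Algorithm~\ref{alg:nJoin}, where it is executed once per child of a degenerate node participating in a node-join, or (ii) when the new vertex is inserted. In case~(i), each such \texttt{initialize()} can be charged to the non-root marker vertex of the degenerate node whose opposite is that child; in case~(ii), each insertion produces a single extra \texttt{initialize()}. Therefore the total number of \texttt{initialize()} calls over the whole construction of $ST(G)$ is at most $n$ (for the insertions) plus the total number of non-root marker vertices of degenerate nodes that are ever created during the algorithm.

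Next, I would apply Lemma~\ref{lem:degenerate-marker}: each of the $n$ vertex insertions creates at most two new non-root marker vertices belonging to a degenerate node. Summing over insertions yields at most $2n$ such marker vertices in total, and hence at most $3n$ \texttt{initialize()} calls altogether.

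Finally, since each \texttt{initialize()} runs in $O(1)$ time (as recalled in the union-find discussion at the start of Section~\ref{sec:algorithm}), the total cost is $O(n)$, as claimed. There is no real obstacle here; the work was entirely absorbed into Lemma~\ref{lem:degenerate-marker} and into the accounting of where \texttt{initialize()} can be triggered, so this lemma is simply the bookkeeping corollary that packages those two facts.
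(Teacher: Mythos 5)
Your proposal is correct and follows essentially the same route as the paper: identify the two sources of \texttt{initialize()} calls (node-joins involving a degenerate node, and inserting the new vertex into a prime node), observe that each call of the first kind corresponds to a non-root marker vertex of a degenerate node, and then invoke Lemma~\ref{lem:degenerate-marker} to bound those by $O(n)$. The only minor imprecision is the phrase ``whose opposite is that child'' --- the opposite of the non-root marker vertex is the child's root marker vertex (or the leaf itself), not the child node per se --- but the intended one-to-one correspondence between such marker vertices and \texttt{initialize()} calls is the same as in the paper and the accounting goes through unchanged.
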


\begin{proof}
Each \texttt{initialization()} operation takes constant time.  They are only employed prior to a node-join involving a degenerate node (see Lemma~\ref{nJoinTime}), and when the vertex $x$ to be inserted is made a neighbour of a newly formed prime node (see Lemma~\ref{contractionTime}).  Of the latter, there can be at most $O(n)$.  The \texttt{initialization()} operations of the former are dealt with below. Every such \texttt{initialization()} operation corresponds to a child of a degenerate node, or equivalently to a non-root marker vertex of a degenerate node. Thus by Lemma~\ref{lem:degenerate-marker}, building $ST(G)$ requires $O(n)$ calls to \texttt{initialization()}.
\end{proof}

\subsection{Bounding the number of label-edges}
\label{sub:charging}

We shall first recall that, in our data-structure, degenerate nodes do not store any label-edge. label-edges belong to prime nodes, which are only formed by contraction.  So, the label-edges in the resulting prime node either existed previously or were created by a node-join during contraction (Lemma \ref{nJoinTime}). 
Recall also that label-edges adjacent to marker vertices that disappear during a node-join are lost since these are not reused, but of course they count in the total number of created label-edges. 

To bound the number of created label-edges, we develop a charging argument driven by a stamping scheme. 
The stamps will help us to spread and distribute the charge over the successive steps of our algorithm. The charging argument depends on our LBFS ordering.  Keep in mind that the split-tree construction algorithm does not involve the stamping scheme, nor the subsequent charging argument. These are only defined for the sake of counting created label-edges and of the amortized complexity analysis.

Let us sketch the construction. First, we need to show as a preliminary result that our LBFS ordering regulates how stars are formed during the construction of the split-tree: see Subsubsection \ref{sub:lbfs_and_stars}. Second, every non-root marker vertex is associated with some \emph{stamps} (0, 1 or 2 depending on its type), which are vertices of the input graph: see Subsubsection \ref{sub:stamping}. Independently, marker vertices are associated with \emph{Charge lists} such that spreading units of charge in the lists during the incremental process serves to count created label-edges:  see Subsubsection \ref{sub:charging}.
Lastly, the way stamps and Charge lists are associated with marker vertices will allow us to evaluate the total number of units of charge in terms of parameters of the input graph (number of edges and vertices), and hence to get the awaited complexity bound: see Lemma \ref{chargeBound} in Subsection \ref{sub:final}.

\subsubsection{LBFS and stars}
\label{sub:lbfs_and_stars}

Let us introduce extra notation and definitions related to an LBFS ordering $\sigma$ (see Section~\ref{sec:LBFS}). First we will abusively use $x_i$ instead of $\sigma^{-1}(i)$ to denote the $i$-th vertex in $\sigma$. Then $G_i$ stands for the subgraph induced by the subset $\{x_1,\dots x_i\}$ of vertices and we denote by $B_i$ the set of vertices appearing before $x_i$ (not including $x_i$).

\begin{definition}
Let $\sigma$ be an LBFS ordering of a connected graph $G$. A subset of consecutive vertices $S=\{x_i,\dots x_j\}$ (with $i \leq j$) is a \emph{slice} of $\sigma$ if for every $y\in S$, $N(y)\cap B_i=N(x_i)\cap B_i$. The set $Slice(x_i)$ denotes the largest slice starting at vertex $x_i$.
\end{definition}

In addition to the properties proved in Subsection~\ref{sub:LBFS}, LBFS controls the formation of star nodes.  During the split-tree construction process, some new marker vertices appear (e.g. the one opposite the new leaf, or when a node-split is performed), some disappear (when a node-join is performed) and some others are kept. More formally: 

\begin{definition}
Let $\sigma$ be an LBFS ordering of a connected graph $G$. 
Building on the definition given in Subsection \ref{subsec:split-tree}, 
we say that $ST(G_{i+1})$
\emph{inherits} a marker vertex $q$ of a node in $ST(G_i)$ 
if $q$ is not the extremity of a fully-mixed tree-edge of $ST(G_i)$ marked by the neighbourhood of $x_{i+1}$. \\
By extension, $ST(G_j)$, with $j>i+1$, \emph{inherits} the marker vertex $q$ from $ST(G_i)$ if $ST(G_{j-1})$ inherits $q$ from $ST(G_i)$ and $q$ is not the extremity of a fully-mixed tree-edge of $ST(G_{j-1})$ marked by the neighbourhood of $x_{j}$.
\end{definition}

Recall that in a split-tree the centre of a star is never the opposite of a degree-1 marker vertex (since otherwise the split-tree wouldn't be reduced). For our charging argument, we need to extend this property over the life time of a marker vertex that was created as the centre of a star, assuming the vertex insertion follows an LBFS ordering.

\begin{lemma} \label{lem:star}
Let $\sigma$ be an LBFS ordering of a connected graph $G$. Assume that to insert vertex $x_i$, a 
degree three node $u_i$ labelled by a star has been created. Let $c_i$ be the centre of $u_i$ and $q_i$ be the degree-1 marker vertex of $u_i$ not opposite $x_i$. If $x_j\in Slice(x_i)$, 
then $ST(G_j)$ contains a star node $u_j$ which contains $c_i$ as centre and $q_i$ as one of its degree-1 marker vertices. Moreover $u_j$ contains a degree-1 marker vertex $q_j$ such that $x_j\in L(q_j)$.
\end{lemma}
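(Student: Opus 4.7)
The proof is by induction on $j - i$. The base case $j = i$ is immediate: setting $u_j = u_i$, the centre is $c_i$, $q_i$ is a degree-1 marker by hypothesis, and the remaining degree-1 marker of $u_i$ (opposite the new leaf $x_i$) serves as $q_j$, so that $x_j = x_i \in L(q_j)$.

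For the inductive step, assume the lemma holds for $j - 1$. By hypothesis, $ST(G_{j-1})$ contains a star $u_{j-1}$ with $c_i$ as centre, $q_i$ as a degree-1 marker, and a degree-1 marker $q_{j-1}$ with $x_{j-1} \in L(q_{j-1})$. I would strengthen the inductive invariant (and verify it throughout) to also record that every $x_l$ for $i \le l \le j-1$ is attached as a leaf ``below'' a degree-1 marker of $u_l$, so that the subtrees on the $c_i$-side and $q_i$-side of $u_l$ remain identical to those in $ST(G_i)$.

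The heart of the argument is to analyse $ST(G_{j-1})$ marked with respect to $N(x_j)$. Using the slice condition $N(x_j) \cap B_i = N(x_i) \cap B_i = N(x_i)$ (the last equality because $N(x_i) \subseteq B_i$), together with the strengthened invariant (so $L(c_i)$ is unchanged from step $i$ and consists only of leaves in $B_i$), one obtains
\[
L(c_i) \cap N(x_j) \;=\; L(c_i) \cap N(x_i) \;=\; A(c_i) \;=\; N(x_i),
\]
proving $c_i$ is perfect. An analogous computation shows $q_i$ is empty. Consequently, the tree-edges of $u_{j-1}$ incident to $c_i$ and $q_i$ are not fully-mixed, which rules out that $u_{j-1}$ is absorbed into a fully-mixed subtree on inserting $x_j$ (case~7 of Theorem~\ref{th:cases}).

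A case analysis on the remaining degree-1 markers of $u_{j-1}$ (those opposite $x_l$ for $i \le l < j$, together with any markers produced by prior case-5 subdivisions) then determines which case of Theorem~\ref{th:cases} applies. If $x_j$ is adjacent to no $x_l$ for $i \le l < j$, all these markers are empty, so $u_{j-1}$ uniquely satisfies case~2, and Proposition~\ref{prop:unique-node-cases123} attaches $x_j$ via a new degree-1 marker $q_j$ of $u_j = u_{j-1}$. Otherwise the markers opposite neighbours of $x_j$ are perfect, producing PP tree-edges and triggering case~5 (or case~1, or a more involved case~7 configuration); the constructions of Propositions~\ref{prop:unique-edge-case56}--\ref{prop:fully-mixed-case-7} introduce new clique nodes on the leaf-side of $u_{j-1}$, leaving the star untouched with $c_i$ still the centre and $q_i$ still a degree-1 marker. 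The required $q_j$ is then an existing degree-1 marker of $u_{j-1}$ whose $L$-set has grown to include $x_j$ via the new clique. In every sub-case the strengthened invariant is preserved, closing the induction.

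The principal technical obstacle is verifying the second sub-case: showing that the operations triggered by perfect degree-1 markers of $u_{j-1}$ act only on the leaf-side of the star and do not disturb $c_i$, $q_i$, or the subtrees beyond them. This will follow from the facts established above (that tree-edges through $c_i$ and $q_i$ are PE/PM or EM/EE) combined with the mutual exclusivity of PP, PE, and MM configurations in Theorem~\ref{th:cases}.
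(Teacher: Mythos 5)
Your proof takes the same overall approach as the paper's (induction on $j - i$, with the base case immediate and the inductive step hinging on the slice condition forcing $c_i$ to be perfect and $q_i$ empty in $ST(G_{j-1})$ marked by $N(x_j)$). But the step you explicitly flag as ``the principal technical obstacle'' is a genuine gap, not merely a detail. Showing the tree-edges incident to $c_i$ and $q_i$ are not fully-mixed is necessary but not sufficient: when case~7 of Theorem~\ref{th:cases} holds, the node $u_{j-1}$ itself lies on the boundary of the fully-mixed subtree and is subjected to the cleaning node-splits of Definition~\ref{def:cleaning} followed by contraction. You would need to trace through the split with respect to $(E^*(u_{j-1}), V(u_{j-1}) \setminus E^*(u_{j-1}))$ (noting that $c_i$, being a perfect star centre, is placed in $E^*(u_{j-1})$ alongside the empty $q_i$), check the state of the resulting new marker, and confirm that the surviving node is still a star on $c_i$ and $q_i$ with a degree-1 marker reaching $x_j$. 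Writing ``(or a more involved case~7 configuration) $\dots$ This will follow from the facts established above'' skips exactly this work. Your strengthened invariant --- that $L(c_i)$ and $L(q_i)$ remain as at step $i$ --- is likewise asserted but not maintained; its preservation across the inductive step depends on that same missing argument.

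The paper avoids the case analysis entirely. It observes that the slice condition makes $(B_i, \{x_i,\dots,x_k\})$ a split of $G_k$ for all $i < k \le j$, that $B_i = L(c_i) \cup L(q_i)$, and then performs a \emph{virtual} node-split of $u_k$ to realize this split as a tree-edge $e$. The extremity $q$ of $e$ with $L(q) = B_i$ is perfect (or empty) when marked by $N(x_{k+1})$, so the subtree $T(q)$ --- which contains $c_i$ and $q_i$ --- is inherited unchanged by $ST(G_{k+1})$, regardless of which of the seven cases of Theorem~\ref{th:cases} actually governs the insertion of $x_{k+1}$. This single observation replaces the entire case analysis you sketch and makes the induction close cleanly; it is the key idea your proposal is missing.
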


\begin{proof}
The result clearly holds if $i = j$. Consider the case $j=i+1$. Notice that $x_{i+1}$ is a twin of $x_i$ since $x_{i+1}\in Slice(x_i)$. This implies that, if $ST(G_i)$ is marked by the neighbourhood of $x_{i+1}$, then $c_i$ is perfect and $q_i$ is empty. In other words the tree-edges respectively incident to $c_i$ and $q_i$ are not fully-mixed. So by definition, $c_i$ and $q_i$ are inherited by $ST(G_{i+1})$. Obviously, the opposite of $x_i$ is either perfect or empty in $ST(G_i)$ and is inherited by $ST(G_{i+1})$. It follows that $ST(G_{i+1})$ contains the desired  star node $u_{i+1}$.

Assume that for $i<k<j$, $ST(G_k)$ has a star node $u_k$ in which $c_i$ is the centre and $q_i$ is the degree-1 marker vertex identified at the creation of $u_i$. Notice that by the definition of a star, $i>2$. It follows that for every $k$ such that $i<k<j$, $(B_i,\{x_i,\ldots,x_k\})$ is a split of $G_k$. Moreover, observe that $B_i=L(c_i)\cup L(q_i)$. 
%
%
Consider the GLT obtained by a node-split of $u_k$, creating a tree-edge $e$ corresponding to the split $(B_i,\{x_i,\ldots,x_k\})$.
Since $(B_i,\{x_i,\ldots,x_k,x_{k+1}\})$ is also a split of $G_{k+1}$,
the extremity $q$ of $e$ such that $L(q)=B_i$ is perfect or empty in
this GLT marked with respect to the neighbourhood of $x_{k+1}$.
So, by our incremental split-tree construction, the marker vertices in $T(q)$, and in particular  $c_i$ and $q_i$, are inherited by $ST(G_{k+1})$. Hence  $ST(G_{k+1})$ contains the desired star node $u_{k+1}$.
%
\end{proof}

\begin{lemma} \label{lem:phase1Limit}
Let $\sigma$ be an LBFS ordering of a connected graph $G$. Let $c_i$ be the centre marker vertex of a star $u_i$ in $ST(G_i)$. If $c_i$ is inherited by $ST(G_j)$, with $i\leqslant j$, then it is not opposite a degree-1 marker vertex of a star in $ST(G_j)$.
\end{lemma}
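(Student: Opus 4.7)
The plan is to proceed by induction on $j-i$, together with an auxiliary structural claim about the evolution of the node containing $c_i$. For the base case $j = i$, the node $u_i$ is a star in the reduced split-tree $ST(G_i)$ with $c_i$ as its centre; if $c_i$'s opposite were a degree-1 marker of an adjacent star $v$, then the tree-edge between $u_i$ and $v$ would admit a star-join (a centre on one side, a degree-1 marker on the other), contradicting the reducedness of $ST(G_i)$.

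For the inductive step, suppose the claim holds for $ST(G_{j-1})$ and that $c_i$ is inherited by $ST(G_j)$. The transformation from $ST(G_{j-1})$ to $ST(G_j)$ is dictated by exactly one of the seven cases of Theorem \ref{th:cases} for the marking by $N(x_j)$. In each case I would track how $c_i$'s tree-edge and its opposite marker are updated. Cases 1, 3, and 5 either leave $c_i$'s opposite untouched (so the induction hypothesis applies directly) or place $c_i$'s opposite into a clique or prime node, where it cannot be a degree-1 marker of a star. Case 2 adds a marker to a specific star; by the induction hypothesis and the hereditary property (Lemma \ref{lem:hereditary}), $c_i$'s opposite, if it lies in that star, must be the centre and therefore remains the centre. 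Case 4 splits a degenerate hybrid node, and the hereditary property with the induction hypothesis ensure that $c_i$'s opposite is either the centre of a hybrid star (remaining a centre after the split and the subsequent subdivision) or a marker of a hybrid clique (remaining in a clique).

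The delicate case is case 6: the subdivision of the unique $PE$ edge by a new star node, whose centre is placed opposite the empty extremity. If $c_i$'s tree-edge happens to be this $PE$ edge and $c_i$ is the perfect extremity, then $c_i$'s new opposite is a degree-1 marker of the new star, seemingly violating the claim. To rule out this configuration I would combine the hereditary property (Lemma \ref{lem:hereditary}) applied to $c_i$'s opposite $r_{j-1}$ (empty), which forces every other marker of $c_i$'s containing node $w$ to be empty, with the auxiliary structural invariant established throughout the induction: whenever $c_i$ is inherited, its containing node is either a star in which $c_i$ is the centre or a prime node. When $w$ is a star with $c_i$ as centre, the previous conclusion makes $w$ a star with perfect centre and all other markers empty, matching the description of case 2. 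Since the cases of Theorem \ref{th:cases} are mutually exclusive, case 6 cannot apply, a contradiction. The possibility that $w$ is prime is handled separately, using the induced LBFS ordering on $G(w)$ (Lemma \ref{inducedLBFS}) and the twin characterization from Lemma \ref{LBFStwin}, which rule out the twin structure that this configuration would impose on the prime label.

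Case 7 (contraction) does not destroy $c_i$'s tree-edge, since it is not fully-mixed (this is precisely the meaning of $c_i$ being inherited at this step). By the analysis of cleaning and contraction in Proposition \ref{prop:fully-mixed-case-7}, $c_i$'s opposite either is preserved (when outside the fully-mixed subtree, so the induction hypothesis applies) or becomes a marker of the prime node formed by contraction, which is prime and therefore not a star. The main obstacle in the whole argument is the case 6 analysis, since it requires combining the hereditary property, the mutual exclusivity of the cases of Theorem \ref{th:cases}, and LBFS-based arguments to exclude the subtle subcase where $c_i$'s containing node is a prime whose label would be forced to encode a forbidden twin-like structure.
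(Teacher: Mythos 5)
Your proof takes a genuinely different route from the paper's, but it contains two gaps that I do not see how to close as written.

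First, the ``auxiliary structural invariant'' — that whenever $c_i$ is inherited, its containing node is either a star with $c_i$ as centre or a prime node — is load-bearing for your case~6 argument, yet you never prove it, and you never argue that it is preserved through all seven cases (in particular through cleaning and contraction, where $c_i$ may be carved off by node-splits and/or absorbed by node-joins). In fact the only tool in the paper that certifies persistence of $c_i$ as a star centre is Lemma~\ref{lem:star}, and that lemma only applies while the inserted vertex stays in $Slice(x_i)$; outside the slice you have no control, which is exactly why the paper does \emph{not} try to maintain such an invariant.

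Second, even granting the invariant, the prime subcase of case~6 is not handled. You assert that Lemma~\ref{inducedLBFS} and Lemma~\ref{LBFStwin} ``rule out the twin structure that this configuration would impose,'' but there is no twin structure to rule out: in case~6 no marker is added to $w$, so Remark~\ref{twins} and Lemma~\ref{LBFStwin} (both about the new marker being a twin) do not apply. Having $c_i$ perfect, its opposite empty, and every other marker of $w$ empty does not contradict $w$ being prime by any local argument; the paper does not try to rule this out directly either. The paper's actual move is quite different and you should notice it: from $c_i$ being perfect in both $ST(G_i)$ (with respect to $N(x_i)$) and $ST(G_{k-1})$ (with respect to $N(x_k)$), one infers $N(x_k)\cap B_i\subseteq N(x_i)\cap B_i$; the LBFS property (Lemma~\ref{4vertex}) upgrades this to equality, so $x_k\in Slice(x_i)$, and then Lemma~\ref{lem:star} forces $c_i$'s node in $ST(G_{k-1})$ to \emph{be} a star with $c_i$ as centre. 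Combined with your (correct) observation that such a star would put us in case~2 rather than case~6 — which is also essentially the paper's use of Corollary~\ref{cor:hereditary}-5 — this yields the contradiction. The paper also runs this as a ``smallest failing index $k$'' argument rather than a step-by-step induction, which avoids having to track how $c_i$'s containing node evolves across all cases. Without the slice argument and Lemma~\ref{lem:star}, your proof does not close.
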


\begin{proof}
Assume without loss of generality that $c_i$ has been generated by $x_i$'s insertion, that is $u_i$ is a 
degree three node $u_i$ labelled by a star.
Let $q_i$ be the degree-1 marker vertex of $u_i$ not opposite $x_i$ in $ST(G_i)$. Let $k$ be the smallest index such that $c_i$ is inherited by $ST(G_k)$ and is opposite a degree-1 marker vertex. Clearly as $ST(G_i)$ is reduced, $i<k$. By assumption, in $ST(G_{k-1})$, $c_i$'s opposite has degree at least two. By Propositions~\ref{prop:unique-node-cases123}, \ref{prop:hybrid-degen-case4}, \ref{prop:unique-edge-case56} 	and~\ref{prop:fully-mixed-case-7}, the only way to make $c_i$ the opposite of a degree-1 marker vertex in $ST(G_k)$ is to subdivide the tree-edge $e$ incident to $c_i$ in $ST(G_{k-1})$ by a star node adjacent to $x_k$. That is, case 6 of Theorem~\ref{th:cases} holds and $e$ was the unique tree-edge with one perfect and one empty extremity (the perfect extremity being $c_i$). Observe that 
the node $u$ containing $c_i$  in $ST(G_{k-1})$ cannot be a star node
(Corollary \ref{cor:hereditary}-5). 
We now contradict this fact. To that aim observe that as $c_i$ is perfect in $ST(G_{k-1})$ and in $ST(G_i)$, we have $N(x_k)\cap B_i \subseteq N(x_i)\cap B_i$. As $i<k$, for $\sigma$ to be a LBFS ordering, we have $N(x_k)\cap B_i = N(x_i)\cap B_i$; in other words, $x_k\in Slice(x_i)$. But now Lemma~\ref{lem:star} implies that $u$ is a star: contradiction.
\end{proof}


The last results rely on the crucial assumption that an LBFS ordering is followed.
One way of interpreting them is to say that once a star is created, it is then expanded maximally. 
The important consequence is what the last result says about the phase 1 node-joins defined by Algorithm~\ref{alg:contraction}.  Recall that phase 1 node-joins involve a star whose root marker vertex is its centre, and one of its children.
Lemma~\ref{lem:phase1Limit} therefore restricts the number of phase 1 node-joins a node can undergo.  
We need this fact to bound the number of new label-edges created during contraction.

\subsubsection{Stamping schemes}
\label{sub:stamping}

The amortized complexity analysis relies on two stamping schemes. First, every non-root marker vertex $q$ of a degenerate node is stamped with a vertex of the input graph $G$, called the \emph{degenerate stamp} of $q$. As a consequence of Lemma~\ref{lem:degenerate-marker}, degenerate stamps can be assigned such that every vertex of $G$ is used at most three times. Intuitively, the role of degenerate stamping is to amortize the cost of the creation of the label-edges of degenerate nodes prior to some node-join operation.

In addition, another stamping scheme is developed to amortize the cost of the creation of the label-edges generated by the node-join operations during contraction. Consider the following inductive procedure which, given a reduced GLT, assigns a stamp $s(q) = (s_1(q),s_2(q))\in V(G)^2$ to every non-root marker vertex $q$ that is not the centre of a star:

\begin{enumerate}
\item If $q$ is opposite the leaf $y$, then $s(q) = (y,y)$.
\item Let $uv$ be an internal tree-edge in the split-tree $ST(G)$ with extremities $q \in V(u)$ and $r \in V(v)$, where $u$ is the parent of $v$:
\begin{enumerate}
\item if $d(r) > 1$, then set $s(q) = (s_2(t),s_2(t'))$ for two (arbitrary) neighbours $t$ and $t'$ of $r$;
\item if $d(r) = 1$, and therefore $v$ is a star with centre $c$, then set $s(q) = s(c)$, and then remove $c$'s stamp.
\end{enumerate}
\end{enumerate}

We will refer to $s_1(q)$ as $q$'s \emph{primary stamp} and $s_2$ as $q$'s \emph{secondary stamp}.  We are interested in primary stamps; secondary stamps only exist to be ``passed up'' in step 2(a) above.  The procedure guarantees the following properties of these stamps:

\begin{lemma} \label{FACT1}
At the end of the procedure, centres of stars are the only non-root marker vertices without a stamp.
\end{lemma}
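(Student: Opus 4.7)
The plan is to establish the claim in two directions: (i) every star centre is stamp-less at the end of the procedure, and (ii) every other non-root marker vertex retains a stamp.

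Direction (ii) is the easier one, and I would handle it first by a simple ``monotonicity'' observation. The only mechanism by which a stamp is ever removed during the procedure is the ``remove $c$'s stamp'' action inside case 2(b), and, by the very definition of that case, the marker vertex whose stamp is deleted is always the centre $c$ of a star (the child node $v$ is then necessarily a star, because $r$ has degree one in $G(v)$). Consequently every non-root marker vertex that is not the centre of a star receives a stamp through step 1 (if it is opposite a leaf) or step 2 (if its tree-edge leads to a child node), and this stamp is never subsequently deleted.

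For direction (i), let $c$ be the centre of a star node $w$. I would split into two subcases. If $c$ is itself the root marker vertex of $w$, then the procedure does not assign any stamp to $c$ at all, because by convention it operates only on non-root marker vertices; in this subcase there is nothing to prove. Otherwise $c$ is a non-root marker vertex of $w$, so the bottom-up traversal attaches a stamp to it (through step 1 or step 2, applied to the tree-edge going from $w$ down into $T(c)$). I would then focus on the tree-edge $e$ linking $w$ to its parent node~$u$: its $w$-side extremity $r$ equals the root marker vertex of $w$ and is distinct from~$c$. Because $w$ is a star with centre $c$ and $r \neq c$, the marker vertex $r$ is one of the degree-one vertices of the star, i.e., $d(r) = 1$. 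Hence when the bottom-up procedure processes $e$, we are in case 2(b): the stamp of the parent extremity $q \in V(u)$ is set to $s(c)$ and $c$'s stamp is explicitly removed. Combining the two subcases gives direction (i), and together with (ii) yields the lemma.

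The only real subtlety, and the step I would treat as the ``hard point'', is the verification in direction~(i) that case 2(b) (and not case 2(a)) is what fires on the parent edge of $w$. This reduces to the structural remark that, in a star, the only marker vertex of degree greater than one is the centre itself; since by hypothesis $c$ is the centre and is strictly distinct from the root marker vertex~$r$ of $w$, the vertex $r$ must be a degree-one leaf of the star, forcing $d(r) = 1$. Nothing else in the proof requires more than chasing through the three rules of the procedure.
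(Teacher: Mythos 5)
Your proof of direction (ii), namely that every non-root marker vertex other than a star centre retains a stamp, is correct and is exactly the paper's one-line argument: only step 2(b) removes a stamp, and it removes the stamp of a star centre; every non-root marker vertex receives a stamp initially through step 1 or step 2; hence any stamp-free non-root marker vertex must be a star centre. This is also the direction the paper states and later uses. In the proof of Lemma~\ref{chargeBound}, Lemma~\ref{FACT1} is invoked for a non-root marker vertex $q$ of a prime node to conclude that $q$ carries a primary stamp, and the wording ``centres of stars are the only non-root marker vertices without a stamp'' asserts precisely this containment.

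Your direction (i) is not needed, and the argument for it has a genuine gap: in fact the claim you try to establish there is false in general. You assume the star node $w$ has a parent node $u$, so that the tree-edge $e$ between them is internal and case 2(b) can fire; but $w$ may instead be the child of the root leaf. Then the edge above $w$ is incident to a leaf, hence not an internal tree-edge, and step 2 is never applied to it, so the centre keeps whatever stamp it received from below. Concretely, take $G = S_4$ and root $ST(G)$ (a single star node) at a non-centre vertex of $G$: the star centre is then a non-root marker vertex opposite the centre vertex $y$ of $G$, receives the stamp $(y,y)$ from step 1, and never loses it. Since Lemma~\ref{FACT1} neither asserts nor ever uses this converse, the lemma itself is unaffected, but the direction (i) part of your proof should be dropped, or explicitly restricted to star nodes that are not the child of the root leaf.
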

\begin{proof}
This follows from the observation that only step 2(b) removes a stamp.
\end{proof}

\begin{lemma} \label{FACT2}
At the end of the procedure, if the leaf $y$ is the primary stamp of the marker vertex $q$, then $y \in A(q)$.
\end{lemma}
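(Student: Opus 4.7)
The plan is to prove the lemma by induction on the size of $T(q)$, after strengthening the claim to cover both stamp components simultaneously: I will show that $s_1(q), s_2(q) \in A(q)$ whenever $s(q)$ is defined during the procedure, including the temporary stamp briefly held by the centre $c$ of a star before step~2(b) removes it. The base case is immediate from step~1: if $q$ is opposite a leaf $y$, then $s(q) = (y, y)$ and $y \in A(q)$ by the convention extending accessibility to leaves via their opposite markers.

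For the inductive step the key technical observation I will rely on is the following: if $p \in V(v)$ is a non-root marker with $y \in A(p) \subseteq L(p)$, then the unique alternating sequence certifying the accessibility from $p$ to the opposite $y^*$ of $y$ must begin with the tree-edge from $p$ to its opposite $p^*$. This uses the remark following the accessibility definition, which states that the tree-edges in such a sequence form a simple path in $T$: if the sequence began instead with a label-edge $p p_1$ in $G(v)$, then the ensuing tree-edge out of $p_1$ would exit the subtree $T(p)$, and by simplicity of the path in $T$ the sequence could not return to $T(p)$, contradicting $y^* \in T(p)$.

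Granted this observation, case~2(a) of the inductive step follows quickly. For $q \in V(u)$ opposite $r \in V(v)$ with $d(r) > 1$, we have $s_1(q) = s_2(t)$ for a neighbour $t$ of $r$ in $G(v)$. By the inductive hypothesis applied to the strictly smaller subtree $T(t)$, $s_2(t) \in A(t)$, so there is an alternating sequence from $t$ to $s_2(t)^*$ which, by the observation, begins with the tree-edge $t t^*$. Prepending $q$ and $r$ yields a sequence $(q, r, t, t^*, \ldots)$ whose edges are $qr$ (tree), $rt$ (label, present because $t$ is a neighbour of $r$ in $G(v)$), $tt^*$ (tree), and so on; the alternation is correct, certifying $s_2(t) \in A(q)$. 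The argument for $s_2(q) = s_2(t')$ is identical. Case~2(b) goes through in exactly the same way: with $v$ a star, centre $c$, and degree-$1$ root marker $r$, the label-edge $rc$ required for prepending exists because in a star every degree-$1$ marker is adjacent to the centre, and the inductive hypothesis applies to $T(c) \subsetneq T(q)$.

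The main obstacle will be cleanly establishing the key observation about the first edge of the accessibility sequence; the remainder is careful bookkeeping. One subtlety worth verifying is that in case~2(a) the neighbours $t, t'$ of $r$ actually carry stamps: this is automatic when $v$ is prime or a clique, and in the star case the condition $d(r) > 1$ forces $r$ itself to be the centre of $v$, so $t, t'$ are degree-$1$ non-centre markers of $v$, and hence carry stamps as required.
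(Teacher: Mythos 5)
Your proof is correct and is a fully worked-out version of the same inductive argument the paper merely gestures at ("an easy inductive argument... applying the fact that $q$ only receives a stamp via its accessibility paths"). The strengthening to both stamp components, the base case, the key observation that the certifying sequence for $y \in A(p) \subseteq L(p)$ must begin with the tree-edge $p p^*$, and the prepending step all check out, including the star-case subtlety that $d(r) > 1$ forces $r$ to be the centre so that $t, t'$ carry stamps.
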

\begin{proof}
An easy inductive argument shows this, applying the fact that $q$ only receives a stamp via its
accessibility paths.
\end{proof}

\begin{lemma} \label{FACT3}
At the end of the procedure, every leaf is a primary stamp at most twice.
\end{lemma}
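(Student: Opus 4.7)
The plan is to track the set of marker vertices that ever carry $y$ as a secondary stamp, show this set forms a path under a natural ``source of inheritance'' relation, and then deduce that it can contribute at most one additional primary-$y$ marker beyond the trivial one.

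First, let $q_0$ be $y$'s opposite marker vertex, so step~1 gives $s(q_0)=(y,y)$. I would define
$$T_y := \{\, q : s_2(q) = y \text{ at some point during the procedure}\,\}.$$
Every $q \in T_y \setminus \{q_0\}$ has a natural predecessor $\mathrm{pred}(q) \in T_y$: namely, the vertex $t'$ (from step~2(a)) or $c$ (from step~2(b)) whose secondary stamp was copied to produce $s_2(q)=y$. The first key step is to verify that under $\mathrm{pred}$, $T_y$ is a directed path $q_0 \to q_1 \to \cdots \to q_k$. Each $t \in T_y$ has its stamp consulted in at most one step~2 application, namely the one computing the stamp of the marker vertex opposite the root of $t$'s node (since each node has a unique parent edge). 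In that single application, $t$ plays at most one of the roles $t$, $t'$, or $c$; only the roles $t'$ and $c$ produce a successor in $T_y$. So out-degree is at most one, in-degree is at most one by construction, and tracing predecessors always terminates at $q_0$ (the only element without one), yielding a path.

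Next, I would count contributions to $P(y) := \{q : s_1(q) = y \text{ at the end}\}$. Step~1 contributes $q_0$. Each step~2(a) whose chosen $t$ satisfies $s_2(t)=y$ adds a new element $q$ to $P(y)$ (without removing the existing one at $t$). Each step~2(b) has net effect zero on $|P(y)|$: if $s_1(c)=y$ beforehand, then $q$ is added but $c$'s stamp (and hence $c$'s primary occurrence of $y$) is simultaneously removed; otherwise nothing changes. Now, for each $i < k$, the vertex $q_i$ must play the role $t'$ or $c$ in order to create the successor $q_{i+1}$, which precludes it from playing the role $t$; consequently only $q_k$ can play the role $t$ in a step~2(a) application. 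Thus at most one step~2(a) satisfies $s_2(t)=y$, giving $|P(y)| \le 1 + 1 = 2$.

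The main obstacle is establishing that $T_y$ really is a path, and in particular that each marker vertex has out-degree at most one in $T_y$. This rests on the procedural fact that each marker vertex has its stamp consulted in exactly one step~2 application and there plays at most one of the three roles. Once this structural observation is in place, the bookkeeping on $|P(y)|$ is straightforward.
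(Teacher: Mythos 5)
Your proof is correct and uses the same underlying mechanism as the paper's: track how $y$ propagates through secondary stamps, and argue that the secondary-to-primary conversion of step~2(a) can occur at most once. Where you differ is in rigor. The paper's proof dispenses with the key point in one sentence (``The preceding argument shows that this cannot occur again''), which implicitly relies on the fact that at any moment there is essentially one live chain of secondary-$y$ carriers being consulted; it never explicitly establishes that fact. You make this precise by defining $T_y$, observing that each marker vertex's stamp is consulted in exactly one step~2 application (since its node has a unique parent edge) where it can play at most one of the roles $t$, $t'$, $c$, and concluding that $T_y$ is a directed path rooted at $q_0$; from there the bound $|P(y)| \le 2$ falls out cleanly because only the terminal vertex $q_k$ of the path is free to play role $t$. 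So this is not a different route, but it does supply the path-structure argument that the paper's terse justification glosses over. One small point worth stating explicitly in your write-up: acyclicity of the $\mathrm{pred}$ relation (so that the in-degree-one, out-degree-at-most-one structure really is a single path from $q_0$, not a path plus a disjoint cycle) follows because the stamping procedure is strictly bottom-up in the rooted tree, so a predecessor always lies in a strictly lower node.
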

\begin{proof}
Let $uv$ be an arbitrary edge in $T$, where $u$ is the parent of $v$, and let $q \in V(u)$ and $p \in V(v)$ be arbitrary non-root marker vertices, where $p$ is accessible from $q$. We let $y$ be an arbitrary vertex and examine how occurrences of $y$ in $s(p)$ can be transmitted to $s(q)$. Note that the stamp $(y,y)$ applies to the marker vertex opposite $y$, and thus a bottom-up argument starts with $y$ having appeared once as a primary stamp.

First, we observe that no step of the algorithm allows a primary occurrence of $y$ in $s(p)$ to be a secondary occurrence of $y$ in $s(q)$. Suppose for contradiction that a primary occurrence of $y$ in $s(p)$ is also a primary occurrence of $y$ in $s(q)$.  This can only happen by execution of step 2(b), but now the stamp is removed from $p$.  Thus this case does not allow an increase in the number of times that $x$ appears as a primary stamp.  

Finally, suppose that a secondary occurrence of $y$ in $s(p)$ becomes a primary occurrence of $y$ in $s(q)$.  Step 2(a) allows this to happen thereby increasing by one the number of times that $y$ can appear as a primary stamp.  The preceding argument shows that this cannot occur again.
\end{proof}


What these properties will allow us to do, after the next subsubsection, is to transfer the charge assigned to marker vertices (that are not centres of stars)  to their primary stamps.  Lemma~\ref{FACT2} allows us to associate charge with an edge (incident to the primary stamp) in the underlying accessibility graph.  Lemma~\ref{FACT3} allows us to associate the charge with a vertex (i.e. the primary stamp) in the underlying accessibility graph (and to do this at most twice for each vertex). 
Then we will be able to bound the total charge  in terms of the input graph parameters.

\subsubsection{The charging apparatus}
\label{sub:charging}

The idea of the charging argument is to charge the creation of each new label-edge to one of its incident marker vertices.  
To that aim, each marker vertex $q$ is associated throughout its lifetime with a list of vertices $Charge(q)$ that can be given units of charge.

\begin{definition}
Let $q$ be a marker vertex of a node $u$ in a (rooted) split-tree $ST(G)$ of a connected graph $G$. The list $Charge(q)$ of vertices of $G$ contains a set of vertices of $G$ such that:
\begin{itemize}
\item each element in the list can be given a number of units of charge;
\item the vertices are divided into groups, one for each of $q$'s neighbours in $G(u)$;
\item the vertices in neighbour $t$'s group are the vertices in $A(t)$;
\item the root marker vertex's group (if it exists) is called the \emph{root group}.
\end{itemize}
For a leaf $x$ of $ST(G)$ (i.e. a vertex of $G$), the list $Charge(x)$ contains the vertices of $N(x)=A(x)$.
\end{definition}

The way we assign charge during the algorithm is described precisely in
forthcoming Lemmas
\ref{prop:amortized1}, \ref{prop:phase1}, \ref{prop:phase2} and \ref{prop:phase3}. 
Of course, the Charge lists are not static during the construction of $ST(G)$: as new vertices are inserted, new elements must be added to some $Charge$ lists; and as node-joins and node-splits occur, new groups are created and destroyed, respectively. 
However, throughout these changes, the following invariant will be maintained:

\begin{invariant} \label{invariant1}
Let $ST(G)$ be the (rooted) split-tree of a connected graph $G$. 

\begin{itemize}
\item if $q$ is a marker vertex of a node $u$ of $ST(G)$, then
\begin{enumerate}
\item the list $Charge(q)$ is free of charge if 
\begin{enumerate}
\item $u$ is a degenerate node or at some intermediate step of the contraction, $q$ has degree one in $G(u)$;
\item $q$ is a root marker vertex and has never been the centre of a star at some prior step;
\end{enumerate}
\item if $q$ is adjacent to the root of $G(u)$, then the root group in $Charge(q)$ is free of charge;
\item at most one vertex in each group in $Charge(q)$ has been assigned charge;
\item every vertex in $Charge(q)$ has been assigned at most one unit of charge if $q$ is a root marker vertex, and  at most three units of charge otherwise.
\end{enumerate}

\item if $x$ is a leaf of $ST(G)$, then each vertex in $Charge(x)$ is assigned at most three units of charge.
\end{itemize}
Moreover the number of label-edges created during the process of constructing $ST(G)$ is bounded by the total charge on all the $Charge()$ lists.
\end{invariant}

Let us observe that a marker vertex can have degree one (condition 1(a)) 
and not belong to a degenerate node 
only at some intermediate step of the contraction prior to the vertex insertion. 
Of course, once the new vertex is inserted, this is no longer possible since the current GLT is a split-tree (see Proposition~\ref{prop:fully-mixed-case-7}).



Assume the invariants hold for the split-tree $ST(G)$.  Now consider forming the split-tree $ST(G+x)$, where $x$ is 
is the last vertex in an LBFS ordering of $G+x$. We will consider the changes required of $ST(G)$ to form $ST(G+x)$, as described by Propositions~\ref{prop:unique-node-cases123}, \ref{prop:hybrid-degen-case4}, \ref{prop:unique-edge-case56} and \ref{prop:fully-mixed-case-7}.

\begin{lemma} \label{prop:amortized1}
Let $x$ be the last vertex of an LBFS ordering of the connected graph $G+x$. If Invariant~\ref{invariant1} is satisfied by $ST(G)$ and if $ST(G)$ does not contain a fully-mixed edge, then Invariant~\ref{invariant1} is satisfied by $ST(G+x)$.
\end{lemma}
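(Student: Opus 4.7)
The plan is to proceed by case analysis following Theorem~\ref{th:cases}. Since $ST(G)$ contains no fully-mixed edge, case~7 is ruled out and exactly one of cases~1--6 holds, so the update from $ST(G)$ to $ST(G+x)$ is explicitly described by Propositions~\ref{prop:unique-node-cases123}, \ref{prop:hybrid-degen-case4}, and~\ref{prop:unique-edge-case56}. In each of these cases only $O(1)$ new nodes and a bounded, local number of new label-edges are created, and no node-join takes place; the contraction machinery that makes the amortized analysis delicate does not come into play here, so everything can be verified locally.

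For each case I would explicitly specify how the $Charge$ lists are extended. Whenever an existing marker $t$ gains a new neighbour $q$ in its label graph, a fresh group for $q$ (populated by $A(q)$ in the new GLT) is appended to $Charge(t)$, with no charge placed in it. For each newly created marker vertex $q$ (the opposite of $x$ in cases~1--3; the endpoints of the new ternary node's label in cases~5--6; the marker created by the node-split together with the endpoints of the subsequent ternary node in case~4), the list $Charge(q)$ is initialized with one group per neighbour of $q$ in the new label graph. Each newly created label-edge $qt$ is then paid for by placing one unit of charge on an arbitrary vertex of the group for $t$ inside $Charge(q)$. Because $q$ is brand new, the relevant groups are fresh, so invariants~3 and~4 hold at $q$ immediately.

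The verifications of the four conditions are then routine. Invariant~1(a) is maintained because markers of degenerate nodes and degree-one markers only arise as brand-new markers whose $Charge$ lists are initialized empty. Invariant~1(b) on root markers that were never star centres holds because no existing marker changes status during these updates, and the only new root marker vertex that appears is the opposite of $x$ in cases~1--3, which has never yet been the centre of a star. Invariant~2 on the root group is preserved since $x$ is not the tree's root (the root is fixed as the leaf of the first inserted vertex), so no newly introduced neighbour is the root. Invariants~3 and~4 at existing markers $t$ that merely gain a new neighbour are trivial, since the newly added groups carry no charge. Finally, the number of units of charge added in each case equals the number of new label-edges created, so the total-charge bound in Invariant~\ref{invariant1} is preserved.

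The main obstacle is not conceptual but bookkeeping, concentrated in case~4 where a node-split across $(P^*(u), E^*(u))$ transiently produces new marker vertices inside two degenerate labels before the ternary subdivision. By Invariant~\ref{invariant1}(1a) the $Charge$ lists of markers of degenerate nodes must be free of charge, which conflicts with charging the new label-edges of the split immediately. The remedy is to treat the two-step update of Proposition~\ref{prop:hybrid-degen-case4} as atomic for the purpose of the $Charge$ lists: defer all charge assignment until after both the split and the subdivision have been carried out, at which point the relevant markers either sit in the new ternary (non-degenerate) node or are root/degree-one markers whose lists are required to stay empty anyway. This reduces case~4 to the same charging template as cases~5--6 and completes the verification.
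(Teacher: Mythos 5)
Your high-level plan (casework over cases 1--6 of Theorem~\ref{th:cases}, observing that case~7 is excluded and that no node-joins occur) matches the paper's, but your charging bookkeeping has a genuine gap because you overlook a key structural fact that the paper leans on: in the data structure, degenerate nodes store \emph{no} label-edges at all (see the opening of Subsection~\ref{sub:charging}). Consequently, in cases 1, 2, 4, 5 and 6 the update touches only degenerate nodes and \emph{zero} label-edges are created; the invariant is preserved for free, with no charge assignment needed or even permitted. Your proposal instead tries to assign charge to $Charge$ lists of markers sitting in the freshly created clique or star nodes of cases 4, 5, 6, and to markers in the clique/star nodes of cases 1, 2. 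That would directly violate condition~1(a) of Invariant~\ref{invariant1}, which requires all such lists to be free of charge.

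The patch you offer for case~4 does not close the gap because it rests on a false claim: you assert that after the node-split and subdivision the relevant markers ``sit in the new ternary (non-degenerate) node,'' but Proposition~\ref{prop:hybrid-degen-case4} explicitly makes that ternary node a clique or a star, hence \emph{degenerate}. The same is true of the ternary nodes created in cases 5 and 6. So your proposal ends up charging edges to lists that Invariant~\ref{invariant1} requires to remain empty, and there is no valid receptacle for those units in your scheme. The correct resolution is the paper's: since those edges are never materialized in the data structure, they need not be charged. The only case where label-edges actually get created here is case~3 (hybrid \emph{prime} node), where $x$'s opposite $q$ is a non-root marker of a prime node; your treatment of that case (one unit per group of $Charge(q)$, shift away any charge that would land in the root group) agrees with the paper's and is fine.
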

\begin{proof}
In every case, except if $ST(G)$ contains a unique hybrid prime node (case 3 of Theorem~\ref{th:cases}), the modifications performed on $ST(G)$ to obtain $ST(G+x)$ only involve degenerate nodes. Thus no label-edge is created. By condition 1 of Invariant~\ref{invariant1} every list $Charge(q)$ for a marker vertex of a degenerate node is free of charge, and  Invariant~\ref{invariant1} is still valid after $x$'s insertion.

In the case $ST(G)$ contains a unique hybrid prime node $u$, then by Proposition~\ref{prop:unique-node-cases123} new label-edges are created incident to $x$'s opposite, namely $q\in V(u)$ the new created marker vertex. Clearly $q$ is not the root marker of $u$ and $Charge(q)$ is divided in $|P(u)|=d(q)$ groups. One vertex of each of these groups receives a unit charge. If one of these group is the root group, then the charge assigned to one of its vertices is shifted to one of the other already charged vertices of $Charge(q)$. It follows that Invariant~\ref{invariant1} it still satisfied.
\end{proof}

We now deal with the case where $ST(G)$ contains a fully-mixed subtree $T'$. By the arguments used in the proof above, since the cleaning step only involves degenerate nodes and thus does not create any label-edges, the GLT $c\ell(ST(G))$ still satisfies Invariant~\ref{invariant1}. Our split-tree algorithm uses Algorithm~\ref{alg:contraction} to perform contraction.  Recall that it separates node-joins into three phases.  Phase 1 node-joins involve star nodes whose root marker vertex is its centre.  Phase 2 node-joins involve nodes whose root marker vertex has degree one.  Phase 3 node-joins are all those remaining.  No matter the phase, a node-join creates new label-edges.  We need to assign charge to account for every one of these edges.  However, this is done differently for each of contraction's three phases, as explained below.

\begin{lemma}[Phase 1 node-joins] \label{prop:phase1}
Let $x$ be the last vertex of an LBFS ordering of the connected graph $G+x$ and assume $c\ell(ST(G))$ satisfies Invariant~\ref{invariant1}. If the node-join$(u,u')$ is performed on $c\ell(ST(G))$ between a star node $u$, whose root marker vertex is its centre, and a child $u'$ of $u$, then the resulting GLT satisfies Invariant~\ref{invariant1}.
\end{lemma}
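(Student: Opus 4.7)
The plan is to (i) count the new label-edges produced by the node-join, (ii) deposit exactly that many units of fresh charge on a Charge list that the hypothesis and Lemma~\ref{lem:phase1Limit} guarantee to be free, and (iii) verify that each condition of Invariant~\ref{invariant1} is re-established on every affected Charge list after the join.

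Because $u$ is a star whose root marker vertex is its centre $c$, the extremity $q\in V(u)$ is a degree-$1$ leaf of the star, so $d_{G(u)}(q)=1$ and Lemma~\ref{nJoinTime} gives exactly $d_{G(u')}(q')$ new label-edges, each joining $c$ to a neighbour of $q'$ in $G(u')$. I will charge these by placing one unit on a designated vertex in each of the $d_{G(u')}(q')$ groups of $Charge(q')$. The main obstacle, and the step where the LBFS ordering is essential to the charging argument, is justifying that $Charge(q')$ is free immediately before this deposit. For this I would appeal to Lemma~\ref{lem:phase1Limit}: since $q'$ is currently opposite the degree-$1$ star marker $q$, it cannot have been inherited as the centre of a star at any prior step, so $q'$ is a root marker that has never been the centre of a star, and condition~1(b) forces $Charge(q')$ to be free. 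Conditions~3 and~4 for $q'$ then hold by construction (one charged vertex per group, one unit per vertex since $q'$ is a root marker), and the units cover all new label-edges.

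To conclude, I would verify Invariant~\ref{invariant1} on the resulting node $v$, whose marker set is $(V(u)\setminus\{q\})\cup(V(u')\setminus\{q'\})$. The non-$q$ degree-$1$ leaves of $u$ remain adjacent only to $c$ in $G(v)$, so condition~1(a) keeps their Charge lists free. The list $Charge(c)$, free in $c\ell(ST(G))$ by condition~1(a) applied to the degenerate node $u$, is only restructured: the former $q$-group partitions into new groups indexed by $N_{G(u')}(q')$ via $A(q)=\bigcup_{t\in N_{G(u')}(q')} A(t)$, so $Charge(c)$ stays free. For each $t\in V(u')\setminus\{q'\}$ adjacent to $q'$ in $G(u')$, the $q'$-group of $Charge(t)$ becomes its $c$-group in $v$; because $A(q')=A(c)$ (the tree-edge above $u$ is untouched), this group is unchanged as a set, and since it was the root group of $t$ in $u'$, it was free by condition~2 and remains free in $v$, where $c$ is the new root. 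The remaining groups and lists are intact, so conditions~3 and~4 propagate unchanged.
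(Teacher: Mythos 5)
There is a genuine gap. You propose to deposit the charge for the $d_{G(u')}(q')$ new label-edges on the groups of $Charge(q')$. But $q'$ is precisely one of the two marker vertices that disappear in the node-join; the list $Charge(q')$ is destroyed as soon as the join is performed. Any charge placed there evaporates, and the final clause of Invariant~\ref{invariant1} -- that the total charge on all surviving $Charge()$ lists bounds the number of label-edges created so far -- is immediately violated. Your own conclusion that ``$Charge(c)$ stays free'' confirms the problem: after your assignment, no surviving list has absorbed the cost of the new label-edges.

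The paper's proof instead charges the list that survives the join, namely $Charge(c)$. Before the join, $Charge(c)$ has exactly one group, for $c$'s single neighbour $q$; after the join this $q$-group is replaced by $d_{G(u')}(q')$ new groups, one per former neighbour of $q'$, each automatically free of charge. One unit is assigned to each new group, and condition~1(b) of Invariant~\ref{invariant1} now permits $Charge(c)$ to carry charge because $c$ ceases to be the centre of a star once the join is done -- the paper explicitly flags this as the point that needs checking. Lemma~\ref{lem:phase1Limit} does enter, as you guessed, but for a different purpose: to show that $Charge(q')$ was already free (since $q'$ was never a star centre), so that deleting $q'$ -- along with $q$, whose list is free by condition~1(a) since $d_{G(u)}(q)=1$ -- loses no charge. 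You use that lemma to justify a deposit on a doomed list, which is exactly backwards. A secondary omission: before the join one must also create the label-edges of $G(u)$ (and of $G(u')$ if degenerate), since degenerate nodes in the data-structure do not store label-edges; the paper charges these to the degenerate stamps of their non-root marker vertices, a step your proof does not mention. Your observations that $A(q')=A(c)$ and that the degree-1 markers of $u$ remain degree-1 in the new node are correct and reusable, but the argument cannot close until the fresh charge lands on $Charge(c)$ rather than $Charge(q')$.
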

\begin{proof}
Let $c$ be the centre of $u$. Notice that $Charge(c)$ is free of charge, by condition~4 of Invariant~\ref{invariant1}.  Let $t\in V(u)$ and $t'\in V(u')$ be the extremities of the tree-edge between $u$ and $u'$. Notice that $t$ is a degree one marker vertex.

Prior to the node-join, we need to create the label-edges of the graph $G(u)$ since it is degenerate and of $G(u')$ if $u'$ is degenerate. Every such 
label-edge $e$ is incident to a non-root marker vertex $q$ whose degenerate stamp is a vertex $z$ of $G$. Let $y$ be a leaf of $A(q')$, where $q'$ is the other marker vertex incident to $e$. Observe that $y$ belongs to $Charge(z)$. Then one unit is charged to $y$'s entry in $Charge(z)$ for the cost of the creation of $e$. As $z$ appears at most three times as a degenerate stamp (see Lemma~\ref{lem:degenerate-marker}), Invariant~\ref{invariant1} is satisfied.

So assume the label-edges of $G(u)$ and $G(u')$ exist. If $d(t')>1$, then the node-join of $u$ and $u'$ results in $d(t')$ extra label-edges being created.  But notice that it also results in $t$'s group in $Charge(c)$ being replaced by $d(t')$ new groups, each free of charge.  So to each of these new groups we assign one unit of charge. If $t'$ is a degree one marker vertex, then recall the node-join is handled differently (see Algorithm~\ref{alg:nJoin}). Only one new label-edge is added between $c$ and $t'$'s unique neighbour. Again, for this we assign one unit of charge to what was $t$'s group in $Charge(c)$.

Now, since $d(t) = 1$, we know $Charge(t)$ is free of charge, by condition~1(a) of Invariant~\ref{invariant1}.  We also know that $t'$ is not the centre of a star, because $ST(G)$ is reduced.  Moreover, $t'$ can never have been the centre of a star, because of Lemma~\ref{lem:phase1Limit} and the fact that $d(t) = 1$.  It follows from condition~1(b) of Invariant~\ref{invariant1} that $Charge(t')$ is free of charge.  In other words, no charge is lost in deleting $Charge(t)$ and $Charge(t')$ along with $t$ and $t'$.  It follows easily that the number of label-edges created so far is bounded by the total charge on all the $Charge()$ lists.

It is also easy to verify that every condition of Invariant~\ref{invariant1} continues to hold, although we single out condition~1(b) for comment.  The key for condition~1(b) is that $c$ is no longer the centre of a star after the node-join is performed, and thus is allowed to have charge.
\end{proof}

We can now assume that all phase 1 node-joins are complete. Let us turn to phase 2 node-joins. Recall from Algorithm~\ref{alg:contraction} that a phase 2 node-join involves a node $u$ and one of its children $u'$ such that the root of $u'$ is a degree one marker vertex. This type of node-join is implemented differently (see Algorithm~\ref{alg:nJoin}). Observe also that $u'$ could not have resulted from any node-join in phase 1.

\begin{lemma}[Phase 2 node-joins] \label{prop:phase2}
Let $x$ be the last vertex of an LBFS ordering of the connected graph $G+x$. Assume that all the phase 1 node-joins have been performed on $c\ell(ST(G))$ and that the resulting GLT satisfies Invariant~\ref{invariant1}. If the node-join$(u,u')$ is performed between a node $u$ and one of its children
$u'$ which is a star node rooted at a degree one marker vertex, then the resulting GLT satisfies Invariant~\ref{invariant1}.
\end{lemma}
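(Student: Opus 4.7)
The plan parallels the proof of Lemma~\ref{prop:phase1}, leveraging the fact that the Phase 2 hypothesis (namely, $u'$ being a star whose root marker vertex $q'$ has degree one in $G(u')$) triggers the special ``if'' branch of Algorithm~\ref{alg:nJoin}. In this branch, $q$---the extremity of the tree-edge $(u,u')$ in $V(u)$---is reused, and each of the $|V(u')|-2$ non-neighbours of $q'$ in $G(u')$ (the leaves of the star distinct from $q'$) is moved into $V(u)$ and made adjacent precisely to $q$. Consequently the only new label-edges produced by the join are these $|V(u')|-2$ edges incident to $q$; in particular no new adjacencies are created between pre-existing marker vertices of $u$.

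For the charging, if $u$ is degenerate I first materialize its label-edges. Each such edge is charged exactly as in the proof of Lemma~\ref{prop:phase1}: an edge $e$ incident to a non-root marker vertex $p$ of $u$ with degenerate stamp $z$ receives one unit against some $y \in A(p')$ in $Charge(z)$, where $p'$ is the other endpoint of $e$; Lemma~\ref{lem:degenerate-marker} ensures that each vertex of $G$ is a degenerate stamp at most three times, so condition~4 of Invariant~\ref{invariant1} is respected. Then, for each of the $|V(u')|-2$ new label-edges $(q,t'')$ produced by the join itself, I introduce into $Charge(q)$ a fresh group for the new neighbour $t''$ and charge one unit to an arbitrary vertex of $A(t'')$ in that group. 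Because $u'$ is degenerate, by condition~1(a) the Charge lists of $V(u')$ were all free of charge before the join, so no charge is lost when $q'$ and the centre $c'$ of the star are discarded.

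What remains is to verify that Invariant~\ref{invariant1} continues to hold. Each moved $t''$ has degree one in $G(u)$ after the join (its sole neighbour is $q$), so condition~1(a) is fulfilled because $Charge(t'')$ remains untouched. The marker vertex $q$ is not a root marker vertex of $u$---after the join it corresponds to the tree-edge leading to $c'$'s former subtree (or to $c'$'s opposite leaf)---so condition~1(b) imposes no restriction on $Charge(q)$, and no analogue of Lemma~\ref{lem:phase1Limit} is required. The root group of $Charge(q)$, if any, is left untouched by the charging since we only add to freshly created groups, preserving condition~2. Each new group carries a single unit on a single vertex (condition~3), and $q$ being non-root makes condition~4's bound of three ample.

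The main subtlety, and the only place where care is required, is in tracking the reuse of $q$: I must confirm that $q$'s new tree-role is not as a root marker vertex and that the discarded centre $c'$ does not silently leave charged entries behind. Both points follow directly from the structure of the ``if'' branch of Algorithm~\ref{alg:nJoin} (the parent tree-edge of $u$ is untouched, and $q$ inherits $c'$'s former child-edge) together with condition~1(a) applied to the degenerate $V(u')$. The number of new units introduced matches the number of new label-edges, completing the bookkeeping.
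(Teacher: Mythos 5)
Your proof is correct and follows essentially the same route as the paper's: charge the $|V(u')|-2$ new label-edges to the correspondingly new groups created in $Charge(q)$, materialize and charge degenerate label-edges to degenerate stamps, and use condition~1(a) for the degenerate $u'$ to discard $Charge(q')$ and $Charge(c')$ at no cost. You supply slightly more explicit invariant verification (conditions 1(b), 2, 3, 4 and the non-root status of $q$) than the paper, which leaves most of that implicit, but the underlying argument is identical.
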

\begin{proof}
First observe that as $u'$ is a star node and $u$ is possibly a clique node, the label-edges of $G(u)$ and $G(u')$ have to exist prior to the node-join. The cost of this creation can be charged, as described in the proof of Lemma~\ref{prop:phase1}, to the degenerate stamps of their non-root marker vertices.

Let $r$ be the root of $u'$ ($r$ has degree one in $G(u')$) and let $c$ be the centre of $u'$. Let $q$ be the opposite of $r$.  The node-join proceeds by deleting $c$'s neighbours (other than $r$) from $u'$, and adding them to $u$ as neighbours of $q$.  Suppose that $k$ neighbours of $c$ are moved in this way.  Then $k$ new label-edges are created by the node-join.  But notice that adding the new edges incident to $q$ creates $k$ new groups in $Charge(q)$, each being free of charge by virtue of being new.  So to each of these groups we assign one new unit of charge. 

By condition~1(a) of Invariant~\ref{invariant1}, both $Charge(c)$ and $Charge(r)$ are free of charge.  So no charge is lost deleting $Charge(c)$ and $Charge(r)$ along with $c$ and $r$.  Therefore the number of label-edges created so far is bounded by the total charge on all the $Charge()$ lists.
\end{proof}

We now consider phase 3 node-joins. This means that all the node-join involving a star node have been performed.

\begin{lemma}[Phase 3 node-joins] \label{prop:phase3}
Let $x$ be the last vertex of an LBFS ordering of the connected graph $G+x$. Assume that all the phase 1 and phase 2 node-joins have been performed on $c\ell(ST(G))$ and that the resulting GLT satisfies Invariant~\ref{invariant1}. If the node-join$(u,u')$ is performed, then the resulting GLT satisfies Invariant~\ref{invariant1}.
\end{lemma}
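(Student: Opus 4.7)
The plan is to mirror the structure of the proofs of Lemmas~\ref{prop:phase1} and~\ref{prop:phase2}. As in those proofs, the first task is to account for any label-edges of $G(u)$ or $G(u')$ that must be materialized before the node-join, which happens if $u$ or $u'$ is still degenerate at this stage. Each such edge is incident to a non-root marker vertex of a degenerate node whose degenerate stamp $z\in V(G)$ names a leaf, and one unit of charge can be placed at a suitable $y\in A(q')$ in the list $Charge(z)$, exactly as in the previous two phases. Lemma~\ref{lem:degenerate-marker} ensures every $z$ is used at most three times across the algorithm, so this piece of the charging preserves Invariant~\ref{invariant1}.

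The core of the proof handles the new label-edges produced by the node-join itself. Letting $q\in V(u)$ and $q'\in V(u')$ be the extremities of the tree-edge between $u$ (the parent) and $u'$, the join creates $|N(q)|\cdot|N(q')|$ new label-edges $(t,t')$ with $t\in N(q)$ and $t'\in N(q')$. Each such edge creates two brand-new groups in Charge lists (a $t'$-group in $Charge(t)$ and a $t$-group in $Charge(t')$), both initially free of charge; this already provides enough room to assign one unit per new edge. My plan is to drive the choice of \emph{where} to place each unit using the primary stamp scheme: for each new edge $(t,t')$, I would place one unit of charge on the leaf $s_1(t)$ within the new $t$-group of $Charge(t')$. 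By Lemma~\ref{FACT2}, $s_1(t)\in A(t)$, so this is a legitimate entry of the group, and condition~3 of Invariant~\ref{invariant1} (at most one charged vertex per group) is immediate because the group is brand new.

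The main obstacle will be condition~4 of Invariant~\ref{invariant1}: bounding the number of charges on any single leaf in any single Charge list by three (for non-root markers) or one (for root markers). Here Lemma~\ref{FACT3} is the principal tool, together with the LBFS-controlled growth of star nodes from Lemmas~\ref{lem:star} and~\ref{lem:phase1Limit}. The former ensures that a given leaf serves as a primary stamp for at most two marker vertices in any single stamping, and the latter constrains how stars and their centres---whose Charge lists absorb the phase~1 units---can evolve across successive insertions. One must trace carefully through the algorithm's history to verify that across all node-joins involving a given $t'$, the primary stamps of the various $t$'s that become adjacent to $t'$ cannot pile up at the same leaf more than three times; this bookkeeping, relating the current shape of the split-tree to the accumulated charges and the LBFS ordering, is the delicate step. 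Once this is established, conditions~1 and~2 of Invariant~\ref{invariant1} follow immediately from the structural modifications (the resulting node is treated as prime, every surviving marker retains degree at least two since $|N(q')|\geq 2$ after phase~2 gives each $t$ at least two new neighbours, and root-group freeness propagates correctly from $u$'s original root), and the total accumulated charge continues to upper-bound the number of label-edges created, completing the inductive maintenance of Invariant~\ref{invariant1}.
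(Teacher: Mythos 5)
Your proposal diverges from the paper's argument in a way that leaves a real hole, and the divergence is not cosmetic.

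The paper's proof of this lemma is entirely a \emph{local redistribution} of charge. Its central difficulty is not where to place charge for the $d(q)\cdot d(r)$ brand-new label-edges --- as you correctly observe, the $d(q)\cdot d(r)$ brand-new groups make this easy --- but rather what happens to the charge already sitting on $Charge(q)$ and $Charge(r)$ when the marker vertices $q$ and $r$ are deleted by the join. The last line of Invariant~\ref{invariant1} requires that the charge \emph{currently} on the lists upper-bounds the \emph{cumulative} number of label-edges ever created; if $Charge(q)$ and $Charge(r)$ vanish carrying charge, that inequality is immediately broken. The paper therefore spends most of the proof carefully moving the at-most-one-charged-vertex-per-group from each $t$-group of $Charge(q)$ into a free new group of $Charge(t)$, and likewise moving $Charge(r)$'s charge (at most one unit per vertex, since $r$ is the root marker of $u'$) into new groups of $Charge(t')$, before laying down the $d(q)\cdot d(r)$ fresh units. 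Your proposal never addresses this disappearance of $Charge(q)$ and $Charge(r)$, so as written it does not maintain the invariant.

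The second problem is that you import the primary-stamp machinery ($s_1$, $s_2$, Lemmas~\ref{FACT1}--\ref{FACT3}) into the proof of this lemma, whereas in the paper that machinery plays no role here. The stamping procedure is defined once, on the \emph{final} reduced split-tree, and is used only in Lemma~\ref{chargeBound} to translate the total charge into an $O(n+m)$ bound. During intermediate insertions the marker vertices involved may not survive to $ST(G)$, so $s_1(t)$ is not a well-defined object you can place charge on mid-algorithm, and FACT3 says nothing about intermediate GLTs. Condition~4 of Invariant~\ref{invariant1} is a purely local property, maintained by ensuring no group ever accumulates more than a bounded number of units; the paper achieves this by spreading the $d(q)$ units that land on $Charge(t')$ across $d(q)-1$ free groups (one of which may already hold at most one redistributed unit), and shifting any charge off the root group. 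You explicitly defer exactly this verification (``one must trace carefully through the algorithm's history ... this bookkeeping ... is the delicate step''), which is a concession that the argument is not actually carried out. As it stands the proposal identifies the right invariant to maintain but neither preserves charge across the join nor establishes the per-vertex bound, and the stamping apparatus it substitutes for the paper's local bookkeeping is not available at this stage of the argument.
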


\begin{proof}
As discussed in the proof of Lemma~\ref{prop:phase1}, if $u$ or $u'$ is a degenerate node (it must be a clique in this case), then the cost of creating the corresponding label-edges can be charged to the list of some degenerate stamps while preserving Invariant~\ref{invariant1}.

Let $q \in V(u)$ and $r \in V(u')$ be the extremities of the edge $uu'$. Since all phase 1 and phase 2 joins have been performed, we can assume that $d(q) > 1$ and $d(r) > 1$.  Before assigning charge to account for the $d(q) \cdot d(r)$ new label-edges that are created, we will want to redistribute any charge on $Charge(q)$ and $Charge(r)$.

First consider the redistribution of the charge on $Charge(q)$.  Let $t$ be a neighbour of $q$ in $G(u)$.  If $t$ is the root marker vertex, then $t$'s group in $Charge(q)$ is free of charge by condition~2 of Invariant~\ref{invariant1}, and so no charge in this group needs to be redistributed.  So let $t$ be a non-root marker vertex that is a neighbour of $q$.  Then by condition~3 of Invariant~\ref{invariant1}, we know that at most one vertex in $t$'s group in $Charge(q)$ has been assigned charge.  If such a vertex exists, then call it $\lambda$.  
Notice that during the $u,u'$ node-join, $Charge(t)$ will lose $q$'s group but will gain the $d(r) > 1$ groups in $Charge(r)$.  By condition~3 of Invariant~\ref{invariant1}, at least one of these groups (say $\gamma$) will be free of charge.
 The charge on $\lambda$ is reassigned to one of the vertices in $\gamma$ in this case.  Continuing this for all such $t$ removes all charge on $Charge(q)$, and so it can be deleted along with $q$ and no charge is lost.  

We now turn to the redistribution of the charge on $Charge(r)$.  Let $t'$ be a neighbour of $r$ in $G(u')$, and notice that by condition~3 of Invariant~\ref{invariant1}, at most one vertex in $t'$'s group in $Charge(r)$ has been assigned charge.  If such a vertex exists, call it $\lambda'$.  By condition~4 of Invariant~\ref{invariant1}, we know that $\lambda'$ has been assigned no more than one unit of charge.  Now, once more by condition~2 of Invariant~\ref{invariant1}, we know that $r$'s group in $Charge(t')$ is free of charge.  
Furthermore, during the join, $Charge(t')$ will lose $r$'s group but will gain the $d(q) > 1$ groups in $Charge(q)$.  
  Let $\gamma'$ be one of these new groups.  In this case we reassign $\lambda'$'s one unit of charge (if it exists) to a label in $\gamma'$.  Continuing this for all such $t'$ removes all charge on $Charge(r)$, and so it can be deleted along with $r$ and no charge is lost.

We finally assign $d(q) \cdot d(r)$ new units of charge.  Let $t'$ and $\gamma'$ be as above.  So $r$'s former group in $Charge(t')$ is replaced by $d(q) > 1$ new groups, one of them called $\gamma'$.  Only $\gamma'$ (possibly) has any charge assigned to it, and only one unit at that; the other $d(q) - 1$ groups are free of charge.   To $d(q) - 2$ of the groups free of charge we assign one unit of charge, and to the remaining group free of charge we assign two units of charge.  Lastly, if one of the groups corresponds to the root marker vertex of $u$, then the charge just assigned to that group is shifted to another, which must exist.  The result is that only one vertex in each group contains charge, none having been assigned more than three units, and the root group becomes free of charge.

It is easy to verify that Invariant~\ref{invariant1} continues to hold under this (re)assignment of charge. 
\end{proof}

\subsection{Bounding the total charge, and the running time of our algorithm}
\label{sub:final}

Lemmas~\ref{prop:amortized1}, \ref{prop:phase1}, \ref{prop:phase2} and \ref{prop:phase3} guarantee that Invariant~\ref{invariant1} holds during the LBFS incremental construction of the split-tree of $G$.  Consequently, the total number of label-edges created all along the construction is bounded by the total charge residing on all $Charge$ lists.  


\begin{lemma} \label{chargeBound}
Let $G$ be a connected graph. The total number of label-edges created during our LBFS incremental construction of $ST(G)$ is $O(n+m)$, where $n$ is the number of vertices in $G$ and $m$ is the number of its edges.
\end{lemma}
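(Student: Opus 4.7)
The plan is to derive the bound as a consequence of Invariant~\ref{invariant1}, which Lemmas~\ref{prop:amortized1}, \ref{prop:phase1}, \ref{prop:phase2}, and \ref{prop:phase3} maintain throughout the incremental construction. That invariant guarantees, at every step of the algorithm, that the total number of label-edges ever created is bounded by the total charge currently residing on all $Charge$ lists. Moreover, the proofs of those lemmas explicitly verify that no charge is lost when a marker vertex (and with it, its $Charge$ list) is destroyed by a node-join: either the deleted list was certifiably free of charge (conditions 1(a)--(b) of the invariant), or else its charge was explicitly redistributed to a surviving list (as in the phase~3 argument). Consequently, the bound applies at the final state of the algorithm, and it suffices to bound the total charge on the $Charge$ lists attached to $ST(G)$.

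To bound that final total charge, I would decompose it into contributions from leaves and from non-root marker vertices. Invariant~\ref{invariant1} yields the per-list bound directly: each leaf $x$ contributes at most $3|N_G(x)|$ units, summing to $O(m)$ across all leaves; each non-root marker vertex $q$ of a node $u$ contributes at most three units per group, with exactly $d_{G(u)}(q)$ groups (one per neighbor of $q$ in $G(u)$), for at most $3\,d_{G(u)}(q)$ units. A similar (even tighter) bound applies to root marker vertices via condition~4. Summing the marker contributions yields a total bounded by $O\!\left(\sum_u |E(G(u))|\right)$.

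The conclusion then follows by invoking the standard observation, recalled after Theorem~\ref{prop:split_list}, that the split-tree is of size linear in its underlying graph: $\sum_u (|V(u)| + |E(G(u))|) = O(n+m)$. Substituting into the preceding estimate gives a total final charge of $O(n+m)$, and hence an $O(n+m)$ bound on the total number of label-edges ever created, which is exactly the statement of the lemma.

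The depth of the argument actually lies upstream of this calculation, in securing Invariant~\ref{invariant1} itself. The delicate point is that it must survive every node-join, including those that destroy markers. In particular, Lemma~\ref{prop:phase1} relies on Lemma~\ref{lem:phase1Limit}---which is where the LBFS ordering is used in an essential way---to ensure that a star centre marker participates in at most one phase~1 node-join over its entire lifetime, so that the ``free of charge'' condition required when it is later absorbed by another join is actually met. Once that machinery is in place, the present lemma reduces to a straightforward size estimate on $ST(G)$, as sketched above.
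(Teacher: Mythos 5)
Your proof reaches the correct conclusion, but it takes a genuinely different route from the paper's. The paper's proof of Lemma~\ref{chargeBound} is built squarely on the stamping scheme of Subsection~\ref{sub:stamping}: after a small transfer of charge off each root marker, it observes that every charged vertex in $Charge(q)$ (for $q$ a non-root marker of a prime node) is a $G$-neighbour of the primary stamp $s_1(q)\in A(q)$ (Lemma~\ref{FACT2}), and since each leaf is a primary stamp at most twice (Lemma~\ref{FACT3}), the total charge is re-attributed to edges of $G$ and hence is $O(m)$ outright; the size of the label graphs never appears. You bypass the stamps entirely, bounding $Charge(q)$ by $3\,d_{G(u)}(q)$ via conditions 3--4 of Invariant~\ref{invariant1}, summing to $O\bigl(\sum_u |E(G(u))|\bigr)$, and then invoking the split-tree size bound $\sum_u (|V(u)|+|E(G(u))|)=O(n+m)$ stated (without proof) after Theorem~\ref{prop:split_list}. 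Both routes yield $O(n+m)$. The trade-off is that the paper's argument is self-contained given the stamping lemmas, while yours offloads the work onto an observation the paper asserts as ``easy'' but never proves; indeed, the paper's stamping argument can be read as essentially an in-line proof of that very size bound (assign each label-edge $pq$ the $G$-edge $s_1(p)s_1(q)$ and apply Lemma~\ref{FACT3}), so the two approaches are morally the same fact attacked from different ends. If you intend to rely on the size bound as a black box, you should at least note that the stamping machinery then becomes superfluous for this lemma and could be dispensed with, which is a non-trivial restructuring of Section~\ref{sec:runningTime}.
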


\begin{proof}
Assign primary stamps to the marker vertices in $ST(G)$ as described earlier.  By condition~1(a) of Invariant~\ref{invariant1}, we can focus on the $Charge$ lists residing on marker vertices in prime nodes and on leaves of $ST(G)$.

By Invariant~\ref{invariant1}, the elements of the lists $Charge(x)$ received at most $3$ units of charge. As these lists contain exactly $\sum_{x\in V(G)} d_G(x)= \sum_{x\in V(G)} |A(x)|$ elements, the total charge on these lists is bounded by $O(m)$.

Let $u$ be a prime node, and let $r \in V(u)$ be $u$'s root marker vertex, and let $q$ be one of $u$'s non-root marker vertices.  If $q$ and $r$ are adjacent, then download all the charge from $q$'s group in $Charge(r)$ to $r$'s group in $Charge(q)$.  By condition~4 of Invariant~\ref{invariant1}, the total charge in $r$'s group in $Charge(q)$ is not more than four units.  By the same condition~4, the total charge in the other groups in $Charge(q)$ is no more than three units.  

Now, by choice of $q$ and Lemma~\ref{FACT1}, we can assume that $q$ has been assigned a primary stamp.  Moreover, the vertex $s_1(q)$ acting as primary stamp is in $A(q)$, by Lemma~\ref{FACT2}.  So by definition of $Charge(q)$, we know $s_1(q)$ is adjacent to every label in $Charge(q)$.  The total charge on all such $Charge$ lists is therefore $O(n+m)$, by Lemma~\ref{FACT3} and our discussion above.
\end{proof}

More important than the bound above is what it implies:

\begin{lemma} \label{numNjoins}
Let $G$ be a connected graph. The total number of node-joins and \emph{\texttt{union()}} operations performed by our LBFS incremental construction of $ST(G)$ is $O(n+m)$, where $n$ is the number of vertices in $G$ and $m$ is the number of its edges.
\end{lemma}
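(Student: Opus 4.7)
The plan is to reduce the count of node-joins and \texttt{union()} operations to two already-proven bounds: the total number of newly created label-edges (Lemma~\ref{chargeBound}) and the total number of \texttt{initialize()} calls (Lemma~\ref{initializationCost}).

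First I would show that every node-join creates at least one new label-edge, which immediately implies that the number of node-joins is bounded by the number of label-edges created. Examining Algorithm~\ref{alg:nJoin}, there are two cases. In the general case (line~\ref{line:type1-3}), new label-edges are inserted between every pair in $N(q) \times N(q')$. Since both $G(u)$ and $G(u')$ are connected (Remark~\ref{connectedLabels}) and contain at least two marker vertices (otherwise they would not be genuine labels), we have $|N(q)|\ge 1$ and $|N(q')|\ge 1$, so at least one new label-edge appears. In the special case (line~\ref{line:type2}), $u'$ is a star and $q'$ has degree one; a star has at least three vertices, so $q'$ has at least one non-neighbour $t'$, and the algorithm creates a new label-edge between $q$ and $t'$. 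In either case, at least one new label-edge is created per node-join.

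Combined with Lemma~\ref{chargeBound}, which bounds at $O(n+m)$ the total number of label-edges created throughout the LBFS incremental construction, this yields an $O(n+m)$ bound on the total number of node-joins.

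Finally, for \texttt{union()} operations, I would invoke the per-node-join cost given in Lemma~\ref{nJoinTime}. Each node-join uses one \texttt{union()} for merging children-sets, plus $O(d)$ additional \texttt{union()}s (paired one-to-one with \texttt{initialize()}s) whenever a degenerate node with $d$ children participates. The first kind contributes a total of $O(n+m)$ by the node-join bound just obtained. The second kind is bounded by the total number of \texttt{initialize()} operations, which is $O(n)$ by Lemma~\ref{initializationCost}. Summing, the total number of \texttt{union()} operations is $O(n+m)$, which completes the proof. I do not foresee any real obstacle here, since all the hard work has already been absorbed into the charging argument supporting Lemma~\ref{chargeBound}; this lemma is essentially a straightforward accounting step.
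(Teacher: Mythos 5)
Your proof takes essentially the same approach as the paper's: bound the number of node-joins by observing that each creates at least one label-edge, invoke Lemma~\ref{chargeBound} to bound label-edges by $O(n+m)$, and then account for \texttt{union()} operations by pairing them with node-joins and with \texttt{initialize()} calls (bounded via Lemma~\ref{initializationCost}). Your first half, including the explicit case analysis of Algorithm~\ref{alg:nJoin}, is correct and slightly more detailed than the paper, which simply cites Lemma~\ref{nJoinTime}.

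However, your enumeration of \texttt{union()} sources is incomplete. You attribute every \texttt{union()} to a node-join (either the finalizing merge or the $O(d)$ unions paired with \texttt{initialize()}s for a degenerate node's children). But the algorithm performs one further \texttt{union()} that is \emph{not} part of any node-join: after the contraction step has collapsed the fully-mixed subtree into a single prime node $u$, the new leaf $x$ is inserted as a child of $u$, which requires one \texttt{initialize()} and one \texttt{union()} to add $x$ to $u$'s children-set (see the proof of Lemma~\ref{contractionTime}). The paper explicitly singles this out as a third way \texttt{union()} is applied, bounded by $O(n)$ since at most one new prime node is formed per inserted vertex. This omission does not affect the $O(n+m)$ conclusion, but as written your accounting does not actually cover all \texttt{union()} calls, so you cannot yet assert that it ``completes the proof.''
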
  

\begin{proof}
Notice that during every node-join at least one new label-edge is created (Lemma \ref{nJoinTime}).  The bound on the number of node-joins now follows from Lemma~\ref{chargeBound}: it is $O(n+m)$.

There are exactly three ways our algorithm applies a \texttt{union()} operation: once after each \texttt{initalization()} in a node-join, once for finalizing every node-join, and once when a new prime node is formed.  The number of such applications of the first way is $O(n)$ by Lemma \ref{initializationCost}, 
the number of the second way is $O(n+m)$ as said above, and the number of the third way is $O(n)$, since at most one new prime node is formed for each vertex inserted.
\end{proof}


\begin{lemma}\label{lem:numSubtree}
Let $G_n$ be a connected graph with $n$ vertices and $m$ edges 
whose split-tree is 
incrementally constructed by repeated application of Algorithm~\ref{alg:vertexinsertion}; 
that is $ST(G_{i+1})=ST(G_i+x_{i+1})$ is built from $ST(G_i)=(T_i, \mathcal{F}_i)$ for $1\leq i\leq n-1$.
Then the sum of 
$|T_i(N_{G_{i+1}}(x_{i+1}))|$ 
over all $1\leq i\leq n-1$  is $O(n+m)$.
\end{lemma}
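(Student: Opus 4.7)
The plan is to combine Lemma~\ref{treeSize} applied to each split-tree $ST(G_{i+1})$ with the amortized bound on node-joins from Lemma~\ref{numNjoins}. Specifically, Lemma~\ref{treeSize} applied to vertex $x_{i+1}$ in $ST(G_{i+1})$ yields $|T_{i+1}(N_{G_{i+1}}(x_{i+1}))|\le 2|N_{G_{i+1}}(x_{i+1})|$, and since each edge of $G_n$ contributes to $|N_{G_{i+1}}(x_{i+1})|$ for exactly one $i$ (namely, when the later endpoint of the edge is inserted), summing yields $\sum_{i=1}^{n-1}|T_{i+1}(N_{G_{i+1}}(x_{i+1}))|\le 2m$.

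Next I would compare $|T_i(N)|$ and $|T_{i+1}(N)|$ with $N=N_{G_{i+1}}(x_{i+1})$ via the case analysis of Theorem~\ref{th:cases}. In cases 1--6, no node-joins are performed and the transformation from $T_i$ to $T_{i+1}$ alters only $O(1)$ nodes locally (by Propositions~\ref{prop:unique-node-cases123},~\ref{prop:hybrid-degen-case4}, and~\ref{prop:unique-edge-case56}), so $|T_i(N)|\le |T_{i+1}(N)|+O(1)$. In case 7, let $F_i$ denote the fully-mixed subtree of $T_i$ with $f_i$ nodes; since every fully-mixed tree-edge has both extremities mixed, $F_i\subseteq T_i(N)$. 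Writing $q_i$ for the number of nodes of $T_i(N)$ outside $F_i$ (which are inherited unchanged into $T_{i+1}$), one has $|T_i(N)|=f_i+q_i$. By Proposition~\ref{prop:fully-mixed-case-7}, $T_{i+1}$ replaces the fully-mixed subtree by a single new prime node, while the cleaning step may add some pendant nodes that remain in $T_{i+1}(N)$; hence $|T_{i+1}(N)|\ge 1+q_i$ and thus $|T_i(N)|-|T_{i+1}(N)|\le f_i-1$. The crucial observation is that contracting the $f_i$ nodes of the fully-mixed subtree into one requires exactly $f_i-1$ node-joins.

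Summing across all $1\le i\le n-1$ and invoking Lemma~\ref{numNjoins} to bound the total number of node-joins by $O(n+m)$:
$$\sum_{i=1}^{n-1}|T_i(N_{G_{i+1}}(x_{i+1}))|\;\le\;\sum_{i=1}^{n-1}|T_{i+1}(N_{G_{i+1}}(x_{i+1}))|+\sum_{i=1}^{n-1}(\textrm{node-joins at step }i)+O(n)\;\le\;2m+O(n+m)=O(n+m).$$
The main obstacle is the case-7 accounting: verifying $F_i\subseteq T_i(N)$, identifying the nodes of $T_i(N)$ outside $F_i$ as inherited in $T_{i+1}$, and confirming that cleaning can only contribute further pendant nodes to $T_{i+1}(N)$, so that the excess $|T_i(N)|-|T_{i+1}(N)|$ is controlled by the $f_i-1$ node-joins performed during the contraction step.
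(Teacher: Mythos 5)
Your proof is correct and takes essentially the same approach as the paper: the paper also partitions the nodes of $T_i(N)$ into those that survive into $ST(G_{i+1})$ (bounded by Lemma~\ref{treeSize} applied to $T_{i+1}$) and those that are destroyed (each participating in a node-join, so bounded via Lemma~\ref{numNjoins}), and sums over $i$. Your version simply makes the accounting more explicit by comparing $|T_i(N)|$ and $|T_{i+1}(N)|$ case-by-case and isolating the number of phase-7 node-joins as $f_i-1$, whereas the paper states the two-group argument in one stroke without the case split.
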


\begin {proof}
For a fixed $i$, let $G$ denote $G_i$, $T$ denote $T_i$, $x$ denote $x_{i+1}$, and $N(x)$ denote $N_{G_{i+1}}(x)$.
We can divide the nodes of $T(N(x))$ into two groups: those that remain in $ST(G+x)$, and those that do not.  The number of those that remain is $O(|N(x)|)$, by Lemma~\ref{treeSize}.  So the total number of nodes in the first group over the entire execution of our algorithm is $O(n+m)$.  Every node in the second group participates in at least one node-join.  So the total number of nodes in the second group over the entire execution of our algorithm is also $O(n+m)$, by Lemma~\ref{numNjoins}.
\end{proof}

\begin{lemma} \label{numFinds}
Let $G$ be a connected graph. The total number of \emph{\texttt{find()}} operations performed by our LBFS incremental construction of $ST(G)$ is $O(n+m)$, where $n$ is the number of vertices in $G+x$ and $m$ is the number of its edges.
\end{lemma}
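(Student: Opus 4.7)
The plan is to enumerate all the sources of \texttt{find()} operations in our LBFS incremental construction and bound each source separately using the lemmas already established.

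First, I would inspect the algorithms from Section~\ref{sec:algorithm} and identify two kinds of \texttt{find()} calls. The first kind arises during the preprocessing and update phase of a single vertex insertion: namely, the tree traversals carried out by the empty-pruning procedure (Algorithm~\ref{alg:emptyPruning}), the case-identification/perfect-pruning procedure (Algorithm~\ref{alg:perfectPruning}), the cleaning procedure (Algorithm~\ref{alg:cleaning}), and the pass over the fully-mixed subtree that precedes contraction (Algorithm~\ref{alg:contraction}). By Lemmas~\ref{emptyPruningTime}, \ref{lem:PruningCorrectness}, \ref{cleaningTime}, and~\ref{contractionTime} respectively, each of these contributes $O(|T(N(x))|)$ \texttt{find()} operations per insertion. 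The second kind arises inside the node-join routine (Algorithm~\ref{alg:nJoin}): to locate the representative of the children-set of $u'$ when $u'$ is prime, and to stitch parent pointers back in. Inspection of Algorithm~\ref{alg:nJoin} shows that each node-join incurs only $O(1)$ \texttt{find()} calls on top of the $O(\#\text{new-label-edge})+\texttt{tree-update-cost}$ already accounted for in Lemma~\ref{nJoinTime}.

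Next I would sum these contributions across the whole incremental construction. For the first kind, the total is bounded by a constant times $\sum_{i=1}^{n-1} |T_i(N_{G_{i+1}}(x_{i+1}))|$, which is $O(n+m)$ by Lemma~\ref{lem:numSubtree}. For the second kind, the total is bounded by a constant times the overall number of node-joins performed during the construction, which by Lemma~\ref{numNjoins} is also $O(n+m)$. Adding these together yields the claimed $O(n+m)$ bound.

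I don't anticipate a significant obstacle: all the hard work has already been done. The running-time analyses of the individual algorithms were deliberately parameterized by \texttt{find-cost} so that the number of \texttt{find()} calls is visible, Lemma~\ref{lem:numSubtree} already absorbs the telescoping argument for $|T(N(x))|$, and Lemma~\ref{numNjoins} absorbs the amortized analysis for the node-joins. The only minor subtlety to flag is making sure that every \texttt{find()} call in Algorithm~\ref{alg:nJoin} (both the explicit lookup of the children-set representative and any implicit lookups hidden in the parent-pointer rewiring of a prime $u'$) is indeed $O(1)$ per node-join, so that it can be absorbed into the per-join constant before invoking Lemma~\ref{numNjoins}.
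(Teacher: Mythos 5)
Your proof is correct and takes essentially the same approach as the paper: enumerate the algorithms that perform tree traversals, bound the per-insertion \texttt{find()} count by $O(|T(N(x))|)$ via Lemmas~\ref{lem:PruningCorrectness}, \ref{cleaningTime}, \ref{contractionTime}, and then telescope using Lemma~\ref{lem:numSubtree}. The only place you diverge is in flagging a ``second kind'' of \texttt{find()} internal to each node-join (to locate the representative of a prime $u'$'s children-set) and bounding those by Lemma~\ref{numNjoins}. The paper instead leaves those implicit in the contraction traversal: since the number of node-joins performed during one insertion is at most $k$, the number of fully-mixed tree-edges, which is at most $|T(N(x))|$, those $O(1)$-per-join \texttt{find()} calls are already subsumed by the per-insertion $O(|T(N(x))|)$ bound and then by Lemma~\ref{lem:numSubtree}. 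Your variant is slightly more explicit; both routes close the same way and both give $O(n+m)$.
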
  

\begin{proof}
Our algorithm uses \texttt{find()} operations to traverse the split-tree.  These traversals can take place during case identification and state assignment, cleaning, and contraction.  Case identification and state assignment take place according to Algorithm~\ref{alg:perfectPruning}; cleaning takes place according to Algorithm~\ref{alg:cleaning}; and contraction takes place according to Algorithm~\ref{alg:contraction}.  So by Lemmas~\ref{lem:PruningCorrectness},~\ref{cleaningTime}, and~\ref{contractionTime}, the total number of $find$ operations required for the insertion of $x$ is $O(|T(N(x))|)$.
So the total number for the whole algorithm is $O(n+m)$ by Lemma \ref{lem:numSubtree}.
\end{proof}


\begin{lemma} \label{numNodes}
Let $G$ be a connected graph. The total number of nodes (including fake nodes) created in the rooted GLT data-structure used by our algorithm, and the total number of elements in the union of children-sets, is $O(n+m)$.
\end{lemma}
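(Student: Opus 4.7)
The plan is to partition every node that is ever created in the data-structure according to the event that produces it, and bound each part using the earlier complexity results.

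The first observation is that node-joins (Algorithm~\ref{alg:nJoin}) never create a new node: the parent $u$ is reused, while the child $u'$ merely becomes a fake node persisting in the union-find structure. Hence, all created nodes fall into exactly three families: (i) the leaf added at each vertex insertion; (ii) the new non-leaf nodes produced directly by the insertion step, as described in Propositions~\ref{prop:unique-node-cases123}, \ref{prop:hybrid-degen-case4}, and~\ref{prop:unique-edge-case56}; and (iii) the new nodes created by node-splits during the cleaning procedure (Algorithm~\ref{alg:cleaning}) used in case 7.

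Family (i) contributes exactly $n$. For family (ii), cases 1, 2, and 3 create no new non-leaf node (only the marker vertex $q$ opposite $x$ inside the existing node $u$); cases 5 and 6 each create a single subdividing node; and case 4 creates at most two new non-leaf nodes (one from Algorithm~\ref{alg:nSplit} applied to the degenerate hybrid node, plus the subdividing node of step 2 of Proposition~\ref{prop:hybrid-degen-case4}). Since Algorithm~\ref{alg:nSplit} reuses one of the two nodes involved in a node-split, the per-insertion contribution is $O(1)$, giving $O(n)$ in total. For family (iii), Definition~\ref{def:cleaning} together with Figure~\ref{fig:case4cleaning} shows that each degenerate node $u$ of the fully-mixed subtree is node-split at most twice during cleaning, each such node-split creating one new node via Algorithm~\ref{alg:nSplit}. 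The number of degenerate nodes in the fully-mixed subtree is at most $|T_i(N_{G_{i+1}}(x_{i+1}))|$, so summing over all insertions and applying Lemma~\ref{lem:numSubtree} yields $O(n+m)$.

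Combining the three families gives $O(n+m)$ created nodes in total. For the children-sets, every element ever present in any children-set first entered the union-find structure via an \texttt{initialize()} call, and \texttt{union()} merges sets without ever removing elements. Therefore the total number of elements in the union of children-sets is bounded by the total number of \texttt{initialize()} operations, which is $O(n)$ by Lemma~\ref{initializationCost}, and is in particular $O(n+m)$.

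The main obstacle is identifying and correctly counting the node-creation events in case 7, where many node-splits can occur at a single insertion; but this is controlled by the size of the fully-mixed subtree, which lies inside $T(N(x))$ and is amortized through Lemma~\ref{lem:numSubtree} to yield the desired linear-in-$(n+m)$ bound.
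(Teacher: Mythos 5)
Your proof is correct and follows essentially the same route as the paper for the node count: nodes are created only at vertex insertion (a constant number per step) or by node-splits during cleaning (bounded per insertion by the size of the fully-mixed subtree, which lies in $T(N(x))$), and Lemma~\ref{lem:numSubtree} amortizes the latter to $O(n+m)$. The only genuine variation is in the children-set bound: the paper observes that every element of a children-set is a leaf or a created node and so inherits the $O(n+m)$ bound, whereas you argue that every element must enter the union-find structure via an \texttt{initialize()} call and invoke Lemma~\ref{initializationCost} to get $O(n)$. Your route is slightly tighter and highlights the union-find semantics (elements only enter via \texttt{initialize()}), but it rests on the completeness of the accounting in Lemma~\ref{initializationCost}; the paper's version is more conservative and robust to small implementation variations in how node-split replaces a child of a prime node. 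Both arguments correctly establish the stated $O(n+m)$ bound.
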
  

\begin{proof}
A node is created either when the new vertex $x$ is inserted, or when a node-split is performed. At most one node-split is performed in the case where there is no fully-mixed subtree.
Therefore, the total number of nodes created in this case and by the insertion of $x$ is $O(n)$ over the course of the LBFS construction of $ST(G)$.
If there is a fully-mixed subtree, then node-splits are performed during the cleaning step and they involve nodes in $T(N(x))$. At most two node-splits are performed at each such node. So the total number of created nodes for the whole algorithm is $O(n+m)$ by Lemma \ref{lem:numSubtree}. Elements of children-sets in the data-structure are either leaves or some nodes created at some step of the algorithm, hence their total number is also $O(n+m)$.
\end{proof}

The previous lemmas culminate in the theorem below, which is the main result of our paper:

\begin{theorem} \label{bigTheorem}
The split-tree $ST(G)$ of a graph $G=(V,E)$ with $n$ vertices and $m$ edges can be built incrementally according to an LBFS ordering in time
$O(n+m)\alpha(n+m)$, where 
$\alpha$ is the inverse Ackermann function.
\end{theorem}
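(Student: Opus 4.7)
The plan is to sum, over all $n$ iterations of the outer loop in Algorithm~\ref{alg:over}, the per-iteration bounds already established for Algorithm~\ref{alg:vertexinsertion}, and then pay for the union-find calls globally.

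First I would note that the LBFS preprocessing step produces the ordering in linear time, so it is subsumed. Each call to Algorithm~\ref{alg:vertexinsertion} on the insertion of $x_{i+1}$ decomposes into: the case identification and state assignment (Algorithms~\ref{alg:emptyPruning} and~\ref{alg:perfectPruning}); possibly a constant number of node-splits (Propositions~\ref{prop:unique-node-cases123}, \ref{prop:hybrid-degen-case4}, \ref{prop:unique-edge-case56}); possibly the cleaning step (Algorithm~\ref{alg:cleaning}); and possibly the contraction step (Algorithm~\ref{alg:contraction}). By Lemmas~\ref{emptyPruningTime}, \ref{lem:PruningCorrectness}, \ref{nSplitTime}, \ref{cleaningTime} and \ref{contractionTime}, the cost of the $i$-th insertion is bounded by
\[
O\bigl(|T_i(N_{G_{i+1}}(x_{i+1}))|\cdot(1+\texttt{find-cost})\bigr) \;+\; k_i\cdot\texttt{join-cost} \;+\; O(\texttt{initialize-cost}+\texttt{union-cost}),
\]
where $k_i$ is the number of node-joins performed in iteration $i$ and $\texttt{join-cost}$ is the per-join cost from Lemma~\ref{nJoinTime}, which consists of a term in the number of created label-edges plus $O((\texttt{union-cost}+\texttt{initialize-cost})\cdot d)$ for reorganizing the children-set of a degenerate node of degree $d$.

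Next I would aggregate over $i$. The sum $\sum_i |T_i(N_{G_{i+1}}(x_{i+1}))|$ is $O(n+m)$ by Lemma~\ref{lem:numSubtree}. The total number of \texttt{find()} operations is $O(n+m)$ by Lemma~\ref{numFinds}. The total number of created label-edges (which dominates the first part of \texttt{join-cost}) is $O(n+m)$ by Lemma~\ref{chargeBound}, and the total number of node-joins, hence the total number of \texttt{union()} operations performed to finalize joins, is $O(n+m)$ by Lemma~\ref{numNjoins}. The total number of \texttt{initialize()} operations, and of the associated \texttt{union()} operations triggered when a degenerate node is involved in a join (one per non-root marker vertex of a degenerate node), is $O(n)$ by Lemmas~\ref{lem:degenerate-marker} and~\ref{initializationCost}. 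Finally the universe underlying union-find has size $O(n+m)$ by Lemma~\ref{numNodes}.

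The key step is then to invoke the standard union-find bound: a sequence of $k$ union/find operations on a universe of size $N$ runs in time $O((k+N)\cdot\alpha(N))$. With $k,N = O(n+m)$ this gives $O((n+m)\alpha(n+m))$ for the total union-find cost, and all the other contributions are $O(n+m)$, so the overall running time is $O((n+m)\alpha(n+m))$ as claimed. The only delicate point in the write-up is to be explicit that the amortized \texttt{find-cost} and \texttt{union-cost} are $O(\alpha(n+m))$ rather than $O(1)$, which is why the inverse Ackermann factor appears; every other subroutine was already shown to be linear in the relevant local parameters, and the global linearity of those parameters has been established by the structural lemmas of Sections~\ref{sec:LBFS} and~\ref{sec:runningTime}.
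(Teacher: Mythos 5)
Your proposal is correct and follows essentially the same route as the paper: sum the per-iteration bounds from Lemmas~\ref{lem:PruningCorrectness}, \ref{cleaningTime}, \ref{contractionTime}, aggregate via Lemmas~\ref{lem:numSubtree}, \ref{numFinds}, \ref{chargeBound}, \ref{numNjoins}, \ref{initializationCost}, \ref{numNodes}, and then apply the union-find bound on a universe of size $O(n+m)$. The only detail the paper spells out that you leave implicit is that $\alpha(O(n+m)) = O(\alpha(n+m))$, but this is a routine observation.
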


\begin{proof}
Lemma~\ref{lem:degenerate-marker} establishes an $O(n)$ bound on the number of non-root degenerate marker vertices. As every node has degree at least $3$, the total number of degenerate marker vertices created during the LBFS incremental construction is $O(n)$. The total number of label-edges created during the LBFS incremental construction is $O(n+m)$, by Lemma~\ref{chargeBound}. Therefore our algorithm generates an $O(n+m)$ size data-structure.

In addition to the cost of computing an LBFS, which takes time $O(n+m)$~\cite{Gol80, HMP00}, we have to bound the cost of the tree traversals, based on \texttt{find()} operations, plus the total cost of the \texttt{initialization()}, \texttt{union()} operations involved in the contraction steps.  
The sum, over all the algorithm, of the term $|T(N(x))|$ in Lemmas~\ref{lem:PruningCorrectness},~\ref{cleaningTime}, and~\ref{contractionTime}, is $O(n+m)$, 
by Lemma~\ref{lem:numSubtree}.
 The total cost of \texttt{initialization()} operations required by the algorithm is $O(n)$, by Lemma~\ref{initializationCost}.  The total number of $union$ and $find$ operations is $O(n+m)$, by Lemmas~\ref{numNjoins} and~\ref{numFinds}, respectively.
 
Finally, the cost of the union-find requests amounts to $O(\alpha(N)(n+m)+N)$,
where $N$ is the total number of elements in the union of children-sets. 
%
This number $N$ is $O(n+m)$ by Lemma \ref{numNodes}. 
%
%
It is easy to prove that $\alpha(C\cdot (n+m))\sim \alpha(n+m)$ for any fixed constant $C$.
Indeed, because of the way the Ackermann function increases,
for any $n$, $\alpha(n)=k-1$ implies $\alpha(C.n)\leq k$ for every $k$ large enough.
Hence $\alpha(C.n)-\alpha(n)\leq 1$ for $n$ large enough, which implies $\alpha(C.n)\sim \alpha (n)$.
In particular, we have that $\alpha(O(n+m))=O(\alpha(n+m))$.

So to conclude, the total cost of our algorithm is $O((n+m)\alpha(n+m))$, which can also be written $O(n+m)\alpha(n+m)$.
\end{proof}


To conclude,
let us mention that we are prevented from achieving linear time only by the node-join.  Let $u$ and $u'$ be two adjacent nodes, with $u$ the parent of $u'$.  To effect their node-join, the children of $u'$ must be made children of $u$.  That is the bottleneck.  Our implementation does its best to avoid it by using union-find,
but the optimal complexity for union-find involves the inverse Ackemann function. 
It seems to us that our charging argument can not be extended to cover the cost of reassigning $u'$'s children, thereby eliminating union-find and achieving linear time.
However, it is worth emphasizing that, 
from the practical viewpoint, the inverse Ackermann function can be thought of as a constant, and that every other aspect of our algorithm is consistent with linear time.  

This paper's companion~\cite{GPTC11b} extends our split decomposition algorithm to recognize circle graphs in same time.  
It is the first sub-quadratic circle recognition algorithm, and the first progress on the problem in fifteen years. 

\vspace{-0.2cm}

\bibliographystyle{plain}

\end{document}